\pdfoutput=1
\pdfoutput=1
\PassOptionsToPackage{hyperfootnotes=false,hypertexnames=false}{hyperref}
\ifdefined\CCSUBMISSION
  \documentclass[preprint,12pt]{elsarticle}
\else
  \documentclass[11pt]{article}
\fi

\usepackage{paper-preamble}
\ifdefined\CCSUBMISSION
  \biboptions{sort&compress}
\fi

\renewcommand{\LH}[1]{\mbox{\ttfamily\nolinkurl{#1}}}
\renewcommand{\LHrng}[3]{\mbox{\ttfamily\nolinkurl{#1#2-#3}}}

\title{Descent Before Hardness: Orbit-Gap Obstructions in Exact Certification}
\ifdefined\CCSUBMISSION
  \author[mcgill]{Tristan Simas}
  \ead{tristan.simas@mail.mcgill.ca}
  \affiliation[mcgill]{
    organization={McGill University},
    city={Montreal},
    state={Quebec},
    country={Canada}}
  \journal{Journal of Computer and System Sciences}
\else
  \author{Tristan Simas\\
  McGill University, Montreal, Quebec, Canada\\
  \texttt{tristan.simas@mail.mcgill.ca}}
\fi
\date{\today}

\begin{document}

\ifdefined\CCSUBMISSION
  \begin{frontmatter}
  \begin{abstract}
  Tractability tests are often computed from input syntax: support-graph treewidth, local coefficient patterns, backdoor tests, or action-count bounds. Before such a test can be lower-bounded or made algorithmic, it must define a predicate on the exact-certification problem itself. Equivalent presentations must receive the same verdict.

The semantic object is the correctness quotient, whose classes are states with the same correct outputs. Correctness-preserving presentation moves generate closure orbits. A target that changes inside one closure orbit has an orbit gap and fails descent. Exact closure-invariant classification is possible exactly when the positive and negative orbit hulls are disjoint; the positive hull is then the least exact classifier, and computable orbit representatives make the classifier algorithmic.

The results separate three layers. The descent layer gives orbit-gap obstructions for raw local syntax, raw action and coordinate counts, and raw support-graph predicates. The post-descent complexity layer applies ordinary reductions to descended objects: graph-predicate lower bounds transfer through action-gap graph extraction, and \textsc{Action-Gap-Treewidth} is NP-complete when the width bound is part of the input. The certification layer asks whether a proxy descends: for split proxies $b\wedge\varphi(z)$, SAT reduces to non-descent and UNSAT reduces to descent. Positive regimes use quotient-preserving normalizations or catalogues before model checking; bounded quotient size, bounded full Gaifman treewidth of the constructed quotient, sparse unary-gap certificates, and strict-margin perturbation balls give explicit cost bounds after quotient construction.

  \end{abstract}
  \begin{keyword}
  exact certification \sep semantic quotients \sep tractability proxies \sep treewidth \sep monadic second-order logic \sep Lean formalization
  \end{keyword}
  \end{frontmatter}
\else
  \maketitle
  \begin{abstract}
  
  \end{abstract}

  \noindent\textbf{Keywords:}
  exact certification; semantic quotients; tractability proxies; treewidth;
  monadic second-order logic; Lean formalization
\fi

\section{Introduction}

Consider states $x=(x_0,x_1,x_2)\in\{0,1\}^3$ and two actions with payoffs
\[
U(a,x)=2x_0x_1,\qquad U(b,x)=0.
\]
Action $a$ is uniquely optimal exactly when $x_0=x_1=1$; otherwise both actions tie. Thus the states split into two classes according to their optimal-action sets: the two states with $x_0=x_1=1$, and the remaining six states. A coordinate matters for exact certification when flipping it can move a state across this partition. Here $x_0$ and $x_1$ matter, while $x_2$ never does.

Now add the same state-dependent term to every action, for instance $3x_1x_2$. Every action comparison is unchanged, so every optimal-action set and every exact-certification question is unchanged. A statistic computed from the raw payoff expression can still change: the support graph now contains the pair $\{1,2\}$ even though that pair cancels from all action comparisons. The treewidth case study raises raw support-graph treewidth from $1$ to $n-1$ while leaving exact certification fixed. Raw treewidth reports on the encoding rather than the problem.

The corresponding descended parameter is the action-gap graph: put an edge on a coordinate pair exactly when some action gap has a nonzero mixed difference on that pair. The action-independent term cancels from every action gap. The graph is invariant under the harmless shift above, and it gives the concrete complexity theorem: \textsc{Action-Gap-Treewidth} is NP-complete when the width bound is part of the input (Theorem~\ref{ext:action-gap-treewidth-np-complete}). For fixed width, a supplied graph solver with a uniform polynomial step bound can be run on the extracted action-gap graph to solve the descended graph predicate.\leanmeta{\LHrng{FR}{455}{457}}

Rice's theorem states that every nontrivial extensional property of partial recursive functions is undecidable~\cite{rice1953classes}. Borchert and Stephan adapted the same pattern to circuit counting properties, proving that every nontrivial counting property of circuits is UP-hard~\cite{borchert1996looking}. Hemaspaandra and Rothe strengthened the lower bound to a constant-ambiguity regime. They also identified a P-constructibly bi-infinite setting in which SPP-hardness holds~\cite{hemaspaandra2000second}. The Rice analogs assume the predicate already descends to the semantic object and ask whether deciding it from syntax is hard. A proposed proxy may fail before that question arises: it may not descend to the semantic object. The correctness-preserving closure relation (Proposition~\ref{prop:closure-operations-preserve-certification}), the hull-separation converse (Theorem~\ref{thm:exact-classification-hull-separation}), and the named proxy witnesses give the corresponding descent test.

An implementation analogy has the same shape. A solver library often routes inputs through a registry of available handlers. A hand-maintained registry stores two objects: the declared handler family $\mathcal H$ and a cache $R$. Adding a handler can update $\mathcal H$ while leaving $R$ stale. Automatic subclass registration makes $R$ a derived function of the handler family, so the routing table changes exactly when the handler family changes.

\begin{figure}[H]
\centering
\small
{\setlength{\fboxsep}{3.5pt}%
\newcommand{\RegistryCodeBlock}[1]{%
  {\ttfamily\small
  \fcolorbox{black!25}{black!3}{%
    \begin{minipage}[c][7.9\baselineskip][c]{0.98\linewidth}
    \raggedright #1
    \end{minipage}}}}
\begin{tabular}{@{}c@{\hspace{0.02\linewidth}}c@{}}
\begin{minipage}[t]{0.47\linewidth}
\centering
\textbf{Manual registry}\par\vspace{0.35em}
\RegistryCodeBlock{%
class Handler: pass\\[0.45\baselineskip]
class A(Handler): pass\\
registry = \{A\}\\[0.45\baselineskip]
class B(Handler): pass\\
\# B exists\\
\# registry = \{A\}}
\par\vspace{0.45em}
\(\text{manual: }(\mathcal H,R)\mapsto(\mathcal H\cup\{B\},R)\)
\end{minipage}
&
\begin{minipage}[t]{0.47\linewidth}
\centering
\textbf{Derived registry}\par\vspace{0.35em}
\RegistryCodeBlock{%
registry = set()\\[0.45\baselineskip]
class Handler:\\
\hspace*{0.8em}def \_\_init\_subclass\_\_(cls):\\
\hspace*{1.6em}registry.add(cls)\\[0.45\baselineskip]
class A(Handler): pass\\
class B(Handler): pass\\
\# registry = \{A, B\}}
\par\vspace{0.45em}
\(\text{derived: }R=\{C:C\in\mathcal H\}\)
\end{minipage}
\end{tabular}}
\caption{A hand-maintained registry stores a cache $R$ separately from the handler family $\mathcal H$; adding $B$ can leave the cache stale. Subclass registration derives $R$ from class creation, so routing data follows the semantic source. Tractability proxies face the same descent requirement: routing statistics must be computed from the exact-certification object.}
\label{fig:registry-descent}
\end{figure}

A presentation is syntax for an exact-certification problem, such as a weighted Boolean payoff expression. Some syntax changes preserve the problem. The action-independent shift above is one example. The family connected by such changes is a closure orbit.

A direct test reads the encoded syntax without first solving the problem or constructing its quotient. Treewidth of the raw support graph, a coefficient ratio, a support statistic, and an action-profile count are direct tests. Whether a coordinate is relevant is not direct in this sense; it depends on the optimal-action sets of the decoded problem. A direct test used to predict that a specialized algorithm applies is a tractability proxy.

A proxy must first define a property of the exact-certification problem. If two presentations encode the same exact problem but the proxy gives different verdicts, then the proxy is a presentation statistic. Such a same-orbit disagreement is an orbit gap.

Any problem with rules determines a correctness relation $R(s,a)$ recording which outputs are correct at each state. Equality of the correctness sets partitions the state space. Exact relevance certification asks which coordinates recover that partition. Correctness-preserving presentation moves must leave this quotient fixed, so any correct classifier must give one verdict on each closure orbit.

Same-orbit disagreements are the obstruction. Exact classification is possible exactly when no closure orbit contains both a positive and a negative example. Equivalently, the positive and negative orbit hulls are disjoint. When they are disjoint, the closure hull is the least exact classifier. Once a proxy descends, standard reductions, FPT algorithms, kernels, and class lower bounds analyze the descended task. Before descent, the proxy records an encoding statistic.

The direct local proxies studied below are raw-syntax tests before normalization. Adding a quotient-preserving normalization step, such as action-gap graph extraction or profile compression, moves the predicate to quotient-level data. In the treewidth example, the raw support graph fails the descent test. The action-gap graph is the descended object to which the bounded-treewidth algorithm applies.

\paragraph{Terminology}
A correctness relation $R(s,a)$ specifies the correct outputs at each state. The correctness quotient identifies states with the same correctness set. A closure orbit is the class generated by correctness-preserving presentation moves. An orbit gap is a same-orbit positive/negative pair for a proposed target. A closure-sound predicate is constant on primitive closure steps, equivalently on closure orbits. A finite-structural predicate is a closure-sound, polynomial-time, fixed local-pattern test on raw binary-pairwise syntax: the pattern vocabulary, radius, coefficient alphabet, and action-label budget are fixed before the input is read.

\paragraph{Main results}
The results have three layers. Structural claims prove descent criteria from the correctness quotient and closure laws. Transferred complexity claims compose a descended extraction with external theorems such as graph treewidth hardness or Courcelle-style model checking. Conditional algorithmic claims bound evaluation after a quotient, catalogue, certificate, or strict-margin neighborhood has been supplied.

\begin{enumerate}
\item \emph{Descent test.} Exact relevance for any state-indexed correctness relation reduces to quotient recovery (Theorem~\ref{cor:relational-output-transfer}). Orbit gaps are exactly the obstruction to closure-invariant exact classification, and hull separation is the positive criterion (Proposition~\ref{prop:orbit-gap-completeness}, Theorem~\ref{thm:exact-classification-hull-separation}). For a finite proxy language, descent is equivalent to absence of an orbit-gap counterexample, and existence of an exact closure-invariant classifier for a code is equivalent to descent.\leanmeta{\LHrng{FR}{436}{443}}
\item \emph{Concrete obstructions.} The dominant-pair, margin, ghost-action, and offset-normalization targets have raw binary-pairwise orbit gaps; hence no closure-sound finite-structural exact recognizer decides those targets (Table~\ref{tab:obstruction-families-summary}, Theorem~\ref{thm:closure-sound-package-no-go}). Every raw action-count threshold at least two has an unanchored duplication-based orbit gap (Theorem~\ref{thm:raw-action-count-duplication-orbit-gap}); the constant-utility witness has no relevant coordinates on either side.\leanmeta{\LHrng{FR}{452}{454}} Every raw coordinate-count threshold at least two has an unanchored irrelevant-coordinate orbit gap (Theorem~\ref{thm:raw-coordinate-count-irrelevant-coordinate-orbit-gap}).\leanmeta{\LHrng{FR}{485}{490}} More generally, every raw support-graph predicate that separates two same-size graphs has an orbit gap under a common action-independent state term.\leanmeta{\LHrng{FR}{444}{451}}
\item \emph{Descended complexity.} The action-gap graph passes descent. For any uniform finite graph predicate family $P$, the graph-realization map transfers lower bounds for $P$ to its action-gap version; any many-one lower bound for $P$ composes with this map (Theorem~\ref{thm:graph-predicate-hardness-transfer}). Recognizing whether the action-gap graph has treewidth at most a supplied bound is NP-complete by this realization map and the standard graph-treewidth theorem (Theorem~\ref{ext:action-gap-treewidth-np-complete}). Any fixed-width graph solver with a polynomial step bound transfers to the action-gap graph after extraction.\leanmeta{\LHrng{FR}{455}{457}, \LHrng{FR}{503}{508}}
\item \emph{Descent-certification hardness.} For a formula/circuit split proxy with semantic variables $z$ and a presentation-only bit $b$, the code $C_\varphi(z,b)=b\wedge\varphi(z)$ has an orbit gap exactly when $\varphi$ is satisfiable. Thus SAT reduces to split-proxy non-descent, and UNSAT reduces to split-proxy descent (Theorem~\ref{thm:split-proxy-descent-certification-hardness}). These are hardness directions for certifying descent or non-descent of the encoded proxy; membership bounds depend on the chosen formula or circuit representation.\leanmeta{\LHrng{FR}{491}{502}}
\item \emph{Quotient-level routing.} The operational route is to construct a closure-compatible quotient or catalogue, then model-check a fixed predicate on that descended object (Theorems~\ref{thm:finite-orbit-catalogue-classifiers} and~\ref{thm:post-descent-quotient-mso-evaluation}). Bounded quotient-state and action-profile sorts give explicit mixed-fragment cost bounds. Bounded quotient size and bounded full Gaifman treewidth of the constructed quotient give fixed-MSO evaluation after quotient construction (Theorems~\ref{thm:bounded-quotient-size-mso-tractability} and~\ref{thm:bounded-quotient-treewidth-mso-tractability}). Sparse unary-gap certificates instantiate the quotient-state bound with $2^r$ states, and strict margins preserve quotient-MSO verdicts under uniform perturbation (Theorem~\ref{thm:sparse-unary-gap-quotient-pipeline}, Corollary~\ref{cor:strict-margin-quotient-mso-stability}).\leanmeta{\LHrng{FR}{458}{484}}
\end{enumerate}

The descent layer identifies when a proxy is a predicate of the exact-certification problem. The complexity layer starts after descent: the action-gap graph carries ordinary graph-parameter hardness and graph-algorithm transfer. The certification layer asks whether an encoded proxy descends at all. Broader lower bounds for quotient-recovery tasks require a fixed descended task and representation class.

\section{A Running Example: The Obstruction in Miniature}\label{sec:obstruction-miniature}

Adding an action-independent pair term preserves the exact-certification problem because it leaves all action comparisons unchanged, but it changes a local syntactic statistic.

The core mechanism needs only three Boolean coordinates and two actions. Let $x=(x_0,x_1,x_2)\in\{0,1\}^3$ and define
\[
U(a,x)=2x_0x_1,\qquad U(b,x)=0.
\]
The optimizer relation is controlled by the action gap $U(a,x)-U(b,x)=2x_0x_1$: action $a$ is uniquely optimal exactly when $x_0=x_1=1$, and otherwise both actions are optimal.
Thus the optimizer quotient has two classes: the two states with $x_0=x_1=1$, and the remaining six states. Coordinates $x_0$ and $x_1$ matter because changing either can cross between those two quotient classes. Coordinate $x_2$ does not matter because changing it never changes the optimizer set.

Now add the same pair-supported state term to both actions:
\[
V(c,x)=U(c,x)+3x_1x_2\qquad(c\in\{a,b\}).
\]
The operation is a statewise positive affine reparameterization with unit scale and action-independent shift. It preserves every action comparison, since
\[
V(a,x)-V(b,x)=U(a,x)-U(b,x).
\]
Consequently $U$ and $V$ define the same exact-certification problem: the optimizer set, the decision quotient, sufficient coordinate sets, and relevant coordinates are unchanged.
The operation has changed scores, but only by a state-dependent amount common to every action. Exact certification depends on action comparisons, so the quotient sees no change.

\begin{table}[h]
\centering
\small
\setlength{\tabcolsep}{4pt}
\renewcommand{\arraystretch}{1.12}
\renewcommand{\tabularxcolumn}[1]{m{#1}}
\begin{tabularx}{\linewidth}{@{}>{\centering\arraybackslash}m{0.15\linewidth}>{\centering\arraybackslash}m{0.11\linewidth}>{\centering\arraybackslash}m{0.11\linewidth}>{\centering\arraybackslash}m{0.11\linewidth}>{\centering\arraybackslash}m{0.11\linewidth}>{\centering\arraybackslash}X@{}}
\toprule
$x_0x_1x_2$ & $U(a,x)$ & $U(b,x)$ & $V(a,x)$ & $V(b,x)$ & $\Opt_U(x)=\Opt_V(x)$ \\
\midrule
000 & 0 & 0 & 0 & 0 & $\{a,b\}$ \\
\midrule
001 & 0 & 0 & 0 & 0 & $\{a,b\}$ \\
\midrule
010 & 0 & 0 & 0 & 0 & $\{a,b\}$ \\
\midrule
011 & 0 & 0 & 3 & 3 & $\{a,b\}$ \\
\midrule
100 & 0 & 0 & 0 & 0 & $\{a,b\}$ \\
\midrule
101 & 0 & 0 & 0 & 0 & $\{a,b\}$ \\
\midrule
110 & 2 & 0 & 2 & 0 & $\{a\}$ \\
\midrule
111 & 2 & 0 & 5 & 3 & $\{a\}$ \\
\bottomrule
\end{tabularx}
\caption{Truth table for the miniature orbit-gap witness. The quotient classes are unchanged by the action-independent shift.}
\label{tab:miniature-orbit-gap-truth-table}
\end{table}

Table~\ref{tab:miniature-orbit-gap-truth-table} displays the same two quotient classes for both presentations: $\{110,111\}$, where $a$ is uniquely optimal, and the remaining six states, where both actions are optimal. Every later obstruction repeats this pattern: the quotient classes stay fixed while a raw local statistic changes.

The raw local syntax changes while the optimizer quotient stays fixed. In $U$, the unique nonzero mixed-difference contribution is the pair $\{0,1\}$ for action $a$, with magnitude $2$. In $V$, the action-independent term contributes a larger mixed-difference on $\{1,2\}$ for both actions, with magnitude $3$. Thus the local predicate
\[
\text{``the unique dominant pair/action is }(\{0,1\},a)\text{''}
\]
holds for $U$ and fails for $V$.

The example separates quotient-level data from presentation-level statistics. Optimizer sets, sufficient coordinate sets, and relevant coordinates are identical for $U$ and $V$. The largest raw pair coefficient is different. Closure-soundness requires a tractability proxy to follow the quotient-level data, not the presentation statistic.

Action-gap normalization closes this particular gap: if the statistic is computed from $U(a,\cdot)-U(b,\cdot)$, not raw coefficients, the common state term cancels. The descent test identifies which preprocessing makes a parameter sound for a closure move. Later witnesses test other raw statistics and normalization choices.

\begin{figure}[t]
\centering
\begin{tikzpicture}[
  font=\small,
  presentation/.style={draw, rounded corners=1.5pt, fill=black!3,
    minimum width=1.65cm, minimum height=0.74cm, align=center},
  semantic/.style={draw, rounded corners=1.5pt, fill=black!8,
    minimum width=3.45cm, minimum height=0.72cm, align=center},
  target/.style={draw, minimum width=2.05cm, minimum height=0.62cm,
    align=center},
  forced/.style={draw, rounded corners=1.5pt, fill=black!5,
    minimum width=3.1cm, minimum height=0.58cm, align=center},
  note/.style={align=center},
  every path/.style={line width=0.45pt}
]
\node[presentation] (U) at (-3.25,1.75) {presentation\\$U$};
\node[presentation] (V) at (3.25,1.75) {presentation\\$V$};
\draw[<->] (U) -- node[above, note] {closure step\\$U\sim_{\mathrm{cl}}V$} (V);

\node[semantic] (S) at (0,0.48) {one exact-certification object};
\draw[->] (U) -- (S);
\draw[->] (V) -- (S);

\node[forced] (P) at (0,-0.42) {invariance forces $P(U)=P(V)$};
\draw[->] (S) -- (P);

\node[target] (QU) at (-3.25,-1.42) {$Q(U)=\mathsf{true}$};
\node[target] (QV) at (3.25,-1.42) {$Q(V)=\mathsf{false}$};
\draw[->, dashed] (U) -- (QU);
\draw[->, dashed] (V) -- (QV);
\draw[-] (QU.east) -- node[above, note] {raw target splits} (QV.west);

\node[note] at (0,-2.22) {$\Rightarrow$ no closure-invariant predicate can exactly decide $Q$};
\end{tikzpicture}
\caption{The orbit-gap template. A closure-invariant classifier must be constant on $U$ and $V$, while a raw local target predicate distinguishes them.}
\label{fig:orbit-gap-template}
\end{figure}

The same $U$ and $V$ reappear as the dominant-pair row in Table~\ref{tab:obstruction-families-summary}.

\section{Correctness Quotients and Stable Presentation Tests}\label{sec:quotient-preconditions}

Raw local tests can fail before any hardness question is posed. The running example shows the failure: a correctness-preserving action-independent shift leaves all action gaps fixed while changing a raw local coefficient statistic. States are therefore grouped by their correct outputs, and presentation moves are admissible only when they preserve that grouping.

\subsection{Anchored and Existential Targets}

\begin{definition}[Anchored and existential targets]\label{def:anchored-existential-targets}
An \emph{anchored} target fixes the coordinates, actions, or local template named by the predicate. An \emph{existential} target quantifies over possible placements of the same kind of witness.
\end{definition}

Anchored targets are witness-local; existential targets are witness-global. The dominant-pair, ghost-action, and offset-signature rows in Section~\ref{sec:finite-structural-obstructions} use anchored templates. Margin-boundedness is global over the displayed raw coefficient vocabulary: it compares the largest unary coefficient magnitude with the largest pair mixed-difference magnitude in the chosen pairwise presentation. Raw action-count thresholds are unanchored and fail descent by action duplication. Raw coordinate-count thresholds are unanchored and fail descent by irrelevant-coordinate extension. The quotient-MSO predicates in Section~\ref{sec:successor-invariants} quantify over states, coordinates, or action-profile classes only after the correctness quotient has been constructed.

\subsection{Standing Semantics and Quotient Objects}

For a decision problem $\mathcal D=(S,A,U)$, write
\[
\Opt(s)=\{a\in A : U(a,s) \text{ is maximal among all actions at } s\}.
\]
The optimizer quotient is the equivalence relation $s\sim_{\Opt}s'$ iff $\Opt(s)=\Opt(s')$.

\begin{definition}[Correctness relation and correctness quotient]\label{def:correctness-relation}
A \emph{correctness relation} is a state-indexed binary relation $R(s,a)$ on states and outputs. Write
\[
\operatorname{Corr}_R(s)=\{a:R(s,a)\}.
\]
Two states are \emph{correctness-equivalent}, written $s\sim_R s'$, when $\operatorname{Corr}_R(s)=\operatorname{Corr}_R(s')$. The quotient by $\sim_R$ is the \emph{correctness quotient}.
\end{definition}

\begin{definition}[Weighted Boolean Instance]\label{def:weighted-boolean-instance}
A \emph{weighted Boolean instance} is a finite coordinate set $I$, Boolean state space $\{0,1\}^I$, a finite action set $A$, and a utility
\[
U:A\times\{0,1\}^I\to \mathbb Q
\]
or to another ordered coefficient field used by the encoding. A \emph{binary pairwise instance} is a weighted Boolean instance whose utility is given by constant, unary, and pair-interaction coefficients.
\end{definition}

The binary-pairwise witnesses use rational coefficients to give exact countable syntax. The quotient and hull results use only the induced correctness relation. Real-valued, stochastic, continuous, or partially observed models enter through the declared admissibility rule $R(s,a)$; once that relation is fixed, the same quotient and descent test apply.

\begin{definition}[Tractability Proxy and Tractability Classifier]\label{def:tractability-proxy-classifier}
Let $D$ be a set of presentations encoding exact-certification problems. A parameter or predicate $p$ on $D$ is \emph{presentation-computable} when it is computed from the encoded syntax of $U$ without first constructing the correctness quotient or solving the certification problem. A \emph{tractability proxy} is such a parameter or predicate used to state that a specialized algorithm applies. A \emph{tractability classifier} assigns a tractability verdict for a fixed tractability notion on the underlying exact-certification problem.
\end{definition}

Presentation-computable proxies are evaluated before quotient canonization. Raw support-graph treewidth, coefficient ratios, local support statistics, backdoor tests, and raw action counts are typical examples. A proxy becomes exact only after its induced verdict descends to the correctness quotient; on closure-closed domains this forces constancy on closure orbits.

The quotient-recovery viewpoint extends beyond optimizer-set equivalence. The universal transfer theorem below shows that every state-indexed correctness relation induces the same quotient-recovery problem through a uniform totalization, so the closure-orbit and hull-separation machinery applies without modification.\leanmeta{\LHrng{FR}{223}{231}, \LHrng{FR}{340}{341}}

\leanmetapending{\LHrng{FR}{30}{35}, \LH{FR37}, \LH{FR51}, \LHrng{FR}{176}{177}}
\begin{proposition}[Quotient-Dependence Facts]\label{prop:depends-only-on-opt}\label{prop:zero-distortion-refines-quotient}
Sufficiency and relevance depend only on the decision quotient relation. In particular, two problems on the same state space with the same quotient relation have the same sufficient coordinate sets, minimal sufficient coordinate sets, relevant coordinates, irrelevant coordinates, quotient cardinality in finite settings, and all statistics computed from these data. A zero-distortion summary map has each fiber contained in one optimizer-quotient class and therefore needs at least one summary symbol per quotient class.
\end{proposition}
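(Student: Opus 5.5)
The plan is to fix definitions and then show that each listed notion is a function of the quotient relation $\sim_{\Opt}$ alone. A coordinate set $J\subseteq I$ is sufficient when agreement on $J$ forces equal optimizer sets, i.e.\ $s_J=s'_J\Rightarrow \Opt(s)=\Opt(s')$. Equivalently, the restriction relation $\{(s,s'):s_J=s'_J\}$ is contained in $\sim_{\Opt}$. A coordinate $i$ is relevant when some pair of states differing only at $i$ lies in different quotient classes. With these forms, each definition mentions $\Opt$ only through the predicate ``$\Opt(s)=\Opt(s')$''. That predicate is exactly $s\sim_{\Opt}s'$, so each definition is a formula in the state space and the relation $\sim_{\Opt}$. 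If the relevance notion in use is instead ``$i$ belongs to some minimal sufficient set'', I would note that it is again determined by the family of sufficient sets and so still depends only on $\sim_{\Opt}$.

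For the transfer statement, take two problems $\mathcal D_1,\mathcal D_2$ on the same state space with $\sim_{\Opt_1}=\sim_{\Opt_2}$. Substituting one relation for the other in each defining formula gives equal families of sufficient sets. Minimality is defined by inclusion within that family, so the minimal sufficient sets agree. Relevant and irrelevant coordinates agree directly. In finite settings the quotient cardinality $|S/{\sim}|$ is a function of the relation. Any statistic computed from these data is a composition of functions of $\sim$, hence equal for the two problems. I expect no obstacle here beyond making sure each notion is stated in relational form and not in terms of the actual sets $\Opt(s)$. The point is that the sets themselves may differ between the two problems, for example by relabeling actions, while the induced partition is the same.

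For the zero-distortion claim, I would define a summary map $\sigma:S\to\Sigma$ as zero-distortion when $\Opt(s)$ is recoverable from $\sigma(s)$, i.e.\ $\sigma(s)=\sigma(s')\Rightarrow\Opt(s)=\Opt(s')$. Then each fiber $\sigma^{-1}(y)$ lies inside a single $\sim_{\Opt}$-class. Choose one representative state from each quotient class. Two representatives from distinct classes cannot share a symbol, since that would put them in one fiber. So $\sigma$ restricted to the representatives is injective, and $|\sigma(S)|\ge|S/{\sim_{\Opt}}|$.

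The main obstacle is only definitional: the paper's intended formulation of ``zero-distortion'' might be a loss-based condition rather than exact recoverability. In that case I would first show that zero expected loss under a strictly proper decision loss forces $\Opt$ to be constant on fibers. This holds if every state has positive weight, and otherwise only on the support. After that step the pigeonhole argument above applies unchanged.
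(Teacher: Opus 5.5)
Your proposal is correct and follows essentially the paper's route. Sufficiency is refinement of the agreement-on-$J$ relation into $\sim_{\Opt}$. Relevance is failure of that refinement after erasing coordinate $i$; your ``differs only at $i$'' formulation is the same condition. The zero-distortion bound is fiber containment in one quotient class followed by the injection/pigeonhole step. Your loss-based detour in the last paragraph is unnecessary, since the paper takes zero distortion to mean exactly this refinement condition on summary fibers.
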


\begin{proof}[Proof sketch]
Sufficiency is refinement of the quotient relation, and relevance is failure of sufficiency after one-coordinate erasure. The zero-distortion clause is the same refinement condition applied to the summary fibers.
\end{proof}

\leanmetapending{\LH{FR352}}
\begin{proposition}[Minimal Sufficient Sets Are Canonical on Product Spaces]\label{prop:minimal-sufficient-canonical}
Assume $S=\prod_{j\in J}S_j$, with decidable equality on each coordinate domain, and let $I$ be a minimal sufficient set. Then $i\in I$ if and only if $i$ is relevant. Consequently, whenever a minimal sufficient set exists, it is unique.
\end{proposition}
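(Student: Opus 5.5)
We need the standard definitions first. A set $I\subseteq J$ is \emph{sufficient} when agreement on $I$ forces quotient equivalence: $s_I=s'_I$ implies $\Opt(s)=\Opt(s')$ (or $s\sim_R s'$ in general). A coordinate $i$ is \emph{relevant} when there exist states $s,s'$ that differ only at coordinate $i$ and lie in different quotient classes. Equivalently, relevance of $i$ is the failure of $J\setminus\{i\}$ to be sufficient, which matches the proof sketch of Proposition~\ref{prop:depends-only-on-opt}. The plan has two parts: show that a minimal sufficient set $I$ contains every relevant coordinate, and show that it contains only relevant coordinates. Uniqueness then follows at once, since every minimal sufficient set equals the set of relevant coordinates.

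\emph{Relevant coordinates lie in $I$.} Suppose $i$ is relevant, with witnesses $s,s'$ that differ only at $i$ and lie in different quotient classes. If $i\notin I$, then $s_I=s'_I$, so sufficiency of $I$ gives $s\sim s'$, a contradiction. This direction uses neither minimality nor the product structure.

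\emph{Coordinates of $I$ are relevant.} Let $i\in I$. By minimality, $I\setminus\{i\}$ is not sufficient, so there are states $s,s'$ with $s_{I\setminus\{i\}}=s'_{I\setminus\{i\}}$ but $s\not\sim s'$. Build a hybrid state $t$ using the product structure: set $t_i=s_i$ and $t_j=s'_j$ for all $j\neq i$.
\begin{itemize}
\item Then $t_I=s_I$, because $t$ agrees with $s'$, hence with $s$, on $I\setminus\{i\}$, and $t_i=s_i$. Sufficiency of $I$ gives $t\sim s$.
\item Since $\sim$ is an equivalence relation and $s\not\sim s'$, we get $t\not\sim s'$.
\item The states $t$ and $s'$ differ at most at coordinate $i$, and they cannot be equal because they lie in different classes. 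So $i$ is relevant, witnessed by the pair $(t,s')$.
\end{itemize}
Decidable equality on the coordinate domains is what makes this hybrid construction well defined: the coordinate-wise case split "$j=i$ or not" must be computable, and the Lean formalization needs it. Classically it is automatic.

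The main obstacle is this second direction. The hybrid construction is the only step that uses $S=\prod_j S_j$; without a product structure, minimality need not yield a single-coordinate witness. I will also check that the statement does not need a finite index set $J$: the argument only ever changes one coordinate, so it works for arbitrary $J$.
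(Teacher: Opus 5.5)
Your proof is correct and is essentially the paper's argument. The paper packages the same coordinatewise splice as closure of sufficient sets under finite intersection, applied to $I$ and $J\setminus\{i\}$, to rule out irrelevant coordinates in a minimal sufficient set; you inline that splice as a one-coordinate hybrid and argue contrapositively, and your other direction is the paper's observation that a sufficient set omitting $i$ would make $J\setminus\{i\}$ sufficient. One small correction to your side remark: the case split ``$j=i$ or not'' needs decidable equality on the index set $J$, not on the coordinate domains $S_j$, though classically neither affects the argument.
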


\begin{proof}[Proof sketch]
Sufficient sets are closed under finite intersection: if $A,B$ are sufficient and $s,s'$ agree on $A\cap B$, the coordinatewise splice taking $A$-coordinates from $s$ and the remaining coordinates from $s'$ lies in $\prod_j S_j$ and connects $s$ to $s'$ through $A$- and $B$-agreement. Thus a minimal sufficient set cannot contain an irrelevant coordinate. Conversely, every relevant coordinate lies in every sufficient set, since omitting it would make $\mathrm{univ}\setminus\{i\}$ sufficient. Hence the minimal sufficient set is exactly the relevant-coordinate set.
\end{proof}

Write $\srank(\mathcal D)$ for the number of relevant coordinates. In the Boolean product regime, if the minimal sufficient set exists and $m$ is the number of quotient classes, then $m\le 2^{\srank(\mathcal D)}$ and every zero-distortion summary needs at least $m$ symbols.\leanmeta{\LHrng{FR}{353}{354}} Conversely, finite quotient size is unbounded: for every $m$ there is a finite decision problem whose optimizer quotient has size exactly $m$.\leanmeta{\LH{FR24}, \LH{FR38}} The universal class of all decision problems is invariant under action and coordinate-induced state relabeling but is kernel-universal, so relabeling invariance alone does not isolate exact tractability classification.\leanmeta{\LHrng{FR}{21}{23}} Exact classification needs both benign invariance under names and an expressivity restriction preventing arbitrary quotient geometry.

\subsection{Closure Laws and Direct Local Predicates}\label{sec:closure-preconditions}

Closure-soundness blocks predicates that change under harmless presentation moves. Finite-structural definability restricts attention to direct local routing tests evaluated before quotient solving.

\leanmetapending{\LHrng{FR}{15}{20}, \LHrng{FR}{55}{60}, \LHrng{FR}{83}{85}}
\begin{proposition}[Closure Operations Preserve Exact Certification]\label{prop:closure-operations-preserve-certification}
Action relabeling, coordinate relabeling, statewise positive affine reparameterization, action duplication, state duplication, and binary irrelevant-coordinate extension preserve the exact-certification problem. Relabelings use the inverse label map; action duplication projects a duplicate label to its source action; state duplication projects a copied state to its source state; irrelevant-coordinate extension projects away the new coordinate. They preserve sufficient coordinate sets and relevant coordinates under transport, and an added irrelevant coordinate is irrelevant. On sparse binary-pairwise encodings, the listed transports are polynomial-time computable.\label{rem:encoding-complexity}
\end{proposition}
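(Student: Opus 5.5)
The plan is to reduce everything to Proposition~\ref{prop:depends-only-on-opt}. Each listed operation comes with a transport map on states, and sometimes on actions. For each, I will check that the optimizer sets correspond under transport. Concretely, the correctness set at a transported state must equal the image under the action map of the correctness set at the source state. Sufficiency and relevance are defined from the quotient relation, and Proposition~\ref{prop:depends-only-on-opt} says every such statistic depends only on that relation. So once the quotient is shown to transport, preservation of sufficient and relevant coordinate sets follows. The statement fixes the transports: inverse label maps for relabelings, projection to the source for duplications, and projection away the new coordinate for irrelevant-coordinate extension.

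The case checks proceed as follows.
\begin{itemize}
\item \emph{Action relabeling} by a bijection $\pi$ defines $U'(a,s)=U(\pi^{-1}a,s)$. Then $\Opt'(s)=\pi(\Opt(s))$, so $\Opt'(s)=\Opt'(s')$ iff $\Opt(s)=\Opt(s')$.
\item \emph{Coordinate relabeling} by $\sigma$ induces a state bijection. Agreement on $I$ corresponds to agreement on $\sigma(I)$, so sufficient sets and relevant coordinates transport by $\sigma$.
\item \emph{Statewise positive affine reparameterization} is $U'(a,s)=\alpha(s)U(a,s)+\beta(s)$ with $\alpha(s)>0$. At each state it is an order-preserving map on the action values, so $\Opt'=\Opt$ pointwise. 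This is the running-example computation.
\item \emph{Action duplication} adds $a'$ with $U(a',\cdot)=U(a,\cdot)$. Then $\Opt'(s)$ contains $a'$ iff it contains $a$, and projecting $a'\mapsto a$ recovers $\Opt(s)$. Since this projection is injective on the sets that arise, equality of $\Opt'$ sets is equivalent to equality of $\Opt$ sets.
\item \emph{State duplication} copies a state with its utilities. The copy lies in its source's quotient class, and the quotient of the new space pulls back along the projection.
\item \emph{Irrelevant-coordinate extension} sets $U'(a,(s,x_{\mathrm{new}}))=U(a,s)$. Then $\Opt'$ factors through the projection, so a set $I$ is sufficient for $U$ iff $I$ is sufficient for $U'$, and the same holds for $I\cup\{\mathrm{new}\}$. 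Flipping $x_{\mathrm{new}}$ never changes $\Opt'$, so the new coordinate is irrelevant.
\end{itemize}

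The step that needs the most care is making ``preserve the exact-certification problem'' precise for the duplication cases, where the state or action spaces change size. There the claim is not equality of quotients but that the quotient on the new space is the pullback of the old one along the projection. For action duplication it is the image of $\Opt$ under the duplicate-insertion map, which is injective on the relevant sets. For state duplication on a product space, I would treat the duplicate as a copy along a coordinate, or else state relevance only for the coordinates that persist. I expect this bookkeeping, not any inequality, to be the main obstacle.

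Finally, on sparse binary-pairwise encodings the transports are polynomial time.
\begin{itemize}
\item Relabelings rename indices in the coefficient list.
\item Duplication copies one action's coefficient block.
\item Irrelevant extension adds an index with no coefficients.
\item An affine shift with a pairwise-representable $\beta$ adds $\beta$'s coefficients to each action's block, a cost of $O(|A|\cdot|\beta|)$. Scaling by a positive constant multiplies all coefficients.
\end{itemize}
A state-dependent $\alpha$ would leave the pairwise class, so the complexity clause applies to the pairwise-representable moves.
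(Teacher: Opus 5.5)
Your proposal is correct and follows the same route as the paper's sketch. Relabelings act by bijections, a statewise positive affine map preserves the action order at each state, duplications project copies back to their sources, and irrelevant-coordinate extension changes no correctness set, after which sufficiency and relevance transport. For the state-duplication case you flag, no extra coordinate is needed: if the copy carries its source's coordinate values and utilities, then both coordinate agreement on any $I$ and the quotient relation on the enlarged space are pullbacks along the projection, so sufficient sets and relevant coordinates transport without any product structure.
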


\begin{proof}[Proof sketch]
Relabelings act by bijections. A statewise positive affine transport $V(a,s)=\alpha(s)+\beta(s)U(a,s)$ with $\beta(s)>0$ preserves the action order at each state. Duplication projects to the original profile or state. Irrelevant-coordinate extension lifts old witnesses and changes no correctness set by flipping the new coordinate.
\end{proof}

The obstruction witnesses use one small closure step. If $\alpha(x_i,x_j)$ is a pair-supported state term independent of the action and $V(a,x)=U(a,x)+\alpha(x_i,x_j)$ for every action $a$, then $V$ is a statewise positive-affine reparameterization with scale $1$. All action gaps are unchanged, so optimizer sets, sufficient-coordinate families, and relevant coordinates are unchanged.\label{rem:action-independent-pair-shift}

\begin{definition}[Closure-Closed Domain]\label{def:closure-closed-domain}
A domain $D$ of presentations is \emph{closure-closed} when every declared closure step whose source lies in $D$ and whose target remains in the ambient syntax also has its target in $D$.
\end{definition}

\begin{definition}[Closure-Sound Predicate]\label{def:closure-sound-predicate}
Let $D$ be a domain equipped with domain-relative closure laws. An instance predicate $Q$ on $D$ is \emph{closure-sound} when it is invariant under every within-domain primitive closure step. A class of predicates is closure-sound when each of its members is closure-sound.
\end{definition}

For a single predicate on a fixed domain, closure-soundness is closure-law invariance: invariance under primitive closure steps, equivalently constancy on closure orbits by Lemma~\ref{lem:closure-law-invariance-iff-orbit-constancy}. Orbit membership may still be harder than executing a witnessed closure step; the no-go results use explicit same-orbit witnesses.\label{rem:orbit-membership-cost}

\begin{definition}[Finite-Structural Predicate]\label{def:finite-structural-predicate}
A finite-structural predicate on a binary pairwise domain is a closure-sound predicate $Q$ satisfying three additional conditions:
\begin{enumerate}
\item $Q$ is decidable in polynomial time from the instance encoding;
\item a polynomial-time extractor builds a finite raw pairwise syntax $X(U)$ from the presentation;
\item the verdict is determined by a fixed finite menu of rooted local patterns in $X(U)$.
\end{enumerate}
The pattern data fix a radius, a vertex bound, an action-label budget, and a finite coefficient alphabet independently of the input.
\end{definition}

Equivalently, there are finite witness and forbidden pattern families $\mathcal W,\mathcal F$ such that $Q(U)$ is determined by occurrence or non-occurrence of those fixed patterns in bounded-radius neighborhoods of the extracted syntax. The supplement expands this occurrence relation as first-order formulas over $X(U)$.\label{rem:finite-structural-scope} Closure-soundness is the descent requirement used by the no-go theorem. The three additional clauses locate the direct raw-local proxy language: efficient syntax extraction plus a fixed bounded-pattern test. Growing vocabularies, quotient canonization, fixed-point iteration, and profile compression belong to the quotient-level layer.\label{rem:bounded-pattern-action-count-sensitivity} The Lean declarations use the name \texttt{Admissible\allowbreak Normalization\allowbreak Predicate} for this finite-structural interface.\leanmeta{\LH{FR146}}

\leanmetapending{\LHrng{FR}{197}{199}}
\begin{theorem}[Correctness Implies Closure-Orbit Invariance]\label{thm:correct-classifier-forces-invariance}
Let $C$ assign tractability verdicts on a closure-closed represented problem family with polynomial-time-computable closure transports. Fix the family-level tractability predicate ``the induced exact-certification task is polynomial-time decidable as a function of its represented input or state size.'' If $C$ correctly predicts that tractability status for the underlying exact-certification problem represented by each presentation, then $C$ is constant on closure orbits.
\end{theorem}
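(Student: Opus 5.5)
The plan is to reduce the claim to a single closure step and then pass to orbits by induction. A closure orbit is the equivalence class generated by primitive closure steps. It therefore suffices to show that for every within-domain primitive step $U\to V$ (both presentations in the closure-closed family) we have $C(U)=C(V)$. Constancy on orbits then follows by induction on the length of a zig-zag chain of steps and their inverses, which is the content of Lemma~\ref{lem:closure-law-invariance-iff-orbit-constancy}.

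For a single step, I will show that the family-level predicate ``the induced exact-certification task is polynomial-time decidable in its represented input size'' takes the same truth value at $U$ and at $V$. Correctness of $C$ means $C(U)$ equals this truth value at $U$, and likewise at $V$, so $C(U)=C(V)$ follows. The comparison of truth values has two parts.

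\emph{Semantic part.} By Proposition~\ref{prop:closure-operations-preserve-certification}, the step preserves the exact-certification problem up to the stated transport: label inverse for relabelings, projection for duplications, coordinate projection for irrelevant extension. It also preserves sufficient and relevant coordinate sets. Proposition~\ref{prop:depends-only-on-opt} then gives that all certification answers at $V$ are the transported answers at $U$.

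\emph{Complexity part.} Because the transports are polynomial-time computable and the family is closure-closed, each step is a polynomial-time many-one correspondence between the represented tasks in both directions. In one direction, an instance of the task for $V$ is mapped back along the projection or inverse label map, solved for $U$, and its answer is transported forward. The reverse direction is symmetric, using the forward transport.

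The main obstacle is size control. Polynomial-time decidability ``as a function of represented input or state size'' only transfers if the transported inputs have size polynomially related to the originals. Action and state duplication, and especially irrelevant-coordinate extension (which doubles the state space), could a priori change the size parameter. I would handle this in two steps. First, I would invoke the encoding remark in Proposition~\ref{prop:closure-operations-preserve-certification}: on sparse encodings each transport is computed in polynomial time, so its output length is polynomially bounded in its input length. Second, I would note that the inverse directions (projections) only shrink the input. Together these give polynomially related sizes in both directions, so polynomial-time decidability is preserved both ways and the step argument goes through. If the state-size reading is intended, I would check that duplication at most adds copies polynomially, and that coordinate extension changes $|I|$ by one, which the family-level (uniform) notion absorbs.
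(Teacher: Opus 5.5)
Your proposal is correct and follows the paper's proof. Each primitive closure step gives polynomial-time reductions in both directions between the represented certification tasks, so the family-level tractability status is invariant along every closure edge and every finite closure chain, and a correct classifier must report that invariant status. Your explicit size-control bookkeeping fills in a step the paper only asserts. The one ordering difference is that the paper chains the tractability predicate itself and applies correctness only at the two endpoints, rather than chaining $C$; this avoids needing the intermediate presentations of a zig-zag chain to lie in the domain where $C$ is correct.
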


\begin{proof}
Each primitive closure step gives polynomial-time reductions in both directions between the represented exact-certification task families: relabelings use inverse relabelings, positive-affine transports preserve optimizer sets pointwise, duplication uses projection and lifting, and irrelevant-coordinate extension uses coordinate projection and lift. Polynomial-time decidability is therefore invariant along each closure edge and along every finite closure chain. A correct classifier must report that invariant family-level tractability status.
\end{proof}

\leanmetapending{\LH{FR248}, \LHrng{FR}{259}{262}, \LHrng{FR}{316}{336}}
\begin{theorem}[Compute-Cost Closure-Orbit Invariance]\label{thm:optimizer-computation-classifier-forces-invariance}
The same orbit agreement holds for classifiers that correctly predict polynomial-time solvability of optimizer computation, deterministic optimizer-set payload output, correctness search over an external output class with preserved correctness relation, or search over representation-relative output objects with explicit polynomial-time output transports.
\end{theorem}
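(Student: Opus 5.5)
The plan is to reuse the template of Theorem~\ref{thm:correct-classifier-forces-invariance}. For each primitive closure step between presentations $U$ and $V$, I will exhibit polynomial-time reductions in both directions between the two task families. Then polynomial-time solvability is invariant along closure edges, hence along finite closure chains, and a correct classifier must be constant on closure orbits. The only new work is to check, for each of the four listed task types, that the transports from Proposition~\ref{prop:closure-operations-preserve-certification} give such reductions. The difficulty is that these reductions must now carry outputs (optimizer sets, witnesses, output objects) back across the step, not only yes/no answers.

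\textbf{Optimizer computation and optimizer-set payloads.} Given a state $s'$ of $V$, I map it to a state $s$ of $U$ as follows:
\begin{itemize}
\item under relabeling, apply the inverse label map;
\item under state duplication, take the source state;
\item under irrelevant-coordinate extension, project away the new coordinate;
\item under positive-affine reparameterization, use the identity.
\end{itemize}
I then compute $\Opt_U(s)$ and transport the answer back. For relabelings this means applying the label bijection. For action duplication it means taking the full preimage under the projection, so each source action is expanded to all of its duplicate labels; this is polynomial because the duplicate map is part of the sparse encoding. For positive-affine steps the optimizer sets agree pointwise because $\beta(s)>0$. The reverse direction uses the lift maps symmetrically. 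The payload output is a set whose size is bounded by $|A|$, so it stays polynomial.

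\textbf{Correctness search over an external output class.} Here the hypothesis is that the closure step preserves the correctness relation $R$ up to the stated transport. A correct output at the transported state is therefore correct at the original state, after the output transport is applied. I will compose: transport the input, run the solver, transport the output, and check membership via the preserved relation. For the representation-relative clause, the explicit polynomial-time output transports are assumed in both directions. Correctness then follows once I verify that they commute with $R$, i.e.\ $R_V(s',\tau(o))\iff R_U(\pi(s'),o)$.

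\textbf{Main obstacle.} I expect the hardest part to be action duplication and state duplication in the search versions. There, projection is not injective, so a returned output must be lifted to some canonical preimage, and I must confirm that every preimage is correct. I would argue this from the fact that duplicates carry identical utility profiles: a duplicate label is optimal exactly when its source is, and a copied state has the same correctness set as its source. After that, composition of reductions over closure chains is routine, and orbit constancy follows exactly as in the previous theorem.
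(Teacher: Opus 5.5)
Your proposal is correct and follows the paper's own argument: for each primitive closure edge, pull the queried state back, solve in the source presentation, push the output forward through the action or output transport, use the inverse edge for the reverse reduction, and compose along finite closure chains. One small framing point is that, for the representation-relative clause, the compatibility $R_V(s',\tau(o))\Leftrightarrow R_U(\pi(s'),o)$ is part of the stated preservation hypotheses, not something you verify; likewise no runtime membership check is needed, since correctness of the transported output follows directly from that hypothesis.
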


\begin{proof}[Proof sketch]
For one primitive closure edge, pull the queried state back, solve in the source presentation, and push the output through the action or output transport. The stated preservation hypotheses give correctness, and the inverse edge gives the reverse reduction. Composing along a finite orbit chain preserves polynomial overhead.
\end{proof}

\leanmetapending{\LHrng{FR}{206}{209}}
\begin{proposition}[Bounded Distinct Action Profiles Compress to Bounded Actions]\label{prop:distinct-profile-compression}
For a binary pairwise instance $U$, let $d(U)$ be the number of distinct action utility profiles. There is a compressed instance $U^{\mathrm{prof}}$ with exactly $d(U)$ actions such that sufficient coordinate sets and relevant coordinates agree between $U$ and $U^{\mathrm{prof}}$. Thus bounded-actions polynomiality transfers to the bounded-distinct-profile subcase when $d(U)\le k$.
\end{proposition}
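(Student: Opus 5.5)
The plan is to build $U^{\mathrm{prof}}$ by quotienting the action set by equality of utility profiles, and then to reduce everything to the observation that this quotient leaves the optimizer quotient unchanged. Proposition~\ref{prop:depends-only-on-opt} then gives agreement of sufficient and relevant coordinate sets.

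\textbf{Construction.} For each action $a\in A$, take its profile $U(a,\cdot)$, viewed as a function on $\{0,1\}^I$ but represented syntactically by its constant, unary and pair coefficients. Two actions are identified when their profiles agree, and there are $d(U)$ such classes. Choose one representative per class and let $U^{\mathrm{prof}}$ be the restriction of $U$ to the set $A'$ of representatives. This is again binary pairwise, since its coefficients are copied, and it has exactly $d(U)$ actions.

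If $d(U)$ counts syntactically distinct coefficient vectors rather than distinct functions, nothing changes. On the Boolean cube the multilinear representation $c+\sum_i u_ix_i+\sum_{i<j}w_{ij}x_ix_j$ is unique, so coefficient equality coincides with functional equality. I would note this uniqueness explicitly, because it is what makes the compression computable from the encoding.

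\textbf{Key step: same optimizer quotient.} Let $\pi:A\to A'$ send each action to its representative. Then $U(a,s)=U(\pi(a),s)$ for every state $s$. It follows that the maximum utility at $s$ over $A$ equals the maximum over $A'$, and
\[
\Opt_U(s)=\pi^{-1}\bigl(\Opt_{U^{\mathrm{prof}}}(s)\bigr),\qquad \Opt_{U^{\mathrm{prof}}}(s)=\pi(\Opt_U(s)).
\]
Since $\pi^{-1}$ is injective on subsets of $A'$ (because $\pi$ is surjective), we get $\Opt_U(s)=\Opt_U(s')$ if and only if $\Opt_{U^{\mathrm{prof}}}(s)=\Opt_{U^{\mathrm{prof}}}(s')$. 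So the two problems share a state space and have the same quotient relation. Proposition~\ref{prop:depends-only-on-opt} then yields equality of the sufficient coordinate sets and of the relevant coordinates. An alternative route is to observe that $U$ is obtained from $U^{\mathrm{prof}}$ by iterated action duplication and to invoke Proposition~\ref{prop:closure-operations-preserve-certification}. I would mention this route but use the direct argument.

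\textbf{Transfer clause.} Grouping actions by coefficient vectors (by sorting or hashing them) takes polynomial time in the sparse encoding. So when $d(U)\le k$, one computes $U^{\mathrm{prof}}$, which has at most $k$ actions, runs the bounded-actions polynomial algorithm on it, and returns its answer. For sufficiency queries the answer is returned unchanged; for relevance queries it is returned under the identity map on coordinates.

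\textbf{Main obstacle.} The only delicate point is making precise what ``distinct profiles'' means and which certification tasks are meant to transfer. Questions about optimizer sets themselves require mapping the output back through $\pi^{-1}$, which is also polynomial. I expect the uniqueness of the multilinear representation to be the one point needing explicit justification; it holds because the monomials $x_S$ form a basis of functions on $\{0,1\}^I$.
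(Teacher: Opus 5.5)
Your proposal is correct and follows the paper's own argument. The paper's sketch quotients the action labels by equality of full utility profiles and observes that optimizer-set equality depends only on which profiles are optimal, so the optimizer quotient is unchanged, and Proposition~\ref{prop:depends-only-on-opt} then gives agreement of sufficiency and relevance. Your identity $\Opt_U(s)=\pi^{-1}(\Opt_{U^{\mathrm{prof}}}(s))$, with injectivity of preimage under a surjection, and your remark that the compression is computable from the encoding fill in details the sketch leaves implicit.
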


\begin{proof}[Proof sketch]
Quotient the action labels by equality of full utility profiles. Optimizer equality depends on which profiles are optimal, not on how many labels realize a profile.
\end{proof}

\section{Universal Exact-Semantics Reduction}\label{sec:exact-semantics-foundation}

Every computational problem that specifies which outputs are correct at which states has a correctness relation $R(s,a)$. The relation determines the correctness quotient, and sufficiency, relevance, and orbit-gap obstructions factor through that quotient. Exact means exact agreement with the declared relation, even when the relation encodes thresholds, randomized outputs, distributional constraints, horizons, or failure states.

\leanmetapending{\LHrng{FR}{223}{231}, \LHrng{FR}{340}{341}}
\begin{theorem}[Universal Correctness Transfer]\label{cor:boolean-payload-transfer}\label{cor:predicate-transfer}\label{cor:set-valued-payload-transfer}\label{cor:totalized-set-valued-payload-transfer}\label{cor:relational-output-transfer}
Let $R(s,a)$ be any state-indexed correctness relation. Exact sufficiency and relevance for $R$ reduce to exact correctness certification for a totalized decision problem with a strict utility gap: correct outputs receive the allowed value, incorrect outputs receive the blocked value, and a failure token is optimal exactly when the correctness set is empty. Consequently, exact output, deterministic payload, search, and approximation guarantees all induce the same correctness-quotient recovery problem. For finite represented state and output data, the construction is explicit; a polynomial-time Turing reduction requires polynomial-time access to membership in $R$ and to the represented correctness sets.
\end{theorem}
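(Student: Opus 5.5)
The plan is to build the totalized decision problem explicitly and then show that its optimizer quotient coincides with the correctness quotient $\sim_R$. Once that holds, Proposition~\ref{prop:depends-only-on-opt} transfers sufficiency and relevance across the construction.

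\textbf{Construction.} Given $R$ on states $S$ and outputs $A$, I set the action set to $A^\star=A\cup\{\bot\}$, with a fresh failure token $\bot$. I fix two values $v_{\mathrm{allow}}>v_{\mathrm{block}}$, for instance $1$ and $0$, and define
\[
U^\star(a,s)=\begin{cases}1 & a\in A,\ R(s,a),\\ 0 & a\in A,\ \neg R(s,a),\end{cases}
\]
together with
\[
U^\star(\bot,s)=\begin{cases}1 & \operatorname{Corr}_R(s)=\emptyset,\\ 0 & \text{otherwise}.\end{cases}
\]
The value of $\bot$ could instead be taken strictly between the two levels. Either choice gives a strict gap, and it must be applied uniformly so that the construction does not depend on $s$ beyond $\operatorname{Corr}_R(s)$.

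\textbf{Key lemma.} For every state, $\Opt^\star(s)=\operatorname{Corr}_R(s)$ when this set is nonempty, and $\Opt^\star(s)=\{\bot\}$ otherwise. This is a two-case check.
\begin{itemize}
\item If some correct output exists, the maximum value is $1$. It is attained exactly on the correct outputs, and $\bot$ scores $0$.
\item If no correct output exists, every element of $A$ scores $0$, so $\bot$ is the unique maximizer.
\end{itemize}

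The map $T\mapsto T$ for $T\neq\emptyset$, together with $\emptyset\mapsto\{\bot\}$, is injective. Hence $s\sim_{\Opt^\star}s'$ if and only if $s\sim_R s'$. Sufficiency of a coordinate set means that agreement on the set forces quotient equality, and relevance is failure of sufficiency after one-coordinate erasure. Both notions therefore coincide for $R$ and $U^\star$ by Proposition~\ref{prop:depends-only-on-opt}.

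\textbf{Output modes.} Each guarantee (exact output, deterministic payload, search, approximation) is specified by its own correctness relation $R'$. Examples are the relation ``output equals $\operatorname{Corr}_R(s)$'' for payloads and the relation ``output lies in an allowed tolerance set'' for approximation. The argument then shows that each such $R'$ induces the same partition as $\sim_R$, or else takes the statement to mean that each is handled by instantiating the lemma at $R'$. I expect this to be the main obstacle.

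The first thing I would try is to observe that each such $R'(s,\cdot)$ is a function of $\operatorname{Corr}_R(s)$, which shows that $\sim_R$ refines $\sim_{R'}$. For exact payload and search, the induced map is injective on correctness sets, which gives equality of the partitions. Where injectivity fails, as can happen for approximation, I would phrase the conclusion as recovery of the $R'$-quotient via the same totalization.

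\textbf{Complexity.} For finite data, the table of $U^\star$ is written out directly. A Turing reduction evaluates $U^\star(a,s)$ with one membership query to $R$. It evaluates $U^\star(\bot,s)$ with one emptiness query on the represented $\operatorname{Corr}_R(s)$. This is polynomial under the stated access assumptions, and it composes with any certification procedure for $U^\star$.
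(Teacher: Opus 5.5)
Your proof matches the paper's: you use the same totalization with a failure token, the same two-case check that the optimizer set is $\operatorname{Corr}_R(s)$ when nonempty and $\{\bot\}$ otherwise, and the same injectivity argument to carry sufficiency and relevance across the quotient (your variant giving $\bot$ a constant value strictly between the two levels also works, and it removes the emptiness query from the reduction). The ``output modes'' step is not an obstacle: the paper reads that clause as applying the same totalization to each guarantee form's own correctness relation, as listed in Table~\ref{tab:guarantee-form-correctness-relations}, which is your fallback; drop the attempt to show that the various $R'$ induce the same partition as $\sim_R$, since there is no common base relation and that claim fails in general.
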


\begin{table}[h]
\centering
\small
\setlength{\tabcolsep}{4pt}
\renewcommand{\arraystretch}{1.08}
\renewcommand{\tabularxcolumn}[1]{m{#1}}
\begin{tabularx}{\linewidth}{@{}>{\raggedright\arraybackslash}m{0.22\linewidth}>{\raggedright\arraybackslash}m{0.46\linewidth}>{\raggedright\arraybackslash}X@{}}
\toprule
\textbf{Guarantee form} & \textbf{Correctness relation} & \textbf{Correctness set} \\
\midrule
Exact output & $R(s,a)$ iff $a$ is the designated output & designated outputs \\
\midrule
Payload & $R(s,a)$ iff $a=\phi(s)$ & $\{\phi(s)\}$ \\
\midrule
Search & $R(s,a)$ iff $a\in X_{\mathrm{adm}}(s)$ & admissible witnesses \\
\midrule
Approximation & $R(s,a)$ iff $U(a,s)\ge \max U(\cdot,s)-\varepsilon$ & $\varepsilon$-optimizers \\
\bottomrule
\end{tabularx}
\caption{Representative guarantee forms as state-indexed correctness relations.}
\label{tab:guarantee-form-correctness-relations}
\end{table}

\begin{proof}
Write $F(s)=\operatorname{Corr}_R(s)$. Add a failure token $\bot$. If $F(s)$ is nonempty, give exactly the outputs in $F(s)$ the allowed score and give all other outputs, including $\bot$, a lower score. If $F(s)$ is empty, make $\bot$ uniquely allowed. The induced optimizer set is $F(s)$ when $F(s)\ne\emptyset$ and $\{\bot\}$ otherwise, an injective encoding of correctness sets. Hence equality of correctness sets is equivalent to equality of optimizer sets in the induced decision problem. Sufficiency and relevance are the same quotient-recovery conditions for both relations.
\end{proof}

PAC certificates, regret or risk bounds, randomized-output guarantees, regression thresholds, and finite-horizon control guarantees fit the same construction once they specify which outputs are admissible at each state.

\leanmetapending{\LHrng{FR}{342}{343}, \LH{FR346}}
\begin{example}[Approximation correctness relation]\label{ex:worked-approximation-admissibility}
Approximate admissibility with tolerance $\varepsilon$ declares
\[
R(s,a)\quad\Longleftrightarrow\quad W(a,s)\ge \max_b W(b,s)-\varepsilon.
\]
The correctness set is the $\varepsilon$-optimizer set. Theorem~\ref{cor:relational-output-transfer} totalizes this relation exactly. Section~\ref{sec:approximation-stability} treats the different question of when a nearby utility presentation defines the same quotient.
\end{example}

\leanmetapending{\LHrng{FR}{197}{199}, \LHrng{FR}{232}{244}, \LHrng{FR}{252}{256}, \LHrng{FR}{268}{315}, \LHrng{FR}{302}{303}, \LH{FR345}, \LH{FR351}}
\begin{corollary}[Correctness Quotient Universality and Coordinate Presentations]\label{cor:correctness-quotient-universality}\label{cor:universal-scope-rigorously-specified-problems}\label{prop:every-exact-specification-admits-coordinate-presentation}\label{prop:finite-specifications-admit-coordinate-presentations}\label{prop:countable-specifications-admit-countable-boolean-presentations}\label{prop:finite-specifications-admit-low-dimensional-boolean-presentations}\label{cor:named-guarantee-semantics-transfer}\label{prop:semantic-claims-depend-only-on-output-equivalence}\label{prop:semantically-extensional-claims-factor-through-the-quotient}\label{cor:every-state-equivalence-relation-is-realizable-as-exact-output-semantics}
Every fixed correctness relation determines a canonical correctness quotient. Exact sufficiency is recovery of its classes, and relevance is failure of that recovery after erasing one coordinate. Every correctness relation admits a coordinate presentation by taking the entire state as one coordinate; finite and countable encodable state spaces admit Boolean coordinate presentations. If two semantics induce the same equality relation on correctness sets, then they induce the same sufficient-coordinate family and relevant-coordinate set. Conversely, every state equivalence relation is realized by taking outputs to be quotient classes and assigning each state its own class as a singleton correctness set.
\end{corollary}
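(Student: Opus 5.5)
The plan is to prove the clauses of Corollary~\ref{cor:correctness-quotient-universality} in the order they are stated. Each clause should reduce to Definition~\ref{def:correctness-relation} or to Theorem~\ref{cor:relational-output-transfer}.

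\emph{Canonical quotient, sufficiency, and relevance.} For a fixed relation $R$, the map $s\mapsto\operatorname{Corr}_R(s)$ is determined by $R$. Its kernel $\sim_R$ is therefore canonical, in the sense that it involves no choice of presentation. I take exact sufficiency of a coordinate set $I$ to mean: whenever $s,s'$ agree on $I$, we have $\operatorname{Corr}_R(s)=\operatorname{Corr}_R(s')$. This says exactly that agreement on $I$ refines $\sim_R$, i.e.\ that the $I$-projection recovers the quotient classes. Relevance of $i$ is failure of sufficiency for $\mathrm{univ}\setminus\{i\}$, which is failure of recovery after erasing $i$. These statements can be read off directly. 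Alternatively, Theorem~\ref{cor:relational-output-transfer} totalizes $R$ into a decision problem whose optimizer quotient equals $\sim_R$, and Proposition~\ref{prop:depends-only-on-opt} then applies.

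\emph{Coordinate presentations.} Take the index set to be a single coordinate with domain $S$. The identity map is a coordinate presentation, and $R$ is transported unchanged. If $S$ is finite or countable and encodable, fix an injection $e:S\to\{0,1\}^{k}$ in the finite case ($k=\lceil\log_2|S|\rceil$), or an injection into finitely supported Boolean sequences in the countable case. Define $R'(e(s),a)=R(s,a)$ on the image. Off the image, choose a convention: either restrict the state space to $e(S)$, or extend by empty correctness sets, which the failure token of Theorem~\ref{cor:relational-output-transfer} handles. On the image, the correctness quotient is then transported bijectively.

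\emph{Dependence only on the equality relation.} Suppose two semantics $R_1,R_2$ on the same coordinatized state space satisfy $\operatorname{Corr}_{R_1}(s)=\operatorname{Corr}_{R_1}(s')\iff\operatorname{Corr}_{R_2}(s)=\operatorname{Corr}_{R_2}(s')$. Then $\sim_{R_1}=\sim_{R_2}$. By the previous step, sufficiency and relevance are defined purely from the quotient relation, so the sufficient families and relevant sets coincide. This is the relational analogue of Proposition~\ref{prop:depends-only-on-opt}.

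\emph{Realization of an arbitrary equivalence.} Given an equivalence $E$ on $S$, take outputs $S/E$ and put $R(s,c)$ iff $c=[s]_E$. Then $\operatorname{Corr}_R(s)=\{[s]_E\}$, and singleton sets are equal exactly when the classes are equal, so $\sim_R=E$.

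I expect the main obstacle to be the countable Boolean presentation. The issue is fixing a precise meaning of ``Boolean coordinate presentation'' for infinite encodings: either a countable product of copies of $\{0,1\}$, or finite strings with a length coordinate. One must also check that states outside the image do not alter the quotient restricted to the encoded states. My first attempt will be to restrict the state space to the image of an injective encoding, which makes the quotient transport trivially bijective. All remaining steps are direct unfoldings of definitions.
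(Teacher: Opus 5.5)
Your proposal is correct and follows the paper's route. The paper gives no written proof for this corollary and relies on its Lean handles, but it implicitly combines the following steps, all of which you have:
\begin{itemize}
\item sufficiency unfolded as refinement of $\sim_R$, and relevance as failure of refinement after erasing one coordinate (equivalently via the totalization of Theorem~\ref{cor:relational-output-transfer} and Proposition~\ref{prop:depends-only-on-opt});
\item the one-coordinate identity presentation;
\item injective Boolean encodings;
\item the singleton-class realization of an arbitrary equivalence.
\end{itemize}

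Of your two off-image conventions, keep only the restriction. Equivalently, keep $S$ and use the bits of $e$ as Boolean coordinate functions on $S$, which is all the paper means by a Boolean coordinate presentation. Padding the cube with empty correctness sets changes the sufficient-coordinate family. For example, if all encoded states share one nonempty correctness set and $e$ is not surjective, the empty set is sufficient on $S$ but not on the padded cube, and the failure token does not prevent this.
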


A Boolean coordinate presentation only chooses Boolean coordinates for states. A binary-pairwise presentation additionally requires each action utility to be encoded by constant, unary, and pair-interaction coefficients. The universal transfer gives a coordinate presentation for every finite correctness relation. The binary-pairwise no-go is stronger: the obstruction already appears inside the smaller pairwise syntax.

\section{Orbit Gaps, Hull Separation, and Exact Classification}\label{sec:frontier-program}

Arbitrary labeling kernels already arise as optimizer quotients, so quotient shape alone cannot serve as a boundary principle in the unconstrained setting. Correctness supplies the invariant that remains, namely closure-orbit agreement under closure operations induced by exact correctness.

\paragraph{The realizability barrier}\label{sec:realizability-barrier}
In the unconstrained setting, quotient shape alone, without a restricted representation class or computable catalogue, cannot serve as a boundary principle for exact tractability classification. For any map $\phi:S\to T$, take the action set to be $T$ and define
\[
U(a,s)=
\begin{cases}
1 & \text{if } a=\phi(s),\\
0 & \text{otherwise.}
\end{cases}
\]
Then $\Opt(s)=\{\phi(s)\}$, so $\Opt(s)=\Opt(s')$ exactly when $\phi(s)=\phi(s')$. Every equivalence relation is obtained by taking $\phi$ to be its quotient map $S\to S/{\approx}$; finite quotients of any prescribed size are realized in the same way.\leanmeta{\LHrng{FR}{7}{8}, \LHrng{FR}{39}{40}, \LHrng{FR}{45}{47}} Exact descent therefore needs structure that survives correctness-preserving transports.

\subsection*{Closure Operators and Exact Classification}

For a target predicate $Q$, define its closure hull by
\[
\Hull(Q)(U) := \exists V\,\bigl(V \sim_{\mathrm{cl}} U \land Q(V)\bigr)
\]
where $\sim_{\mathrm{cl}}$ denotes closure equivalence. If one closure orbit contains both a positive and a negative domain point, no closure-invariant classifier separates them. Hull separation is the converse criterion.

\leanmetapending{\LH{FR362}}
\begin{lemma}[Primitive-Law Invariance Equals Orbit Invariance]\label{lem:closure-law-invariance-iff-orbit-constancy}
For any instance predicate $P$, invariance under each primitive closure law is equivalent to constancy on closure orbits.
\end{lemma}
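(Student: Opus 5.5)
The plan is to fix what a closure orbit is and then prove the two implications separately. The relation $\sim_{\mathrm{cl}}$ is the equivalence relation generated by the primitive closure steps of Proposition~\ref{prop:closure-operations-preserve-certification}, taken within the domain. So $U\sim_{\mathrm{cl}}V$ holds exactly when there is a finite chain $U=W_0,W_1,\dots,W_k=V$ in which each consecutive pair $W_{t},W_{t+1}$ is related by a primitive step in one direction or the other. This is the reflexive--symmetric--transitive closure of the step relation, and it matches how closure chains are already used in the proof of Theorem~\ref{thm:correct-classifier-forces-invariance}. A closure orbit is an equivalence class of $\sim_{\mathrm{cl}}$.

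For the direction from orbit constancy to law invariance, suppose $P$ is constant on closure orbits. Any primitive step $W\to W'$ is a one-edge chain, so $W\sim_{\mathrm{cl}}W'$, and therefore $P(W)=P(W')$.

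For the converse, suppose $P$ is invariant under each primitive closure law, meaning $P(W)=P(W')$ whenever $W'$ arises from $W$ by one primitive step. Take $U\sim_{\mathrm{cl}}V$ with a witnessing chain of length $k$ and induct on $k$. The case $k=0$ is trivial. In the inductive step, the last edge is a primitive step traversed either forward or backward. Invariance is stated as an equality of truth values, so it holds in either orientation, and we get $P(W_{k-1})=P(W_k)$. Combining this with the inductive hypothesis gives $P(U)=P(V)$.

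The main obstacle is bookkeeping rather than mathematics. The primitive laws are relations, not functions: duplication and irrelevant-coordinate extension are not invertible by another primitive step of the same kind. We therefore have to check that the generated equivalence really uses symmetric closure. If instead the paper defines orbits through forward chains only, the argument needs the observation that "invariant under a step" is a symmetric condition on the endpoint pair. We must also keep the chain inside the domain, as required by Definition~\ref{def:closure-sound-predicate}. I would make that domain-relativity explicit, so that every intermediate $W_t$ lies in $D$ and invariance applies at each edge. Under that convention the induction goes through unchanged.
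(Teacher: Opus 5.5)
Your proof is correct and matches the paper's: the paper also inducts along a finite closure-step chain to get constancy on orbits from invariance under primitive steps, and gets the converse from the fact that each primitive step lies inside one closure orbit. Your added care about the symmetric closure and about keeping every intermediate presentation in the domain is consistent with how the paper uses closure equivalence.
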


\begin{proof}[Proof sketch]
Induct along a finite closure-step chain. The converse follows because every primitive closure step lies inside one closure orbit.
\end{proof}

\leanmetapending{\LHrng{FR}{201}{204}, \LHrng{FR}{210}{213}, \LHrng{FR}{367}{375}}
\begin{proposition}[Orbit-Gap Descent Criterion]\label{prop:abstract-fiber-gap-descent}\label{prop:orbit-gap-completeness}\label{cor:orbit-gap-completeness-domain}
A representation-level predicate descends along a semantic map exactly when it is constant on the map's fibers. In the closure-orbit setting, exact classification of a target on a closure-closed domain by closure-law-invariant predicates fails exactly when the target has an orbit-gap witness: two domain points in one closure orbit with different target values.
\end{proposition}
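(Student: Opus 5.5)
The plan has two parts: an abstract descent statement about a map and its fibers, and its specialization to closure orbits.

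\emph{Fiber criterion.} Fix a semantic map $\pi:D\to M$ and a predicate $P$ on $D$. Say that $P$ descends when $P=\bar P\circ\pi$ for some predicate $\bar P$ on $M$.
\begin{itemize}
\item Forward direction: if $P$ descends and $\pi(U)=\pi(V)$, then $P(U)=\bar P(\pi(U))=\bar P(\pi(V))=P(V)$.
\item Converse: suppose $P$ is constant on the fibers of $\pi$. Define $\bar P$ on the image of $\pi$ by $\bar P(m):=P(U)$ for any $U$ with $\pi(U)=m$. Fiber-constancy makes this well defined. Extend $\bar P$ arbitrarily, say by $\mathsf{false}$, off the image. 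Then $\bar P\circ\pi=P$ by construction.
\end{itemize}

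\emph{Closure-orbit setting.} Take $\pi$ to be the quotient map $D\to D/{\sim_{\mathrm{cl}}}$. Its fibers are exactly the closure orbits. I read ``exact classification of a target $Q$ by closure-law-invariant predicates'' as the existence of a predicate $C$ on $D$ such that $C=Q$ on $D$ and $C$ is invariant under every within-domain primitive closure step. By Lemma~\ref{lem:closure-law-invariance-iff-orbit-constancy}, such a $C$ is constant on closure orbits.

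First direction: suppose there is an orbit-gap witness $U\sim_{\mathrm{cl}}V$ in $D$ with $Q(U)\neq Q(V)$. Any closure-invariant $C$ satisfies $C(U)=C(V)$, so $C$ disagrees with $Q$ at $U$ or at $V$. Hence exact classification fails.

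Second direction: suppose no orbit-gap witness exists. Then $Q$ itself is constant on closure orbits. Again by Lemma~\ref{lem:closure-law-invariance-iff-orbit-constancy}, $Q$ is invariant under each primitive closure law, so $C:=Q$ is an exact closure-invariant classifier. Equivalently, $Q$ coincides with $\Hull(Q)$ restricted to $D$. The two directions together give the stated equivalence.

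\emph{Main obstacle.} The delicate point is domain-relativity. Closure orbits must be taken with respect to chains of primitive steps that stay inside $D$. This is what the closure-closed hypothesis (Definition~\ref{def:closure-closed-domain}) supplies: a chain starting in $D$ whose targets remain in the ambient syntax stays in $D$. The orbit relation used in the witness must match the one in Lemma~\ref{lem:closure-law-invariance-iff-orbit-constancy}. If the witness pair were connected only through points outside $D$, a predicate invariant on within-domain steps need not agree on them. So I will state explicitly that $\sim_{\mathrm{cl}}$ means the equivalence generated by within-domain steps, and that closure-closedness makes this agree with ambient-orbit intersection with $D$. The rest of the argument is the routine well-definedness check above.
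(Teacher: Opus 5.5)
Your fiber criterion and your first direction match the paper's argument: descent is well defined by choosing a representative, failure of descent is a same-fiber disagreement, and Lemma~\ref{lem:closure-law-invariance-iff-orbit-constancy} converts step-invariance into orbit-constancy.

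The step that does not go through as written is your reconciliation of within-domain orbits with ambient orbits. Definition~\ref{def:closure-closed-domain} only requires steps whose \emph{source} lies in $D$ to land in $D$. Closure orbits, however, are generated symmetrically, so a chain between two points of $D$ may traverse a step backwards into a source outside $D$.

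A concrete abstract failure: take an ambient step system with steps $W\to U$ and $W\to V$, and let $D=\{U,V\}$. Then $D$ is vacuously closure-closed, and $U\sim_{\mathrm{cl}}V$ ambiently, but there is no chain between them inside $D$. A target with $Q(U)\neq Q(V)$ is invariant under every within-domain step, yet it has an ambient orbit gap. So ``closure-closedness makes the two orbit notions agree'' does not follow from the definition. For the paper's concrete moves it might be recovered by showing that every span of steps closes into a cospan, but that is a separate argument you do not give.

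You do not need the claim. If you keep the within-domain reading for both invariance and orbits, your argument is already complete. If you want the ambient reading used by $\Hull$ and Theorem~\ref{thm:exact-classification-hull-separation} (globally closure-invariant classifiers, ambient orbits, gaps between domain points), then $C:=Q$ is the wrong witness in the second direction: nothing controls $Q$ off $D$. For the same reason, $\Hull(Q)$ rather than $\Hull(D\cap Q)$ is the wrong hull whenever $Q$ is defined outside $D$.

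Take instead $C:=\Hull(D\cap Q)$. It is invariant under every closure step by transitivity of $\sim_{\mathrm{cl}}$, and it contains $D\cap Q$ by reflexivity. Any point of $D$ that is in $C$ but fails $Q$ would itself be an orbit-gap witness. This is the construction the paper uses in Theorem~\ref{thm:exact-classification-hull-separation}, and it never uses closure-closedness.
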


\begin{proof}[Proof sketch]
Fiber constancy makes the descended predicate well-defined by choosing any representative. Failure of fiber constancy is exactly a same-fiber positive/negative pair. Lemma~\ref{lem:closure-law-invariance-iff-orbit-constancy} specializes the criterion to closure orbits.
\end{proof}

For a finite proxy language, descent certification is the same fiber test internalized at the level of codes. A code descends exactly when no pair of closure-equivalent presentations receives opposite values. Such a counterexample blocks every exact closure-invariant classifier for that code, while exact closure-invariant classification exists precisely when the code descends.\leanmeta{\LHrng{FR}{436}{443}}

\leanmetapending{\LH{FR355}}
\begin{proposition}[Hull Is a Closure Operator; Closure-Invariant Predicates Are Its Fixed Points]\label{prop:closurehull-fixed-points}
The orbit-saturation map $\Hull$ is extensive, monotone, and idempotent. Its fixed points are exactly the closure-law-invariant predicates.\leanmeta{\LH{FR355}}
\end{proposition}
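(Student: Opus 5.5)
The plan is to use the fact that closure equivalence $\sim_{\mathrm{cl}}$ is an equivalence relation on the domain. It is the reflexive--symmetric--transitive closure of the primitive closure steps, so every primitive step lies inside one orbit and orbits are exactly the equivalence classes. After that, each property of $\Hull$ is a one-line unfolding of the definition $\Hull(Q)(U)\iff\exists V\,(V\sim_{\mathrm{cl}}U\wedge Q(V))$. I will identify predicates with subsets of the domain and order them by pointwise implication.

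\textbf{Extensive.} If $Q(U)$ holds, take $V=U$; reflexivity gives $U\sim_{\mathrm{cl}}U$, so $\Hull(Q)(U)$.

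\textbf{Monotone.} If $Q\le Q'$ pointwise, any witness $V$ for $\Hull(Q)(U)$ satisfies $Q'(V)$, so it also witnesses $\Hull(Q')(U)$.

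\textbf{Idempotent.} Extensivity already gives $\Hull(Q)\le\Hull(\Hull(Q))$. For the reverse, suppose $\Hull(\Hull(Q))(U)$ holds. Then there is $V\sim_{\mathrm{cl}}U$ and $W\sim_{\mathrm{cl}}V$ with $Q(W)$. Symmetry and transitivity give $W\sim_{\mathrm{cl}}U$, so $\Hull(Q)(U)$.

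\textbf{Fixed points.} I first show that $\Hull(Q)$ is always constant on closure orbits. If $U\sim_{\mathrm{cl}}U'$, a witness for $U$ is a witness for $U'$ by transitivity, and the converse holds by symmetry. So if $\Hull(Q)=Q$, then $Q$ is orbit-constant. By Lemma~\ref{lem:closure-law-invariance-iff-orbit-constancy}, $Q$ is therefore invariant under each primitive closure law. Conversely, suppose $Q$ is closure-law invariant, hence orbit-constant by the same lemma. Then any witness $V\sim_{\mathrm{cl}}U$ with $Q(V)$ forces $Q(U)$, so $\Hull(Q)\le Q$; extensivity gives equality.

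The only point needing care is the domain-relative setting. Closure orbits must be taken within the closure-closed domain, so that the chain witnessing $V\sim_{\mathrm{cl}}U$ stays in the domain where $Q$ is defined. Lemma~\ref{lem:closure-law-invariance-iff-orbit-constancy} must also be applied to within-domain primitive steps, as in Definition~\ref{def:closure-sound-predicate}. I will state this convention explicitly at the start. Beyond that, nothing is hard: the content is the standard fact that saturation by an equivalence relation is a closure operator whose fixed points are the saturated sets.
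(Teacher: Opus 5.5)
Your proof is correct and follows the paper's own argument. Extensivity, monotonicity, and idempotence come from reflexivity, pointwise implication, and transitivity of $\sim_{\mathrm{cl}}$, and the fixed-point characterization is the observation that orbit saturation adds nothing exactly when the predicate is orbit-constant, converted to primitive-law invariance by Lemma~\ref{lem:closure-law-invariance-iff-orbit-constancy}. (In your idempotence step, transitivity alone already yields $W\sim_{\mathrm{cl}}U$ from $W\sim_{\mathrm{cl}}V$ and $V\sim_{\mathrm{cl}}U$; symmetry is only needed to show that the hull is orbit-constant.)
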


\begin{proof}[Proof sketch]
Extensivity, monotonicity, and idempotence follow respectively from reflexivity, implication, and transitivity of closure equivalence. The fixed-point claim says exactly that saturating by closure orbits adds nothing, which is equivalent to closure-law invariance.
\end{proof}

\leanmetapending{\LHrng{FR}{356}{357}}
\begin{theorem}[Exact Classification Equals Hull Separation]\label{thm:exact-classification-hull-separation}
Let $D$ be a closure-closed domain and let $Q$ be a target predicate on $D$. Then $Q$ admits an exact characterization on $D$ by a closure-law-invariant predicate if and only if the positive and negative orbit saturations are disjoint:
\[
\Hull(D \cap Q) \cap \Hull(D \cap \neg Q) = \varnothing.
\]
\end{theorem}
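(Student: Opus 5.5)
The plan is to prove both directions directly from the definition of $\Hull$. We use the fact that closure equivalence $\sim_{\mathrm{cl}}$ is an equivalence relation whose classes are the closure orbits. We also use Lemma~\ref{lem:closure-law-invariance-iff-orbit-constancy}, which identifies closure-law-invariant predicates with predicates constant on closure orbits. Because $D$ is closure-closed, the orbit of any point of $D$ stays inside $D$, at least for steps whose targets remain in the ambient syntax. So we may treat saturations and orbits relative to $D$ and need not worry about hull points escaping the domain. Here ``exact characterization'' means a closure-law-invariant predicate $P$ with $P(U)\iff Q(U)$ for all $U\in D$.

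\textbf{Necessity.} Suppose $P$ is closure-law invariant and agrees with $Q$ on $D$, and suppose some $W$ lies in both saturations. Then there are $V_+\in D\cap Q$ and $V_-\in D\cap\neg Q$ with $V_+\sim_{\mathrm{cl}}W$ and $V_-\sim_{\mathrm{cl}}W$. By symmetry and transitivity, $V_+\sim_{\mathrm{cl}}V_-$. Orbit constancy gives $P(V_+)=P(V_-)$, whereas exactness gives $P(V_+)=Q(V_+)=\mathsf{true}$ and $P(V_-)=Q(V_-)=\mathsf{false}$. This contradiction shows the saturations are disjoint. The argument is just the orbit-gap obstruction of Proposition~\ref{prop:orbit-gap-completeness} rephrased.

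\textbf{Sufficiency.} Assume the saturations are disjoint and take the candidate $P:=\Hull(D\cap Q)$. Proposition~\ref{prop:closurehull-fixed-points} says $\Hull$ is idempotent, so $P$ is a fixed point of $\Hull$ and hence closure-law invariant. For exactness on $D$ there are two cases:
\begin{itemize}
\item If $U\in D$ and $Q(U)$ holds, then $P(U)$ holds by extensivity, using reflexivity of $\sim_{\mathrm{cl}}$.
\item If $U\in D$ and $\neg Q(U)$ holds, then $U\in\Hull(D\cap\neg Q)$ by extensivity. Disjointness then forces $U\notin P$.
\end{itemize}
So $P$ coincides with $Q$ on $D$. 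As a by-product, any exact invariant $P'$ contains $D\cap Q$ and is invariant, so by monotonicity and the fixed-point property $P'\supseteq\Hull(D\cap Q)$. This shows the positive hull is the least exact classifier, as announced in the introduction.

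The only delicate step is the bookkeeping about the ambient space on which $\Hull$ is evaluated. If orbits may leave $D$, a witness $W$ in the intersection could lie outside $D$. The necessity argument still works, because it only uses transitivity through $W$, and $V_\pm$ lie in $D$. The sufficiency argument only evaluates $P$ on points of $D$. So the obstacle is merely to state the identification ``invariant under primitive steps $\iff$ constant on orbits'' on the correct set. I would handle this by invoking Lemma~\ref{lem:closure-law-invariance-iff-orbit-constancy} on the ambient syntax and then restricting to $D$, using closure-closedness.
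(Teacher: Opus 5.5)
Your proof is correct and follows the paper's own argument. Overlapping hulls place a positive and a negative domain point in one closure orbit, which no orbit-constant predicate can separate; conversely, the positive hull itself is the exact closure-invariant classifier, agreeing with $Q$ on positive points by reflexivity and excluded from negative points by disjointness. The only difference is cosmetic: you get invariance of the hull from idempotence plus the fixed-point proposition, whereas the paper gets it directly from transitivity of closure equivalence.
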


\begin{proof}
If the hulls overlap, some closure orbit meets $D\cap Q$ and $D\cap\neg Q$, so every closure-invariant predicate gives one value where exact classification requires two. Conversely, if the hulls are disjoint, the predicate $P(S)\Leftrightarrow S\in\Hull(D\cap Q)$ is closure-invariant by transitivity of closure equivalence. It agrees with $Q$ on positive domain points by reflexivity, and it cannot hold on a negative domain point without placing that point in both hulls.
\end{proof}

\leanmetapending{\LH{FR358}}
\begin{corollary}[Least Exact Closure-Invariant Classifier]\label{cor:least-exact-classifier}
Let $D$ be a closure-closed domain and let $Q$ have no orbit gaps on $D$. Then $\Hull(D \cap Q)$ is the least closure-law-invariant predicate that classifies $Q$ exactly on $D$: it is correct on $D$, and every other exact closure-law-invariant classifier on $D$ contains it.
\end{corollary}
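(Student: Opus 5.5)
The plan is to split the statement into its two halves, correctness and leastness, and to obtain both from Theorem~\ref{thm:exact-classification-hull-separation} and Proposition~\ref{prop:closurehull-fixed-points}.

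\emph{Correctness.} Write $P_0:=\Hull(D\cap Q)$. Since $Q$ has no orbit gaps on $D$, Proposition~\ref{prop:orbit-gap-completeness} gives the hull disjointness $\Hull(D\cap Q)\cap\Hull(D\cap\neg Q)=\varnothing$. The converse direction of the proof of Theorem~\ref{thm:exact-classification-hull-separation} then shows three things about $P_0$:
\begin{itemize}
\item it is closure-law-invariant, either as a fixed point of $\Hull$ by idempotence in Proposition~\ref{prop:closurehull-fixed-points}, or directly by transitivity of $\sim_{\mathrm{cl}}$;
\item it holds on every positive domain point, by extensivity (reflexivity);
\item it fails on every negative domain point, since such a point lies in $\Hull(D\cap\neg Q)$ and disjointness keeps it out of $P_0$.
\end{itemize}
Hence $P_0$ classifies $Q$ exactly on $D$.

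\emph{Leastness.} Let $P$ be any closure-law-invariant predicate that classifies $Q$ exactly on $D$. Exactness gives $D\cap Q\subseteq P$. Monotonicity of $\Hull$ then gives $\Hull(D\cap Q)\subseteq\Hull(P)$. By Proposition~\ref{prop:closurehull-fixed-points}, $P$ is a fixed point of $\Hull$, so $\Hull(P)=P$. Combining the two, $P_0\subseteq P$.

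Equivalently, take any $U$ with $P_0(U)$. There is some $V\sim_{\mathrm{cl}}U$ with $V\in D$ and $Q(V)$. Exactness gives $P(V)$. Orbit constancy (Lemma~\ref{lem:closure-law-invariance-iff-orbit-constancy}) then transports this to $P(U)$.

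\emph{Main obstacle: the ambient universe.} The only delicate point is where the predicates live. If classifiers are predicates on the ambient syntax rather than on $D$ alone, the hull may contain points outside $D$. The inclusion argument above still works in that reading, because it never needs $U\in D$. It only needs $V\in D$, which holds by definition of the hull of $D\cap Q$. Closure-closedness of $D$ guarantees that orbit chains starting in $D$ stay in $D$, so the domain-relative notions of orbit and invariance agree with the ambient ones. This bookkeeping is what I would check carefully. The rest consists of the closure-operator laws already established.
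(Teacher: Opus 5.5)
Your proof is correct and follows the paper's argument. Correctness of $\Hull(D\cap Q)$ comes from orbit-gap freedom via the converse half of Theorem~\ref{thm:exact-classification-hull-separation}, and leastness comes from $D\cap Q\subseteq P$ followed by orbit saturation, which is exactly the paper's sketch.

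There are two small points, neither of which affects validity. First, passing from ``no orbit gaps'' to hull disjointness is the one-line transitivity argument: a point in both hulls yields a same-orbit positive/negative pair. It is not a consequence of Proposition~\ref{prop:orbit-gap-completeness} by itself. Second, neither half of your argument uses closure-closedness of $D$, so the ambient-universe paragraph can be dropped. In particular, its claim that orbit chains from $D$ stay in $D$ would also need reverse closure steps to land in $D$, which the definition as stated does not supply.
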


\begin{proof}[Proof sketch]
Correctness on $D$ follows from orbit-gap freedom. Any exact closure-law-invariant classifier contains $D\cap Q$ and hence its closure saturation $\Hull(D\cap Q)$, giving minimality.
\end{proof}

\leanmetapending{\LHrng{FR}{376}{383}}
\begin{theorem}[Computable Orbit Catalogues Make Hull Classifiers Algorithmic]\label{thm:finite-orbit-catalogue-classifiers}
Let $D$ be a closure-closed domain and suppose there is a computable representative map
\[
c:D\to C
\]
into a set $C$ with decidable equality such that, for all $U,V\in D$,
\[
U\sim_{\mathrm{cl}}V \quad\Longleftrightarrow\quad c(U)=c(V).
\]
Then orbit membership on $D$ is decidable by representative comparison. For every decidable predicate $R$ on $C$, the pulled-back classifier
\[
P_R(U)\quad\Longleftrightarrow\quad R(c(U))
\]
is closure-law invariant. Conversely, every closure-law-invariant predicate on $D$ factors through $c$ on the image $c(D)$; values on $C\setminus c(D)$ may be chosen arbitrarily. If a target $Q$ factors through $c$ and the representative predicate is decidable, then $Q$ has an exact decidable closure-invariant classifier.
\end{theorem}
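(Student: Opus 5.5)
The argument is a direct unwinding of the hypothesis $U\sim_{\mathrm{cl}}V \Leftrightarrow c(U)=c(V)$ together with Lemma~\ref{lem:closure-law-invariance-iff-orbit-constancy}, which lets us check closure-law invariance as constancy on closure orbits. I will prove the four clauses in the order stated.

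\emph{Decidable orbit membership.} Given $U,V\in D$, compute $c(U)$ and $c(V)$ and compare them with the decidable equality on $C$. By the hypothesis, the answer is ``equal'' exactly when $U\sim_{\mathrm{cl}}V$. This uses only computability of $c$ and decidability of equality in $C$.

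\emph{Invariance of pulled-back classifiers.} Let $R$ be decidable on $C$ and set $P_R(U)\Leftrightarrow R(c(U))$. Then $P_R$ is decidable as a composition of computable maps. If $U\sim_{\mathrm{cl}}V$, then $c(U)=c(V)$, so $P_R(U)\Leftrightarrow P_R(V)$. Hence $P_R$ is constant on closure orbits within $D$. By Lemma~\ref{lem:closure-law-invariance-iff-orbit-constancy}, it is closure-law invariant.

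One point needs care here. Primitive closure steps are within-domain steps, and $D$ is closure-closed, so every step leaving a point of $D$ that stays in the ambient syntax lands back in $D$. Orbit constancy on $D$ therefore coincides with invariance under the within-domain primitive laws of Definition~\ref{def:closure-sound-predicate}.

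\emph{Factorization.} Let $P$ be closure-law invariant on $D$. By the lemma, $P$ is constant on closure orbits. Define $\tilde P$ on $c(D)$ by $\tilde P(t)\Leftrightarrow P(U)$ for any $U\in D$ with $c(U)=t$. This is well defined: if $c(U)=c(U')$, then $U\sim_{\mathrm{cl}}U'$, so $P(U)=P(U')$. This is the fiber-constancy form of Proposition~\ref{prop:abstract-fiber-gap-descent} applied to the semantic map $c$. Extend $\tilde P$ to $C\setminus c(D)$ arbitrarily. By construction $P=\tilde P\circ c$ on $D$. The freedom off $c(D)$ reflects that those values are never queried.

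\emph{Exact decidable classifier.} Suppose $Q=R_Q\circ c$ on $D$ with $R_Q$ decidable. Then $P_{R_Q}$ is closure-law invariant and decidable by the second clause, and it agrees with $Q$ on all of $D$. So it is an exact decidable closure-invariant classifier. Consistently with Theorem~\ref{thm:exact-classification-hull-separation}, $Q$ then has no orbit gaps, and the pulled-back classifier equals $\Hull(D\cap Q)$ on $D$ (Corollary~\ref{cor:least-exact-classifier}).

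The only real obstacle is the domain-relativity in the invariance clause: making sure that ``constant on $\sim_{\mathrm{cl}}$-classes within $D$'' matches the within-domain primitive-step invariance. This uses closure-closedness of $D$, so that chains of primitive steps between points of $D$ stay in $D$ and Lemma~\ref{lem:closure-law-invariance-iff-orbit-constancy} applies verbatim.
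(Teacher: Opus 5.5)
Your proof is correct and follows the paper's own argument. The hypothesis makes $c$ a complete orbit invariant, so each pullback $R\circ c$ is constant on orbits, and an orbit-constant predicate induces a well-defined predicate on $c(D)$ that may be extended arbitrarily to $C\setminus c(D)$; the closure-closedness aside is not needed, since like the paper you already use Lemma~\ref{lem:closure-law-invariance-iff-orbit-constancy} to identify closure-law invariance with orbit constancy, and it slightly overreaches because closure-closedness only controls steps whose source lies in $D$, so it does not by itself keep zig-zag chains inside $D$.
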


\begin{proof}[Proof sketch]
The equivalence $U\sim_{\mathrm{cl}}V\Longleftrightarrow c(U)=c(V)$ makes $c$ a complete orbit invariant. Pullbacks of predicates on $C$ are therefore constant on orbits. If a predicate is already constant on orbits, the value assigned to each image point $c(U)\in c(D)$ is independent of the representative $U$; extending that representative predicate to the rest of $C$ gives the stated factorization.
\end{proof}

\leanmetapending{\LHrng{FR}{384}{389}}
\begin{corollary}[Catalogue Classifiers Are Compositional]\label{cor:catalogue-classifiers-compositional}
On a domain with a complete orbit catalogue, descended predicates are closed under finite Boolean combination. In particular, if $P$ and $Q$ factor through the representative map $c$, then so do $\neg P$, $P\wedge Q$, and $P\vee Q$. Equivalently, if $P$ and $Q$ are orbit-invariant because they factor through the catalogue, then their complement, conjunction, and disjunction are orbit-invariant for the same reason.
\end{corollary}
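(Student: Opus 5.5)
The argument reduces Corollary~\ref{cor:catalogue-classifiers-compositional} to the factorization clause of Theorem~\ref{thm:finite-orbit-catalogue-classifiers} and pointwise Boolean algebra on the representative set $C$. Suppose $P$ and $Q$ factor through $c$. Then there are predicates $R_P,R_Q$ on $C$ such that $P(U)\Leftrightarrow R_P(c(U))$ and $Q(U)\Leftrightarrow R_Q(c(U))$ for all $U\in D$. Define the representative predicates $\neg R_P$, $R_P\wedge R_Q$ and $R_P\vee R_Q$ pointwise on $C$.

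Unfolding definitions then gives, for each $U\in D$,
\[
\neg P(U)\Leftrightarrow(\neg R_P)(c(U)),\qquad (P\wedge Q)(U)\Leftrightarrow(R_P\wedge R_Q)(c(U)),
\]
and similarly for $\vee$. So each Boolean combination factors through the same map $c$. If decidability is wanted as well, it transfers directly: Boolean combinations of decidable predicates on $C$ are decidable.

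For the ``equivalently'' clause, I would invoke the first half of Theorem~\ref{thm:finite-orbit-catalogue-classifiers}: any pullback $R\circ c$ is closure-law invariant. This already shows that the complement, conjunction and disjunction are orbit-invariant. As a cross-check that avoids the catalogue, suppose $U\sim_{\mathrm{cl}}V$. Then $P(U)\Leftrightarrow P(V)$ and $Q(U)\Leftrightarrow Q(V)$, and propositional congruence gives the same for every Boolean combination. By Lemma~\ref{lem:closure-law-invariance-iff-orbit-constancy} this is the same as invariance under the primitive closure laws. The same observation recovers the fixed-point description in Proposition~\ref{prop:closurehull-fixed-points}: the fixed points of $\Hull$ are closed under the Boolean operations.

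The argument has no genuine obstacle. The only point needing care is the converse direction of the theorem. There, a factoring predicate is determined only on the image $c(D)$, and it may take arbitrary values on $C\setminus c(D)$. The pointwise combination of arbitrary extensions is still a valid extension, and agreement on $c(D)$ is all that is used. So the choice of extension does not affect the conclusion.
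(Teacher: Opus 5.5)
Your proof is correct and follows the paper's own argument, which simply applies the corresponding Boolean operation to the representative predicates on $C$, exactly as you do with $\neg R_P$, $R_P\wedge R_Q$ and $R_P\vee R_Q$. Your catalogue-free cross-check via orbit constancy is also sound, and it conveniently covers the case where $R_P,R_Q$ are not decidable, since the theorem's pullback clause is stated only for decidable $R$. Your remark about the arbitrary extensions off $c(D)$ is sound as well.
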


\begin{proof}[Proof sketch]
Apply the corresponding Boolean operation to the representative predicates on $C$.
\end{proof}

\leanmetapending{\LHrng{FR}{252}{256}}
\begin{corollary}[Domain Restriction Helps Only by Removing Orbit Gaps]\label{cor:domain-restriction-only-removes-orbit-gaps}
Let $D$ be a closure-closed domain, and let $Q$ be a target predicate on $D$ such that correctness of a classifier for $Q$ forces closure-orbit agreement on $D$. Then $D$ admits a correct classifier for $Q$ if and only if $Q$ has no orbit-gap witness inside $D$. Equivalently, restricting the domain avoids the no-go only by eliminating all orbit gaps of $Q$ on that restricted domain.
\end{corollary}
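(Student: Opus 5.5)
The proof has two directions, and each follows by combining the two facts already established. Theorem~\ref{thm:exact-classification-hull-separation} characterizes exact classification by closure-invariant predicates. The hypothesis on $Q$ says that any correct classifier is a closure-invariant one. Throughout, I interpret ``correct classifier for $Q$ on $D$'' as a predicate $C$ on $D$ with $C(U)\Leftrightarrow Q(U)$ for all $U\in D$. The hypothesis then says that every such $C$ is constant on closure orbits intersected with $D$. This is the same conclusion delivered by Theorem~\ref{thm:correct-classifier-forces-invariance} and Theorem~\ref{thm:optimizer-computation-classifier-forces-invariance} in the concrete tractability settings.

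\emph{Only if.} Suppose $C$ is a correct classifier for $Q$ on $D$, and suppose toward contradiction that there is an orbit-gap witness $U,V\in D$ with $U\sim_{\mathrm{cl}}V$, $Q(U)$ and $\neg Q(V)$. By the hypothesis, $C(U)=C(V)$. Correctness, however, forces $C(U)$ to be true and $C(V)$ to be false, which is a contradiction. Equivalently, $U$ lies in $\Hull(D\cap Q)\cap\Hull(D\cap\neg Q)$, so Theorem~\ref{thm:exact-classification-hull-separation} already rules out any closure-invariant exact classifier. Since the hypothesis makes $C$ closure-invariant, this is the same contradiction. One point needs care here. The closure chain connecting $U$ and $V$ may pass through intermediate presentations. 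Closure-closedness of $D$ keeps the intermediates of within-ambient steps inside $D$, and this is what lets orbit constancy in the sense of Lemma~\ref{lem:closure-law-invariance-iff-orbit-constancy} apply to the pair.

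\emph{If.} Suppose $Q$ has no orbit-gap witness in $D$. Then no closure orbit meets both $D\cap Q$ and $D\cap\neg Q$, so the hulls are disjoint. Theorem~\ref{thm:exact-classification-hull-separation}, or more directly Corollary~\ref{cor:least-exact-classifier}, then supplies the classifier $P=\Hull(D\cap Q)$. This $P$ is closure-invariant and agrees with $Q$ on $D$, so it is a correct classifier. The ``equivalently'' clause is a restatement. Replacing $D$ by a closure-closed subdomain $D'$ leaves the hypothesis in force on $D'$, so the first equivalence applied to $D'$ shows that $D'$ admits a correct classifier exactly when the orbit gaps of $Q$ inside $D'$ have all been eliminated.

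The main obstacle is bookkeeping rather than mathematics. The step to check is whether ``correct classifier'' in the corollary carries any computability requirement. If it means a \emph{decidable} correct classifier, then the ``if'' direction needs more than the hull predicate. In that case I would either read the statement as purely extensional, as the proof of Theorem~\ref{thm:exact-classification-hull-separation} does, or note that $Q$ itself, when decidable, is already orbit-constant on $D$ under gap-freedom and hence serves as the classifier.
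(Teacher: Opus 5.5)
Your proposal is correct and matches the argument the paper intends. The paper states the corollary without a written proof, as an immediate consequence of the forced-agreement hypothesis, the orbit-gap criterion, and Theorem~\ref{thm:exact-classification-hull-separation}: an orbit gap contradicts forced agreement on the gap pair, and gap-freedom makes $\Hull(D\cap Q)$, or $Q$ itself, an exact classifier on $D$. The closure-chain and closure-closedness remark in your ``only if'' direction is unnecessary, since the hypothesis already asserts agreement directly on every closure-equivalent pair of points of $D$.
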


Polynomial-time closure transports execute witnessed moves. A computable representative catalogue is stronger: it decides orbit membership. Domain restriction removes an obstruction only when it removes the relevant same-orbit positive/negative pair.

\subsection*{Where Hull Separation Holds}

Hull separation holds in three recurring regimes. Normalized unary-gap domains remove decision-relevant mixed differences before the obstruction families can form. Bounded distinct-profile compression quotients accidental action multiplicity before classification. Strict-margin neighborhoods (Proposition~\ref{prop:global-approximation-stability}) preserve the optimizer quotient under controlled perturbation. Correctness-quotient MSO predicates (Definition~\ref{def:orbit-quotient-mso-candidate}) are closure-invariant by construction because they speak about quotient data, not raw syntax.

\leanmetapending{\LHrng{FR}{390}{393}}
\begin{theorem}[Normalized Unary-Gap Catalogues Give a Positive Regime]\label{thm:normalized-unary-gap-positive-regime}
Let $D$ be a closure-closed domain whose representations admit a normalized unary action-gap catalogue: a representative map
\[
\nu:D\to G
\]
recording the action-dependent unary gap profile after action-independent state terms have been removed, with
\[
U\sim_{\mathrm{cl}} V \quad\Longleftrightarrow\quad \nu(U)=\nu(V)
\qquad (U,V\in D).
\]
If a target $Q$ factors through $\nu$, then $Q$ is closure-invariant on $D$ and has no orbit gap on $D$. Equivalently, the positive and negative closure hulls of $Q$ are disjoint. If $Q(U)\leftrightarrow R(\nu(U))$ for a representative-level predicate $R$ on $G$, then the pullback classifier $R\circ\nu$ is exact and closure-invariant.
\end{theorem}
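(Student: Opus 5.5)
The plan is to obtain the statement as a specialization of Theorem~\ref{thm:finite-orbit-catalogue-classifiers}, taking $c:=\nu$ and $C:=G$. The displayed hypothesis $U\sim_{\mathrm{cl}}V\Leftrightarrow\nu(U)=\nu(V)$ is exactly the complete-invariant condition that theorem requires. Computability and decidable equality are needed only for the algorithmic clauses. The invariance and exactness claims here use only the biconditional, so I will not assume them.

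First, closure invariance. Suppose $Q(U)\Leftrightarrow R(\nu(U))$ on $D$, and let $U\sim_{\mathrm{cl}}V$ with $U,V\in D$. The forward direction of the catalogue biconditional gives $\nu(U)=\nu(V)$, hence $Q(U)\Leftrightarrow Q(V)$. So $Q$ is constant on the traces of closure orbits in $D$. By Lemma~\ref{lem:closure-law-invariance-iff-orbit-constancy}, this is equivalent to invariance under each within-domain primitive closure step, which is the closure-soundness notion of Definition~\ref{def:closure-sound-predicate}. Constancy on orbits also says directly that no same-orbit pair in $D$ has opposite $Q$-values, i.e.\ there is no orbit gap in the sense of Proposition~\ref{prop:orbit-gap-completeness}.

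Second, hull disjointness. One route is Theorem~\ref{thm:exact-classification-hull-separation}: once $R\circ\nu$ is exhibited as an exact closure-invariant characterization of $Q$ on the closure-closed domain $D$, that theorem gives $\Hull(D\cap Q)\cap\Hull(D\cap\neg Q)=\varnothing$. A direct argument is also available. Suppose $W$ lies in both hulls. Then there are $U\in D\cap Q$ and $V\in D\cap\neg Q$ with $U\sim_{\mathrm{cl}}W\sim_{\mathrm{cl}}V$. Transitivity gives $U\sim_{\mathrm{cl}}V$, so $\nu(U)=\nu(V)$, which contradicts $R(\nu(U))\ne R(\nu(V))$.

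Third, the pullback classifier. The pullback clause of Theorem~\ref{thm:finite-orbit-catalogue-classifiers} shows that $R\circ\nu$ is closure-law invariant. It is exact because it agrees with $Q$ pointwise on $D$ by hypothesis.

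The main obstacle is bookkeeping about where closure equivalence lives. Orbits may pass through ambient syntax outside $D$, while $\nu$ is defined only on $D$. I would use closure-closedness of $D$ (Definition~\ref{def:closure-closed-domain}) to argue that every closure chain starting in $D$ stays in $D$. The catalogue biconditional is then applied only to pairs in $D$, and the hull points $W$ in the direct argument are compared only through the domain points $U,V$. This avoids any need to evaluate $\nu$ outside $D$. The factorization "through $\nu$" is read as the existence of some predicate $R$ on $G$ with $Q=R\circ\nu$ on $D$. With that reading, the first sentence of the conclusion and the pullback clause are the same hypothesis.
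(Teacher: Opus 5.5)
Your proposal is correct and follows essentially the paper's route: the catalogue biconditional makes $\nu$ a complete orbit invariant, factorization through $\nu$ then gives orbit constancy and hence no orbit gap, and Theorem~\ref{thm:exact-classification-hull-separation} (or your direct transitivity argument) gives disjoint hulls. One small caution: Definition~\ref{def:closure-closed-domain} only controls forward steps out of $D$, so it does not by itself keep every closure chain inside $D$ (undoing a duplication or a coordinate extension can leave it), but none of your arguments use that claim, because you only ever apply the biconditional to pairs of domain points.
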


\begin{proof}[Proof sketch]
The catalogue equation makes $\nu$ a complete orbit representative. Factorization through $\nu$ gives orbit invariance, so no same-orbit positive/negative pair exists; Theorem~\ref{thm:exact-classification-hull-separation} gives separated hulls. A minimal two-action example has gap $2x_0-1$, so no mixed action-gap difference remains after subtracting action-independent state terms.
\end{proof}

\leanmetapending{\LHrng{FR}{206}{209}}
\begin{theorem}[Bounded Distinct-Profile Compression Gives a Positive Regime]\label{thm:bounded-distinct-profile-positive-regime}
Fix $k\in\mathbb N$, and let $D_k$ be a closure-closed domain of binary pairwise instances such that each $U\in D_k$ has at most $k$ distinct action utility profiles. Let $\rho(U)=U^{\mathrm{prof}}$ be the profile-compressed instance from Proposition~\ref{prop:distinct-profile-compression}, obtained by identifying actions with identical utility profiles.

Let $Q_k$ be any closure-law-invariant predicate on the compressed bounded-action image $\rho(D_k)$. Define
\[
Q(U) \quad\Longleftrightarrow\quad Q_k(\rho(U)).
\]
Then $Q$ is closure-law invariant on $D_k$. If $Q_k$ exactly classifies a target on $\rho(D_k)$ and is decidable in polynomial time for fixed $k$, then $Q$ exactly classifies the pulled-back target on $D_k$ and is decidable in polynomial time. Equivalently, hull separation on the bounded-action compressed image lifts to hull separation on $D_k$ for every target that factors through $\rho$.
\end{theorem}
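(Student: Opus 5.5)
The plan is to show that $\rho$ commutes with closure equivalence, in the sense that $U\sim_{\mathrm{cl}}V$ in $D_k$ implies $\rho(U)\sim_{\mathrm{cl}}\rho(V)$ in $\rho(D_k)$. Everything else then follows from the orbit machinery of Section~\ref{sec:frontier-program}. By Lemma~\ref{lem:closure-law-invariance-iff-orbit-constancy}, it suffices to check one primitive closure step $U\to V$ at a time.

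For each primitive law of Proposition~\ref{prop:closure-operations-preserve-certification}, I would exhibit a corresponding chain of primitive steps between $\rho(U)$ and $\rho(V)$:
\begin{itemize}
\item \emph{Action duplication.} This leaves the set of distinct profiles unchanged, so $\rho(U)=\rho(V)$ up to the choice of representative label, which is an action relabeling.
\item \emph{Action relabeling.} This maps profile classes bijectively, so it induces an action relabeling of the compressed instances.
\item \emph{Coordinate relabeling, state duplication, and irrelevant-coordinate extension.} These act on all actions uniformly. Two actions therefore have equal profiles before the step iff they have equal profiles after it, and the same move applied to $\rho(U)$ yields $\rho(V)$.
\item \emph{Statewise positive affine reparameterization.} Here $V(a,s)=\alpha(s)+\beta(s)U(a,s)$ with $\beta(s)>0$. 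The map $U(a,\cdot)\mapsto V(a,\cdot)$ is injective on profiles, so profile equality is preserved, and the same transport applied to $\rho(U)$ gives $\rho(V)$.
\end{itemize}
The resulting compressed instances stay in $\rho(D_k)$, because the intermediate instances are images of elements of the closure-closed domain $D_k$. Combining this with closure-law invariance of $Q_k$ on $\rho(D_k)$ gives $Q(U)=Q_k(\rho(U))=Q_k(\rho(V))=Q(V)$, which is closure-law invariance of $Q$.

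For exactness, let $T$ be a target on $\rho(D_k)$ classified exactly by $Q_k$, and define the pulled-back target $T'(U):\Leftrightarrow T(\rho(U))$. Then $Q(U)\leftrightarrow T'(U)$ holds pointwise, so $Q$ classifies $T'$ exactly.

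For the running time, I would compute $\rho(U)$ in polynomial time as follows:
\begin{itemize}
\item Put each action's constant, unary, and pair coefficients into a canonical form. On sparse binary-pairwise encodings, equality of coefficient vectors after canonical ordering decides profile equality, since a pairwise polynomial representation on $\{0,1\}^I$ is unique in multilinear form.
\item Bucket the actions by these normal forms and keep one representative per class.
\end{itemize}
The output has at most $k$ actions and size polynomial in the input. Running the polynomial-time decider for $Q_k$ on it keeps the total polynomial for fixed $k$.

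The hull-separation reformulation then follows from Theorem~\ref{thm:exact-classification-hull-separation}. The predicate $Q$ is a closure-invariant exact classifier of $T'$, so $\Hull(D_k\cap T')\cap\Hull(D_k\cap\neg T')=\varnothing$.

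I expect the main obstacle to be the step-by-step commutation, specifically the positive-affine and irrelevant-coordinate cases. There one must verify that the transported compressed instance is literally a closure step from $\rho(U)$, not merely an instance with the same quotient. The choice of representative labels in $\rho$ is a further subtlety. I would fix representatives canonically, for example the least label in each class, and absorb the remaining discrepancy with one action relabeling step.
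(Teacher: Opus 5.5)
Your proposal is correct and follows the paper's route. Each primitive closure law preserves equality of action utility profiles, so closure-equivalent inputs compress to closure-equivalent bounded-action instances; exactness and polynomial-time decidability then lift pointwise through $\rho$, and your case-by-case check, canonical-representative bookkeeping, and polynomial-time computation of $\rho$ via unique multilinear normal forms spell out details the paper's sketch leaves implicit.
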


\begin{proof}[Proof sketch]
Proposition~\ref{prop:distinct-profile-compression} preserves sufficiency and relevance while replacing action labels by utility profiles. Relabeling, action duplication, state duplication, irrelevant-coordinate extension, and statewise positive affine transport preserve profile equality, so closure-equivalent inputs compress to closure-equivalent bounded-action inputs. Exactness and polynomial-time decidability then lift from the compressed image.
\end{proof}

\leanmetapending{\LH{FR185}}
\begin{corollary}[Orbit-Gap Template]\label{cor:orbit-gap-template}
Let $Q$ be an instance predicate. Suppose there exist instances $U,V$ such that $U$ and $V$ lie in the same closure orbit, $Q(U)$ holds, and $Q(V)$ fails. Then no closure-law-invariant predicate $P$ can satisfy
\[
P(W) \iff Q(W)
\qquad\text{for all instances $W$.}
\]
\end{corollary}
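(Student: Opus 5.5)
The plan is a direct contradiction argument that uses only Lemma~\ref{lem:closure-law-invariance-iff-orbit-constancy}. First, suppose toward a contradiction that some closure-law-invariant predicate $P$ satisfies $P(W)\iff Q(W)$ for every instance $W$.

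Second, apply Lemma~\ref{lem:closure-law-invariance-iff-orbit-constancy}. It upgrades invariance under each primitive closure law to constancy of $P$ on whole closure orbits. Since $U\sim_{\mathrm{cl}}V$, we get $P(U)\iff P(V)$.

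Third, instantiate the assumed exact characterization at $W=U$ and at $W=V$. This gives $P(U)$, because $Q(U)$ holds, and $\neg P(V)$, because $Q(V)$ fails. These contradict $P(U)\iff P(V)$, so no such $P$ exists.

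Alternatively, the statement is the special case of Theorem~\ref{thm:exact-classification-hull-separation} with $D$ the whole instance space, which is trivially closure-closed. The witness pair places $U$ in $\Hull(D\cap Q)$ by reflexivity. It also places $U$ in $\Hull(D\cap\neg Q)$ via $V$. So the hulls intersect and exact closure-invariant characterization fails. I would present the direct argument and mention this as the conceptual reading, matching Proposition~\ref{prop:orbit-gap-completeness}.

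The only step needing care is the second one. The hypothesis on $U,V$ is membership in one closure orbit, which is a finite chain of primitive steps, not a single step. Invariance must therefore be transported along the chain by induction on its length, treating each primitive step and its inverse symmetrically. That induction is exactly the content of Lemma~\ref{lem:closure-law-invariance-iff-orbit-constancy}, so citing it discharges the step. The rest is bookkeeping.
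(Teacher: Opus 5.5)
Your proposal is correct and follows the paper's proof: Lemma~\ref{lem:closure-law-invariance-iff-orbit-constancy} gives $P(U)\iff P(V)$ along the closure chain, and exact agreement with $Q$ then contradicts $Q(U)\wedge\neg Q(V)$. Your hull-separation reading via Theorem~\ref{thm:exact-classification-hull-separation} on the whole domain is a valid alternative framing, but the paper does not need it here.
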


\begin{proof}[Proof sketch]
Since $U$ and $V$ lie in the same closure orbit, closure-law invariance gives $P(U) \iff P(V)$ by induction on the closure steps connecting them (Lemma~\ref{lem:closure-law-invariance-iff-orbit-constancy}). Exact agreement with $Q$ would then imply $Q(U) \iff Q(V)$, contradicting the assumed orbit gap.
\end{proof}

\section{Hardness After Descent and Raw-Syntax Orbit Gaps}\label{sec:finite-structural-obstructions}

Two questions appear on the same binary-pairwise vocabulary. The first comes after descent: once the action-gap graph is the descended object, recognizing its treewidth is the ordinary graph-theoretic hardness problem. The second comes before hardness: raw local predicates that split a closure orbit fail to define predicates on the exact-certification problem.

\subsection{Descended Treewidth Hardness}\label{sec:treewidth-case-study}
Let the raw support graph have an edge $\{i,j\}$ whenever some action has a nonzero pair coefficient on $\{i,j\}$. For the miniature witness of Section~\ref{sec:obstruction-miniature}, the raw pair supports are
\[
E_{\rm raw}(U)=\{\{0,1\}\},
\qquad
E_{\rm raw}(V)=\{\{0,1\},\{1,2\}\}.
\]
The raw graph construction fails closure-soundness: the closure move has inserted an edge. By contrast, the action-gap polynomial is unchanged,
\[
U(a,\cdot)-U(b,\cdot)=V(a,\cdot)-V(b,\cdot),
\]
so the support graph built from action-gap mixed differences has edge set $\{\{0,1\}\}$ on both presentations. A treewidth or kernelization theorem can then be applied to that descended graph parameter, not to the raw support graph.

Fix a tree $T$ on $n$ coordinates and choose a pairwise weighted instance whose nonzero raw pair coefficients are positive on exactly the edges of $T$, so the raw support graph has treewidth $1$ and the action gaps use pair terms only on the edges of $T$. A tree-decomposition dynamic program applies to the resulting tree interaction graph. Now add to every action the same state term
\[
\alpha(x)=\sum_{0\le i<j<n} x_i x_j .
\]
The operation is a representation-preserving positive-affine closure step with scale $1$: every action comparison, optimizer set, sufficient-coordinate family, and relevant-coordinate set is unchanged. Because the old tree-edge coefficients are positive and the added common coefficient is positive on every pair, no pair coefficient cancels. The raw support graph has become the complete graph $K_n$, whose treewidth is $n-1$. Thus raw treewidth can move from $1$ to $n-1$ inside one closure orbit, so it cannot be an exact tractability proxy. The action-gap support graph is unchanged by the same move, because the common state term cancels in $U(a,\cdot)-U(b,\cdot)$. Constructing this graph is polynomial time in the sparse binary-pairwise encoding: subtract the pair coefficients of one action from another and insert exactly the nonzero mixed-difference edges. The FPT theorem applies after descent: build the action-gap graph, verify its treewidth, and run the bounded-treewidth dynamic program on that normalized interaction structure.\leanmeta{\LHrng{FR}{444}{448}, \LHrng{FR}{455}{457}}

Raw support-graph predicates fail generally. For any predicate on simple graphs, if two same-size graphs $G$ and $H$ receive different verdicts, the common-term slices carrying $G$ and $H$ lie in one closure orbit, both have empty action-gap graphs, and the induced raw support-graph predicate flips. Hence no closure-invariant classifier decides that raw support predicate exactly.\leanmeta{\LHrng{FR}{444}{451}}

\leanmetapending{\LHrng{FR}{394}{396}}
\begin{proposition}[Action-Gap Graphs Realize Arbitrary Treewidth Instances]\label{prop:action-gap-treewidth-realizes-graph-treewidth}
For every finite graph $G$ on $n$ vertices and every width bound $w$, there is a two-action binary pairwise weighted instance $U_G$ whose action-gap interaction graph is exactly $G$. Consequently,
\[
\operatorname{tw}\bigl(G_{\mathrm{gap}}(U_G)\bigr)\le w
\quad\Longleftrightarrow\quad
\operatorname{tw}(G)\le w .
\]
\end{proposition}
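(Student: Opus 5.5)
The plan is a direct construction followed by a one-line consequence. Identify the coordinates of $U_G$ with the vertices of $G$, so $I=V(G)=\{0,\dots,n-1\}$. Take two actions $a,b$. Set
\[
U_G(a,x)=\sum_{\{i,j\}\in E(G)} x_i x_j,\qquad U_G(b,x)=0 .
\]
This is a binary pairwise instance: the constant and unary coefficients are zero, and the pair coefficient is $1$ on each edge of $G$. The encoding has size $O(n+|E(G)|)$ and is computable in polynomial time from $G$. The width bound $w$ plays no role in the construction. It only appears in the stated equivalence, which is why the instance depends on $G$ alone.

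The key step is to compute the action-gap interaction graph. By definition it has an edge $\{i,j\}$ exactly when some action gap has a nonzero mixed difference on $\{i,j\}$. The only nontrivial gap is
\[
g(x)=U_G(a,x)-U_G(b,x)=\sum_{\{i,j\}\in E(G)}x_ix_j ,
\]
and the gap $U_G(b,\cdot)-U_G(a,\cdot)=-g$ has the same support. For a multilinear polynomial, the mixed difference on $\{i,j\}$ is
\[
g(x^{11})-g(x^{10})-g(x^{01})+g(x^{00}),
\]
where $x^{\epsilon\delta}$ sets coordinates $i,j$ to $\epsilon,\delta$ and leaves the rest fixed. This quantity equals the coefficient of $x_ix_j$ in $g$, independently of the remaining coordinates, because every other monomial involves at most one of $x_i,x_j$ and cancels. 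So the mixed difference is $1$ when $\{i,j\}\in E(G)$ and $0$ otherwise. Hence $G_{\mathrm{gap}}(U_G)=G$ on the vertex set $I$, with isolated vertices of $G$ appearing as coordinates of degree zero. The treewidth equivalence then holds trivially, since the two graphs are identical.

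The only point needing care is the convention for ``nonzero mixed difference'' when the paper's definition allows several actions or state-dependent mixed differences. I would state explicitly that with two actions the gap set is $\{g,-g\}$, and that pair-multilinearity makes each mixed difference constant in $x$. With those two facts, ``nonzero for some state'' and ``nonzero coefficient'' coincide. I would also note that using positive coefficients avoids any cancellation worry, and that coordinate relabeling handles any vertex-naming mismatch. Nothing deeper is needed. This realization map is exactly what the later hardness-transfer result composes with graph treewidth NP-hardness.
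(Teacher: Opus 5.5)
Your proposal is correct and follows the paper's proof: two actions, one with utility $0$ and the other carrying $x_ix_j$ exactly on the edges of $G$, so the action-gap mixed difference is $1$ on edges and $0$ otherwise, giving $G_{\mathrm{gap}}(U_G)=G$ and the treewidth equivalence by rewriting along that equality. Your check that the mixed difference does not depend on the remaining coordinates goes slightly beyond what is needed, since the paper's Definition~\ref{def:mixed-difference} already fixes those coordinates at $0$.
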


\begin{proof}[Proof sketch]
Use two actions. Let the false action have utility $0$. Let the true action contain the pair term $x_ix_j$ exactly for the edges $\{i,j\}$ of $G$, with one fixed orientation $i<j$. The mixed difference of the action gap on $\{i,j\}$ is then $1$ exactly when $\{i,j\}$ is an edge of $G$, and it is $0$ otherwise. Hence the action-gap interaction graph is $G$ itself, and the treewidth equivalence follows by rewriting along this graph equality.
\end{proof}

\leanmetapending{\LHrng{FR}{491}{492}, \LHrng{FR}{503}{508}}
\begin{theorem}[Graph Predicate Hardness Transfers to Action-Gap Predicates]\label{thm:graph-predicate-hardness-transfer}
Let $P=(P_n)_{n\in\mathbb N}$ be a uniform finite graph predicate family: each $P_n$ is a predicate on simple graphs with vertex set $\{0,\ldots,n-1\}$ under a fixed finite graph encoding, and any downstream reduction to $P$ uses that encoding. Define the action-gap version of $P$ on two-action binary-pairwise instances by
\[
P^{\mathrm{gap}}(U) \quad\Longleftrightarrow\quad
P_n\bigl(G_{\mathrm{gap}}(U)\bigr).
\]
Here $n=|I(U)|$, and $G_{\mathrm{gap}}(U)$ is encoded, after the canonical coordinate relabeling, as a simple graph on $\{0,\ldots,n-1\}$ using the same fixed graph encoding as $P_n$.
The graph-realization map $G\mapsto U_G$ is a many-one reduction core from $P$ to $P^{\mathrm{gap}}$: for every finite graph $G$,
\[
P^{\mathrm{gap}}(U_G) \quad\Longleftrightarrow\quad P(G).
\]
Consequently, every many-one reduction to $P$ composes with $G\mapsto U_G$ to give a many-one reduction to $P^{\mathrm{gap}}$.
\end{theorem}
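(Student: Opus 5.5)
The plan is to reduce everything to the graph equality supplied by Proposition~\ref{prop:action-gap-treewidth-realizes-graph-treewidth} and then compose maps. Given a finite graph $G$ on vertex set $\{0,\ldots,n-1\}$, I take the two-action instance $U_G$ from that proposition, with coordinate set $I(U_G)=\{0,\ldots,n-1\}$, false action utility $0$, and true action carrying $x_ix_j$ exactly on the edges of $G$. The proposition already gives $G_{\mathrm{gap}}(U_G)=G$ as a graph on the coordinate set. So the core identity $P^{\mathrm{gap}}(U_G)\Leftrightarrow P(G)$ will follow by unfolding the definition: $P^{\mathrm{gap}}(U_G)$ is $P_n$ applied to the encoded action-gap graph with $n=|I(U_G)|$, and that graph equals $G$.

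The one point requiring care, and the step I expect to be the main obstacle, is the encoding bookkeeping. First, I will check that $|I(U_G)|$ equals the vertex count of $G$, so the same member $P_n$ of the family is used on both sides. Isolated vertices of $G$ must therefore appear as coordinates of $U_G$ even though they carry no pair term; I will build $U_G$ with coordinate set exactly $\{0,\ldots,n-1\}$ rather than the support of the pair terms. Second, I will check that the canonical coordinate relabeling used in defining $P^{\mathrm{gap}}$ is the identity on $U_G$, because its coordinates are already $\{0,\ldots,n-1\}$ in order. Hence the encoded $G_{\mathrm{gap}}(U_G)$ is literally the same string as the fixed encoding of $G$, and no isomorphism-invariance of $P_n$ is needed. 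If the canonical relabeling were instead some other order-preserving map, it would still be the identity on an initial segment, so the argument is unchanged.

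Next I will note that $G\mapsto U_G$ is computable in polynomial time in the size of the graph encoding. It writes one constant-size pair term per edge and a zero utility for the second action, giving a sparse binary-pairwise encoding of size $O(n+|E|)$. Thus it is a genuine many-one reduction from $P$ to $P^{\mathrm{gap}}$.

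Finally, for the consequence, let $f$ be any many-one reduction from a problem $L$ to $P$, so that $x\in L\Leftrightarrow P(f(x))$. Then $x\mapsto U_{f(x)}$ satisfies $x\in L\Leftrightarrow P^{\mathrm{gap}}(U_{f(x)})$ by the core identity. Its resource bound is the composition of the two bounds, which is polynomial, or logspace if both maps are, since the realization map is a simple local rewriting. Hence every lower bound for $P$ under such reductions transfers to $P^{\mathrm{gap}}$.
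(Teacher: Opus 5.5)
Your proposal is correct and follows the paper's proof. Both take the graph equality $G_{\mathrm{gap}}(U_G)=G$ from Proposition~\ref{prop:action-gap-treewidth-realizes-graph-treewidth}, rewrite $P_n$ along it, and obtain the consequence by ordinary composition of many-one reductions; your extra bookkeeping (coordinate set exactly $\{0,\ldots,n-1\}$ so isolated vertices survive, the identity canonical relabeling, and the $O(n+|E|)$ size bound) only makes explicit what the paper leaves implicit.
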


\begin{proof}
Proposition~\ref{prop:action-gap-treewidth-realizes-graph-treewidth} gives $G_{\mathrm{gap}}(U_G)=G$. Rewriting the predicate $P_n$ along this graph equality gives $P^{\mathrm{gap}}(U_G)\Longleftrightarrow P(G)$. The reduction-core composition statement is ordinary composition of many-one reductions.
\end{proof}

\begin{definition}[\textsc{Action-Gap-Treewidth}]\label{def:action-gap-treewidth-problem}
The input is a sparse binary-pairwise weighted instance $U$ and an integer $w$. The question is whether the action-gap interaction graph $G_{\mathrm{gap}}(U)$ has treewidth at most $w$.
\end{definition}

\leanmetapending{\LHrng{FR}{394}{396}}
\begin{theorem}[\textup{\textsc{Action-Gap-Treewidth} is NP-complete}]\label{thm:action-gap-treewidth-np-complete}\label{ext:action-gap-treewidth-np-complete}
\textup{\textsc{Action-Gap-Treewidth}} is NP-complete when the width bound $w$ is part of the input.
\end{theorem}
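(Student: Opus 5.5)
The proof has two parts: membership in NP, and NP-hardness obtained by transferring graph-treewidth hardness through the realization map of Proposition~\ref{prop:action-gap-treewidth-realizes-graph-treewidth}.

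\emph{Membership.} Given a sparse binary-pairwise instance $U$ and an integer $w$, I first extract $G_{\mathrm{gap}}(U)$ deterministically in polynomial time. For each pair of actions, subtract their pair coefficients coordinatewise. Insert the edge $\{i,j\}$ exactly when some action gap has a nonzero mixed difference on $\{i,j\}$. As in the treewidth case study, on Boolean pairwise syntax this mixed difference equals the difference of the $\{i,j\}$ pair coefficients. Only pairs that appear in some action's sparse support can carry a nonzero difference, so the extraction is polynomial in the encoding size. Nonzeroness is tested by exact rational arithmetic.

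The NP certificate is a tree decomposition of $G_{\mathrm{gap}}(U)$ with at most $n$ bags, each of size at most $w+1$. Such a normalized decomposition always exists when the treewidth is at most $w$; it is polynomial in $n$, and if $w\ge n$ the answer is trivially yes. Verifying the certificate means checking vertex coverage, edge coverage, connectivity of each vertex's bag set in the tree, and the bag-size bound. All of these checks run in polynomial time.

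\emph{Hardness.} I reduce from the standard NP-complete problem \textsc{Treewidth}: given a graph $G$ and an integer $w$, decide whether $\operatorname{tw}(G)\le w$ (Arnborg, Corneil, and Proskurowski). The reduction sends $(G,w)$ to $(U_G,w)$, where $U_G$ is the two-action instance of Proposition~\ref{prop:action-gap-treewidth-realizes-graph-treewidth}. Its false action has utility $0$, and its true action has a unit coefficient on the pair term $x_ix_j$ for each edge of $G$. Writing $U_G$ in sparse form takes time linear in $|V(G)|+|E(G)|$, with coefficients from $\{0,1\}$. The proposition gives $G_{\mathrm{gap}}(U_G)=G$, so $\operatorname{tw}(G_{\mathrm{gap}}(U_G))\le w$ if and only if $\operatorname{tw}(G)\le w$. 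This is exactly the instance of Theorem~\ref{thm:graph-predicate-hardness-transfer} with $P_n$ taken to be ``treewidth at most $w$'', where $w$ is carried along as part of the input.

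\emph{Main obstacle.} The substantive content is the external graph theorem; the remaining work is bookkeeping about encodings. Three points need care. First, the width bound must be genuinely part of the input, since for fixed $w$ the problem is polynomial by Bodlaender's algorithm. Second, the vertex labeling must be canonical so that $n=|I(U)|$ matches the graph encoding; isolated vertices become coordinates with no pair terms, and this leaves treewidth unchanged. Third, the membership bound relies on the extraction only ever inspecting pairs present in the sparse syntax, which is what keeps it polynomial.
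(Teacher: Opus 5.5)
Your proof is correct and follows the paper's argument. Membership in NP comes from polynomial-time extraction of $G_{\mathrm{gap}}(U)$ followed by checking a supplied tree decomposition of width at most $w$. Hardness comes from the many-one reduction $(G,w)\mapsto(U_G,w)$ from the Arnborg--Corneil--Proskurowski treewidth recognition problem, using $G_{\mathrm{gap}}(U_G)=G$.

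Your added bookkeeping (the $n$-bag certificate normal form and the pair-coefficient form of the mixed difference) only fills in details the paper leaves implicit. Strictly speaking, Theorem~\ref{thm:graph-predicate-hardness-transfer} covers graph predicate families without an auxiliary integer input. Since $U_G$ does not depend on $w$, your direct argument already suffices without invoking it.
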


\begin{proof}
Membership in NP follows by constructing $G_{\mathrm{gap}}(U)$ in polynomial time from the sparse binary-pairwise encoding and checking a supplied tree decomposition of width at most $w$.

For hardness, reduce from graph treewidth recognition, which is NP-complete when $w$ is part of the input~\cite{arnborg1987complexity}. Given $(G,w)$, construct the two-action instance $U_G$ from Proposition~\ref{prop:action-gap-treewidth-realizes-graph-treewidth}. The construction has one pair term per edge of $G$ and is polynomial in $|G|$. Proposition~\ref{prop:action-gap-treewidth-realizes-graph-treewidth} gives $G_{\mathrm{gap}}(U_G)=G$, hence
\[
\operatorname{tw}\bigl(G_{\mathrm{gap}}(U_G)\bigr)\le w
\quad\Longleftrightarrow\quad
\operatorname{tw}(G)\le w .
\]
Thus graph treewidth recognition many-one reduces to \textsc{Action-Gap-Treewidth}.
\end{proof}

\textsc{Action-Gap-Treewidth} is a downstream complexity-class result for a descended structural parameter. Relevance-certification hardness is a different question. The graph-realization construction and treewidth-preserving equivalence from Proposition~\ref{prop:action-gap-treewidth-realizes-graph-treewidth} are the mechanized reduction core.\leanmeta{\LHrng{FR}{394}{396}} Graph-predicate lower bounds compose with the graph-realization map.\leanmeta{\LHrng{FR}{503}{508}} The NP-completeness of graph treewidth recognition is the cited external source theorem~\cite{arnborg1987complexity}; membership in NP is checked by constructing $G_{\mathrm{gap}}(U)$ and verifying a supplied tree decomposition. For fixed $w$, a supplied graph solver with a uniform polynomial step bound can be run on the extracted action-gap graph to solve the descended graph predicate.\leanmeta{\LHrng{FR}{455}{457}}

\subsection{Hardness of Descent Certification}\label{sec:descent-certification-hardness}

Descent certification asks whether a proposed proxy is constant on semantic fibers before it is used as a classifier. A split proxy isolates the computational core. The semantic input is a Boolean assignment $z$ for a 3-CNF formula $\varphi$. The raw input is a presentation-only bit $b$ that closure erases. Define
\[
C_\varphi(z,b) \quad\Longleftrightarrow\quad b=1 \text{ and } \varphi(z)=1.
\]

For the decision languages below, a split-proxy code is represented by a Boolean formula or circuit $C(z,b)$ whose evaluation is polynomial in $|C|$. Define
\[
\mathrm{NonDescend}(C)\quad\Longleftrightarrow\quad
\exists z,\ C(z,1)\ne C(z,0),
\qquad
\mathrm{Descend}(C)\quad\Longleftrightarrow\quad \neg\mathrm{NonDescend}(C).
\]
The reduction maps a 3-CNF formula $\varphi$ to the size-linear formula or circuit $C_\varphi(z,b)=b\wedge\varphi(z)$. It proves many-one hardness for these encoded split-proxy decision languages.\leanmeta{\LHrng{FR}{491}{502}}

\leanmetapending{\LHrng{FR}{491}{502}}
\begin{theorem}[SAT Reduces to Split-Proxy Non-Descent]\label{thm:split-proxy-descent-certification-hardness}
For every 3-CNF instance $\varphi$, the formula/circuit split proxy $C_\varphi(z,b)=b\wedge\varphi(z)$ has a split-orbit gap if and only if $\varphi$ is satisfiable. Equivalently, $C_\varphi$ descends if and only if $\varphi$ is unsatisfiable. Thus SAT many-one reduces to split-proxy non-descent, and UNSAT many-one reduces to split-proxy descent.
\end{theorem}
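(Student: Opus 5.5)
The argument is a direct unfolding of the definitions of $\mathrm{NonDescend}$ and $\mathrm{Descend}$ for the split proxy. In the split setting the closure relation erases the presentation-only bit. The closure orbit of a raw input $(z,b)$ is therefore $\{(z,0),(z,1)\}$, and a split-orbit gap is exactly a semantic assignment $z$ with $C(z,1)\ne C(z,0)$. By Proposition~\ref{prop:orbit-gap-completeness}, descent of a code is fiber constancy. For split codes this means $C(z,0)=C(z,1)$ for every $z$, which is the complement of $\mathrm{NonDescend}(C)$ as defined above.

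\textbf{Step 1.} Compute the two fiber values of $C_\varphi$ at a fixed $z$. We have $C_\varphi(z,0)=0\wedge\varphi(z)=0$ and $C_\varphi(z,1)=1\wedge\varphi(z)=\varphi(z)$. So $C_\varphi(z,1)\ne C_\varphi(z,0)$ holds if and only if $\varphi(z)=1$.

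\textbf{Step 2.} Quantify over $z$. From Step~1, $\mathrm{NonDescend}(C_\varphi)$ holds if and only if some $z$ satisfies $\varphi$, that is, if and only if $\varphi$ is satisfiable. Negating both sides gives $\mathrm{Descend}(C_\varphi)$ if and only if $\varphi$ is unsatisfiable.

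\textbf{Step 3.} Check the reduction. The map $\varphi\mapsto C_\varphi$ adds one input variable and one conjunction gate (or one $\wedge$ node in formula syntax). It is therefore computable in time linear in $|\varphi|$, and evaluating $C_\varphi$ is polynomial in its size, as the encoded languages require. Step~2 shows this map is a many-one reduction from SAT to $\mathrm{NonDescend}$. The same map is a many-one reduction from UNSAT to $\mathrm{Descend}$, because the equivalence from Step~2 is preserved under complementing both languages.

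No step is genuinely hard. The only point needing care is matching the informal phrase "split-orbit gap" to the formal predicate $\mathrm{NonDescend}$. I would make this explicit by noting that the split closure identifies exactly the pairs $(z,0)$ and $(z,1)$ and nothing else, so orbit gaps and fiber non-constancy coincide. Nothing more than Proposition~\ref{prop:orbit-gap-completeness} is needed for this, and I would invoke Lemma~\ref{lem:closure-law-invariance-iff-orbit-constancy} only if the closure were given as a primitive step rather than as orbits.
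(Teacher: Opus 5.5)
Your proof is correct and follows essentially the paper's argument: the paper likewise uses a satisfying $z$ to produce the gap $(z,1),(z,0)$, and conversely notes that the positive member of any gap forces $b=1$ and $\varphi(z)=1$, with the descent statement obtained by negation. Your direct computation $C_\varphi(z,0)=0$, $C_\varphi(z,1)=\varphi(z)$ handles both directions at once. Your linear-size check of $\varphi\mapsto C_\varphi$ matches the paper's remark just before the theorem.
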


\begin{proof}
If $\varphi$ has a satisfying assignment $z$, compare the two presentations $(z,1)$ and $(z,0)$. They have the same semantic assignment and differ only in the raw bit. The proxy accepts $(z,1)$ and rejects $(z,0)$, so they form a split-orbit gap. Conversely, any split-orbit gap for $C_\varphi$ contains a positive presentation $(z,b)$. Positivity implies $b=1$ and $\varphi(z)=1$, so $\varphi$ is satisfiable. The descent equivalence is the negation of the orbit-gap equivalence.
\end{proof}

For these formula/circuit encodings, a non-descent certificate consists of the semantic assignment and the two raw-bit presentations with opposite proxy values. The split construction gives the hardness core for the two languages $\mathrm{NonDescend}$ and $\mathrm{Descend}$. Graph-predicate lower bounds compose with the action-gap realization map after descent.

\subsection{Binary Pairwise Orbit Gaps}

The binary pairwise no-gos apply the criterion from Corollary~\ref{cor:orbit-gap-template}. Closure-soundness gives the contradiction, and Definition~\ref{def:finite-structural-predicate} locates the direct local search space. In the binary pairwise domain, a pair-targeted positive-affine step with action-independent $\alpha$-component supported on one coordinate pair creates same-orbit disagreements for the dominant-pair, margin-masking, ghost-action, and additive/statewise offset targets.

Binary pairwise instances form a small witness class in which unary collapse, dense interaction, optimizer degeneracy, and the orbit-gap obstruction already appear in compact form. The binary pairwise domain serves as a witness subdomain. Universal characterizations include it; tractable classes outside the direct finite-structural regime require additional quotient-level invariants.

The descent test explains which graph-level parameters survive in this subdomain. On binary coordinate domains, the relevant local invariant is mixed difference.

\begin{definition}[Mixed Difference on a Coordinate Pair]\label{def:mixed-difference}
Fix distinct coordinates $i,j$ and an action $a$. Hold every coordinate other than $i,j$ at $0$, and define
\[
\begin{aligned}
\Delta_{ij}(a)
:={}& U(a;x_i=0,x_j=0)-U(a;x_i=1,x_j=0) \\
&{}-U(a;x_i=0,x_j=1)+U(a;x_i=1,x_j=1).
\end{aligned}
\]
For pairwise utilities, $\Delta_{ij}(a)\neq 0$ is the canonical witness of genuine pair interaction on $\{i,j\}$.
It is the second finite difference on the $(i,j)$ coordinate square.
\end{definition}

The symmetric binary-pairwise dichotomy, offset-normalization facts, and quotient-cost facts orient the witness class.\leanmeta{\LHrng{FR}{118}{140}, \LHrng{FR}{183}{184}, \LH{FR359}} The main no-go uses only explicit same-orbit positive/negative pairs, summarized next.

\begin{definition}[Obstruction Target Catalogue]\label{def:obstruction-family-target-predicates}
Fix the normalized binary pairwise representation used in the obstruction constructions: utilities are read in a chosen sparse pairwise coefficient form with explicit unary and pair terms, and raw statistics are computed from those coefficients before quotient construction. The target catalogue consists of the three direct-local proxy targets in Definition~\ref{def:direct-local-proxy-targets} and the offset-normalization witness target in Definition~\ref{def:offset-normalization-witness-target}. The first group represents local routing tests; the fourth tracks the normalization dependence of raw coefficient statistics.
\end{definition}

\begin{definition}[Direct-Local Proxy Targets]\label{def:direct-local-proxy-targets}
The direct-local proxy targets are predicates on the correctness presentation. An \emph{anchored} target names fixed witness coordinates, such as the pair $\{0,1\}$. Formally, the anchor is part of the fixed predicate definition; it is not quantified over and is not supplied as input.

The three direct-local proxy targets used below are:
\begin{enumerate}
\item anchored dominant-pair status: the fixed pair/action $(\{0,1\},a)$ is the unique dominant coordinate-pair/action;
\item margin-boundedness: the largest unary coefficient magnitude is at most twice the largest pair mixed-difference magnitude in the raw coefficient vocabulary;
\item anchored ghost-action concentration: a never-optimal action carries the fixed anchor-pair mixed-difference signature.
\end{enumerate}
\end{definition}

\begin{definition}[Offset Normalization Witness Target]\label{def:offset-normalization-witness-target}
The offset-normalization witness target is the additive/statewise offset signature: action-specific shifts hide interaction structure, with two actions having anchor-pair mixed-difference magnitudes exactly $1$ and $0$, and non-anchor magnitudes unrestricted. It tracks the closure mechanism by which statewise additive terms make raw coefficient statistics presentation-dependent.
\end{definition}

The offset target reads raw mixed-difference magnitudes after a proposed additive/statewise normalization. A common state term can change that signature while leaving every exact-certification comparison fixed.

\begin{table}[H]
\centering
\normalsize
\setlength{\tabcolsep}{4pt}
\renewcommand{\arraystretch}{1.08}
\renewcommand{\tabularxcolumn}[1]{m{#1}}
\begin{tabularx}{\linewidth}{@{}>{\raggedright\arraybackslash}m{0.14\linewidth}>{\raggedright\arraybackslash}m{0.31\linewidth}>{\raggedright\arraybackslash}m{0.16\linewidth}>{\raggedright\arraybackslash}m{0.30\linewidth}@{}}
\toprule
\textbf{ID} & \textbf{Witness presentation} & \textbf{Shift} & \textbf{Target and routing question} \\
\midrule
Dominant pair & $A=\{a,b\}$; $U(a)=2x_0x_1$, $U(b)=0$. & $\alpha=3x_1x_2$ & Anchored $(\{0,1\},a)$ is uniquely dominant. Pair-weight concentration moves the raw dominant pair without changing action gaps. \\
\midrule
Margin & $A=\{a,b\}$; $U(a)=3x_0+x_1x_2$, $U(b)=0$. & $\alpha=2x_0x_1$ & Unary magnitude at most twice largest pair magnitude. A common pair term changes the coefficient ratio while preserving the quotient. \\
\midrule
Ghost action & $A=\{0,1,2\}$; $U(0)=\mathbf 1[x_0=1]$, $U(1)=\mathbf 1[x_0=0]$, $U(2)=-1-x_0x_1$. & $\alpha=2\sigma(x_0,x_1)$ & A never-optimal action carries the anchored mixed-difference signature. A common shift changes only the raw ghost signature. \\
\midrule
Offset & $A=\{f,t\}$; $U(f)=1+\mathbf 1[x_0=1]+x_0x_1$, $U(t)=0$. & $\alpha=2\sigma(x_0,x_1)$ & Anchor-pair magnitudes are exactly $1$ and $0$. A common offset changes the raw normalization signature. \\
\bottomrule
\end{tabularx}
\caption{Affine-shift obstruction catalogue. In each row $V(c,x)=U(c,x)+\alpha(x)$ for every action $c$; $\sigma(x_0,x_1)=1$ when $x_0=x_1$ and $-1$ otherwise.\protect\footnotemark}
\label{tab:direct-local-proxy-targets}
\label{tab:obstruction-families-summary}
\end{table}
\footnotetext{Row witness checks: \LH{FR186}, \LH{FR187}, \LH{FR188}, \LH{FR189}.}

The table entries flip by direct mixed-difference arithmetic. In the dominant-pair row, $U$ has only $\Delta_{01}(a)=2$; after adding $3x_1x_2$, $V$ also has $\Delta_{12}(a)=\Delta_{12}(b)=3$, so $(\{0,1\},a)$ is no longer the unique dominant pair/action. In the margin row, the largest unary magnitude remains $3$, while the largest pair magnitude changes from $1$ to $2$; thus $3\le 2\cdot 1$ is false before the shift and $3\le 2\cdot 2$ is true after it. In the ghost-action row, action $2$ is never optimal because actions $0$ and $1$ attain value $1$ according to $x_0$, while action $2$ is at most $-1$; the mixed difference of $\sigma$ on the anchor square is $4$, so the common term $2\sigma$ adds mixed difference $8$ and the ghost action no longer has anchor magnitude exactly $1$. In the offset row, the anchor-pair magnitudes are $1$ and $0$ before the shift; adding $2\sigma$ changes them to $9$ and $8$. In all four rows the action gaps are unchanged, so the target flip occurs inside one closure orbit.\leanmeta{\LHrng{FR}{186}{189}}

\paragraph{Witness constants}\label{rem:witness-constant-normalization}
The specific witness values $-1$, $1$, and $0$ are normalization choices. Any admissible witness magnitudes with the same sign and separation pattern inside the binary-pairwise representation yield a closure-orbit-equivalent obstruction family after positive affine rescaling and statewise offset transport. The displayed constants are fixed only to make the explicit no-go statements uniform.

Dominant-pair status is the two-coordinate shadow of a backdoor or small-core routing test: a fixed small set exposes tractable structure~\cite{williams2003backdoors,bessiere2013detecting,ganian2017discovering}. The margin row tests a coefficient-ratio threshold. The ghost-action row asks whether syntactic support sits on a never-optimal action. The offset row tests whether a proposed normalization still carries action-independent state terms. Other target classes follow the same criterion: exhibit an orbit gap, then apply hull separation on the chosen domain.

The catalogue is anchored except for margin-boundedness, which is global over the displayed coefficient vocabulary. For the dominant-pair witness, the fully existential predicate that some pair and action are uniquely dominant holds on both presentations. The orbit gap concerns the fixed-template predicate.\leanmeta{\LHrng{FR}{404}{405}}

Raw action count gives an unanchored contrast. It names no coordinate or action template. Duplicating an action label changes the raw action count but not the distinct action-profile count, since the new label has the same utility profile as the original. Thus raw action count is a presentation-level parameter, whereas profile count survives the closure step and is exactly the quantity used in Proposition~\ref{prop:distinct-profile-compression} and Theorem~\ref{thm:bounded-distinct-profile-positive-regime}.

The action-count gap is independent of relevance. In the constant-utility threshold witness, every action is optimal at every state, so no coordinate is relevant before or after duplication; the raw action count still crosses the threshold.\leanmeta{\LHrng{FR}{452}{454}}

All four rows use one affine move. They differ in the raw statistic read by the target predicate.

\leanmetapending{\LHrng{FR}{406}{410}}
\begin{theorem}[Raw Action Count Threshold Duplication Orbit Gap]\label{thm:raw-action-count-duplication-orbit-gap}
For every $k\in\mathbb N$, the unanchored predicate ``the raw presentation has at least $k+2$ action labels'' has a closure-orbit gap under action duplication. Consequently, no closure-sound predicate can exactly decide any raw action-count threshold at least two on the full binary pairwise domain.
\end{theorem}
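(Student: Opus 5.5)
The plan is to exhibit, for each $k$, an explicit same-orbit pair straddling the threshold and then apply the Orbit-Gap Template (Corollary~\ref{cor:orbit-gap-template}). Fix $k\in\mathbb N$ and take the constant-utility witness: coordinate set $I=\{0\}$, action set $A_k=\{a_0,\dots,a_{k}\}$ with $k+1$ labels, and $U(a,x)=0$ for every action and state. This is a binary pairwise instance with zero unary and pair coefficients. Let $V$ be obtained from $U$ by one action-duplication step: add a fresh label $a_{k+1}$ whose utility profile copies that of $a_0$, so $V$ has $k+2$ labels and is again constant zero. Every action is optimal at every state in both presentations, so no coordinate is relevant on either side. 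By Proposition~\ref{prop:closure-operations-preserve-certification}, duplication is a primitive closure step (the duplicate projects to $a_0$), so $U\sim_{\mathrm{cl}}V$. The predicate ``at least $k+2$ action labels'' fails on $U$ and holds on $V$, which is the required orbit gap.

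Next I would check that the full binary pairwise domain is closure-closed in the sense of Definition~\ref{def:closure-closed-domain}, so that the pair lies in the domain. Duplicating a label only copies coefficient lists, so the target remains binary pairwise. Then Corollary~\ref{cor:orbit-gap-template}, or equivalently Proposition~\ref{prop:orbit-gap-completeness}, shows that no closure-law-invariant predicate agrees with the threshold predicate everywhere. By Lemma~\ref{lem:closure-law-invariance-iff-orbit-constancy}, closure-sound predicates (Definition~\ref{def:closure-sound-predicate}) are exactly these invariant predicates.

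For the ``consequently'' clause, let $t\ge 2$ be a threshold and set $k=t-2$. The first part then covers the predicate ``at least $t$ labels.'' If threshold predicates are also read in the form ``at most $t$,'' I would handle them by noting that the complement of a predicate with an orbit gap has the same orbit gap, so a single witness pair $U,V$ serves both forms.

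The main obstacle is bookkeeping rather than mathematics. One must confirm that action duplication is counted among the within-domain primitive steps for the binary pairwise domain, and that the bound $t\ge 2$ is tight. At $t\le 1$ every instance has at least one action, so the predicate is constant and no gap exists. This explains the restriction to thresholds of at least two, which the witness $k+1\ge1\mapsto k+2\ge2$ respects.
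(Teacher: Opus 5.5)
Your proposal is correct and matches the paper's proof: the zero-utility instance with $k+1$ labels and its one-step action duplication lie in one closure orbit, the raw count moves from $k+1$ to $k+2$, and orbit constancy of closure-sound predicates rules out exact decision. Your extra remarks on the complementary ``at most'' form and on tightness at $t\le 1$ are not needed, and the tightness remark tacitly assumes nonempty action sets, but neither affects the argument.
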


\begin{proof}[Proof sketch]
Fix $k$. Let $U_k$ be the zero-utility binary-pairwise instance with $k+1$ action labels. Form $V_k$ by duplicating one action label and leaving every state utility otherwise unchanged. The projection from the duplicated label to the original label is a duplicate-action closure witness, so $U_k$ and $V_k$ lie in one closure orbit. The optimizer quotient, sufficient-coordinate family, and relevant-coordinate set are unchanged because every optimizer set in $V_k$ projects exactly to the corresponding optimizer set in $U_k$.

The raw action count changes from $k+1$ to $k+2$. Hence the predicate ``raw action count is at least $k+2$'' is false on $U_k$ and true on $V_k$. Closure-sound predicates are constant on closure orbits, so no closure-sound predicate can decide this threshold exactly. Since $k$ was arbitrary, every threshold at least two has the same obstruction.
\end{proof}

\leanmetapending{\LHrng{FR}{485}{490}}
\begin{theorem}[Raw Coordinate Count Threshold Irrelevant-Coordinate Orbit Gap]\label{thm:raw-coordinate-count-irrelevant-coordinate-orbit-gap}
For every $k\in\mathbb N$, the unanchored predicate ``the raw presentation has at least $k+2$ Boolean coordinates'' has a closure-orbit gap under irrelevant-coordinate extension. Consequently, no closure-sound predicate can exactly decide any raw coordinate-count threshold at least two on the full binary pairwise domain.
\end{theorem}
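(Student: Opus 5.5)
The plan mirrors the proof of Theorem~\ref{thm:raw-action-count-duplication-orbit-gap}, with irrelevant-coordinate extension taking the place of action duplication.

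Fix $k\in\mathbb N$. Let $U_k$ be the zero-utility binary-pairwise instance on the coordinate set $\{0,\ldots,k\}$, so it has $k+1$ Boolean coordinates; any fixed action set will do, for instance a single action or two actions. Form $V_k$ by binary irrelevant-coordinate extension. That is, add a new coordinate $k+1$ and set $V_k(a,x)=U_k(a,x|_{\{0,\ldots,k\}})$. The result is again binary-pairwise, with only a zero constant term and no unary or pair coefficients, so it stays inside the full binary pairwise domain.

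By Proposition~\ref{prop:closure-operations-preserve-certification}, this extension is a primitive closure step whose transport projects away the new coordinate. Hence $U_k\sim_{\mathrm{cl}}V_k$. The same proposition says that sufficient coordinate sets and relevant coordinates are preserved under transport, and that the added coordinate is irrelevant. In this witness every action is optimal at every state, so the optimizer quotient is a single class on both sides and no coordinate is relevant in either presentation. This matches the relevance-independence remark made for the action-count witness.

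Next, read off the target. The raw coordinate count is $k+1$ for $U_k$ and $k+2$ for $V_k$. So the predicate ``at least $k+2$ Boolean coordinates'' fails on $U_k$ and holds on $V_k$, which is an orbit gap. Corollary~\ref{cor:orbit-gap-template}, together with Lemma~\ref{lem:closure-law-invariance-iff-orbit-constancy}, then shows that no closure-law-invariant, and hence no closure-sound, predicate agrees with the threshold predicate on the full domain. The full domain is closure-closed, so Proposition~\ref{prop:orbit-gap-completeness} applies as well. Letting $k$ range over $\mathbb N$ covers every threshold $k+2\ge 2$.

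The only delicate point is bookkeeping: the extended presentation must genuinely be an instance of the declared closure law, with the new coordinate recorded in the raw coordinate set even though no coefficient mentions it. I would settle this by appealing directly to the definition of binary irrelevant-coordinate extension in Proposition~\ref{prop:closure-operations-preserve-certification}, instead of inferring the coordinate set from the support of the coefficients. If the encoding did infer coordinates from support, the fallback is to put a common unary term $x_{k+1}$ on every action. This is a statewise positive-affine shift, so the endpoints remain in one closure orbit via a two-step chain.
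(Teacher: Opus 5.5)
Your proposal is correct and matches the paper's proof. The paper also takes a zero-utility instance with $k+1$ coordinates (it uses a single action) and extends it by one unused coordinate through the projection/zero-section irrelevant-coordinate witness, which yields a same-orbit pair on which the threshold $k+2$ flips. Your fallback paragraph is unnecessary, because the paper likewise treats the coordinate set as part of the presentation rather than reading it off the coefficient support.
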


\begin{proof}[Proof sketch]
Let $U_k$ be the zero-utility binary-pairwise instance with one action label and $k+1$ Boolean coordinates. Let $V_k$ be the same zero-utility instance with one additional unused coordinate. Projection drops the last coordinate, and the zero section puts it back; utilities agree after projection. Projection and section form an irrelevant-coordinate closure witness, so $U_k$ and $V_k$ lie in one closure orbit. The raw coordinate count changes from $k+1$ to $k+2$, while the optimizer quotient is unchanged. The threshold predicate is false on $U_k$ and true on $V_k$, contradicting exact decision by any closure-sound predicate.
\end{proof}

\subsection{Common Scaffold and Witness Families}

The affine scaffold has three checks. First exhibit $V(c,x)=U(c,x)+\alpha(x)$ with $\alpha$ action-independent and supported on one coordinate pair. Then verify that all action gaps are unchanged, so $U$ and $V$ lie in one closure orbit and induce the same optimizer quotient. Finally verify that the raw target predicate has opposite values on the two presentations.

\leanmetapending{\LHrng{FR}{15}{20}, \LH{FR185}}
\begin{lemma}[Affine-Shift Obstruction Lemma]\label{lem:affine-shift-obstruction}
Let $D$ be a closure-closed binary-pairwise domain, and let $Q$ be a target predicate on $D$. Suppose $U,V\in D$ satisfy
\[
V(c,x)=U(c,x)+\alpha(x)
\]
for every action $c$, where $\alpha$ is an action-independent pair-supported state term and the displayed transformation is an allowed representation-preserving positive-affine closure step. If $Q$ holds on $U$ and fails on $V$, then no closure-law invariant predicate can exactly decide $Q$ on $D$. Consequently, no closure-sound class of predicates contains an exact decider for $Q$ on $D$.
\end{lemma}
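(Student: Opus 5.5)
The plan is to reduce the lemma to the Orbit-Gap Template (Corollary~\ref{cor:orbit-gap-template}) in its domain-relative form, Proposition~\ref{prop:orbit-gap-completeness}. The only substantive step is to check that $U$ and $V$ lie in one closure orbit inside $D$. The rest is the invariance argument already proved.

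First I would certify the closure step. Write $V(c,x)=\alpha(x)+1\cdot U(c,x)$ for every action $c$. This is a statewise positive affine reparameterization with scale $\beta\equiv 1>0$ and an action-independent shift. By Proposition~\ref{prop:closure-operations-preserve-certification} it is a declared primitive closure step, and it preserves every action comparison, since $V(c,x)-V(c',x)=U(c,x)-U(c',x)$. Because $\alpha$ is pair-supported, $V$ remains a binary-pairwise presentation: only the constant, unary, and single pair coefficients on the support of $\alpha$ change. The target therefore stays in the ambient syntax. The hypothesis already places $V\in D$; closure-closedness (Definition~\ref{def:closure-closed-domain}) would supply this anyway. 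So $U\sim_{\mathrm{cl}}V$ by a single within-domain primitive step.

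Next I would argue by contradiction. Suppose a closure-law-invariant predicate $P$ satisfied $P(W)\iff Q(W)$ for all $W\in D$. By Definition~\ref{def:closure-sound-predicate}, invariance under within-domain primitive steps gives $P(U)\iff P(V)$ directly from the single step above. Lemma~\ref{lem:closure-law-invariance-iff-orbit-constancy} gives the same conclusion via orbit constancy. Exactness then yields $Q(U)\iff Q(V)$, which contradicts $Q(U)$ true and $Q(V)$ false. Equivalently, $U\in\Hull(D\cap Q)\cap\Hull(D\cap\neg Q)$, so Theorem~\ref{thm:exact-classification-hull-separation} rules out an exact closure-invariant characterization.

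The final clause follows by unwinding definitions. A closure-sound class consists of predicates each of which is closure-sound on $D$, that is, closure-law invariant. Such a class cannot contain an exact decider, by the previous step. The main obstacle I expect is domain-relativity. Invariance is only assumed for \emph{within-domain} steps, so I must make sure the positive-affine move counts as a primitive step of $D$'s declared closure laws and that $V$ is in $D$. The hypothesis ``allowed representation-preserving positive-affine closure step'' covers exactly this. I would state explicitly that this is where it is used, rather than relying on the ambient closure relation.
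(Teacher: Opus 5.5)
Your proposal is correct and follows essentially the same route as the paper. The paper's proof also certifies the shift as a primitive positive-affine closure step, using Proposition~\ref{prop:closure-operations-preserve-certification} and the pair-shift observation. It then places $U$ and $V$ in one orbit, applies Lemma~\ref{lem:closure-law-invariance-iff-orbit-constancy}, and concludes via the orbit-gap template of Corollary~\ref{cor:orbit-gap-template}. Your explicit handling of domain-relativity and the hull-separation restatement are harmless additions.
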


\begin{proof}[Proof sketch]
The displayed transformation is a primitive closure step by Proposition~\ref{prop:closure-operations-preserve-certification} and the pair-shift observation in Section~\ref{sec:closure-preconditions}. Hence $U$ and $V$ lie in one closure orbit. A closure-law invariant predicate is constant on that orbit by Lemma~\ref{lem:closure-law-invariance-iff-orbit-constancy}. Exact decision of $Q$ would require different values on $U$ and $V$, contradicting the assumed target flip. Corollary~\ref{cor:orbit-gap-template} gives this template.
\end{proof}

The four rows of Table~\ref{tab:obstruction-families-summary} instantiate Lemma~\ref{lem:affine-shift-obstruction}. The dominant-pair row is the running example of Section~\ref{sec:obstruction-miniature}. The margin, ghost-action, and offset rows change only the raw statistic being read. The same-orbit and target-flip checks for the four rows are indexed by the displayed handles.\leanmeta{\LHrng{FR}{186}{189}} The same scaffold gives the minimum-dimension observation: a nonzero action-gap pair interaction needs at least two actions, and the two-pair affine template needs at least three coordinates.\leanmeta{\LHrng{FR}{397}{400}} The dominant-pair witness also gives the optimizer-computation no-go.\leanmeta{\LHrng{FR}{250}{251}}

\subsection{Parameterized Closure-Sound Theorem}

\paragraph{Scope of these witnesses}
The three proxy families and the offset-normalization witness are orbit gaps for direct predicates computed from raw binary-pairwise syntax. Profile-compressed classifiers, algebraic invariants such as polymorphism clones, and MSO predicates over the quotient are built from data that have already passed the descent test.

The affine witnesses share one form:
\[
V(c,x)=U(c,x)+\alpha(x),
\]
with action-independent pair-supported $\alpha$. All action gaps are unchanged:
\[
V(a,x)-V(b,x)=U(a,x)-U(b,x).
\]
The optimizer quotient and exact-certification problem are fixed; the raw local statistic changes.\leanmeta{\LHrng{FR}{191}{192}}

\leanmetapending{\LH{FR193}}
\begin{theorem}[Parameterized Closure-Sound No-Go]\label{thm:closure-sound-package-no-go}
Let $\Gamma$ be a closure-sound class of instance predicates in the sense of Definition~\ref{def:closure-sound-predicate}. For each target in the obstruction catalogue of Definition~\ref{def:obstruction-family-target-predicates}, no predicate in $\Gamma$ can exactly decide that target.
\end{theorem}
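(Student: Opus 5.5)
The plan is to reduce the theorem to four instances of the Affine-Shift Obstruction Lemma (Lemma~\ref{lem:affine-shift-obstruction}), one per row of Table~\ref{tab:obstruction-families-summary}, and then lift from single predicates to the class $\Gamma$. The lift is immediate. Every member of $\Gamma$ is closure-sound in the sense of Definition~\ref{def:closure-sound-predicate}, so by Lemma~\ref{lem:closure-law-invariance-iff-orbit-constancy} it is constant on closure orbits. It therefore suffices to exhibit, for each of the four targets in Definition~\ref{def:obstruction-family-target-predicates}, a same-orbit pair $U,V$ on which the target takes opposite values. Corollary~\ref{cor:orbit-gap-template} then rules out every $P\in\Gamma$ as an exact decider. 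I will work on the full binary pairwise domain, which is closure-closed under the positive-affine step. This makes the domain hypothesis of Lemma~\ref{lem:affine-shift-obstruction} automatic.

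For each row I will take $U$ and $V(c,x)=U(c,x)+\alpha(x)$ as displayed, and carry out the same three checks in the same order.

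\emph{Same orbit.} Since $\alpha$ is action-independent and supported on $\{x_0,x_1\}$ or $\{x_1,x_2\}$, the move is a statewise positive-affine reparameterization with scale $1$. It is therefore a primitive closure step by Proposition~\ref{prop:closure-operations-preserve-certification} and the pair-shift remark. Every action gap is unchanged, so $U\sim_{\mathrm{cl}}V$.

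\emph{Syntax.} $V$ still lies in binary pairwise syntax. The only terms involved are $x_ix_j$, indicators $\mathbf 1[x_0=1]$, and $\sigma(x_0,x_1)=1-2x_0-2x_1+4x_0x_1$, all of which are constant, unary, or pair terms.

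\emph{Target flip.} I compute mixed differences via Definition~\ref{def:mixed-difference}, using $\Delta_{ij}$ of $x_ix_j$ equal to $1$, $\Delta_{01}(\sigma)=4$, and the fact that unary and constant terms contribute $0$. The four rows go as follows.

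\begin{enumerate}
\item \textbf{Dominant pair.} Before the shift, $\Delta_{01}(a)=2$ is the only nonzero mixed difference. After the shift, $\Delta_{12}(a)=\Delta_{12}(b)=3>2$, so $(\{0,1\},a)$ is no longer the unique dominant pair/action.
\item \textbf{Margin.} The largest unary magnitude is $3$ on both sides. The largest pair magnitude goes from $1$ to $2$, so the condition changes from $3\le 2$, which is false, to $3\le 4$, which is true.
\item \textbf{Ghost action.} Action $2$ is never optimal on either side, because gaps are preserved. Its anchor magnitude goes from $1$ to $|{-1}+8|=7\neq 1$.
\item \textbf{Offset.} The anchor magnitudes go from $(1,0)$ to $(9,8)$.
\end{enumerate}

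Note that the margin row flips from false to true rather than true to false. Since orbit gaps are symmetric, I will simply swap the roles of $U$ and $V$ there (equivalently apply the step with $-\alpha$, which is also scale $1$).

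I expect the main obstacle to be interpretive rather than computational: the targets must be read so that each one is a genuine predicate on raw syntax. The plan is to fix, as in Definition~\ref{def:obstruction-family-target-predicates}, a canonical sparse pairwise coefficient form in which each presentation's coefficients are read off. This prevents the flip from depending on how $\sigma$ or the indicator functions are expanded. I also need to confirm that in the dominant-pair row ``unique dominant'' compares magnitudes across all pair/action combinations, so that the tie $\Delta_{12}(a)=\Delta_{12}(b)=3$ already destroys the property. With these readings fixed, each row is an instance of Lemma~\ref{lem:affine-shift-obstruction}, and quantifying over $P\in\Gamma$ completes the argument.
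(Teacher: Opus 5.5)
Your proposal is correct and matches the paper's proof. Each row of Table~\ref{tab:obstruction-families-summary} supplies a same-orbit pair with opposite target values, and closure-soundness of every $P\in\Gamma$ forces $P(U)=P(V)$, which contradicts exact decision. Your row arithmetic also agrees with the paper's mixed-difference checks, including the swap of $U,V$ in the margin row, and your ghost-row value $|{-1}+8|=7$ is a correct sharpening of the paper's ``no longer exactly $1$''.
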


\begin{proof}
Fix one target in the catalogue of Definition~\ref{def:obstruction-family-target-predicates}. The corresponding row of Table~\ref{tab:obstruction-families-summary} supplies instances $U,V$ in the same closure orbit with opposite target values.\leanmeta{\LHrng{FR}{186}{189}}

Let $P\in\Gamma$. Since $\Gamma$ is closure-sound, $P$ is invariant under every closure step and hence under every finite closure orbit. Closure invariance gives $P(U)=P(V)$. If $P$ exactly decided the chosen target, it would have to take different values on $U$ and $V$, because the target itself does. The contradiction rules out exact decision of that target by $P$. The argument applies separately to the three direct-local proxy targets and to the offset-normalization witness, so no predicate in $\Gamma$ exactly decides any target in the catalogue.
\end{proof}

\paragraph{Finite-structural specialization}
Theorem~\ref{thm:closure-sound-package-no-go} uses closure-soundness. The polynomial-time checkability, structural extractability, bounded-pattern definability, and fixed action budget clauses in Definition~\ref{def:finite-structural-predicate} locate the theorem in the direct finite-structural regime. Every finite-structural predicate is closure-sound by definition, so the theorem applies to that class.

Here a tractability notion is \emph{closure-invariant} on a domain when its truth value is constant on closure orbits in that domain. Theorems~\ref{thm:correct-classifier-forces-invariance} and~\ref{thm:optimizer-computation-classifier-forces-invariance} establish this for the exact-certification and output-production tasks considered here under polynomial-time closure transports.

\leanmetapending{\LH{FR193}, \LHrng{FR}{197}{199}}
\begin{corollary}[No Simultaneous Finite-Structural Classification and Target Recognition]\label{cor:no-correct-tractability-classifier}
Let $C$ be a tractability classifier on a domain that is closed under the closure laws, has polynomial-time-computable closure transports, and contains the four obstruction families. If $C$ is correct for a closure-invariant tractability notion and the predicate computed by $C$ belongs to the finite-structural class, then $C$ cannot also exactly decide any target predicate from Definition~\ref{def:obstruction-family-target-predicates}.
\end{corollary}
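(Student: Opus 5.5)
The plan is to combine the forced-invariance theorem with the closure-sound no-go, and to argue by contradiction. Suppose $C$ is correct for a closure-invariant tractability notion. Suppose also that the predicate $P_C$ it computes is finite-structural and exactly decides some target $Q$ from Definition~\ref{def:obstruction-family-target-predicates}. Everything below is carried out on the given domain $D$. By hypothesis, $D$ is closure-closed, has polynomial-time closure transports, and contains the four obstruction families of Table~\ref{tab:obstruction-families-summary}.

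\emph{Step 1: $P_C$ is closure-sound.} This holds directly by Definition~\ref{def:finite-structural-predicate}, since finite-structural predicates are closure-sound by definition. It can also be checked from correctness. Theorems~\ref{thm:correct-classifier-forces-invariance} and~\ref{thm:optimizer-computation-classifier-forces-invariance} show that a correct classifier for a closure-invariant notion is constant on closure orbits of $D$. Lemma~\ref{lem:closure-law-invariance-iff-orbit-constancy} then converts orbit constancy into invariance under primitive closure steps. Having both routes is useful because it shows the conclusion does not hinge on the finite-structural clauses alone. Only closure-soundness is used below.

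\emph{Step 2: the orbit gap.} For the chosen target $Q$, take the corresponding row of Table~\ref{tab:obstruction-families-summary}. It gives $U,V\in D$ with $V(c,x)=U(c,x)+\alpha(x)$, where $\alpha$ is action-independent and pair-supported. By Proposition~\ref{prop:closure-operations-preserve-certification} this is a primitive positive-affine closure step, and $Q(U)\neq Q(V)$. Lemma~\ref{lem:affine-shift-obstruction}, or equivalently Theorem~\ref{thm:closure-sound-package-no-go} applied to the class $\Gamma=\{P_C\}$, then forbids $P_C$ from deciding $Q$ exactly on $D$. This contradicts the assumption, and the same argument works for each of the four targets separately.

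\emph{Main obstacle.} The delicate point is domain bookkeeping. The no-go results need both witnesses $U$ and $V$, and the closure step joining them, to lie inside the classifier's domain. Otherwise orbit constancy on $D$ says nothing about the pair. The hypothesis that $D$ contains the four obstruction families, together with closure-closedness (Definition~\ref{def:closure-closed-domain}), is what secures this. I would state explicitly that the relevant primitive step has its source and target in $D$. This ensures that the invariance from Step 1, which is only relative to $D$, applies to the witness pair.
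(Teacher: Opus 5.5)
Your proposal is correct and follows the paper's own argument. In the paper, correctness for a closure-invariant notion (via Theorem~\ref{thm:correct-classifier-forces-invariance}) and membership in the finite-structural class each make the computed predicate closure-sound. Theorem~\ref{thm:closure-sound-package-no-go} then rules out exact recognition of each catalogue target family by family, using the same-orbit witness pair from the corresponding row of Table~\ref{tab:obstruction-families-summary}. Your remark that the witness pair and the connecting affine step must lie in the classifier's domain is useful bookkeeping; the paper's sketch leaves it implicit in the hypothesis that the domain contains the four obstruction families.
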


\begin{proof}[Proof sketch]
If $C$ were correct, Theorem~\ref{thm:correct-classifier-forces-invariance} would force closure-orbit agreement on its domain. If its predicate also belonged to the finite-structural class, it would be closure-sound. An additional claim that the same verdict exactly recognizes one of the obstruction-family target predicates would be an exact finite-structural characterization of that target, which Theorem~\ref{thm:closure-sound-package-no-go} rules out family by family.
\end{proof}

Corollary~\ref{cor:no-correct-tractability-classifier} rules out simultaneous correctness and exact recognition of the obstruction targets inside the finite-structural regime. The excluded claim is exact recognition of the representation-level target witnessing the orbit gap. Corollary~\ref{cor:domain-restriction-only-removes-orbit-gaps} gives the domain-relative form.

\leanmetapending{\LHrng{FR}{348}{349}}
\begin{proposition}[Full Binary Pairwise Domain Is Closure-Closed]\label{prop:full-binary-pairwise-domain-closure}
Let $\mathcal U$ be the class of all binary pairwise slices in the formal presentation model. Then $\mathcal U$ is closed under domain-relative action relabeling, coordinate relabeling, representation-preserving positive affine reparameterization, action duplication, duplicate-state witnesses whose source and target are binary pairwise slices, and binary irrelevant-coordinate extension. The dominant-pair, margin, ghost-action, and offset obstruction families all lie in $\mathcal U$.
\end{proposition}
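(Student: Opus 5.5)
The plan is to check closure of $\mathcal U$ one primitive closure law at a time, using the concrete syntax of binary pairwise slices: a finite coordinate set $I$, a finite action set $A$, and for each action a constant, unary coefficients, and pair-interaction coefficients. Afterwards I will check the four obstruction families by reading off Table~\ref{tab:obstruction-families-summary}.

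\emph{Relabelings.} For action relabeling by a bijection $\pi:A\to A'$, the new slice assigns to $\pi(a)$ the coefficient list of $a$. For coordinate relabeling by $\tau:I\to I'$, unary coefficients on $i$ move to $\tau(i)$ and pair coefficients on $\{i,j\}$ move to $\{\tau(i),\tau(j)\}$. The result is again constant, unary, and pair data, so it is a binary pairwise slice.

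\emph{Action duplication.} The added label copies the coefficient list of its source action. This is again a binary pairwise slice.

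\emph{Irrelevant-coordinate extension.} Add a coordinate with all unary and pair coefficients zero. The utility is still pairwise and does not depend on the new coordinate.

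\emph{Positive affine reparameterization and duplicate-state witnesses.} These are the only laws that could leave the syntax: a general statewise $\alpha(s)+\beta(s)U(a,s)$ need not be pairwise. This is the step I expect to be the main obstacle, and I would resolve it through the qualifiers in the statement. The law is restricted to \emph{representation-preserving} steps, that is, those whose target is again pairwise. Likewise, duplicate-state witnesses are admitted only when source and target are binary pairwise slices. For these two laws, closure therefore holds by the domain-relative definition, matching Definition~\ref{def:closure-closed-domain}, which only requires targets that remain in the ambient syntax.

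The case actually used in the witnesses is $\beta\equiv 1$ with $\alpha$ a constant, unary, or pair polynomial. I would verify directly that adding such an $\alpha$ coefficientwise to every action yields pairwise syntax. I would also note that a constant $\beta>0$ scales all coefficients.

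Finally, for the four families, each $U$ in the table is a pairwise expression. The indicator $\mathbf 1[x_0=1]$ is just $x_0$, and $\mathbf 1[x_0=0]$ is $1-x_0$. The function $\sigma(x_0,x_1)=1-2x_0-2x_1+4x_0x_1$ is pairwise, as are $3x_1x_2$ and $2x_0x_1$. So each shift $\alpha$ is constant, unary, and pair data, and each $V=U+\alpha$ is pairwise. Both presentations of each row therefore lie in $\mathcal U$.
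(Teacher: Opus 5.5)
Your proposal is correct and follows the paper's own proof. Relabeling, action duplication, and irrelevant-coordinate extension preserve the pairwise syntax directly. The positive-affine and duplicate-state laws are closed by their domain-relative qualifiers, and the witness shifts are scale-$1$ pair terms. Your explicit rewriting of $\mathbf 1[x_0=1]$, $\mathbf 1[x_0=0]$, and $\sigma=(1-2x_0)(1-2x_1)$ as pairwise polynomials also fills in the family-membership clause that the paper's sketch leaves implicit.
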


\begin{proof}[Proof sketch]
Each closure step is domain-relative: its target is again a binary pairwise slice. Relabeling, action duplication, and binary irrelevant-coordinate extension preserve the syntax directly. The positive affine steps used by the witness rows add action-independent pair-supported state terms with scale $1$, so the transformed presentations remain binary pairwise.
\end{proof}

The binary pairwise domain is already sufficient for the obstruction. Any larger closure-closed representation class containing $\mathcal U$ inherits the same impossibility conclusion by restriction. The displayed witnesses use the smallest dimensions needed by this two-pair affine template: a nonzero action-gap pair interaction requires at least two actions, and the two-pair orbit template requires at least three distinct coordinates.\leanmeta{\LHrng{FR}{397}{400}}

\leanmetapending{\LH{FR350}}
\begin{corollary}[No Simultaneous Finite-Structural Classification and Raw-Target Recognition on the Full Binary Pairwise Domain]\label{cor:full-binary-pairwise-no-go}
No correct tractability classifier in the finite-structural class can also exactly decide any of the three direct-local proxy targets, or the offset-normalization witness target, on the full binary pairwise instance domain $\mathcal U$.
\end{corollary}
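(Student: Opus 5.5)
The plan is to specialize Corollary~\ref{cor:no-correct-tractability-classifier} to the domain $D=\mathcal U$. That corollary asks for three things: the domain is closed under the closure laws, it has polynomial-time-computable closure transports, and it contains the four obstruction families. So the proof consists of discharging these three hypotheses for $\mathcal U$ and then quoting the conclusion.

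First, closure-closedness and membership of the witnesses both come from Proposition~\ref{prop:full-binary-pairwise-domain-closure}. That proposition states that $\mathcal U$ is closed under every domain-relative closure step. It also states that the dominant-pair, margin, ghost-action, and offset witnesses of Table~\ref{tab:obstruction-families-summary} all lie in $\mathcal U$.

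Second, polynomial-time transports come from the encoding clause of Proposition~\ref{prop:closure-operations-preserve-certification}. That clause says the listed transports are polynomial-time computable on sparse binary-pairwise encodings.

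With these three checks done, let $C$ be a correct tractability classifier whose predicate is finite-structural. Theorem~\ref{thm:correct-classifier-forces-invariance} makes $C$ constant on closure orbits in $\mathcal U$, and finite-structural predicates are closure-sound by Definition~\ref{def:finite-structural-predicate}. For each catalogue target, the matching row of Table~\ref{tab:obstruction-families-summary} gives $U,V\in\mathcal U$ in one orbit with opposite target values. By Lemma~\ref{lem:affine-shift-obstruction} (or Theorem~\ref{thm:closure-sound-package-no-go} directly), $C$ cannot exactly decide that target. Running this argument over all four targets gives the statement.

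The only step that needs care is confirming that each row's affine shift is a closure step inside $\mathcal U$, not merely in some ambient syntax. Concretely, adding the action-independent term $\alpha$ ($3x_1x_2$, $2x_0x_1$, or $2\sigma(x_0,x_1)$) must keep the presentation binary pairwise. It does: $\sigma(x_0,x_1)=1-2x_0-2x_1+4x_0x_1$ expands into constant, unary, and pair terms, so each $V$ stays in $\mathcal U$. The required domain-relativity is therefore exactly what Proposition~\ref{prop:full-binary-pairwise-domain-closure} provides.
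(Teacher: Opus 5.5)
Your proposal is correct and follows the paper's own proof, which applies Corollary~\ref{cor:no-correct-tractability-classifier} with Proposition~\ref{prop:full-binary-pairwise-domain-closure} supplying closure-closedness of $\mathcal U$ and membership of the four witness families. Your additional checks make explicit what the paper leaves implicit: the polynomial-time transports come from Proposition~\ref{prop:closure-operations-preserve-certification}, and the expansion $\sigma(x_0,x_1)=1-2x_0-2x_1+4x_0x_1$ shows that the shifted witnesses stay binary pairwise.
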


\begin{proof}[Proof sketch]
Apply Corollary~\ref{cor:no-correct-tractability-classifier} together with Proposition~\ref{prop:full-binary-pairwise-domain-closure}.
\end{proof}

Corollary~\ref{cor:full-binary-pairwise-no-go} instantiates the closure-sound theorem on $\mathcal U$. The obstruction targets are representation-level proxies with orbit gaps, not quotient-level invariants.

\leanmetapending{\LH{FR351}, \LH{FR361}}
\begin{corollary}[Universal Exact-Certification Treatments Inherit the Binary Pairwise Obstruction by Restriction]\label{cor:universal-exact-specification-no-go}
Let $G$ be a tractability predicate defined on all exact problems with fixed correctness relations. Suppose that, for every binary pairwise instance $U$, the verdict of $G$ on the canonical optimizer-set exact specification induced by $U$ agrees with polynomial-time exact optimizer-set search on $U$.

Then the restriction of $G$ to that witness class cannot yield an exact characterization deciding the three direct-local proxy targets, or the offset-normalization witness target, on the full binary pairwise domain.
\end{corollary}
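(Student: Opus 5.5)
The proof follows Corollary~\ref{cor:no-correct-tractability-classifier}, moved from the binary pairwise witness class to the universal setting through the canonical optimizer-set specification. For a binary pairwise instance $U$, write $\mathcal S(U)$ for the exact problem whose correctness relation is $R_U(s,a)\Leftrightarrow a\in\Opt_U(s)$. Define the restricted predicate $G|_{\mathcal U}(U):=G(\mathcal S(U))$ on the full binary pairwise domain $\mathcal U$. The first step is to show that $G|_{\mathcal U}$ is constant on closure orbits of $\mathcal U$.

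\textbf{Step 1: orbit constancy.} By hypothesis, $G|_{\mathcal U}(U)$ holds exactly when polynomial-time exact optimizer-set search on $U$ holds. Theorem~\ref{thm:optimizer-computation-classifier-forces-invariance} states that a classifier correctly predicting polynomial-time solvability of deterministic optimizer-set payload output agrees along closure orbits. Its hypotheses are closure-closedness and polynomial-time transports. Proposition~\ref{prop:full-binary-pairwise-domain-closure} gives closure-closedness of $\mathcal U$, and Proposition~\ref{prop:closure-operations-preserve-certification} gives polynomial-time transports on sparse encodings. Hence $G|_{\mathcal U}$ is constant on orbits. By Lemma~\ref{lem:closure-law-invariance-iff-orbit-constancy}, it is therefore closure-law invariant, i.e.\ closure-sound.

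For the affine-shift rows there is also a direct check that does not need the tractability hypothesis. There $\Opt_U=\Opt_V$ pointwise, so $\mathcal S(U)=\mathcal S(V)$ literally. Any predicate on exact problems then gives equal verdicts, by Corollary~\ref{cor:correctness-quotient-universality}.

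\textbf{Step 2: the contradiction.} Suppose the restriction yielded an exact characterization of one of the catalogue targets $Q$ from Definition~\ref{def:obstruction-family-target-predicates}, meaning $G|_{\mathcal U}(W)\Leftrightarrow Q(W)$ for all $W\in\mathcal U$. The corresponding row of Table~\ref{tab:obstruction-families-summary} supplies $U,V\in\mathcal U$ in one closure orbit with $Q(U)$ true and $Q(V)$ false. Proposition~\ref{prop:full-binary-pairwise-domain-closure} places these witnesses in $\mathcal U$. Step 1 gives $G|_{\mathcal U}(U)=G|_{\mathcal U}(V)$, contradicting the target flip. This is Corollary~\ref{cor:orbit-gap-template}, or equivalently Lemma~\ref{lem:affine-shift-obstruction}. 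Repeating the argument for the three direct-local targets and the offset witness proves the statement. It is the same family-by-family structure as Theorem~\ref{thm:closure-sound-package-no-go}.

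\textbf{Main obstacle.} The delicate point is Step 1 as a family-level invariance. Whether $G$ agrees with polynomial-time search is a property of $U$ in the classifier sense used by Theorem~\ref{thm:optimizer-computation-classifier-forces-invariance}, so the argument needs care about what "agrees" means. I would argue primarily through the pointwise identity $\mathcal S(U)=\mathcal S(V)$, which all four affine rows satisfy. In that route, invariance is literal equality of the exact specifications, and no complexity reduction is needed. The tractability-based route through Theorem~\ref{thm:optimizer-computation-classifier-forces-invariance} would serve only as backup for the general orbit statement.
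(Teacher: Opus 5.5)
Your proof is correct, and its skeleton matches the paper's: restrict $G$ along $U\mapsto\mathcal S(U)$, get agreement on same-orbit pairs, then run the orbit-gap contradiction family by family. The difference is which invariance step carries the weight.

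The paper's proof uses only your backup route. The hypothesis identifies the restriction with the optimizer-set search predicate, Theorem~\ref{thm:optimizer-computation-classifier-forces-invariance} makes that predicate closure-law invariant on all of $\mathcal U$, and Theorem~\ref{thm:closure-sound-package-no-go} is then applied as a black box.

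Your primary route is different and more elementary. All four rows of Table~\ref{tab:obstruction-families-summary} are scale-one common shifts on the same coordinates and action set, so $\Opt_U=\Opt_V$ pointwise and the induced specifications are literally equal. $G$ then agrees on each pair simply because it is applied to equal arguments. This avoids the awkward per-instance reading of ``agrees with polynomial-time search'' that you flag. It also shows that the tractability hypothesis on $G$ is not needed for these four targets. The only assumption is that $\mathcal S(U)$ records states, outputs, and the correctness relation and forgets the utility encoding, which is what ``exact problems with fixed correctness relations'' means.

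What the paper's route buys is full closure-law invariance of the restriction. That includes action duplication, relabeling, and irrelevant-coordinate extension, where $\mathcal S(U)\neq\mathcal S(V)$ as objects because the output or state set changes. This is what lets the paper cite Theorem~\ref{thm:closure-sound-package-no-go} directly. It would also extend to gaps such as Theorems~\ref{thm:raw-action-count-duplication-orbit-gap} and~\ref{thm:raw-coordinate-count-irrelevant-coordinate-orbit-gap}, where your literal-equality argument does not apply.

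Two small attributions. The equal verdicts come from $\mathcal S(U)=\mathcal S(V)$ itself, not from Corollary~\ref{cor:correctness-quotient-universality}. And the clause of Theorem~\ref{thm:optimizer-computation-classifier-forces-invariance} you need is the optimizer-set search clause rather than the payload clause, though both are covered.
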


\begin{proof}[Proof sketch]
Restrict $G$ to the binary-pairwise witness class, viewed through the universal exact-semantics transfer. The optimizer-set search predicate is closure-law invariant by Theorem~\ref{thm:optimizer-computation-classifier-forces-invariance}. Exact recognition of a target in the catalogue would therefore give a closure-sound exact decider for that target, contradicting Theorem~\ref{thm:closure-sound-package-no-go}.
\end{proof}

The Bulatov--Zhuk CSP dichotomy falls on the quotient-invariant side of this boundary~\cite{bulatov2017dichotomy,zhuk2017proof}. It classifies finite relational templates by algebraic invariants of the template, principally polymorphism structure. Schaefer's Boolean dichotomy gives the finite Boolean prototype~\cite{schaefer1978complexity}. Such algebraic invariants are invariant under presentation changes of the template. They pass the descent test unless they are used as exact recognizers of one of the orbit-gapped target predicates. The no-go applies to direct proxies that split closure orbits; orbit-invariant algebraic classifiers are the model for the quotient-level side.

\subsection{Comparison of the Two Questions}

The treewidth theorem and the orbit-gap no-gos use the same modeling vocabulary but answer different questions. The treewidth theorem starts with a descended extractor, the action-gap graph, and then applies ordinary complexity theory to that object. Its conclusion is computational: deciding whether the descended graph has width at most $w$ is NP-complete when $w$ is part of the input. For fixed $w$, a graph solver with a uniform polynomial step bound applies after the extractor has been computed.

The orbit-gap theorems assert failure of descent. The witnesses keep the exact-certification problem fixed and change a raw statistic: anchored dominant-pair status, margin-boundedness, ghost-action signature, offset signature, or raw action count. A classifier that reads one of those statistics exactly cannot at the same time be a closure-sound predicate on the exact-certification problem.

The two questions compose in one direction only. After a statistic descends, as action-gap treewidth does, hardness and algorithmic questions are well posed. When a statistic has an orbit gap, as the raw local targets do, a hardness theorem about that raw statistic would classify presentations.

\section{Quotient-Level Positive Regimes}\label{sec:successor-invariants}\label{rem:no-go-scope}

\subsection{What Remains Available}

Passing the descent test produces a well-defined target for ordinary complexity theory: reductions, FPT algorithms, kernel bounds, and class lower bounds then apply to the descended predicate.

Quotient-level invariants remain available. Profile compression, MSO over the quotient, and algebraic invariants such as polymorphism-style clones are built from data that have already passed the descent test. Theorem~\ref{thm:finite-orbit-catalogue-classifiers} gives the algorithmic condition: a computable orbit representative turns such an invariant into a decidable classifier.

The positive recipe has three steps. Choose a normalization or quotient presentation, such as action-gap support, profile compression, or an optimizer-quotient structure. Prove that this presentation is closure-compatible, so it is constant on the harmless moves used by exact correctness. Analyze the descended predicate with ordinary complexity tools. For a solver designer, the obstruction selects the representation on which treewidth algorithms, kernels, or algebraic classifications may be applied exactly.

The binary pairwise witnesses are small instances of the displayed pair-shift template. A representation class excluding them receives its own orbit-gap analysis. A representation class containing them inherits the same obstruction by restriction.

\subsection{Three Routes}

Three routes avoid the raw-syntax orbit gaps.
\begin{enumerate}
\item Closure-compatible normalization. Normalized unary-gap domains remove pair interactions before the binary-pairwise orbit gaps can form, so targets factoring through the unary quotient satisfy hull separation.
\item Quotient catalogues and profile compression. Bounded distinct-profile compression replaces raw action labels by utility-profile classes; Theorem~\ref{thm:bounded-distinct-profile-positive-regime} lifts exact classification from the bounded-action compressed image back to the original domain.
\item Stable neighborhoods. Strict-margin neighborhoods preserve the optimizer quotient under controlled perturbation by Proposition~\ref{prop:global-approximation-stability}, giving the approximation-stability regime developed in Section~\ref{sec:approximation-stability}.
\end{enumerate}
Each route has an assumption stack: the chosen representation must be closure-compatible, the quotient or catalogue needed by the predicate must be constructible on the domain under study, and the resulting descended object must fall in a model-checking or algorithmic regime. The first condition is the descent test; the other two are ordinary algorithmic restrictions.

\subsection{Algebraic and MSO Quotient Data}

Algebraic invariants can be compared with MSO over $\mathcal Q(U)$ by asking where their atomic data live. A polymorphism-style invariant attached to a raw presentation descends exactly when isomorphic quotient structures induce the same algebraic object, for example the same clone of quotient-state operations preserving the correctness-set relation. In that case the invariant can be encoded as an MSO-definable property whenever the relevant finite operations or operation tables are part of the quotient structure, or whenever a bounded-arity fragment suffices. If the invariant depends on presentation data erased by the quotient, the orbit-gap template gives the obstruction. The translation problem is concrete: identify algebraic signatures whose operations act on quotient classes and preserve correctness relations, then determine which bounded fragments are MSO-definable over $\mathcal Q(U)$.
Section~\ref{sec:related} gives the concrete Boolean-CSP comparison with Schaefer polymorphisms.

\begin{definition}[Correctness-Quotient MSO Predicates]\label{def:orbit-quotient-mso-candidate}
For a weighted Boolean instance $U$, form a finite relational structure $\mathcal Q(U)$ from the correctness quotient. The state sort is the quotient of assignments by equality of correctness sets. The coordinate sort records the Boolean coordinates. A ternary relation $\mathsf{Flip}(i,p,q)$ records that quotient classes $p$ and $q$ can be connected by flipping coordinate $i$ in some representative state. The correctness-set relation records which action-profile classes are correct at each quotient state. Action labels are first quotiented by identical utility profiles.

A \emph{correctness-quotient MSO predicate} is any property of $U$ definable by a fixed MSO sentence over $\mathcal Q(U)$. Semantics are the ordinary isomorphism-invariant semantics for finite relational structures: if $\mathcal Q(U)\cong\mathcal Q(V)$, the predicate gives the same verdict to $U$ and $V$.
\end{definition}

Model checking begins after descent: the formula reads quotient states, coordinate-flip connectivity between quotient classes, and correct action-profile classes, not raw coefficients. These atomic relations are closure-invariant by construction, so the miniature witness $U,V$ and the four obstruction-family witnesses collapse to the same $\mathcal Q(U)\cong\mathcal Q(V)$. A concrete canonization is an implementation strategy for evaluating the isomorphism-invariant structure.

\leanmetapending{\LHrng{FR}{411}{422}}
\begin{theorem}[Finite Coordinate-State Quotient Flip Graph Realization]\label{thm:quotient-flip-graph-realization}
For every finite simple graph $G$, there is a finite coordinate-state correctness presentation, with state space the listed representatives rather than an ambient full Boolean cube, whose quotient-state coordinate-flip graph is exactly $G$.
\end{theorem}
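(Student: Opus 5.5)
We construct the presentation directly from $G=(V_G,E_G)$, using one state per quotient class and one coordinate per edge. Take $V_G=\{v_1,\dots,v_n\}$ and $E_G=\{e_1,\dots,e_m\}$. Let the coordinate set be $E_G$, so there are $m$ Boolean coordinates. Assign to each vertex $v$ a representative state $s_v\in\{0,1\}^{E_G}$, and let the state space be exactly $S=\{s_v : v\in V_G\}$. Note that $S$ is not the full cube, which is what the statement allows.

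Next we fix the correctness relation. Take the output set to be $V_G$ and set $\operatorname{Corr}(s_v)=\{v\}$, which is the realization of an arbitrary state partition from Corollary~\ref{cor:correctness-quotient-universality}. Each state then forms its own quotient class, and the quotient-state sort is in bijection with $V_G$.

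The key step is choosing the representatives so that $\mathsf{Flip}$ reproduces $E_G$. On a listed state space, we read ``flipping coordinate $i$ in a representative'' as relating two listed states that differ exactly in coordinate $i$. We therefore need an injective encoding $v\mapsto s_v$ in which $s_u,s_v$ are at Hamming distance $1$ exactly when $\{u,v\}\in E_G$. The natural choice is $s_v=\mathbf 1_{\{e:\,v\in e\}}$, the incidence vector of $v$. With this choice two incidence vectors differ on $\deg u+\deg v-2[\{u,v\}\in E_G]$ coordinates, which is not generally $1$, so it fails.

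Instead we use a coordinate set $E_G\cup V_G$ and encode each vertex with a small support. We want $s_u$ and $s_v$ to differ in exactly one coordinate iff $uv\in E_G$; that is, the representatives must induce $G$ in the hypercube graph. Not every graph is an induced subgraph of a hypercube, since odd cycles are excluded by bipartiteness. This is the main obstacle. We resolve it with the intended definition for listed representatives. Following the definition of $\mathsf{Flip}$ as a relation between classes, we permit the listed state space to contain several representatives per class. For each edge $e=\{u,v\}$ we add two fresh states: $t_{e,u}$, which is the unit vector on coordinate $e$ together with a private tag pattern, and its flip $t_{e,v}$ in coordinate $e$. We assign $t_{e,u}$ to class $u$ and $t_{e,v}$ to class $v$, giving them correctness sets $\{u\}$ and $\{v\}$ respectively.

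We choose the private tag coordinates, one block per edge, so that representatives belonging to different edge gadgets are at Hamming distance at least $2$. Using disjoint two-coordinate tags per edge achieves this, and we include one isolated representative per vertex so that isolated vertices still appear. Under these choices the only coordinate flips between listed states are the gadget flips. Hence $\mathsf{Flip}(e,p,q)$ holds exactly when $\{p,q\}=e$, and the projected quotient flip graph has edge set exactly $E_G$. Verifying that no spurious distance-$1$ pairs arise across gadgets and that there are no loops is the routine check; the care goes into the tag separation argument.
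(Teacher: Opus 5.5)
Your final construction is correct and is essentially the paper's. Both use one vertex representative per class, a two-state gadget per edge toggled by a dedicated coordinate, and private separating coordinates (your disjoint tag blocks, the paper's one-hot gadget bits over oriented edges) that push every cross-gadget pair to Hamming distance at least $2$, with singleton correctness sets $\{u\}$ making the quotient classes exactly the vertices. You still need to fix the vertex representative explicitly, for example as the unit vector on coordinate $v\in V_G$, which lies at distance at least $2$ from every other listed state.
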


\begin{proof}[Proof sketch]
Use one quotient label for each vertex of $G$. The state space of the presentation is exactly the listed representatives: a vertex representative for every vertex and, for every oriented edge $e=(u,v)$, two edge representatives labelled $u$ and $v$. There is no ambient full Boolean cube, no non-representative Boolean assignment, and no garbage quotient class. Correctness depends only on the label: the correct output set at a representative labelled $u$ is the singleton $\{u\}$. Hence all representatives labelled $u$ lie in one correctness-quotient class $[u]$, including the vertex representative and every incident-edge endpoint representative. The vertex representatives ensure that isolated vertices are present as quotient classes.

The coordinate code has Boolean-valued one-hot gadget coordinates and one Boolean-valued toggle coordinate for each oriented edge. The gadget coordinates separate valid representatives from different gadgets; they do not define quotient classes. The state space contains only valid representatives, so flipping a one-hot bit out of the valid set creates no state and no quotient edge. Consequently, a single-coordinate flip between two valid representatives cannot change the gadget code unless both representatives belong to the same edge gadget and differ only in its toggle. The edge toggle changes the two endpoint representatives inside a single edge gadget, so every graph edge gives a quotient flip edge $[u]--[v]$.

Conversely, suppose two representatives differ in only one coordinate and have distinct quotient labels. The one-hot gadget coordinates force the representatives to come from the same gadget. A vertex gadget has no off-diagonal toggle. An edge gadget toggles only between its two endpoint labels, which are adjacent in $G$. No non-edge appears in the quotient flip graph. The two inclusions give equality with $G$.
\end{proof}

Descent alone allows quotient flip graphs with arbitrary finite shape in finite coordinate-state presentations. Efficient model checking requires a domain restriction such as bounded full Gaifman treewidth of the constructed quotient, bounded quotient size, or an explicit quotient catalogue.

\begin{example}[Miniature witness as quotient structure]\label{ex:miniature-quotient-mso}\label{ex:quotient-mso-global-admissible-profile}
For the miniature witness, the correctness quotient has two state classes: the states with $x_0=x_1=1$ and the remaining states. Flipping coordinate $2$ stays inside a quotient class, while flipping coordinate $0$ or $1$ can cross between the two classes. Thus
\[
\forall p\,\forall q\,(\mathsf{Flip}(2,p,q)\to p=q)
\]
expresses irrelevance of coordinate $2$ and is unchanged by the shifted presentation $V(c,x)=U(c,x)+3x_1x_2$. The correctness-set relation also supports global action-profile predicates such as $\exists \pi\,\forall p\,\mathsf{Corr}(p,\pi)$, which asks whether some action-profile class is correct at every quotient state. Both sentences read $\mathcal Q(U)$, not the raw support graph.\leanmeta{\LHrng{FR}{411}{426}}
\end{example}

\subsection{Model Checking After Quotient Construction}

Predicates definable in MSO over $\mathcal Q(U)$ pass the orbit-gap test automatically. Evaluating such a sentence requires constructing the correctness quotient, which may be as hard as the certification task being classified. The computational question is domain-relative: on which restricted domains is the quotient structure efficiently computable and model-checkable?

Theorem~\ref{thm:finite-orbit-catalogue-classifiers} gives the algorithmic condition. A computable canonical presentation of $\mathcal Q(U)$ is a representative map $c(U)$ for closure orbits. An isomorphism-invariant MSO sentence is a decidable predicate on representatives once model checking is decidable on the resulting structure class.

\leanmetapending{\LHrng{FR}{423}{435}, \LHrng{FR}{458}{459}}
\begin{theorem}[Quotient-MSO Evaluation After Construction]
\label{thm:post-descent-quotient-mso-evaluation}
Fix an MSO sentence $\varphi$ over the quotient structure and a domain $D$. Suppose $D$ supplies a construction $U\mapsto \mathcal Q(U)$ containing quotient states, the ternary coordinate-flip relation, and the correctness-set relation, and suppose a model-checking routine evaluates $\varphi$ on those structures. The predicate on $D$ induced by $\varphi$ factors through the constructed quotient: presentations with the same constructed quotient receive the same verdict. Evaluation decomposes into quotient construction followed by model checking on $\mathcal Q(U)$. If quotient construction costs $T_Q(U)$ and model checking costs $T_\varphi(\mathcal Q(U))$, the total cost is\leanmeta{\LHrng{FR}{423}{435}, \LHrng{FR}{458}{459}}
\[
T_Q(U)+T_\varphi(\mathcal Q(U)).
\]
\end{theorem}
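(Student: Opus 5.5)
The plan is to define the induced predicate by composition and read off both claims from that definition. Set $P_\varphi(U)\Longleftrightarrow \mathcal Q(U)\models\varphi$ for $U\in D$, where $\mathcal Q(U)$ is the structure returned by the supplied construction. This makes $P_\varphi$ the pullback of the structure-level predicate $\mathcal M\mapsto(\mathcal M\models\varphi)$ along the map $U\mapsto\mathcal Q(U)$. That is exactly the situation of the fiber-descent criterion in Proposition~\ref{prop:abstract-fiber-gap-descent}, with the semantic map taken to be quotient construction.

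\textbf{Factoring.} If $\mathcal Q(U)=\mathcal Q(V)$, then $P_\varphi(U)$ and $P_\varphi(V)$ evaluate the same sentence on the same structure, so they agree. I will also record the slightly stronger fact, using the isomorphism-invariant semantics fixed in Definition~\ref{def:orbit-quotient-mso-candidate}: if $\mathcal Q(U)\cong\mathcal Q(V)$, then by the standard invariance of MSO satisfaction under isomorphism of finite relational structures, $P_\varphi(U)\Leftrightarrow P_\varphi(V)$. Hence $P_\varphi$ is constant on the fibers of $U\mapsto\mathcal Q(U)$, and Proposition~\ref{prop:abstract-fiber-gap-descent} says it descends to the constructed quotient. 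When $U\mapsto\mathcal Q(U)$ is a complete orbit representative up to isomorphism, Theorem~\ref{thm:finite-orbit-catalogue-classifiers} then gives closure-law invariance as well. I will state this only as a remark, since the theorem claims factorization, not invariance.

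\textbf{Cost.} The algorithm is the two-stage procedure: on input $U$, run the construction to obtain $\mathcal Q(U)$, then run the model-checking routine for $\varphi$ on it and output its answer. Correctness is immediate from the definition of $P_\varphi$ and the correctness of the model checker. The running time is the sum of the two stages, $T_Q(U)+T_\varphi(\mathcal Q(U))$, with no further overhead beyond passing the constructed structure between the stages.

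\textbf{Main obstacle.} The only delicate point is interface bookkeeping rather than mathematics. The model checker must accept exactly the representation that the construction emits: the state sort, the coordinate sort, the ternary relation $\mathsf{Flip}$, and the correctness-set relation $\mathsf{Corr}$ over action-profile classes. I would fix this by taking, as part of the hypothesis, that the model-checking routine is specified on the output type of the construction. If a re-encoding were ever needed, its cost would have to be folded into $T_Q$.
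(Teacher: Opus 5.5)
Your proposal is correct and follows essentially the same route as the paper. In both, the predicate is the model-checking verdict pulled back along $U\mapsto\mathcal Q(U)$, equal constructed quotients give the evaluator equal inputs and hence equal verdicts, and the cost is the sum of the construction stage and the model-checking stage. The one difference is how isomorphic quotients are handled: you use isomorphism invariance of MSO satisfaction, while the paper inserts a canonization step before the evaluator. Neither is needed for the stated factorization through equal constructed quotients.
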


\begin{proof}[Proof sketch]
The quotient construction sends $U$ to the structure read by $\varphi$. The evaluator then applies the fixed model-checking routine to that structure. If $\mathcal Q(U)=\mathcal Q(V)$ as constructed data, the same quotient structure is passed to the same evaluator, so the verdicts agree. A canonization step can replace isomorphic quotient structures by equal representatives before this evaluator is applied. The cost expression has two stages: construct the quotient, then evaluate $\varphi$ on the constructed quotient.
\end{proof}

Fixed first-order quotient predicates are polynomial-time decidable with polynomial-time quotient construction. Fixed MSO predicates are decidable after construction once the constructed quotient lies in a model-checkable structure class. Bounded quotient-state/action-profile sorts cap the direct set-enumeration factor. For fixed finite relational signature and bounded arity, bounded treewidth of the full Gaifman graph of the constructed quotient gives the standard fixed-MSO model-checking bound, with quotient construction remaining the separate recovery cost~\cite{courcelle1990graph}.

\leanmetapending{\LHrng{FR}{470}{475}}
\begin{theorem}[Bounded-Quotient-Size MSO Evaluation After Construction]\label{thm:bounded-quotient-size-mso-tractability}
Fix an MSO sentence $\varphi$ and a quotient-size bound $q$. Suppose a domain supplies a quotient construction $U\mapsto\mathcal Q(U)$, a quotient-size function with $|\mathcal Q(U)|\le q$, and a model-checking routine for $\varphi$ on quotients of size at most $q$. If quotient construction costs $T_Q(U)$ and the finite-model routine costs at most $g(|\varphi|,q)$, then the induced quotient-MSO classifier has total cost at most
\[
T_Q(U)+g(|\varphi|,q).
\]
For fixed $q$ and $\varphi$, polynomial-time quotient construction leaves only the construction term input-dependent. If closure-equivalent presentations construct the same quotient, the classifier is closure-invariant.
\end{theorem}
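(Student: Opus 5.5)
The plan is to obtain the statement as a specialization of Theorem~\ref{thm:post-descent-quotient-mso-evaluation}, adding one cost bound and one invariance step. First I fix the evaluator. On input $U\in D$, construct $\mathcal Q(U)$ with the supplied construction at cost $T_Q(U)$. Then run the supplied finite-model routine for $\varphi$ on $\mathcal Q(U)$. Because the domain's quotient-size function guarantees $|\mathcal Q(U)|\le q$, the routine's hypothesis applies, and its cost is at most $g(|\varphi|,q)$. Theorem~\ref{thm:post-descent-quotient-mso-evaluation} supplies the decomposition into construction plus model checking, with total $T_Q(U)+T_\varphi(\mathcal Q(U))$. Substituting the bound $T_\varphi(\mathcal Q(U))\le g(|\varphi|,q)$ gives the displayed total $T_Q(U)+g(|\varphi|,q)$. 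Correctness of the verdict is inherited from the model-checking routine, since the induced classifier is by definition $U\mapsto[\mathcal Q(U)\models\varphi]$.

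For the fixed-parameter clause, note that once $q$ and $\varphi$ are fixed, $g(|\varphi|,q)$ is a constant independent of $U$. So if $T_Q$ is polynomial in the encoding size of $U$, the total is polynomial plus a constant, and only the construction term depends on the input. It is worth remarking that a routine with such a bound always exists in principle: a structure of size at most $q$ over the fixed finite signature has finitely many subsets, so MSO quantifiers can be evaluated by exhaustive enumeration over at most $2^{q}$ sets per set variable. I would nonetheless not rely on this, because the statement takes the routine and its bound $g$ as given.

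For closure invariance, assume that $U\sim_{\mathrm{cl}}V$ implies $\mathcal Q(U)=\mathcal Q(V)$ as constructed data. The factoring clause of Theorem~\ref{thm:post-descent-quotient-mso-evaluation} then gives the same verdict on $U$ and $V$. So the classifier is constant on closure orbits, and by Lemma~\ref{lem:closure-law-invariance-iff-orbit-constancy} it is closure-law invariant. The only delicate point, and the one I expect to need care, is the difference between equal and merely isomorphic constructed quotients. The hypothesis is stated with equality, so the argument goes through directly. I would add a remark that isomorphism suffices as well, because MSO semantics are isomorphism-invariant (Definition~\ref{def:orbit-quotient-mso-candidate}); alternatively, a canonization step can be placed before model checking without changing the cost shape.
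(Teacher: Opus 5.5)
Your proposal is correct and follows the paper's proof: construct $\mathcal Q(U)$, use the uniform size bound $|\mathcal Q(U)|\le q$ to bound the model-checking term by $g(|\varphi|,q)$, add the construction cost $T_Q(U)$, and obtain closure invariance because closure-equivalent presentations construct equal quotients. Citing Theorem~\ref{thm:post-descent-quotient-mso-evaluation} for the cost decomposition makes explicit the same two-stage evaluator that the paper argues for directly, and your remarks on isomorphic quotients and on exhaustive enumeration are harmless asides that the argument does not need.
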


\begin{proof}[Proof sketch]
The evaluator constructs $\mathcal Q(U)$ and then model-checks $\varphi$ on the constructed quotient. The size bound applies to every constructed quotient, so the model-checking term is bounded by $g(|\varphi|,q)$. Adding the construction cost gives the displayed total cost. Equal constructed quotients give equal evaluator inputs. Closure-compatible quotient construction turns quotient equality along closure orbits into closure invariance.
\end{proof}

\leanmetapending{\LHrng{FR}{460}{469}}
\begin{theorem}[Bounded-Quotient-Treewidth MSO Evaluation After Construction]\label{thm:bounded-quotient-treewidth-mso-tractability}
Fix an MSO sentence $\varphi$ over a fixed finite relational signature of bounded arity and a treewidth bound $t$. Suppose a domain supplies a quotient construction $U\mapsto \mathcal Q(U)$, the full Gaifman graph for the constructed quotient structure, and a model-checking routine for $\varphi$ on quotients whose Gaifman graph has treewidth at most $t$. If quotient construction costs $T_Q(U)$ and the model-checking routine has Courcelle form
\[
f(|\varphi|,t)\,|\mathcal Q(U)|,
\]
then the induced quotient-MSO classifier has total cost at most
\[
T_Q(U)+f(|\varphi|,t)\,|\mathcal Q(U)|.
\]
If $T_Q(U)$ is polynomial in the input size, the total cost is the polynomial construction term plus the displayed linear model-checking term on the descended quotient. If closure-equivalent presentations construct the same quotient, the classifier is closure-invariant.
\end{theorem}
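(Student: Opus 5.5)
The plan follows the same two-stage pattern as Theorem~\ref{thm:post-descent-quotient-mso-evaluation} and Theorem~\ref{thm:bounded-quotient-size-mso-tractability}. The evaluator is defined explicitly: on input $U\in D$, run the supplied quotient construction to obtain $\mathcal Q(U)$ together with its full Gaifman graph. Then invoke the supplied model-checking routine for $\varphi$, and output its verdict.

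Well-definedness and exactness come directly from Theorem~\ref{thm:post-descent-quotient-mso-evaluation}. The induced predicate factors through the constructed quotient, so the verdict is the MSO truth value of $\varphi$ on $\mathcal Q(U)$, which is what the classifier is supposed to compute.

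For the cost bound, I add the two stages. The construction stage costs $T_Q(U)$ by hypothesis. For the second stage, the hypothesis that the domain's quotients have Gaifman treewidth at most $t$ places $\mathcal Q(U)$ in the class on which the routine applies. The hypothesis that the routine has Courcelle form then bounds this stage by $f(|\varphi|,t)\,|\mathcal Q(U)|$. The fixed finite relational signature of bounded arity is what makes this Courcelle-type bound meaningful~\cite{courcelle1990graph}, but here it is only a supplied hypothesis, so no tree decomposition has to be constructed inside the proof.

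Summing the two stages gives the displayed total. For the polynomial remark, note that $|\mathcal Q(U)|$ is at most the output size of the construction, so it is bounded by $T_Q(U)$ and hence polynomial. The second term is therefore linear in a polynomial quantity, with $f(|\varphi|,t)$ a constant for fixed $\varphi$ and $t$.

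Closure invariance needs one small chaining step. Assume closure-equivalent presentations construct the same quotient. For a primitive closure step $U\to V$, equal constructed quotients give equal evaluator inputs, hence equal verdicts. Lemma~\ref{lem:closure-law-invariance-iff-orbit-constancy} then extends this from primitive steps to constancy on whole closure orbits.

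The only step needing any care is the accounting that $|\mathcal Q(U)|$ is dominated by the construction cost. If one prefers not to rely on that, the bound can instead be stated exactly as displayed, with $|\mathcal Q(U)|$ left explicit.
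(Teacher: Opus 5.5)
Your proposal is correct and follows the paper's proof essentially step for step. You construct $\mathcal Q(U)$, run the supplied bounded-treewidth routine under the Courcelle-form hypothesis, add the two costs, and get closure invariance from the fact that equal constructed quotients give equal evaluator inputs. Two of your extras are unnecessary. The bound $|\mathcal Q(U)|\le T_Q(U)$ depends on a cost model the abstract statement does not fix; the paper only claims the sum has the displayed shape, which is what your fallback gives. The chaining through Lemma~\ref{lem:closure-law-invariance-iff-orbit-constancy} is redundant, since the hypothesis is already stated for all closure-equivalent pairs, not just primitive steps.
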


\begin{proof}[Proof sketch]
The evaluator first constructs $\mathcal Q(U)$ and then runs the fixed MSO routine on the constructed quotient. The bounded-treewidth hypothesis applies to the full Gaifman graph of the constructed quotient structure, so the model-checking term is at most $f(|\varphi|,t)|\mathcal Q(U)|$. Adding the construction cost gives the displayed bound. Equal constructed quotients give equal evaluator inputs and hence equal predicate values. Closure compatibility of the quotient construction turns that quotient factorization into closure invariance.
\end{proof}

For a mixed fragment with $r$ coordinate variables and bounded quotient-state/action-profile sorts, the cost bound has the form
\[
T_Q(U)+O(|I(U)|^r),
\]
where $I(U)$ is the coordinate set of the presentation and the hidden constant depends on the fixed formula and the sort bounds.\leanmeta{\LHrng{FR}{458}{459}} Coordinate quantifiers contribute the fixed power of $|I(U)|$; bounded quotient-state and profile sorts cap the set-enumeration factor.

\begin{example}[Sparse unary-gap quotient construction]\label{ex:sparse-unary-gap-quotient-mso}
Fix $r$ and restrict to two-action normalized unary-gap instances with explicitly listed support $J$, $|J|\le r$:
\[
U(a,x)-U(b,x)=c_0+\sum_{j\in J}c_jx_j .
\]
Coordinates outside $J$ do not affect action comparisons. Enumerating the $2^{|J|}$ assignments on $J$ computes the optimizer classes and the coordinate-flip relation; coordinates outside $J$ contribute only loops. For fixed $r$, quotient construction is polynomial, and fixed quotient-level predicates are evaluated after this table is built.
\end{example}

\leanmetapending{\LHrng{FR}{479}{484}}
\begin{theorem}[Sparse Unary-Gap Quotient Pipeline Certificate]\label{thm:sparse-unary-gap-quotient-pipeline}
Fix an MSO sentence $\varphi$ and a support bound $r$. Suppose a domain supplies a sparse unary-gap quotient certificate: each instance has explicitly listed support $J(U)$ with $|J(U)|\le r$, quotient construction cost at most $2^{|J(U)|}$, at most $2^{|J(U)|}$ quotient states, a quotient Gaifman graph of treewidth at most $2^r-1$, and a fixed bounded-treewidth model-checking routine for $\varphi$ on the constructed quotient. Then, after the listed sparse support and coefficients are read, fixed quotient-state/profile MSO evaluation has quotient-construction plus model-checking cost at most
\[
2^r+f(|\varphi|,2^r-1)\,2^r .
\]
The induced predicate factors through the constructed quotient.
\end{theorem}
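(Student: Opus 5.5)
The plan is to reduce the statement to Theorem~\ref{thm:bounded-quotient-treewidth-mso-tractability}, instantiated with treewidth bound $t=2^r-1$, and to use the certificate's explicit bounds to replace the abstract cost terms. The certificate already supplies the quotient construction, the treewidth bound on the full Gaifman graph, and the bounded-treewidth model-checking routine. So the hypotheses of that theorem hold with $T_Q(U)\le 2^{|J(U)|}$. That theorem gives total cost at most $T_Q(U)+f(|\varphi|,2^r-1)\,|\mathcal Q(U)|$.

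The bounding happens in three steps.
\begin{enumerate}
\item Since $|J(U)|\le r$, we have $T_Q(U)\le 2^{|J(U)|}\le 2^r$.
\item The quotient-state count is at most $2^{|J(U)|}\le 2^r$. This bounds the size factor multiplying $f$.
\item Substituting gives $2^r+f(|\varphi|,2^r-1)\,2^r$.
\end{enumerate}
Example~\ref{ex:sparse-unary-gap-quotient-mso} explains why these bounds are natural: enumerating assignments on $J$ builds the optimizer classes and the flip relation, while coordinates outside $J$ contribute only loops. For the proof itself, however, these are simply hypotheses of the certificate. The phrase ``after the listed sparse support and coefficients are read'' excludes the input-reading cost, which matches the way $T_Q$ is measured.

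The factorization clause comes directly from Theorem~\ref{thm:post-descent-quotient-mso-evaluation}. The evaluator is the composite of quotient construction and a fixed model-checking routine, so equal constructed quotients are equal evaluator inputs and receive equal verdicts.

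The main obstacle is the measure $|\mathcal Q(U)|$ in the Courcelle-form bound. The certificate bounds quotient \emph{states} by $2^r$, but the structure also has a coordinate sort and an action-profile sort, and $|I(U)|$ may be large. I would handle this in the way the statement's phrase ``quotient-state/profile MSO'' suggests: read the model-checked structure as the state/profile part of the quotient. The coordinates outside $J$ only add loops to $\mathsf{Flip}$, so in this sparse regime they are either collapsed or treated as constants of size absorbed into $f$. The linear factor is then the state count, and the profile sort is bounded because there are two actions. If this reading is not acceptable, the fallback is to state the linear factor as $|\mathcal Q(U)|$ and add the $O(|I(U)|)$ term separately, noting that the displayed bound holds for the state/profile restriction the theorem refers to.
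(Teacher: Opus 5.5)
Your proposal is correct and is essentially the paper's proof: instantiate the bounded-quotient-treewidth evaluator (Theorem~\ref{thm:bounded-quotient-treewidth-mso-tractability}) at $t=2^r-1$, bound both the construction cost and the quotient-state count by $2^{|J(U)|}\le 2^r$, and obtain factorization from equality of constructed quotients. The paper resolves the size-measure issue you flag the same way your first reading does: its proof takes the linear factor $|\mathcal Q(U)|$ to be the certified quotient-state count and leaves raw-coordinate quantification to the separate mixed-fragment bound, so no fallback is needed.
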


The sparse unary-gap theorem is a certificate pipeline. The listed support, coefficient table, quotient-size bound, and model-checking routine are supplied data for the pipeline; the theorem bounds evaluation after that certificate has been read.\leanmeta{\LHrng{FR}{479}{484}}

The concrete two-action unary-gap form
\[
U(a,x)-U(b,x)=c_0+\sum_{j\in J}c_jx_j .
\]
is the intended source of this certificate: after the listed support is read, the quotient table has at most $2^r$ rows. Formulas that quantify over the full raw coordinate sort use the mixed-fragment bound above; the sparse certificate controls the quotient-state/profile part.

\begin{proof}[Proof sketch]
The certificate supplies construction cost at most $2^{|J(U)|}$ and at most $2^{|J(U)|}$ quotient states. The support bound gives both quantities at most $2^r$. The certificate also supplies the bounded-treewidth hypothesis with bound $2^r-1$. Applying the bounded-quotient-treewidth evaluator gives the displayed construction-plus-model-checking bound, and quotient equality gives predicate equality.
\end{proof}

\begin{example}[Efficient model checking over the quotient]\label{ex:quotient-mso-relevant-coordinate-bound}
Fix $h,t\in\mathbb N$. In $\mathcal Q(U)$, a coordinate $i$ is relevant exactly when its flip relation connects two distinct quotient states: there exist quotient states $p\neq q$ such that $\mathsf{Flip}(i,p,q)$ holds. The property ``at most $h$ coordinates are relevant'' is definable by a fixed first-order, hence MSO, sentence:
\[
\neg\exists i_0,\ldots,i_h\ \Bigl(\bigwedge_{0\le \ell<s\le h} i_\ell\neq i_s
\land \bigwedge_{\ell\le h}\exists p_\ell q_\ell\,
 (p_\ell\neq q_\ell\land \mathsf{Flip}(i_\ell,p_\ell,q_\ell))\Bigr).
\]
On any restricted domain where $\mathcal Q(U)$ is computable in polynomial time and the full Gaifman graph of the constructed quotient structure has treewidth at most $t$, Theorem~\ref{thm:bounded-quotient-treewidth-mso-tractability} gives linear model-checking time in $|\mathcal Q(U)|$ for fixed $h,t$ after quotient construction. If only the quotient-state graph is bounded while coordinate and action-profile sorts remain unbounded, the mixed-fragment bound $T_Q(U)+O(|I(U)|^r)$ applies instead. In both regimes the predicate is closure-invariant because it is evaluated on quotient data.
\leanmeta{\LHrng{FR}{428}{435}, \LHrng{FR}{458}{469}}
\end{example}

\subsection{Joint Satisfiability and Boundary Questions}

The three positive regimes can be met simultaneously under explicit restrictions. A normalized unary-gap family with explicitly listed support $J$ of fixed size and a uniform strict winner gap gives one joint regime. The action-gap normalization removes the pair-shift orbit gaps. The quotient table is computed by enumerating $2^{|J|}$ assignments. Fixed first-order predicates over the quotient are evaluated with the cost bound of Theorem~\ref{thm:post-descent-quotient-mso-evaluation}; fixed quotient-state/profile MSO predicates fall under Theorem~\ref{thm:sparse-unary-gap-quotient-pipeline}. Formulas with raw coordinate quantification use the mixed-fragment bound. Proposition~\ref{prop:global-approximation-stability} preserves the quotient inside a strict-margin perturbation ball.

Bounded distinct-profile domains with bounded quotient size give a second joint regime. Profile compression removes action-label duplication, the quotient catalogue is finite, and the set-assignment factor in the mixed model-checking certificate is capped by the quotient-state and action-profile bounds.

Theorem~\ref{thm:quotient-flip-graph-realization} realizes every finite graph at the quotient level in finite coordinate-state presentations. Efficient model checking needs a domain reason for bounded quotient size, bounded full Gaifman treewidth of the constructed quotient, bounded profile diversity, or a computable representative catalogue. Three boundary questions remain: which domains have polynomial-time quotient construction but unbounded full quotient Gaifman treewidth; which domains have closure-compatible normalization and bounded profiles but fail strict-margin stability; and which algebraic quotient expansions add enough finite operation data to express CSP-style polymorphism criteria while remaining constructible.

\section{Approximation Stability Near Orbit Gaps}\label{sec:approximation-stability}

Approximate presentations must preserve the correctness quotient. Closeness alone does not: arbitrarily small perturbations can change optimizer sets. A uniform winner gap exceeding $2\delta$ preserves the quotient under $\delta$-perturbations.

\leanmetapending{\LH{FR263}, \LH{FR266}}
\begin{proposition}[Approximate Relevance and Sufficiency Claims Need Explicit Stability Control]\label{prop:approximate-relevance-needs-stability-control}
Let $\mathcal D$ and $\mathcal D'$ be decision problems on the same coordinate space with utilities $U$ and $U'$, and suppose their utilities are uniformly $\delta$-close. For any state $u$ of $\mathcal D$ with unique optimizer $a_u^\star$, write
\[
\gamma_u := \min_{b\neq a_u^\star}\bigl(U(a_u^\star,u)-U(b,u)\bigr)
\]
for the strict winner-versus-competitor gap. If coordinate $i$ has a relevance witness in $\mathcal D$ given by states $s,s'$ whose optimizer sets in $\mathcal D$ are distinct singletons, and if $\gamma_s,\gamma_{s'}>2\delta$, then $i$ is also relevant in $\mathcal D'$. Likewise, if a coordinate set $I$ has a non-sufficiency witness in $\mathcal D$ given by states $s,s'$ that agree on $I$ and whose optimizer sets in $\mathcal D$ are distinct singletons, and if $\gamma_s,\gamma_{s'}>2\delta$, then $I$ is still not sufficient in $\mathcal D'$.
\end{proposition}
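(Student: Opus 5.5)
The plan is a pointwise stability lemma, applied to the two witness states, followed by the definitions of relevance and non-sufficiency. The key claim is this: if $u$ has a unique optimizer $a_u^\star$ in $\mathcal D$ with $\gamma_u>2\delta$, then $\Opt_{\mathcal D'}(u)=\{a_u^\star\}$. To see it, take any competitor $b\neq a_u^\star$. Uniform $\delta$-closeness gives $U'(a_u^\star,u)\ge U(a_u^\star,u)-\delta$ and $U'(b,u)\le U(b,u)+\delta$. Hence
\[
U'(a_u^\star,u)-U'(b,u)\ \ge\ U(a_u^\star,u)-U(b,u)-2\delta\ \ge\ \gamma_u-2\delta\ >\ 0 .
\]
So $a_u^\star$ strictly beats every competitor under $U'$, and is therefore the unique optimizer in $\mathcal D'$. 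This inequality is the only computation; the strict inequality $\gamma_u>2\delta$ is exactly what makes the final comparison strict.

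Next apply the claim to both witness states $s$ and $s'$. Their $\mathcal D$-optimizer sets are distinct singletons $\{a_s^\star\}\neq\{a_{s'}^\star\}$. Their $\mathcal D'$-optimizer sets are the same singletons, so they are still distinct. Thus $s\not\sim_{\Opt'}s'$, and the witnessing pair survives verbatim in $\mathcal D'$. This uses only that the coordinate space is shared, so "agree on $I$" and "differ only at $i$" mean the same thing in both problems.

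It remains to read off the two conclusions.

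\emph{Non-sufficiency.} $s$ and $s'$ agree on $I$ but lie in different $\mathcal D'$-quotient classes. So $I$ does not refine the quotient of $\mathcal D'$, and $I$ is not sufficient there.

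\emph{Relevance.} I would use the characterization in Proposition~\ref{prop:depends-only-on-opt}: relevance of $i$ means failure of sufficiency after erasing coordinate $i$, that is, $\mathrm{univ}\setminus\{i\}$ is not sufficient. A relevance witness for $i$ is then a non-sufficiency witness for $\mathrm{univ}\setminus\{i\}$, so the relevance clause is a special case of the sufficiency clause. If instead the paper's relevance witness is a pair differing only at coordinate $i$, the same argument applies directly, since such a pair agrees on $\mathrm{univ}\setminus\{i\}$.

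The main obstacle is only bookkeeping. One must match the notion of "relevance witness" to this non-sufficiency form. One must also note that the witnesses are required to be singleton optimizer sets, so that ties are excluded and $\gamma$ is well defined.
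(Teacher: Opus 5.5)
Your proposal is correct and matches the paper's proof essentially step for step. The pointwise bound $U'(a^\star,s)-U'(b,s)\ge\gamma_s-2\delta>0$ at each witness state keeps the two distinct singleton optimizer sets, so the same pair remains a relevance (respectively non-sufficiency) witness in $\mathcal D'$; your remark reconciling the two readings of ``relevance witness'' is bookkeeping the paper leaves implicit.
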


\begin{proof}[Proof sketch]
For a witness state $s$ with unique optimizer $a^\star$, let $\gamma_s$ be the strict winner-versus-competitor gap. The hypothesis gives $\gamma_s>2\delta$. Every competing action $b\neq a^\star$ satisfies
\[
U(a^\star,s)-U(b,s)\ge \gamma_s.
\]
Under uniform $\delta$-perturbation,
\[
U'(a^\star,s)-U'(b,s)\ge \gamma_s-2\delta>0,
\]
so the unique optimizer is preserved at that state. Applying this at both witness states keeps the two singleton optimizer sets distinct. In the relevance case this preserves the same relevance witness. In the sufficiency case the same state pair still agrees on $I$ but still has different optimizer sets, so it remains a non-sufficiency witness.
\end{proof}

Non-singleton optimizer sets require a stronger margin condition controlling separation between whole optimal sets and competing actions.

\leanmetapending{\LH{FR264}, \LH{FR267}}
\begin{corollary}[Arbitrarily Small Uniform Perturbations Can Flip Exact Certification]\label{cor:small-uniform-perturbations-can-flip-relevance}\label{cor:small-uniform-perturbations-can-flip-sufficiency}
For every $\varepsilon>0$, there exist two decision problems on the same one-coordinate Boolean state space that are uniformly $\varepsilon$-close, yet the unique coordinate is relevant in one problem and irrelevant in the other, and the empty coordinate set is sufficient in one problem and not sufficient in the other.
\end{corollary}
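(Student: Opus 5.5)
The plan is an explicit one-coordinate, two-action construction in which a perturbation of size below $\varepsilon$ turns a tie into a strict preference at a single state. Fix $\varepsilon>0$ and set $\eta:=\varepsilon/2$. Take the state space $\{0,1\}$ (a single Boolean coordinate $x_0$) and actions $A=\{a,b\}$. Define $\mathcal D$ by $U(a,x)=U(b,x)=0$ for both states, so $\Opt_U(0)=\Opt_U(1)=\{a,b\}$. Define $\mathcal D'$ by $U'(a,x)=\eta x_0$ and $U'(b,x)=0$, so $\Opt_{U'}(0)=\{a,b\}$ and $\Opt_{U'}(1)=\{a\}$. The uniform distance is $\max_{c,x}|U(c,x)-U'(c,x)|=\eta<\varepsilon$, which gives $\varepsilon$-closeness.

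Next, compute the optimizer quotients and read off sufficiency and relevance. In $\mathcal D$ there is one quotient class, so every pair of states has equal optimizer sets. Hence the empty coordinate set is sufficient, and coordinate $0$ is irrelevant, since erasing it leaves agreement on $\emptyset$ and this already forces equal optimizer sets. In $\mathcal D'$ the states $0$ and $1$ agree on $\emptyset$ but have different optimizer sets. So $\emptyset$ is not sufficient, and the same pair, which differs only in coordinate $0$, witnesses that coordinate $0$ is relevant. By Proposition~\ref{prop:depends-only-on-opt} these facts depend only on the quotient relations, so this single comparison settles both clauses of the corollary.

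It is worth checking that this does not contradict Proposition~\ref{prop:approximate-relevance-needs-stability-control}. That proposition needs witness states with singleton optimizer sets and winner gap exceeding $2\delta$. In $\mathcal D'$ the relevance witness involves state $0$, whose optimizer set is not a singleton, and state $1$, whose gap is $\eta<2\eta$. So the example lies exactly in the regime excluded by the margin hypothesis, which is the intended point of the corollary.

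No step is a real obstacle; the only care needed is matching conventions. Closeness must be read as strict $<\varepsilon$ or non-strict $\le\varepsilon$ per the paper's definition, and using $\eta=\varepsilon/2$ satisfies both. Sufficiency must be the paper's notion, namely that agreement on $I$ implies equal optimizer sets, and relevance must be failure of sufficiency after one-coordinate erasure. If the intended reading is that both problems are the same kind of presentation (binary pairwise), the construction qualifies, since $U'$ has only a unary coefficient.
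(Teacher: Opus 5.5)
Your construction is correct and is essentially the paper's argument: one Boolean coordinate, two actions, and a perturbation of size $\eta<\varepsilon$ between the constant-zero utility (all ties, so the coordinate is irrelevant and $\emptyset$ is sufficient) and a utility whose optimizer sets differ at the two states. The only difference is cosmetic: the paper uses $U(a,0)=U(b,1)=\eta$ and $U(b,0)=U(a,1)=0$, so its relevance witness has distinct singleton optimizer sets $\{a\}$ and $\{b\}$ and violates the hypothesis of Proposition~\ref{prop:approximate-relevance-needs-stability-control} only through the margin, whereas your witness also has a tie at state $0$.
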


\begin{proof}[Proof sketch]
Use one Boolean coordinate $x\in\{0,1\}$ and two actions $a,b$. Choose $0<\eta<\varepsilon$. Define
\[
U(a,0)=\eta,\quad U(b,0)=0,\qquad
U(a,1)=0,\quad U(b,1)=\eta .
\]
Then $\Opt_U(0)=\{a\}$ and $\Opt_U(1)=\{b\}$, so the unique coordinate is relevant and the empty coordinate set is not sufficient. Define the perturbed utility by
\[
U'(a,0)=U'(b,0)=U'(a,1)=U'(b,1)=0 .
\]
Then $\Opt_{U'}(0)=\Opt_{U'}(1)=\{a,b\}$, so the coordinate is irrelevant and the empty coordinate set is sufficient. The two utilities are uniformly $\eta$-close, hence uniformly $\varepsilon$-close. Arbitrarily small uniform perturbations can change the exact optimizer quotient and the induced relevance profile.
\end{proof}

The factor $2$ in the strict-gap hypotheses is the endpoint of this uniform perturbation argument: one action value may move down by $\delta$ while a competitor moves up by $\delta$, closing a gap of $2\delta$. Corollary~\ref{cor:small-uniform-perturbations-can-flip-relevance} gives the matching failure mode: without strict margin control at that scale, uniformly close presentations can have different exact relevance profiles.

\leanmetapending{\LHrng{FR}{288}{290}, \LHrng{FR}{300}{301}}
\begin{proposition}[Global Approximation Stability Under Uniform Strict Gaps]\label{prop:global-approximation-stability}
Let $\mathcal D$ and $\mathcal D'$ be decision problems on the same coordinate space with utilities $U$ and $U'$, and suppose their utilities are uniformly $\delta$-close. If every state of $\mathcal D$ has a strict optimal action whose utility gap exceeds $2\delta$, then $\Opt_{\mathcal D}(s)=\Opt_{\mathcal D'}(s)$ for every state $s$. Consequently, $\mathcal D$ and $\mathcal D'$ have the same optimizer quotient, the same sufficient-coordinate family, the same relevant-coordinate set, and the same minimal sufficient sets.
\end{proposition}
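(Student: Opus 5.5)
The plan is a pointwise argument that reuses the winner-gap estimate from the proof of Proposition~\ref{prop:approximate-relevance-needs-stability-control}, now applied at every state. Fix a state $s$. By hypothesis $\mathcal D$ has a strict optimal action $a^\star_s$ with
\[
U(a^\star_s,s)-U(b,s)\ge\gamma_s>2\delta
\]
for every competitor $b\neq a^\star_s$. Uniform $\delta$-closeness gives $|U'(c,s)-U(c,s)|\le\delta$ for every action $c$. Hence, for each competitor $b$,
\[
U'(a^\star_s,s)-U'(b,s)\ge \gamma_s-2\delta>0.
\]
So $a^\star_s$ is the unique maximizer of $U'(\cdot,s)$. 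It follows that $\Opt_{\mathcal D'}(s)=\{a^\star_s\}=\Opt_{\mathcal D}(s)$. The only point to check is that ``strict optimal action'' means the optimizer set in $\mathcal D$ is the singleton $\{a^\star_s\}$. This is how the hypothesis reads, and it matches $\gamma_u$ in the preceding proposition. With a single action, both optimizer sets are that singleton trivially.

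Since the two problems have the same state space, pointwise equality of optimizer sets gives equal optimizer quotients: $s\sim_{\Opt}s'$ in $\mathcal D$ iff $\Opt_{\mathcal D}(s)=\Opt_{\mathcal D}(s')$ iff $\Opt_{\mathcal D'}(s)=\Opt_{\mathcal D'}(s')$. I would then invoke Proposition~\ref{prop:depends-only-on-opt}. Sufficiency, relevance, and minimal sufficient sets depend only on the quotient relation, so they coincide between $\mathcal D$ and $\mathcal D'$. For minimal sufficient sets this needs no product-space uniqueness; it is just equality of the sufficient-coordinate families.

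I expect no real obstacle. The most delicate step is the bookkeeping in the first paragraph. One must insist on the strict inequality $\gamma_s>2\delta$ so that the perturbed gap stays positive. One must also quantify over all competitors $b$, so that $\Opt_{\mathcal D'}(s)$ cannot acquire new ties. Corollary~\ref{cor:small-uniform-perturbations-can-flip-relevance} shows why the strict margin cannot be dropped.
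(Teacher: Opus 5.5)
Your proposal is correct and matches the paper's proof. Both use the pointwise estimate $U'(a^\star,s)-U'(b,s)\ge\gamma_s-2\delta>0$ at every state to keep $a^\star$ uniquely optimal, and then pass from statewise equality of optimizer sets to equality of the quotient, the sufficient and relevant coordinates, and the minimal sufficient sets, via Proposition~\ref{prop:depends-only-on-opt}.
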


\begin{proof}[Proof sketch]
Fix any state $s$ with strict optimizer $a^\star$ and gap $\gamma_s>2\delta$. For every competitor $b\neq a^\star$,
\[
U'(a^\star,s)-U'(b,s)\ge \bigl(U(a^\star,s)-U(b,s)\bigr)-2\delta\ge \gamma_s-2\delta>0,
\]
so $a^\star$ remains uniquely optimal in $\mathcal D'$. Applying this pointwise over all states yields equality of optimizer sets state-by-state. The optimizer quotient, sufficient-coordinate family, relevant-coordinate set, and minimal sufficient sets coincide.
\end{proof}

\leanmetapending{\LHrng{FR}{476}{478}}
\begin{corollary}[Strict-Margin Quotient-MSO Stability]\label{cor:strict-margin-quotient-mso-stability}
Let a bounded-quotient-treewidth quotient-MSO evaluator construct its quotient from optimizer sets, and suppose equality of optimizer sets at every state gives equality of the constructed quotient. If decision problems $\mathcal D$ and $\mathcal D'$ are uniformly $\delta$-close and $\mathcal D$ has a strict optimizer at every state with utility gap exceeding $2\delta$, then every fixed quotient-MSO predicate evaluated by that construction gives the same verdict on $\mathcal D$ and $\mathcal D'$. The induced Boolean classifier is constant on the same perturbation ball.
\end{corollary}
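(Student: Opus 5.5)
The argument chains two earlier results: Proposition~\ref{prop:global-approximation-stability} and the factorization clause of Theorem~\ref{thm:bounded-quotient-treewidth-mso-tractability}, via the stated hypothesis that equal optimizer sets give equal constructed quotients. No new estimate is needed. The only work is to make sure the chain passes through the \emph{constructed} quotient, and not merely through an isomorphism class.

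\emph{Step 1: equal optimizer sets.} By hypothesis, $\mathcal D$ and $\mathcal D'$ live on the same coordinate space and are uniformly $\delta$-close. Every state of $\mathcal D$ has a strict optimizer with gap exceeding $2\delta$. Proposition~\ref{prop:global-approximation-stability} then gives $\Opt_{\mathcal D}(s)=\Opt_{\mathcal D'}(s)$ for every state $s$, and hence the same optimizer quotient.

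\emph{Step 2: equal constructed quotients.} The corollary assumes that the evaluator builds $\mathcal Q$ from optimizer sets, and that statewise equality of optimizer sets yields equality of the constructed quotient. So Step 1 gives $\mathcal Q(\mathcal D)=\mathcal Q(\mathcal D')$ as data. This covers the state sort, the $\mathsf{Flip}$ relation and the correctness-set relation; the coordinate sort is shared because the coordinate space is the same.

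\emph{Step 3: equal verdicts.} Theorem~\ref{thm:bounded-quotient-treewidth-mso-tractability} shows that the quotient-MSO classifier factors through the constructed quotient: it runs a fixed model-checking routine on $\mathcal Q(U)$. Equal inputs therefore produce equal outputs, so every fixed sentence $\varphi$ gets the same verdict on $\mathcal D$ and $\mathcal D'$. Now let $\mathcal D''$ be any problem in the $\delta$-ball around $\mathcal D$. The margin hypothesis is stated on the center $\mathcal D$ only, so Steps 1–3 apply with $\mathcal D''$ in place of $\mathcal D'$. Every member of the ball thus receives the center's verdict, and the induced Boolean classifier is constant on the ball.

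The main obstacle is mild and lies in Step 2. The quotient relation alone does not determine the action-profile sort, since that sort is built from utility profiles, and a perturbation could in principle split or merge profile classes. The corollary's explicit hypothesis, that the evaluator reads optimizer sets and that their equality forces equality of the constructed quotient, is exactly what rules this out. I would invoke that hypothesis directly rather than try to derive it, and note that the bounded-treewidth assumption is used only for the cost bound, not for the invariance.
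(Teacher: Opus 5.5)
Your proof is correct and follows the paper's own three-step argument: Proposition~\ref{prop:global-approximation-stability} gives statewise equality of optimizer sets, the stated hypothesis turns this into equality of the constructed quotient, and factorization of the evaluator through that quotient gives equal verdicts. The paper leaves two points implicit that you make explicit, and both are sound: you apply the argument to an arbitrary member of the $\delta$-ball around $\mathcal D$ to get constancy on the ball, and you note that the hypothesis, not the margin estimate, is what controls the action-profile sort.
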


\begin{proof}[Proof sketch]
Proposition~\ref{prop:global-approximation-stability} gives equality of optimizer sets at every state. The quotient construction therefore gives the same quotient for $\mathcal D$ and $\mathcal D'$. The quotient-MSO evaluator factors through that quotient, so the predicate value and Boolean classifier value agree.
\end{proof}

Under the strict-gap hypothesis, every presentation in the $\delta$-ball has the same optimizer quotient. The ball is therefore a correctness-preserving neighborhood and can be included in the closure relation without changing quotient-level hulls. Hull separation proved on the representative domain then extends across the strict-margin neighborhood. Without that gap, Corollary~\ref{cor:small-uniform-perturbations-can-flip-relevance} gives approximate orbit-gap witnesses: uniformly close presentations can have different exact optimizer quotients.

\section{Related Work}\label{sec:related}

\subsection*{Complexity-Theoretic Rice Analogs}

The complexity-theoretic Rice-analog line began with Borchert and Stephan's circuit-counting theorem: every nontrivial absolute, gap, or relative counting property of circuits is UP-hard under polynomial-time Turing reductions~\cite{borchert1996looking}. The mechanism is semantic invariance. A counting property of circuits depends only on the Boolean function computed by the circuit, so any classifier for such a property factors through that semantic equivalence.

Hemaspaandra and Rothe raised the general lower bound from unambiguous nondeterminism to constant-ambiguity nondeterminism. They showed that the UP-hardness result cannot generally be strengthened to SPP-hardness without unlikely complexity-class containments. P-constructibly bi-infinite counting properties are SPP-hard~\cite{hemaspaandra2000second}. Borchert, Hemaspaandra, and Rothe also used ambiguity-restricted acceptance classes in equivalence problems for OBDDs and related structures~\cite{borchertHR2000restrictive}.

In the B-S-H-R setting, the circuit-to-function map supplies semantic invariance: nontrivial semantic predicates already descend to Boolean functions, and hardness remains. Orbit-gap targets are presentation-level tractability proxies. A proxy fails descent when one closure orbit contains both positive and negative presentations. Orbit gaps give the predicate-level descent obstruction:\leanmeta{\LHrng{FR}{367}{375}} Proposition~\ref{prop:orbit-gap-completeness} gives the exact criterion, and Theorem~\ref{thm:exact-classification-hull-separation} gives the positive hull-separation condition when classification is possible.

B-S-H-R concerns decidability after semantic descent. Orbit gaps concern well-definedness before descent. The shared principle is invariance. The added objects are closure orbits over presentations, hull separation as the exact possibility criterion, and binary-pairwise witnesses placing structural proxies on the negative side.

Lower-bound consequences differ. Representation-map invariance yields concrete hardness results for individual circuit properties. Closure-orbit invariance supplies the descent test for direct structural classifiers: local presentation-level proxies that split closure orbits do not define predicates on exact-certification problems. Complexity-class lower bounds for exact-certification tasks apply after the proxy has passed quotient descent and the representation, coordinate, and output-semantics hypotheses have been fixed.

Finite-model-theoretic syntax specifies the local predicates being tested. Polynomial-time presentation transports determine whether those predicates can route exact tractability before downstream NP-hardness, UP-hardness, FPT, kernel, or lower-bound analysis.

\subsection*{Meta-Impossibility Traditions}

Rice-style impossibility theorems provide the methodological lineage. Rice's theorem~\cite{rice1953classes} and refinements such as the Myhill--Shepherdson and Rice--Shapiro theorems use invariance plus semantic disagreement to force impossibility~\cite{myhill1955effective,shapiro1956degrees}. Closure-orbit agreement and orbit gaps instantiate that pattern at the level of exact-certification presentations.

\subsection*{Descriptive Complexity and Finite Model Theory}

The bounded-pattern definability clause of Definition~\ref{def:finite-structural-predicate} is the Gaifman-style basic-local first-order fragment over the extracted syntax. Definition~\ref{def:finite-structural-predicate} gives the finite bounds, occurrence relation, and verdict formula; the supplement expands the rooted occurrence relation into first-order formulas and counted thresholds. Standard FO/MSO background appears in Immerman and Libkin~\cite{immerman1999descriptive,libkin2004elements}. The orbit-saturation operator is a closure operator on predicates, and the closure-orbit congruence fits finite-model-theoretic canonisation vocabulary~\cite{grohe2017descriptive}.

\subsection*{Static Preservation Analogs}

At the static level, rough-set reduct theory~\cite{pawlak1982rough,pawlak1994rough} provides the closest classical comparison: it studies which attributes can be deleted while preserving decision distinctions. Rough-set reducts supply the static preservation analogue; exact certification adds representation-level tractability, closure under presentation moves, and exact classification of structural regimes.
In product coordinate spaces this analogy has a formal core: minimal sufficient coordinate sets are exact-certification reducts, relevant coordinates are exact-certification core attributes, every reduct contains exactly the core coordinates, and the sufficient coordinate sets form the principal filter generated by any reduct.\leanmeta{\LHrng{FR}{401}{403}}
The zero-distortion discussion connects the same preservation problem to exact distinguishability and lossless coding boundaries~\cite{shannon1948mathematical1,shannon1948mathematical2,shannon1956zero,shannon1959coding,cover2006elements}; the support-and-quotient-size paragraph uses only the elementary bound from relevant-coordinate support to quotient size.

\subsection*{Backdoors, Tractable Islands, and Dichotomy Programs}

Backdoor tractability for constraint satisfaction gives a close complexity-theoretic analogue~\cite{williams2003backdoors,bessiere2013detecting,ganian2017discovering}: a small structural set exposes membership in a tractable class. Exact relevance certification similarly identifies the coordinates that matter. Bounded-actions, bounded-treewidth, bounded-support, and related positive cases are tractable parameter regimes. In these families, the descent condition is attached to the chosen parameter rather than to one universal structural measure.

Parameterized complexity, ETH-based lower bounds, and kernelization make the same distinction operational~\cite{downey1999parameterized,cygan2015parameterized,lokshtanov2011eth,fomin2019kernelization}. An FPT algorithm, matching ETH lower bound, or polynomial kernel is a statement about a well-defined parameterized problem. A parameter $\kappa(U)$ used as an exact routing certificate must descend to closure orbits on the relevant domain: if $U$ and $V$ encode the same exact-certification problem by a closure transport, then the statement ``$\kappa\le k$'' must give the same routing verdict on $U$ and $V$. A same-orbit disagreement makes $\kappa$ a presentation statistic.

Kernel lower bounds, including Bodlaender et al.'s no-polynomial-kernel framework and the cross-composition framework~\cite{bodlaender2009problems,bodlaender2014kernelization}, act after this descent check. They rule out small equivalent instances for a well-defined parameterized problem under standard assumptions. Orbit gaps identify the earlier failure mode: the proposed parameterized target has not become a predicate on the exact-certification problem. Once the target descends, fixed-parameter algorithms, kernelization, and kernel lower-bound reductions apply to the descended quotient-recovery task.

Known tractability criteria split accordingly. Fixed bounded-action profile tests and local coefficient thresholds are examples of Definition~\ref{def:finite-structural-predicate} when they are closure-sound; the obstruction theorems rule out using the named raw statistics as exact criteria on the full binary pairwise domain. Bounded treewidth of a fixed CSP primal graph is a sound algorithmic criterion for that representation class. As an exact relevance proxy for raw weighted instances, it must first be made closure-compatible, since positive-affine state terms can alter support graphs without changing optimizer behavior. CSP polymorphism criteria fall outside Definition~\ref{def:finite-structural-predicate}: they are algebraic invariants of relational templates and belong to the quotient-invariant side.

\subsection*{Formulations, Reformulations, and Elimination}

Linear-programming formulations and extension complexity give another presentation-sensitive tractability language. Yannakakis showed how formulation size and symmetry constraints expose complexity in combinatorial optimization~\cite{yannakakis1991linear}. An LP formulation or relaxation is a sound exact proxy only after the formulation data are attached to the descended optimization problem rather than to an incidental encoding. If two presentations lie in one closure orbit but induce different formulation statistics, that statistic is a property of the formulation choice, not of exact certification.

Weighted CSP, pseudo-Boolean optimization, and energy minimization distinguish objective value from incidental presentation. Pseudo-Boolean optimization studies transformations and structural criteria for Boolean energy functions~\cite{boros2002pseudo}; graph-cut characterizations identify energy terms minimizable by cut constructions~\cite{kolmogorov2004energy}. These normal forms fix optimization semantics. The descent test asks the prior predicate question: a raw statistic must be invariant under correctness-preserving changes before graph-cut, treewidth, or kernel arguments apply.

Constraint-programming reformulation systems operationalize the separation. Savile Row and Conjure rewrite high-level or intermediate models into solver-facing forms while preserving solution sets or objective semantics~\cite{nightingale2017savile,akgun2022conjure}. A pre-reformulation routing test must be stable under those transformations or replaced by a statistic of the reformulated quotient-compatible object.

Graphical-model variable elimination is the normalization analogue. Bucket elimination projects variables out while preserving the induced reasoning task, with cost governed by the resulting interaction structure~\cite{dechter1999bucket}. Profile compression and action-gap graph extraction follow the same order: replace raw presentation data by a quotient-compatible object, then analyze that object's structural parameter.

Figure~\ref{fig:registry-descent} is the software analogue of this order. A hand-maintained registry is a presentation-level cache; automatic registration makes routing data a derived index of the semantic source before dispatch. Action-gap extraction and profile compression play the same role for tractability proxies.

\subsection*{Stability and Robust Instances}

Perturbation-stable algorithms study domains where small changes preserve enough structure to make hard problems easier, as in Bilu--Linial stability for Max-Cut and related optimization problems~\cite{bilu2012stable}. Exact-certification stability uses a narrower condition. A uniform strict winner gap exceeding $2\delta$ preserves optimizer sets under $\delta$-perturbations, so the correctness quotient is unchanged. Without that quotient preservation, metric closeness alone does not give a descended exact predicate.

Schaefer's Boolean dichotomy theorem and the finite-domain CSP dichotomy theorems of Bulatov and Zhuk provide the main methodological comparison~\cite{schaefer1978complexity,bulatov2017dichotomy,zhuk2017proof}; the Feder--Vardi framework and later algebraic work explain why structural tractability boundaries are plausible in CSPs~\cite{feder1998computational,barto2014constraint}. Local structural decompositions and tractable islands gave partial criteria; the full finite-domain dichotomy was proved through the algebraic theory of polymorphisms, an invariant of the relational specification independent of finite surface-pattern lists.

The correctness-quotient obstruction sits one layer above those polymorphism theorems. Direct finite-structural predicates can split closure orbits; correctness forces closure-orbit agreement. CSP polymorphisms are defined on relational templates and their operations, so relabeling variables, reordering or duplicating constraints, and changing incidental presentation data do not change the polymorphism clone. Such invariants are functions of quotient-level relational data. Definition~\ref{def:orbit-quotient-mso-candidate} gives the corresponding predicate language for correctness quotients: quotient classes, coordinate-flip relations, and correctness-set relations. MSO over $\mathcal Q(U)$ captures algebraic criteria when the needed finite operation tables or bounded-arity algebraic signatures are present in, or definable from, the quotient structure. Full polymorphism clones generally require an algebraic quotient expansion.

\paragraph{Example: Boolean CSP polymorphisms}
Schaefer's Boolean dichotomy gives the concrete template. Horn relations are preserved by conjunction, dual-Horn relations by disjunction, bijunctive relations by majority, and affine relations by the parity/minority operation. For example, a Boolean 2-CNF template is bijunctive exactly because each basic relation is closed under the coordinatewise majority operation $m(x,y,z)=\mathrm{maj}(x,y,z)$. If three satisfying tuples of a clause relation are given, their coordinatewise majority is again satisfying. Renaming variables, reordering clauses, duplicating constraints, or changing an incidental encoding of the same relational template does not change the majority operation or the polymorphism clone. The Schaefer tractable cases pass the descent test because their algebraic witnesses live on quotient-level relational data. A raw support statistic for a weighted instance needs a closure-compatible graph construction before it can serve as an exact proxy.

The orbit-saturated hull classifier of Theorem~\ref{thm:exact-classification-hull-separation} is the abstract counterpart: separated hulls permit closure-invariant classification, and the closure hull is the canonical exact classifier. Direct representation-level proxy classifiers cannot split closure orbits. Concrete hardness questions begin after the descended quotient-recovery task and representation class have been fixed. Quotient-level invariants on optimizer-quotient maps or on exact correctness relations give concrete classifiers beyond the abstract hull construction.

\section{Conclusion}

Exact correctness semantics reduce coordinate relevance to quotient recovery. Once a correctness relation is fixed, exact output, search, approximation, and randomized guarantee forms induce the same kind of correctness quotient. A tractability proxy must be constant on the closure orbits that preserve that quotient. If a proxy splits one orbit, it is a statistic of the presentation rather than a predicate of the exact-certification problem.

The operational rule is to compute the descended object first, then apply the structural algorithm. In the running examples, the descended object is the action-gap graph or profile-compressed instance before any bounded-treewidth, bounded-action, or bounded-support routine is invoked.

Hull separation is the positive criterion. On a closure-closed domain, exact classification by closure-invariant predicates is possible exactly when the positive and negative orbit hulls are disjoint. When the hulls are disjoint, the closure hull gives the least exact classifier. Computable orbit catalogues make that classifier algorithmic and preserve Boolean composition.

The binary-pairwise witnesses expose raw-syntax failures. Fixed local targets have orbit gaps; raw action and coordinate counts fail under duplication and irrelevant-coordinate extension; raw support-graph predicates fail whenever they separate two same-size graphs.\leanmeta{\LHrng{FR}{444}{454}, \LHrng{FR}{485}{490}} The action-gap graph is the descended replacement for raw support. It supports the downstream complexity result: \textsc{Action-Gap-Treewidth} is NP-complete when the width bound is part of the input. Graph-predicate lower bounds and fixed-width graph solvers transfer through the same extraction.\leanmeta{\LHrng{FR}{455}{457}, \LHrng{FR}{503}{508}}

\paragraph{After Descent}
NP-hardness, SPP-hardness, kernel lower bounds, and FPT algorithms apply to descended quotient-recovery tasks. Descent certification supplies a separate hardness boundary: SAT reduces to finding a split-proxy orbit gap, and UNSAT reduces to certifying split-proxy descent.\leanmeta{\LHrng{FR}{491}{502}}

Correctness-quotient MSO applies model checking to a descended structure. The evaluation theorem separates quotient construction from formula evaluation: fixed first-order formulas, bounded quotient size, bounded full Gaifman treewidth, sparse unary-gap certificates, and strict-margin perturbation balls give explicit cost bounds after quotient construction.\leanmeta{\LHrng{FR}{458}{484}} The quotient flip graph realization theorem shows that quotient structure is not bounded automatically. Bounded model checking comes from restricted domains, computable catalogues, or supplied quotient certificates. Algebraic quotient invariants and canonisations of closure orbits obey the same descent test; full clone-based criteria may require finite operation data added to the quotient structure.

\paragraph{Open problems}
\begin{enumerate}
\item Determine the exact complexity of constructing $\mathcal Q(U)$ and its coordinate-flip graph on restricted binary pairwise domains: polynomial time on bounded quotient size, FPT in parameters such as $|\mathcal Q(U)|+d(U)$, or matching lower bounds when those parameters are unbounded. The first test cases are bounded quotient size, bounded relevant-coordinate count, bounded distinct action profiles, and strict-margin neighborhoods.
\item Classify domains where quotient construction is efficient and the constructed quotient has bounded full Gaifman treewidth, bounded profile diversity, or another model-checkable structure.
\item Prove algebraic-to-quotient translation theorems. Which polymorphism clones of optimizer or correctness relations are definable from $\mathcal Q(U)$, possibly after adding bounded-arity operation tables, and which depend on raw presentation data erased by the quotient?
\end{enumerate}

\section*{Data and Code Availability}
% Auto-generated by scripts/build_papers.py. Do not edit manually.
% Generated: 2026-07-01T12:55:35.392982
\providecommand{\LeanLocalLines}{19392}
\providecommand{\LeanLocalTheorems}{926}
\providecommand{\LeanLocalSorry}{0}
\providecommand{\LeanLocalFiles}{70}
\providecommand{\LeanTotalLines}{87195}
\providecommand{\LeanTotalTheorems}{3775}
\providecommand{\LeanTotalSorry}{0}
\providecommand{\LeanTotalFiles}{307}
\providecommand{\LeanReleaseLines}{25718}
\providecommand{\LeanReleaseTheorems}{1238}
\providecommand{\LeanReleaseSorry}{0}
\providecommand{\LeanReleaseFiles}{82}
\providecommand{\LeanLinesPaperFourdFrontierActionGapTreewidth}{214}
\providecommand{\LeanTheoremsPaperFourdFrontierActionGapTreewidth}{11}
\providecommand{\LeanSorryPaperFourdFrontierActionGapTreewidth}{0}
\providecommand{\LeanLinesPaperFourdFrontierActionGapTreewidthTractability}{52}
\providecommand{\LeanTheoremsPaperFourdFrontierActionGapTreewidthTractability}{2}
\providecommand{\LeanSorryPaperFourdFrontierActionGapTreewidthTractability}{0}
\providecommand{\LeanLinesPaperFourdFrontierAdmissibleCharacterization}{1412}
\providecommand{\LeanTheoremsPaperFourdFrontierAdmissibleCharacterization}{69}
\providecommand{\LeanSorryPaperFourdFrontierAdmissibleCharacterization}{0}
\providecommand{\LeanLinesPaperFourdFrontierApproximateAdmissibility}{341}
\providecommand{\LeanTheoremsPaperFourdFrontierApproximateAdmissibility}{26}
\providecommand{\LeanSorryPaperFourdFrontierApproximateAdmissibility}{0}
\providecommand{\LeanLinesPaperFourdFrontierBinaryPairwiseDichotomy}{654}
\providecommand{\LeanTheoremsPaperFourdFrontierBinaryPairwiseDichotomy}{21}
\providecommand{\LeanSorryPaperFourdFrontierBinaryPairwiseDichotomy}{0}
\providecommand{\LeanLinesPaperFourdFrontierBlockSixObstruction}{1531}
\providecommand{\LeanTheoremsPaperFourdFrontierBlockSixObstruction}{66}
\providecommand{\LeanSorryPaperFourdFrontierBlockSixObstruction}{0}
\providecommand{\LeanLinesPaperFourdFrontierBoundedQuotientTreewidthMSO}{180}
\providecommand{\LeanTheoremsPaperFourdFrontierBoundedQuotientTreewidthMSO}{10}
\providecommand{\LeanSorryPaperFourdFrontierBoundedQuotientTreewidthMSO}{0}
\providecommand{\LeanLinesPaperFourdFrontierClosureLaws}{367}
\providecommand{\LeanTheoremsPaperFourdFrontierClosureLaws}{28}
\providecommand{\LeanSorryPaperFourdFrontierClosureLaws}{0}
\providecommand{\LeanLinesPaperFourdFrontierComputeCostApplications}{142}
\providecommand{\LeanTheoremsPaperFourdFrontierComputeCostApplications}{14}
\providecommand{\LeanSorryPaperFourdFrontierComputeCostApplications}{0}
\providecommand{\LeanLinesPaperFourdFrontierComputeCostExternalOutputs}{1275}
\providecommand{\LeanTheoremsPaperFourdFrontierComputeCostExternalOutputs}{28}
\providecommand{\LeanSorryPaperFourdFrontierComputeCostExternalOutputs}{0}
\providecommand{\LeanLinesPaperFourdFrontierComputeCostInvariance}{806}
\providecommand{\LeanTheoremsPaperFourdFrontierComputeCostInvariance}{19}
\providecommand{\LeanSorryPaperFourdFrontierComputeCostInvariance}{0}
\providecommand{\LeanLinesPaperFourdFrontierComputeCostPayloads}{493}
\providecommand{\LeanTheoremsPaperFourdFrontierComputeCostPayloads}{15}
\providecommand{\LeanSorryPaperFourdFrontierComputeCostPayloads}{0}
\providecommand{\LeanLinesPaperFourdFrontierDecisionRelevantPairwiseDichotomy}{440}
\providecommand{\LeanTheoremsPaperFourdFrontierDecisionRelevantPairwiseDichotomy}{20}
\providecommand{\LeanSorryPaperFourdFrontierDecisionRelevantPairwiseDichotomy}{0}
\providecommand{\LeanLinesPaperFourdFrontierDescentRelevanceSeparation}{68}
\providecommand{\LeanTheoremsPaperFourdFrontierDescentRelevanceSeparation}{5}
\providecommand{\LeanSorryPaperFourdFrontierDescentRelevanceSeparation}{0}
\providecommand{\LeanLinesPaperFourdFrontierDimensionalNoiseExtension}{114}
\providecommand{\LeanTheoremsPaperFourdFrontierDimensionalNoiseExtension}{6}
\providecommand{\LeanSorryPaperFourdFrontierDimensionalNoiseExtension}{0}
\providecommand{\LeanLinesPaperFourdFrontierDistinctActionProfiles}{200}
\providecommand{\LeanTheoremsPaperFourdFrontierDistinctActionProfiles}{11}
\providecommand{\LeanSorryPaperFourdFrontierDistinctActionProfiles}{0}
\providecommand{\LeanLinesPaperFourdFrontierExactSpecificationNoGo}{132}
\providecommand{\LeanTheoremsPaperFourdFrontierExactSpecificationNoGo}{9}
\providecommand{\LeanSorryPaperFourdFrontierExactSpecificationNoGo}{0}
\providecommand{\LeanLinesPaperFourdFrontierFamilyAxioms}{488}
\providecommand{\LeanTheoremsPaperFourdFrontierFamilyAxioms}{42}
\providecommand{\LeanSorryPaperFourdFrontierFamilyAxioms}{0}
\providecommand{\LeanLinesPaperFourdFrontierFiniteOrbitCatalogue}{232}
\providecommand{\LeanTheoremsPaperFourdFrontierFiniteOrbitCatalogue}{21}
\providecommand{\LeanSorryPaperFourdFrontierFiniteOrbitCatalogue}{0}
\providecommand{\LeanLinesPaperFourdFrontierFullBinaryPairwiseDomain}{81}
\providecommand{\LeanTheoremsPaperFourdFrontierFullBinaryPairwiseDomain}{6}
\providecommand{\LeanSorryPaperFourdFrontierFullBinaryPairwiseDomain}{0}
\providecommand{\LeanLinesPaperFourdFrontierHardnessTransfer}{211}
\providecommand{\LeanTheoremsPaperFourdFrontierHardnessTransfer}{5}
\providecommand{\LeanSorryPaperFourdFrontierHardnessTransfer}{0}
\providecommand{\LeanLinesPaperFourdFrontierMetaCharacterization}{134}
\providecommand{\LeanTheoremsPaperFourdFrontierMetaCharacterization}{3}
\providecommand{\LeanSorryPaperFourdFrontierMetaCharacterization}{0}
\providecommand{\LeanLinesPaperFourdFrontierObstructionPredicateCandidates}{1346}
\providecommand{\LeanTheoremsPaperFourdFrontierObstructionPredicateCandidates}{86}
\providecommand{\LeanSorryPaperFourdFrontierObstructionPredicateCandidates}{0}
\providecommand{\LeanLinesPaperFourdFrontierProxyDescentCertification}{76}
\providecommand{\LeanTheoremsPaperFourdFrontierProxyDescentCertification}{5}
\providecommand{\LeanSorryPaperFourdFrontierProxyDescentCertification}{0}
\providecommand{\LeanLinesPaperFourdFrontierQuotientDescent}{123}
\providecommand{\LeanTheoremsPaperFourdFrontierQuotientDescent}{8}
\providecommand{\LeanSorryPaperFourdFrontierQuotientDescent}{0}
\providecommand{\LeanLinesPaperFourdFrontierQuotientMSORealization}{517}
\providecommand{\LeanTheoremsPaperFourdFrontierQuotientMSORealization}{28}
\providecommand{\LeanSorryPaperFourdFrontierQuotientMSORealization}{0}
\providecommand{\LeanLinesPaperFourdFrontierRawSupportGraphNoGo}{145}
\providecommand{\LeanTheoremsPaperFourdFrontierRawSupportGraphNoGo}{5}
\providecommand{\LeanSorryPaperFourdFrontierRawSupportGraphNoGo}{0}
\providecommand{\LeanLinesPaperFourdFrontierRealTreewidth}{60}
\providecommand{\LeanTheoremsPaperFourdFrontierRealTreewidth}{2}
\providecommand{\LeanSorryPaperFourdFrontierRealTreewidth}{0}
\providecommand{\LeanLinesPaperFourdFrontierRealTreewidthWitnesses}{106}
\providecommand{\LeanTheoremsPaperFourdFrontierRealTreewidthWitnesses}{9}
\providecommand{\LeanSorryPaperFourdFrontierRealTreewidthWitnesses}{0}
\providecommand{\LeanLinesPaperFourdFrontierRealizability}{1442}
\providecommand{\LeanTheoremsPaperFourdFrontierRealizability}{93}
\providecommand{\LeanSorryPaperFourdFrontierRealizability}{0}
\providecommand{\LeanLinesPaperFourdFrontierRoughSetBridge}{40}
\providecommand{\LeanTheoremsPaperFourdFrontierRoughSetBridge}{2}
\providecommand{\LeanSorryPaperFourdFrontierRoughSetBridge}{0}
\providecommand{\LeanLinesPaperFourdFrontierSparseUnaryGapQuotientPipeline}{433}
\providecommand{\LeanTheoremsPaperFourdFrontierSparseUnaryGapQuotientPipeline}{26}
\providecommand{\LeanSorryPaperFourdFrontierSparseUnaryGapQuotientPipeline}{0}
\providecommand{\LeanLinesPaperFourdFrontierStatisticalSemanticsApplications}{131}
\providecommand{\LeanTheoremsPaperFourdFrontierStatisticalSemanticsApplications}{11}
\providecommand{\LeanSorryPaperFourdFrontierStatisticalSemanticsApplications}{0}
\providecommand{\LeanLinesPaperFourdFrontierStructuralWitnesses}{131}
\providecommand{\LeanTheoremsPaperFourdFrontierStructuralWitnesses}{11}
\providecommand{\LeanSorryPaperFourdFrontierStructuralWitnesses}{0}
\providecommand{\LeanLinesPaperFourdFrontierTransferredFullDomainNoGo}{74}
\providecommand{\LeanTheoremsPaperFourdFrontierTransferredFullDomainNoGo}{4}
\providecommand{\LeanSorryPaperFourdFrontierTransferredFullDomainNoGo}{0}
\providecommand{\LeanLinesPaperFourdFrontierTreeDeletion}{88}
\providecommand{\LeanTheoremsPaperFourdFrontierTreeDeletion}{3}
\providecommand{\LeanSorryPaperFourdFrontierTreeDeletion}{0}
\providecommand{\LeanLinesPaperFourdFrontierTreeHellyEquivFinset}{38}
\providecommand{\LeanTheoremsPaperFourdFrontierTreeHellyEquivFinset}{1}
\providecommand{\LeanSorryPaperFourdFrontierTreeHellyEquivFinset}{0}
\providecommand{\LeanLinesPaperFourdFrontierTreeHellyFinset}{101}
\providecommand{\LeanTheoremsPaperFourdFrontierTreeHellyFinset}{2}
\providecommand{\LeanSorryPaperFourdFrontierTreeHellyFinset}{0}
\providecommand{\LeanLinesPaperFourdFrontierTreeHellyHelpers}{48}
\providecommand{\LeanTheoremsPaperFourdFrontierTreeHellyHelpers}{3}
\providecommand{\LeanSorryPaperFourdFrontierTreeHellyHelpers}{0}
\providecommand{\LeanLinesPaperFourdFrontierTreeHellyTheorem}{24}
\providecommand{\LeanTheoremsPaperFourdFrontierTreeHellyTheorem}{2}
\providecommand{\LeanSorryPaperFourdFrontierTreeHellyTheorem}{0}
\providecommand{\LeanLinesPaperFourdFrontierTreeLeafHelpers}{116}
\providecommand{\LeanTheoremsPaperFourdFrontierTreeLeafHelpers}{5}
\providecommand{\LeanSorryPaperFourdFrontierTreeLeafHelpers}{0}
\providecommand{\LeanLinesPaperFourdFrontierTreeLeafReduction}{71}
\providecommand{\LeanTheoremsPaperFourdFrontierTreeLeafReduction}{2}
\providecommand{\LeanSorryPaperFourdFrontierTreeLeafReduction}{0}
\providecommand{\LeanLinesPaperFourdFrontierTreewidthClique}{69}
\providecommand{\LeanTheoremsPaperFourdFrontierTreewidthClique}{3}
\providecommand{\LeanSorryPaperFourdFrontierTreewidthClique}{0}
% Lean dependency: DecisionQuotient (Decision Quotient: Foundations and Complexity of Decision-Relevant Information)
\providecommand{\LeanDepLinesDecisionQuotient}{53940}
\providecommand{\LeanDepTheoremsDecisionQuotient}{2172}
\providecommand{\LeanDepSorryDecisionQuotient}{0}
\providecommand{\LeanDepFilesDecisionQuotient}{184}
% Lean dependency alias: ArrayDSLExport
\providecommand{\LeanDepLinesArrayDSLExport}{53940}
\providecommand{\LeanDepTheoremsArrayDSLExport}{2172}
\providecommand{\LeanDepSorryArrayDSLExport}{0}
\providecommand{\LeanDepFilesArrayDSLExport}{184}
% Lean dependency alias: ArrayDSLExportMain
\providecommand{\LeanDepLinesArrayDSLExportMain}{53940}
\providecommand{\LeanDepTheoremsArrayDSLExportMain}{2172}
\providecommand{\LeanDepSorryArrayDSLExportMain}{0}
\providecommand{\LeanDepFilesArrayDSLExportMain}{184}
% Lean dependency alias: InflationEntropyDryRun
\providecommand{\LeanDepLinesInflationEntropyDryRun}{53940}
\providecommand{\LeanDepTheoremsInflationEntropyDryRun}{2172}
\providecommand{\LeanDepSorryInflationEntropyDryRun}{0}
\providecommand{\LeanDepFilesInflationEntropyDryRun}{184}

No empirical datasets were generated or analyzed for this manuscript. The Lean 4 formalization and supplementary artifact supporting the theorem-level claims are archived at \url{https://doi.org/10.5281/zenodo.19457896}; the supplementary ledger maps manuscript claims to their formal identifiers. Its claim-to-handle table gives the manuscript theorem or corollary, FR handle, Lean declaration name, and source module for each cited formal claim. Independent verification consists of type-checking the archived Lean source. The release-count macros count the Paper 4d replay slice: the Lean files copied into the release package for the cited handles, including packaged dependencies, and excluding unrelated repository files. That slice contains \LeanReleaseLines{} Lean lines, \LeanReleaseTheorems{} theorem declarations, and \LeanReleaseSorry{} \texttt{sorry} placeholders across \LeanReleaseFiles{} files.

\section*{Funding}

This work was self-funded by the author. No external funding was received.

\section*{Competing Interests}

The author declares no competing interests.

\section*{Declaration of Generative AI and AI-Assisted Technologies in the Manuscript Preparation Process}

Generative AI tools (including Codex, Claude Code, Augment, Kilo, and OpenCode) assisted with drafting, prose and notation refinement, \LaTeX{} cleanup, Lean/\LaTeX{} translation of informal proof ideas, and critique passes.

No technical claim was accepted solely from AI output. Formal claims reported as machine-verified were included only after Lean verification with no \texttt{sorry} in the cited modules and direct author review.

{\scriptsize
\makeatletter
\let\paperfour@thebibliography\thebibliography
\let\endpaperfour@thebibliography\endthebibliography
\renewenvironment{thebibliography}[1]{%
  \paperfour@thebibliography{#1}%
  \setlength{\itemsep}{0pt}%
  \setlength{\parsep}{0pt}%
  \setlength{\parskip}{0pt}%
}{\endpaperfour@thebibliography}
\makeatother
\ifdefined\CCSUBMISSION
\bibliographystyle{elsarticle-num}
\else
\bibliographystyle{abbrv}
\fi
\bibliography{references}

\begin{thebibliography}{10}

\bibitem{akgun2022conjure}
{\"O}.~Akg{\"u}n, A.~M. Frisch, I.~P. Gent, C.~Jefferson, I.~Miguel, and
  P.~Nightingale.
\newblock {Conjure}: Automatic generation of constraint models from problem
  specifications.
\newblock {\em Artificial Intelligence}, 310:103751, 2022.

\bibitem{arnborg1987complexity}
S.~Arnborg, D.~G. Corneil, and A.~Proskurowski.
\newblock Complexity of finding embeddings in a k-tree.
\newblock {\em SIAM Journal on Algebraic Discrete Methods}, 8(2):277--284,
  1987.

\bibitem{barto2014constraint}
L.~Barto and M.~Kozik.
\newblock Constraint satisfaction problems solvable by local consistency
  methods.
\newblock {\em J. ACM}, 61(1):3:1--3:19, 2014.

\bibitem{bessiere2013detecting}
C.~Bessi{\`e}re, C.~Carbonnel, E.~Hebrard, G.~Katsirelos, and T.~Walsh.
\newblock Detecting and exploiting subproblem tractability.
\newblock In {\em IJCAI}, pages 468--474, 2013.

\bibitem{bilu2012stable}
Y.~Bilu and N.~Linial.
\newblock Are stable instances easy?
\newblock {\em Combinatorics, Probability and Computing}, 21(5):643--660, 2012.

\bibitem{bodlaender2009problems}
H.~L. Bodlaender, R.~G. Downey, M.~R. Fellows, and D.~Hermelin.
\newblock On problems without polynomial kernels.
\newblock {\em Journal of Computer and System Sciences}, 75(8):423--434, 2009.

\bibitem{bodlaender2014kernelization}
H.~L. Bodlaender, B.~M.~P. Jansen, and S.~Kratsch.
\newblock Kernelization lower bounds by cross-composition.
\newblock {\em SIAM Journal on Discrete Mathematics}, 28(1):277--305, 2014.

\bibitem{borchertHR2000restrictive}
B.~Borchert, L.~A. Hemaspaandra, and J.~Rothe.
\newblock Restrictive acceptance suffices for equivalence problems.
\newblock {\em LMS J. Comput. Math.}, 3:86--95, 2000.

\bibitem{borchert1996looking}
B.~Borchert and F.~Stephan.
\newblock Looking for an analogue of {R}ice's theorem in circuit complexity
  theory.
\newblock {\em Math. Log. Q.}, 46(4):489--504, 2000.
\newblock ECCC TR96-060, 1996.

\bibitem{boros2002pseudo}
E.~Boros and P.~L. Hammer.
\newblock Pseudo-boolean optimization.
\newblock {\em Discrete Applied Mathematics}, 123(1--3):155--225, 2002.

\bibitem{bulatov2017dichotomy}
A.~A. Bulatov.
\newblock A dichotomy theorem for nonuniform {CSP}s.
\newblock In {\em 2017 IEEE 58th Annual Symposium on Foundations of Computer
  Science ({FOCS})}, pages 319--330. IEEE, 2017.

\bibitem{courcelle1990graph}
B.~Courcelle.
\newblock The monadic second-order logic of graphs. i. recognizable sets of
  finite graphs.
\newblock {\em Information and Computation}, 85(1):12--75, 1990.

\bibitem{cover2006elements}
T.~M. Cover and J.~A. Thomas.
\newblock {\em Elements of Information Theory}.
\newblock Wiley-Interscience, 2nd edition, 2006.

\bibitem{cygan2015parameterized}
M.~Cygan, F.~V. Fomin, L.~Kowalik, D.~Lokshtanov, D.~Marx, M.~Pilipczuk,
  M.~Pilipczuk, and S.~Saurabh.
\newblock {\em Parameterized Algorithms}.
\newblock Springer, 2015.

\bibitem{dechter1999bucket}
R.~Dechter.
\newblock Bucket elimination: A unifying framework for reasoning.
\newblock {\em Artificial Intelligence}, 113(1--2):41--85, 1999.

\bibitem{downey1999parameterized}
R.~G. Downey and M.~R. Fellows.
\newblock {\em Parameterized Complexity}.
\newblock Monographs in Computer Science. Springer, 1999.

\bibitem{feder1998computational}
T.~Feder and M.~Y. Vardi.
\newblock The computational structure of monotone monadic {SNP} and constraint
  satisfaction: A study through {Datalog} and group theory.
\newblock {\em SIAM J. Comput.}, 28(1):57--104, 1998.

\bibitem{fomin2019kernelization}
F.~V. Fomin, D.~Lokshtanov, S.~Saurabh, and M.~Zehavi.
\newblock {\em Kernelization: Theory of Parameterized Preprocessing}.
\newblock Cambridge University Press, 2019.

\bibitem{ganian2017discovering}
R.~Ganian, M.~S. Ramanujan, and S.~Szeider.
\newblock Discovering archipelagos of tractability for constraint satisfaction
  and counting.
\newblock In {\em SODA}, pages 1670--1681, 2016.

\bibitem{grohe2017descriptive}
M.~Grohe.
\newblock {\em Descriptive Complexity, Canonisation, and Definable Graph
  Structure Theory}.
\newblock Cambridge University Press, 2017.

\bibitem{hemaspaandra2000second}
L.~A. Hemaspaandra and J.~Rothe.
\newblock A second step towards complexity-theoretic analogs of {R}ice's
  theorem.
\newblock {\em Theor. Comput. Sci.}, 244(1--2):205--217, 2000.

\bibitem{immerman1999descriptive}
N.~Immerman.
\newblock {\em Descriptive Complexity}.
\newblock Springer, 1999.

\bibitem{kolmogorov2004energy}
V.~Kolmogorov and R.~Zabih.
\newblock What energy functions can be minimized via graph cuts?
\newblock {\em IEEE Transactions on Pattern Analysis and Machine Intelligence},
  26(2):147--159, 2004.

\bibitem{libkin2004elements}
L.~Libkin.
\newblock {\em Elements of Finite Model Theory}.
\newblock Springer, 2004.

\bibitem{lokshtanov2011eth}
D.~Lokshtanov, D.~Marx, and S.~Saurabh.
\newblock Lower bounds based on the exponential time hypothesis.
\newblock {\em Bulletin of the EATCS}, (105):41--72, 2011.

\bibitem{myhill1955effective}
J.~Myhill and J.~C. Shepherdson.
\newblock Effective operations on partial recursive functions.
\newblock {\em Z. Math. Logik Grundlagen Math.}, 1(4):310--317, 1955.

\bibitem{nightingale2017savile}
P.~Nightingale, {\"O}.~Akg{\"u}n, I.~P. Gent, C.~Jefferson, I.~Miguel, and
  P.~Spracklen.
\newblock Automatically improving constraint models in {Savile Row}.
\newblock {\em Artificial Intelligence}, 251:35--61, 2017.

\bibitem{pawlak1982rough}
Z.~Pawlak.
\newblock Rough sets.
\newblock {\em Int. J. Comput. Inf. Sci.}, 11(5):341--356, 1982.

\bibitem{pawlak1994rough}
Z.~Pawlak and R.~S{\l}owi{\'n}ski.
\newblock Rough set approach to multi-attribute decision analysis.
\newblock {\em Eur. J. Oper. Res.}, 72(3):443--459, 1994.

\bibitem{rice1953classes}
H.~G. Rice.
\newblock Classes of recursively enumerable sets and their decision problems.
\newblock {\em Trans. Amer. Math. Soc.}, 74(2):358--366, 1953.

\bibitem{schaefer1978complexity}
T.~J. Schaefer.
\newblock The complexity of satisfiability problems.
\newblock In {\em STOC}, pages 216--226, 1978.

\bibitem{shannon1948mathematical1}
C.~E. Shannon.
\newblock A mathematical theory of communication.
\newblock {\em Bell Syst. Tech. J.}, 27(3):379--423, 1948.

\bibitem{shannon1948mathematical2}
C.~E. Shannon.
\newblock A mathematical theory of communication.
\newblock {\em Bell Syst. Tech. J.}, 27(4):623--656, 1948.

\bibitem{shannon1956zero}
C.~E. Shannon.
\newblock Zero-error capacity of a noisy channel.
\newblock {\em IRE Trans. Inf. Theory}, 2(3):8--19, 1956.

\bibitem{shannon1959coding}
C.~E. Shannon.
\newblock Coding theorems for a discrete source with a fidelity criterion.
\newblock {\em IRE Natl. Conv. Rec.}, 7(Part 4):142--163, 1959.

\bibitem{shapiro1956degrees}
N.~Shapiro.
\newblock Degrees of computability.
\newblock {\em Trans. Amer. Math. Soc.}, 82(2):281--299, 1956.

\bibitem{williams2003backdoors}
R.~Williams, C.~P. Gomes, and B.~Selman.
\newblock Backdoors to typical case complexity.
\newblock In {\em IJCAI}, pages 1173--1178, 2003.

\bibitem{yannakakis1991linear}
M.~Yannakakis.
\newblock Expressing combinatorial optimization problems by linear programs.
\newblock {\em Journal of Computer and System Sciences}, 43(3):441--466, 1991.

\bibitem{zhuk2017proof}
D.~Zhuk.
\newblock A proof of the {CSP} dichotomy conjecture.
\newblock In {\em 2017 IEEE 58th Annual Symposium on Foundations of Computer
  Science ({FOCS})}, pages 331--342. IEEE, 2017.

\end{thebibliography}
}

\end{document}

% --- supplement: supplementary.tex ---

\title{Supplementary Material: \PaperTitleAuto}
\author{Tristan Simas}
\date{\today}
\maketitle

The archived artifact is available at \url{https://doi.org/10.5281/zenodo.19457896}. Lean statistics in the manuscript count the Paper 4d replay slice copied into the release package for the cited handles, including packaged dependencies, and exclude unrelated repository files.

\section{Compact Audit Map}\label{supsec:compact-audit-map}

Table~\ref{tab:compact-audit-map} gives a short route from the main theorem spine to representative formal declarations. The complete handle ledger appears in Section~\ref{supsec:full-lean-handle-ledger}.

\begingroup
\scriptsize
\setlength{\tabcolsep}{2pt}
\renewcommand{\arraystretch}{1.08}
\begin{longtable}{@{}p{0.16\linewidth}p{0.14\linewidth}p{0.39\linewidth}p{0.25\linewidth}@{}}
\caption{Compact audit map for the main mechanized claims. External source theorems, such as graph treewidth NP-completeness and Courcelle-style MSO model checking, are cited in the manuscript rather than mechanized here.}\label{tab:compact-audit-map}\\
\toprule
\textbf{Main claim family} & \textbf{Handles} & \textbf{Representative declarations} & \textbf{Source modules} \\
\midrule
\endfirsthead
\toprule
\textbf{Main claim family} & \textbf{Handles} & \textbf{Representative declarations} & \textbf{Source modules} \\
\midrule
\endhead
\bottomrule
\endfoot
Correct classifier closure agreement &
\LHrng{FR}{197}{199} &
\texttt{\seqsplit{correct\_classifier\_inherits\_closureLawInvariant}} &
\texttt{\seqsplit{Paper4dFrontier/AdmissibleCharacterization.lean}} \\
\addlinespace
Universal correctness transfer &
\LHrng{FR}{223}{231}, \LHrng{FR}{340}{341} &
\texttt{\seqsplit{feasiblePayloadSufficient\_iff\_lawDecisionProblem\_isSufficient}};
\texttt{\seqsplit{outputSemanticsRelevant\_iff\_totalizedLawDecisionProblem\_isRelevant}};
\texttt{\seqsplit{booleanPayloadTransfer}} &
\texttt{\seqsplit{Paper4dFrontier/Realizability.lean}} \\
\addlinespace
Compute-cost closure variants &
\LH{FR248}, \LHrng{FR}{259}{262}, \LHrng{FR}{316}{336} &
\texttt{\seqsplit{optimizerComputation\_polytime\_classifier\_agrees\_on\_closureEquivalent\_of\_correctOnDomain}};
\texttt{\seqsplit{optimizerSetSearch\_polytime\_classifier\_agrees\_on\_closureEquivalent}};
\texttt{\seqsplit{IdentityOutputClosureSpec.classifier\_agrees\_on\_closureEquivalent\_of\_correctOnDomain}} &
\texttt{\seqsplit{Paper4dFrontier/ComputeCostApplications.lean}};
\texttt{\seqsplit{Paper4dFrontier/ComputeCostExternalOutputs.lean}} \\
\addlinespace
Hull and quotient descent criteria &
\LH{FR355}, \LHrng{FR}{367}{375} &
\texttt{\seqsplit{closureLawInvariant\_iff\_closureHull\_eq\_self}};
\texttt{\seqsplit{not\_descendsToQuotient\_iff\_setoidGap}} &
\texttt{\seqsplit{Paper4dFrontier/AdmissibleCharacterization.lean}};
\texttt{\seqsplit{Paper4dFrontier/QuotientDescent.lean}} \\
\addlinespace
Raw local obstruction catalogue &
\LHrng{FR}{186}{193} &
\texttt{\seqsplit{no\_admissibleNormalizationPredicate\_decides\_dominantPair}};
\texttt{\seqsplit{no\_closureSoundPackage\_predicate\_decides\_each\_obstruction\_family}} &
\texttt{\seqsplit{Paper4dFrontier/ObstructionPredicateCandidates.lean}} \\
\addlinespace
Raw action, coordinate, and support gaps &
\LHrng{FR}{406}{410}, \LHrng{FR}{444}{454}, \LHrng{FR}{485}{490} &
\texttt{\seqsplit{rawSupportGraphPredicate\_orbitGap\_of\_graphPredicate\_separates}};
\texttt{\seqsplit{rawCoordinateCountThresholdGap\_irrelevantCoordinateExtension}} &
\texttt{\seqsplit{Paper4dFrontier/RawSupportGraphNoGo.lean}};
\texttt{\seqsplit{Paper4dFrontier/ObstructionPredicateCandidates.lean}} \\
\addlinespace
Action-gap graph algorithms and hardness transfer &
\LHrng{FR}{394}{396}, \LHrng{FR}{455}{457}, \LHrng{FR}{503}{508} &
\texttt{\seqsplit{actionGapTreewidthRecognition\_realizes\_graph}};
\texttt{\seqsplit{FixedWidthGraphSolver.solveActionGap\_steps\_bound}};
\texttt{\seqsplit{graphPredicate\_hardness\_transfers}} &
\texttt{\seqsplit{Paper4dFrontier/ActionGapTreewidthTractability.lean}};
\texttt{\seqsplit{Paper4dFrontier/HardnessTransfer.lean}} \\
\addlinespace
Split-proxy descent certification &
\LHrng{FR}{491}{502} &
\texttt{\seqsplit{sat3SplitProxy\_nonDescends\_iff\_satisfiable}};
\texttt{\seqsplit{unsat3\_to\_splitProxyDescent\_reduction}} &
\texttt{\seqsplit{Paper4dFrontier/HardnessTransfer.lean}} \\
\addlinespace
Quotient-MSO evaluation after quotient construction &
\LHrng{FR}{423}{435}, \LHrng{FR}{458}{475} &
\texttt{\seqsplit{QuotientMSOPostDescentEvaluator}};
\texttt{\seqsplit{BoundedQuotientTreewidthMSOEvaluator}} &
\texttt{\seqsplit{Paper4dFrontier/QuotientMSORealization.lean}};
\texttt{\seqsplit{Paper4dFrontier/BoundedQuotientTreewidthMSO.lean}} \\
\addlinespace
Sparse unary-gap certificate pipeline and stability &
\LHrng{FR}{263}{267}, \LHrng{FR}{479}{484} &
\texttt{\seqsplit{SparseUnaryGapQuotientPipeline}};
\texttt{\seqsplit{StrictMarginQuotientMSOStability.predicate\_eq\_of\_uniformApprox}} &
\texttt{\seqsplit{Paper4dFrontier/SparseUnaryGapQuotientPipeline.lean}};
\texttt{\seqsplit{Paper4dFrontier/Realizability.lean}} \\
\end{longtable}
\endgroup

\section{Routine Preservation Facts}\label{supsec:routine-preservation}

The packaged closure-law proposition uses the following preservation facts.

\paragraph{Positive affine reparameterization.}
Statewise positive affine transformations of utility leave exact certification unchanged. In particular, replacing
\[
U(a,s)
\qquad\text{by}\qquad
\alpha(s)+\beta(s)U(a,s)
\]
with $\beta(s)>0$ for every state $s$ does not change sufficient coordinate sets or relevance judgments. Order preservation at each state keeps all action comparisons unchanged; optimizer sets, quotient classes, sufficient sets, and relevant coordinates are unchanged. Lean handles: \LHrng{FR}{13}{16}.

\paragraph{Duplication.}
Duplicating a single action without changing its utility profile preserves the decision quotient relation and hence preserves sufficiency and relevance. At the optimizer-set level, the only possible change is the addition of the duplicate action itself: original optimal actions remain optimal, and the duplicate is optimal exactly when the original action was. Duplicating a single state without changing its utility profile preserves the decision quotient up to the induced quotient equivalence and also preserves sufficiency and relevance. Lean handles: \LHrng{FR}{54}{56}, \LHrng{FR}{68}{73}.

\paragraph{Relabeling.}
Action relabeling preserves optimizer equivalence and exact sufficiency. Coordinate relabeling does so through the induced state relabeling, with the coordinate structure transported along the state bijection. Consequently, any structural tractability criterion for exact relevance certification must be closed under these relabelings. Lean handles: \LHrng{FR}{17}{20}.

\paragraph{Finite-structural inhabitation.}
The finite structural class is nonempty. Both constant instance predicates belong to it: they are decidable in constant time, automatically closure-sound, and definable by trivial bounded local pattern schemes. One may use an impossible local pattern, such as a pattern requiring two distinct vertices both at root-distance zero; using it as a forbidden pattern yields the constant-true predicate, and using it as a required witness yields the constant-false predicate. Lean handle: \LH{FR200}.

\paragraph{Deterministic payload transfer.}
Let $S$ be a coordinate space, let $\phi:S\to T$ be a deterministic payload map into a finite label type, and form the induced decision problem with action space $T$ and utility
\[
U(a,s)=
\begin{cases}
1,&a=\phi(s),\\
0,&a\neq\phi(s).
\end{cases}
\]
Then every coordinate set is sufficient for $\phi$ if and only if it is sufficient for the induced decision problem, and every coordinate is relevant for $\phi$ if and only if it is relevant for the induced decision problem. The optimizer at state $s$ is the singleton $\{\phi(s)\}$, so payload equality and optimizer-set equality define the same quotient. Lean handles: \LHrng{FR}{217}{222}.

\section{Formal Bounded-Pattern Encoding}\label{supsec:bounded-pattern-encoding}

The finite-structural predicate definition uses the following formal interface. Let $|U|$ be the encoding size of a binary pairwise instance and let $X(U)$ be its canonical finite pairwise syntax. Polynomial-time checkability means that there are constants $c,c',k$ and a decision procedure $A$ such that
\[
A(U)=1 \iff Q(U),
\qquad
\mathrm{time}_A(U)\le c(|U|+1)^k+c'.
\]
Structural extractability means that the dependency graph used by $Q$ is obtained from syntax alone by a uniform extractor $E:X\mapsto G_E(X)$.

\paragraph{Bounded local schemes.}
Bounded-pattern definability supplies fixed integers
\[
r_{\max},\; n_{\max},\; a_{\max},\; c_{\max}\in\mathbb N
\]
and finite rooted pattern sets $\mathcal W,\mathcal F$. Every pattern in $\mathcal W\cup\mathcal F$ has radius at most $r_{\max}$, at most $n_{\max}$ vertices, at most $a_{\max}$ action labels, and coefficient magnitudes at most $c_{\max}$. Write
\[
\mathcal N_U(v):=\mathcal N_{r_{\max}}(X(U),v)
\]
for the rooted radius-$r_{\max}$ neighborhood of a vertex $v$ in the extracted syntax. Occurrence $P\sqsubseteq\mathcal N_U(v)$ means a root-preserving injective embedding of $P$ into $\mathcal N_U(v)$ together with a bijection between the pattern action labels and the action labels visible in the extracted instance structure for that occurrence, preserving unary coefficients, binary coefficients, and incidences. The predicate is then determined by
\[
\begin{aligned}
\mathsf W_U &:=
\mathcal W\neq\emptyset\land
\exists v\,\exists P\in\mathcal W:\;
P\sqsubseteq\mathcal N_U(v),\\
\mathsf F_U &:=
\mathcal F\neq\emptyset\land
\forall v\,\forall P'\in\mathcal F:\;
P'\not\sqsubseteq\mathcal N_U(v).
\end{aligned}
\]
The predicate is then
\[
Q(U)\iff \mathsf W_U\lor \mathsf F_U.
\]
An empty witness or forbidden family makes the corresponding branch false.

\paragraph{FO presentation.}
Treat $X(U)$ as a finite relational structure $\mathcal S(U)$ whose signature records incidence, action labels, coefficient labels, root position, and bounded coefficient values. For each fixed rooted pattern $P$, one builds an existential first-order formula $\theta_P(x)$ naming the finitely many pattern vertices and action labels, enforcing distinctness, the root condition, all required incidences and coefficient labels, and the action-label bijection for the occurrence. Since all bounds are fixed, $\theta_P$ is a single FO formula, not an input-dependent schema. Occurrence of $P$ is equivalent to satisfaction of $\theta_P$ at the root. Boolean combinations of occurrence tests and fixed occurrence thresholds give the counted basic-local FO fragment, and conversely each such formula records a finite bounded-pattern condition. Lean handles: \LHrng{FR}{363}{366}.

\section{Positive Cases: Domains Where Hull Separation Holds}\label{supsec:positive-cases}

Positive classification uses the same orbit algebra as the obstruction theorem: choose a closure-closed domain $D$, a target predicate $Q$, and polynomial-time-computable closure transports. If the positive and negative orbit hulls are disjoint, the exact-classification hull-separation theorem\leanmeta{\LHrng{FR}{356}{357}} gives an exact closure-invariant classifier, and the least-classifier corollary\leanmeta{\LH{FR358}} gives the least one. The remaining question is whether that hull is accessible by usable structural data.

\paragraph{Normalized unary-gap domains.}
One basic positive subdomain consists of normalized binary instances whose action gaps have no pair interaction:
\[
\Delta_{ij}(a)-\Delta_{ij}(b)=0
\qquad\text{for all } i\neq j \text{ and actions } a,b.
\]
Equivalently, any pairwise or higher-order terms are action-independent and cancel from optimizer comparisons; optimizer behavior is represented by unary action-dependent contributions. Closure is understood domain-relatively: only relabelings, duplications, irrelevant-coordinate extensions, and positive affine transports whose source and target remain in this normalized unary-gap domain are included. On this domain, closure orbits cannot create the pairwise action-gap patterns used by the four obstruction families. Targets that factor through the normalized unary-gap catalogue have no orbit gap, and the catalogue gives an exact Boolean classifier.\leanmeta{\LHrng{FR}{390}{393}} With fixed action alphabet, bounded coefficient vocabulary, and explicitly listed finite support, the catalogue is enumerated by finite tables of action-gap vectors. The unary side of the binary pairwise dichotomy is therefore a concrete hull-separated regime.

\paragraph{Bounded distinct-profile compression.}
Let $D_k$ be the domain of instances with at most $k$ distinct action utility profiles. Relabelings, statewise positive affine reparameterizations, duplication, and binary irrelevant-coordinate extension preserve equality of action profiles, so $D_k$ is closure-closed under the listed transports. Profile compression maps every instance in $D_k$ to an equivalent instance with at most $k$ actions and preserves all sufficient sets and relevant coordinates. Hence any polynomial-time exact-certification procedure for bounded-action instances transfers to $D_k$. The positive content is a closure-compatible reduction that removes accidental action multiplicity before classification.

\paragraph{Strict-margin stability regimes.}
Margin assumptions give a third positive regime, at the interface between exact classification and approximate presentation. Suppose every state has a unique optimizer and a uniform winner-versus-competitor gap exceeding $2\delta$. The global strict-margin stability proposition\leanmeta{\LH{FR288}, \LH{FR289}, \LH{FR290}, \LH{FR300}, \LH{FR301}} shows that every utility presentation within uniform distance $\delta$ has the same optimizer quotient, sufficient-coordinate family, relevant-coordinate set, and minimal sufficient sets. Thus any exact classifier already established on a normalized representative extends throughout the corresponding strict-margin neighborhood. The role of the margin hypothesis is quotient preservation: approximate utilities become safe inputs for exact relevance only after the gap condition pins down the admissible-output classes.

For the finite state/action utility model above and unrestricted sup-norm perturbation balls, the strict-gap condition is the complete robustness criterion for preserving exact optimizer sets throughout the ball. If a state has a tie among optimizers, an arbitrarily small action-specific perturbation can break the tie and change the optimizer set. If the winner-versus-runner-up gap is at most $2\delta$, lowering the winner by $\delta$ and raising the competitor by $\delta$ can create a tie or flip the winner. Hence full-ball exact stability is exactly the unique-optimizer regime with gap greater than $2\delta$. Weaker margin hypotheses can work only after restricting the admissible perturbation family, for example to perturbations that preserve ties inside an optimal face.

These positive examples mark three mechanisms by which hull separation can hold: unary-gap domains remove decision-relevant pair interactions before orbit gaps arise; bounded distinct-profile domains compress accidental action multiplicity before classification; and strict-margin neighborhoods stabilize the optimizer quotient under controlled perturbation. A systematic positive theory would have to supply closure-invariant algebraic or canonical data proving hull separation on broader domains.

\section{Binary-Pairwise Orientation Facts}\label{supsec:binary-pairwise-orientation}

\paragraph{Symmetric pairwise dichotomies.}
In a binary pairwise presentation invariant under coordinate permutations, raw pair interaction is empty or present on every coordinate pair. The same dichotomy holds for decision-relevant interaction after passing to action-gap mixed differences and removing action-independent base state terms. Consequently, the symmetric decision-relevant interaction graph is either edgeless or complete. The proof uses transitivity of the coordinate-permutation action on unordered coordinate pairs: a nonzero mixed difference on one pair transports to every pair, and if no pair has nonzero mixed difference then the pairwise part is unary after the stated normalization. Lean handles: \LHrng{FR}{118}{124}, \LH{FR359}.

\paragraph{Offset normalization.}
Two pairwise presentations are in the offset-normalization relation when
\[
V(a,s)=U(a,s)+\alpha(s)+\kappa(a).
\]
The offset-normalized decision-relevant interaction graph is computed from mixed differences of action gaps $U(a,\cdot)-U(b,\cdot)$, hence is invariant on each offset-normalization class. The action-specific constants $\kappa(a)$ can change optimizer behavior, so this graph is a normalization device rather than a correctness-preserving closure law by itself. Lean handles: \LHrng{FR}{125}{129}.

\paragraph{Quotient-cost reading.}
If every state has the same optimizer set, then the optimizer quotient has one class. Any cost computed on quotient classes has only the diagonal case and is zero. If the quotient has genuine branching, moving between distinct quotient classes has positive cost. Lean handles: \LHrng{FR}{183}{184}.

\section{Binary-Pairwise Witness Details}\label{supsec:binary-pairwise-witness-details}

The affine-shift witness table uses the following arithmetic. The common form is
\[
V(c,x)=U(c,x)+\alpha(x)
\]
for every action $c$, with $\alpha$ action-independent. Hence every action gap is unchanged:
\[
V(a,x)-V(b,x)=U(a,x)-U(b,x).
\]
The same-orbit premise is the positive-affine closure step with scale $1$.

\paragraph{Dominant pair.}
Use three binary coordinates and actions $a,b$:
\[
U(a,x)=2x_0x_1,\qquad U(b,x)=0,
\qquad \alpha(x)=3x_1x_2.
\]
The only nonzero pair interaction in $U$ occurs on $\{0,1\}$ for action $a$, so anchored unique dominant-pair status holds. The action gap is $2x_0x_1$. It is unchanged in $V$, so the optimizer quotient, sufficient-coordinate family, and relevant-coordinate set are unchanged. In $V$, the mixed-difference magnitudes include $\Delta_{01}(a)=2$, $\Delta_{12}(a)=3$, and $\Delta_{12}(b)=3$. The largest pair/action magnitude is no longer uniquely achieved at $(\{0,1\},a)$, so the anchored target flips. Lean handle: \LH{FR186}.

\paragraph{Margin.}
Use actions $a,b$:
\[
U(a,x)=3x_0+x_1x_2,\qquad U(b,x)=0,
\qquad \alpha(x)=2x_0x_1.
\]
The action gap is unchanged after adding $\alpha$. The largest unary coefficient remains $3$. The largest pair mixed-difference changes from $1$ to $2$, so the predicate requiring unary magnitude at most twice the largest pair magnitude flips from false to true. Lean handle: \LH{FR187}.

\paragraph{Ghost action.}
Use actions $0,1,2$:
\[
U(0,x)=\mathbf 1[x_0=1],\quad
U(1,x)=\mathbf 1[x_0=0],\quad
U(2,x)=-1-x_0x_1,
\qquad \alpha(x)=2\sigma(x_0,x_1),
\]
where $\sigma(x_0,x_1)=1$ when $x_0=x_1$ and $-1$ otherwise. Action $2$ is never optimal: if $x_0=1$, action $0$ has value $1$; if $x_0=0$, action $1$ has value $1$; action $2$ is at most $-1$. The common shift keeps all action gaps and optimizer sets fixed. The mixed difference of $\sigma$ on the anchor square is $4$, so the mixed difference of $2\sigma$ is $8$. The ghost action's anchored signature therefore changes while exact certification is fixed. Lean handle: \LH{FR188}.

\paragraph{Offset signature.}
Use actions $f,t$:
\[
U(f,x)=1+\mathbf 1[x_0=1]+x_0x_1,\qquad U(t,x)=0,
\qquad \alpha(x)=2\sigma(x_0,x_1).
\]
Before the shift, the anchor-pair mixed-difference magnitudes of the two actions are $1$ and $0$. The common shift adds anchor mixed-difference magnitude $8$ to both, changing the magnitudes to $9$ and $8$. The action gaps are unchanged, so the offset-signature target flips inside one closure orbit. Lean handle: \LH{FR189}.

The same dominant-pair witness gives the optimizer-computation no-go. Lean handles: \LH{FR250}, \LH{FR251}.
The affine witness dimensions are minimal for this template: a nonzero action-gap pair interaction requires at least two actions, and the two-pair affine template requires at least three coordinates. Lean handles: \LH{FR397}, \LH{FR398}, \LH{FR399}, \LH{FR400}.

\section{Claim-to-Lean Handle Mapping}

Each paper claim below is matched to its Lean formalization. Rows involving external complexity facts separate the mechanized reduction or realization from the cited source theorem. For example, the \textsc{Action-Gap-Treewidth} row maps the graph-realization construction and treewidth-preserving equivalence to Lean handles; NP-completeness of graph treewidth recognition is the external theorem cited in the main text.

\IfFileExists{content/claim_mapping_auto.tex}{%
  % Auto-generated by scripts/build_papers.py. Do not edit manually.
% Generated: 2026-07-01T12:55:36.240156
% Manuscript: paper4d
\begingroup
\scriptsize
\setlength{\tabcolsep}{3pt}
\renewcommand{\arraystretch}{1.0}
\setlength{\LTpre}{0pt}
\setlength{\LTpost}{0pt}
\begin{longtable}{@{}>{\raggedright\arraybackslash}m{0.65\linewidth}>{\raggedleft\arraybackslash}m{0.30\linewidth}@{}}
\toprule
\textbf{Manuscript claim} & \textbf{Lean handle} \\
\midrule
\endfirsthead
\toprule
\textbf{Manuscript claim} & \textbf{Lean handle} \\
\midrule
\endhead
\endfoot
\bottomrule
\endlastfoot
Proposition 3.5: Quotient-Dependence Facts & \LH{FR30}, \LH{FR31}, \LH{FR32}, \LH{FR33}, \LH{FR34}, \LH{FR35}, \LH{FR37}, \LH{FR51}, \LH{FR176}, \LH{FR177} \\
\midrule
Proposition 3.6: Minimal Sufficient Sets Are Canonical on Product Spaces & \LH{FR352} \\
\midrule
Proposition 3.7: Closure Operations Preserve Exact Certification & \LH{FR15}, \LH{FR16}, \LH{FR17}, \LH{FR18}, \LH{FR19}, \LH{FR20}, \LH{FR55}, \LH{FR56}, \LH{FR57}, \LH{FR58}, \LH{FR59}, \LH{FR60}, \LH{FR83}, \LH{FR84}, \LH{FR85} \\
\midrule
Theorem 3.11: Correctness Implies Closure-Orbit Invariance & \LH{FR197}, \LH{FR198}, \LH{FR199} \\
\midrule
Theorem 3.12: Compute-Cost Closure-Orbit Invariance & \LH{FR248}, \LH{FR259}, \LH{FR260}, \LH{FR261}, \LH{FR262}, \LH{FR316}, \LH{FR317}, \LH{FR318}, \LH{FR319}, \LH{FR320}, \LH{FR321}, \LH{FR322}, \LH{FR323}, \LH{FR324}, \LH{FR325}, \LH{FR326}, \LH{FR327}, \LH{FR328}, \LH{FR329}, \LH{FR330}, \LH{FR331}, \LH{FR332}, \LH{FR333}, \LH{FR334}, \LH{FR335}, \LH{FR336} \\
\midrule
Proposition 3.13: Bounded Distinct Action Profiles Compress to Bounded Actions & \LH{FR206}, \LH{FR207}, \LH{FR208}, \LH{FR209} \\
\midrule
Theorem 4.1: Universal Correctness Transfer & \LH{FR223}, \LH{FR224}, \LH{FR225}, \LH{FR226}, \LH{FR227}, \LH{FR228}, \LH{FR229}, \LH{FR230}, \LH{FR231}, \LH{FR340}, \LH{FR341} \\
\midrule
Corollary 4.3: Correctness Quotient Universality and Coordinate Presentations & \LH{FR197}, \LH{FR198}, \LH{FR199}, \LH{FR232}, \LH{FR233}, \LH{FR234}, \LH{FR235}, \LH{FR236}, \LH{FR237}, \LH{FR238}, \LH{FR239}, \LH{FR240}, \LH{FR241}, \LH{FR242}, \LH{FR243}, \LH{FR244}, \LH{FR252}, \LH{FR253}, \LH{FR254}, \LH{FR255}, \LH{FR256}, \LH{FR268}, \LH{FR269}, \LH{FR270}, \LH{FR271}, \LH{FR272}, \LH{FR273}, \LH{FR274}, \LH{FR275}, \LH{FR276}, \LH{FR277}, \LH{FR278}, \LH{FR279}, \LH{FR280}, \LH{FR281}, \LH{FR282}, \LH{FR283}, \LH{FR284}, \LH{FR285}, \LH{FR286}, \LH{FR287}, \LH{FR288}, \LH{FR289}, \LH{FR290}, \LH{FR291}, \LH{FR292}, \LH{FR293}, \LH{FR294}, \LH{FR295}, \LH{FR296}, \LH{FR297}, \LH{FR298}, \LH{FR299}, \LH{FR300}, \LH{FR301}, \LH{FR302}, \LH{FR303}, \LH{FR304}, \LH{FR305}, \LH{FR306}, \LH{FR307}, \LH{FR308}, \LH{FR309}, \LH{FR310}, \LH{FR311}, \LH{FR312}, \LH{FR313}, \LH{FR314}, \LH{FR315}, \LH{FR345}, \LH{FR351} \\
\midrule
Lemma 5.1: Primitive-Law Invariance Equals Orbit Invariance & \LH{FR362} \\
\midrule
Proposition 5.2: Orbit-Gap Descent Criterion & \LH{FR201}, \LH{FR202}, \LH{FR203}, \LH{FR204}, \LH{FR210}, \LH{FR211}, \LH{FR212}, \LH{FR213}, \LH{FR367}, \LH{FR368}, \LH{FR369}, \LH{FR370}, \LH{FR371}, \LH{FR372}, \LH{FR373}, \LH{FR374}, \LH{FR375} \\
\midrule
Proposition 5.3: Hull Is a Closure Operator; Closure-Invariant Predicates Are Its Fixed Points & \LH{FR355} \\
\midrule
Theorem 5.4: Exact Classification Equals Hull Separation & \LH{FR356}, \LH{FR357} \\
\midrule
Corollary 5.5: Least Exact Closure-Invariant Classifier & \LH{FR358} \\
\midrule
Theorem 5.6: Computable Orbit Catalogues Make Hull Classifiers Algorithmic & \LH{FR376}, \LH{FR377}, \LH{FR378}, \LH{FR379}, \LH{FR380}, \LH{FR381}, \LH{FR382}, \LH{FR383} \\
\midrule
Corollary 5.7: Catalogue Classifiers Are Compositional & \LH{FR384}, \LH{FR385}, \LH{FR386}, \LH{FR387}, \LH{FR388}, \LH{FR389} \\
\midrule
Corollary 5.8: Domain Restriction Helps Only by Removing Orbit Gaps & \LH{FR252}, \LH{FR253}, \LH{FR254}, \LH{FR255}, \LH{FR256} \\
\midrule
Theorem 5.9: Normalized Unary-Gap Catalogues Give a Positive Regime & \LH{FR390}, \LH{FR391}, \LH{FR392}, \LH{FR393} \\
\midrule
Theorem 5.10: Bounded Distinct-Profile Compression Gives a Positive Regime & \LH{FR206}, \LH{FR207}, \LH{FR208}, \LH{FR209} \\
\midrule
Corollary 5.11: Orbit-Gap Template & \LH{FR185} \\
\midrule
Proposition 6.1: Action-Gap Graphs Realize Arbitrary Treewidth Instances & \LH{FR394}, \LH{FR395}, \LH{FR396} \\
\midrule
Theorem 6.2: Graph Predicate Hardness Transfers to Action-Gap Predicates & \LH{FR491}, \LH{FR492}, \LH{FR503}, \LH{FR504}, \LH{FR505}, \LH{FR506}, \LH{FR507}, \LH{FR508} \\
\midrule
Theorem 6.4: \textup{\textsc{Action-Gap-Treewidth} is NP-complete} & \LH{FR394}, \LH{FR395}, \LH{FR396} \\
\midrule
Theorem 6.5: SAT Reduces to Split-Proxy Non-Descent & \LH{FR491}, \LH{FR492}, \LH{FR493}, \LH{FR494}, \LH{FR495}, \LH{FR496}, \LH{FR497}, \LH{FR498}, \LH{FR499}, \LH{FR500}, \LH{FR501}, \LH{FR502} \\
\midrule
Theorem 6.10: Raw Action Count Threshold Duplication Orbit Gap & \LH{FR406}, \LH{FR407}, \LH{FR408}, \LH{FR409}, \LH{FR410} \\
\midrule
Theorem 6.11: Raw Coordinate Count Threshold Irrelevant-Coordinate Orbit Gap & \LH{FR485}, \LH{FR486}, \LH{FR487}, \LH{FR488}, \LH{FR489}, \LH{FR490} \\
\midrule
Lemma 6.12: Affine-Shift Obstruction Lemma & \LH{FR15}, \LH{FR16}, \LH{FR17}, \LH{FR18}, \LH{FR19}, \LH{FR20}, \LH{FR185} \\
\midrule
Theorem 6.13: Parameterized Closure-Sound No-Go & \LH{FR193} \\
\midrule
Corollary 6.14: No Simultaneous Finite-Structural Classification and Target Recognition & \LH{FR193}, \LH{FR197}, \LH{FR198}, \LH{FR199} \\
\midrule
Proposition 6.15: Full Binary Pairwise Domain Is Closure-Closed & \LH{FR348}, \LH{FR349} \\
\midrule
Corollary 6.16: No Simultaneous Finite-Structural Classification and Raw-Target Recognition on the Full Binary Pairwise Domain & \LH{FR350} \\
\midrule
Corollary 6.17: Universal Exact-Certification Treatments Inherit the Binary Pairwise Obstruction by Restriction & \LH{FR351}, \LH{FR361} \\
\midrule
Theorem 7.2: Finite Coordinate-State Quotient Flip Graph Realization & \LH{FR411}, \LH{FR412}, \LH{FR413}, \LH{FR414}, \LH{FR415}, \LH{FR416}, \LH{FR417}, \LH{FR418}, \LH{FR419}, \LH{FR420}, \LH{FR421}, \LH{FR422} \\
\midrule
Theorem 7.4: Quotient-MSO Evaluation After Construction & \LH{FR423}, \LH{FR424}, \LH{FR425}, \LH{FR426}, \LH{FR427}, \LH{FR428}, \LH{FR429}, \LH{FR430}, \LH{FR431}, \LH{FR432}, \LH{FR433}, \LH{FR434}, \LH{FR435}, \LH{FR458}, \LH{FR459} \\
\midrule
Theorem 7.5: Bounded-Quotient-Size MSO Evaluation After Construction & \LH{FR470}, \LH{FR471}, \LH{FR472}, \LH{FR473}, \LH{FR474}, \LH{FR475} \\
\midrule
Theorem 7.6: Bounded-Quotient-Treewidth MSO Evaluation After Construction & \LH{FR460}, \LH{FR461}, \LH{FR462}, \LH{FR463}, \LH{FR464}, \LH{FR465}, \LH{FR466}, \LH{FR467}, \LH{FR468}, \LH{FR469} \\
\midrule
Theorem 7.8: Sparse Unary-Gap Quotient Pipeline Certificate & \LH{FR479}, \LH{FR480}, \LH{FR481}, \LH{FR482}, \LH{FR483}, \LH{FR484} \\
\midrule
Proposition 8.1: Approximate Relevance and Sufficiency Claims Need Explicit Stability Control & \LH{FR263}, \LH{FR266} \\
\midrule
Corollary 8.2: Arbitrarily Small Uniform Perturbations Can Flip Exact Certification & \LH{FR264}, \LH{FR267} \\
\midrule
Proposition 8.3: Global Approximation Stability Under Uniform Strict Gaps & \LH{FR288}, \LH{FR289}, \LH{FR290}, \LH{FR300}, \LH{FR301} \\
\midrule
Corollary 8.4: Strict-Margin Quotient-MSO Stability & \LH{FR476}, \LH{FR477}, \LH{FR478} \\
\end{longtable}
\endgroup

}{%
  \textbf{Error:} \texttt{content/claim\_mapping\_auto.tex} not found.
}

\section{Full Lean Handle Ledger}\label{supsec:full-lean-handle-ledger}

This supplement provides the complete Lean handle ledger cited by the manuscript.
It includes every handle identifier, declaration name, and source module path.
Handle IDs are sparse because uncited formalization lemmas are omitted from the manuscript ledger; the full artifact is available at the Zenodo DOI above.
The ledger references two formalization trees, \texttt{Paper4dFrontier/} and \texttt{paper4/DecisionQuotient/}, both included in the archived artifact.
The main text calls the direct local proxy class \emph{finite-structural predicates}. Declaration names containing \texttt{Admissible\allowbreak Normalization\allowbreak Predicate} formalize the same finite-structural predicate class and its no-go witnesses.

\IfFileExists{content/lean_handle_ids_auto.tex}{%
  % Auto-generated by scripts/build_papers.py. Do not edit manually.
% Generated: 2026-07-01T12:55:36.034734
\begingroup
\scriptsize
\setlength{\tabcolsep}{4pt}
\renewcommand{\arraystretch}{1.12}
\setlength{\LTpre}{2pt}
\setlength{\LTpost}{2pt}
\setlength{\emergencystretch}{3em}
\sloppy
\urlstyle{tt}
\makeatletter
\if@twocolumn
\begin{list}{}{\leftmargin=0pt\itemindent=0pt\itemsep=4pt\parsep=0pt\topsep=4pt}
\item \textbf{\nolinkurl{FR7}}\hypertarget{lh:FR7}{}\enspace{\ttfamily\nolinkurl{exists_decisionProblem_realizing_labeling}} {\tiny\ttfamily Paper4dFrontier/\allowbreak Realizability.lean}
\item \textbf{\nolinkurl{FR8}}\hypertarget{lh:FR8}{}\enspace{\ttfamily\nolinkurl{realizingProblem_decisionEquiv_iff}} {\tiny\ttfamily Paper4dFrontier/\allowbreak Realizability.lean}
\item \textbf{\nolinkurl{FR13}}\hypertarget{lh:FR13}{}\enspace{\ttfamily\nolinkurl{isOptimal_positiveAffineTransform_iff}} {\tiny\ttfamily Paper4dFrontier/\allowbreak FamilyAxioms.lean}
\item \textbf{\nolinkurl{FR14}}\hypertarget{lh:FR14}{}\enspace{\ttfamily\nolinkurl{opt_eq_positiveAffineTransform}} {\tiny\ttfamily Paper4dFrontier/\allowbreak FamilyAxioms.lean}
\item \textbf{\nolinkurl{FR15}}\hypertarget{lh:FR15}{}\enspace{\ttfamily\nolinkurl{isSufficient_positiveAffineTransform_iff}} {\tiny\ttfamily Paper4dFrontier/\allowbreak FamilyAxioms.lean}
\item \textbf{\nolinkurl{FR16}}\hypertarget{lh:FR16}{}\enspace{\ttfamily\nolinkurl{isRelevant_positiveAffineTransform_iff}} {\tiny\ttfamily Paper4dFrontier/\allowbreak FamilyAxioms.lean}
\item \textbf{\nolinkurl{FR17}}\hypertarget{lh:FR17}{}\enspace{\ttfamily\nolinkurl{decisionEquiv_relabelActions_iff}} {\tiny\ttfamily Paper4dFrontier/\allowbreak FamilyAxioms.lean}
\item \textbf{\nolinkurl{FR18}}\hypertarget{lh:FR18}{}\enspace{\ttfamily\nolinkurl{decisionEquiv_relabelStates_iff}} {\tiny\ttfamily Paper4dFrontier/\allowbreak FamilyAxioms.lean}
\item \textbf{\nolinkurl{FR19}}\hypertarget{lh:FR19}{}\enspace{\ttfamily\nolinkurl{isSufficient_relabelActions_iff}} {\tiny\ttfamily Paper4dFrontier/\allowbreak FamilyAxioms.lean}
\item \textbf{\nolinkurl{FR20}}\hypertarget{lh:FR20}{}\enspace{\ttfamily\nolinkurl{isSufficient_relabelStates_iff}} {\tiny\ttfamily Paper4dFrontier/\allowbreak FamilyAxioms.lean}
\item \textbf{\nolinkurl{FR21}}\hypertarget{lh:FR21}{}\enspace{\ttfamily\nolinkurl{allProblems_relabelInvariant}} {\tiny\ttfamily Paper4dFrontier/\allowbreak FamilyAxioms.lean}
\item \textbf{\nolinkurl{FR22}}\hypertarget{lh:FR22}{}\enspace{\ttfamily\nolinkurl{allProblems_kernelUniversal}} {\tiny\ttfamily Paper4dFrontier/\allowbreak FamilyAxioms.lean}
\item \textbf{\nolinkurl{FR23}}\hypertarget{lh:FR23}{}\enspace{\ttfamily\nolinkurl{relabelInvariant_not_enough}} {\tiny\ttfamily Paper4dFrontier/\allowbreak FamilyAxioms.lean}
\item \textbf{\nolinkurl{FR24}}\hypertarget{lh:FR24}{}\enspace{\ttfamily\nolinkurl{optRangeCard_realizingIdentity}} {\tiny\ttfamily Paper4dFrontier/\allowbreak FamilyAxioms.lean}
\item \textbf{\nolinkurl{FR30}}\hypertarget{lh:FR30}{}\enspace{\ttfamily\nolinkurl{isSufficient_iff_of_decisionEquiv_iff}} {\tiny\ttfamily Paper4dFrontier/\allowbreak FamilyAxioms.lean}
\item \textbf{\nolinkurl{FR31}}\hypertarget{lh:FR31}{}\enspace{\ttfamily\nolinkurl{isRelevant_iff_of_decisionEquiv_iff}} {\tiny\ttfamily Paper4dFrontier/\allowbreak FamilyAxioms.lean}
\item \textbf{\nolinkurl{FR32}}\hypertarget{lh:FR32}{}\enspace{\ttfamily\nolinkurl{isIrrelevant_iff_of_decisionEquiv_iff}} {\tiny\ttfamily Paper4dFrontier/\allowbreak FamilyAxioms.lean}
\item \textbf{\nolinkurl{FR33}}\hypertarget{lh:FR33}{}\enspace{\ttfamily\nolinkurl{isMinimalSufficient_iff_of_decisionEquiv_iff}} {\tiny\ttfamily Paper4dFrontier/\allowbreak FamilyAxioms.lean}
\item \textbf{\nolinkurl{FR34}}\hypertarget{lh:FR34}{}\enspace{\ttfamily\nolinkurl{sufficientSets_eq_of_decisionEquiv_iff}} {\tiny\ttfamily Paper4dFrontier/\allowbreak FamilyAxioms.lean}
\item \textbf{\nolinkurl{FR35}}\hypertarget{lh:FR35}{}\enspace{\ttfamily\nolinkurl{relevantSet_eq_of_decisionEquiv_iff}} {\tiny\ttfamily Paper4dFrontier/\allowbreak FamilyAxioms.lean}
\item \textbf{\nolinkurl{FR37}}\hypertarget{lh:FR37}{}\enspace{\ttfamily\nolinkurl{quotientCard_eq_of_decisionEquiv_iff}} {\tiny\ttfamily Paper4dFrontier/\allowbreak FamilyAxioms.lean}
\item \textbf{\nolinkurl{FR38}}\hypertarget{lh:FR38}{}\enspace{\ttfamily\nolinkurl{quotientCard_realizingIdentity}} {\tiny\ttfamily Paper4dFrontier/\allowbreak FamilyAxioms.lean}
\item \textbf{\nolinkurl{FR39}}\hypertarget{lh:FR39}{}\enspace{\ttfamily\nolinkurl{realizingProblemQuotientEquivLabelRange_apply_quotientMap}} {\tiny\ttfamily Paper4dFrontier/\allowbreak Realizability.lean}
\item \textbf{\nolinkurl{FR40}}\hypertarget{lh:FR40}{}\enspace{\ttfamily\nolinkurl{realizingProblem_quotientCard_eq_labelRangeCard}} {\tiny\ttfamily Paper4dFrontier/\allowbreak Realizability.lean}
\item \textbf{\nolinkurl{FR45}}\hypertarget{lh:FR45}{}\enspace{\ttfamily\nolinkurl{setoidRealizingProblem_decisionEquiv_iff}} {\tiny\ttfamily Paper4dFrontier/\allowbreak Realizability.lean}
\item \textbf{\nolinkurl{FR46}}\hypertarget{lh:FR46}{}\enspace{\ttfamily\nolinkurl{setoidRealizingProblemQuotientEquivSetoidQuotient_apply_quotientMap}} {\tiny\ttfamily Paper4dFrontier/\allowbreak Realizability.lean}
\item \textbf{\nolinkurl{FR47}}\hypertarget{lh:FR47}{}\enspace{\ttfamily\nolinkurl{exists_decisionProblem_realizing_setoid}} {\tiny\ttfamily Paper4dFrontier/\allowbreak Realizability.lean}
\item \textbf{\nolinkurl{FR51}}\hypertarget{lh:FR51}{}\enspace{\ttfamily\nolinkurl{quotientEquiv_of_decisionEquiv_iff_apply_quotientMap}} {\tiny\ttfamily Paper4dFrontier/\allowbreak FamilyAxioms.lean}
\item \textbf{\nolinkurl{FR54}}\hypertarget{lh:FR54}{}\enspace{\ttfamily\nolinkurl{decisionEquiv_duplicateAction_iff}} {\tiny\ttfamily Paper4dFrontier/\allowbreak ClosureLaws.lean}
\item \textbf{\nolinkurl{FR55}}\hypertarget{lh:FR55}{}\enspace{\ttfamily\nolinkurl{isSufficient_duplicateAction_iff}} {\tiny\ttfamily Paper4dFrontier/\allowbreak ClosureLaws.lean}
\item \textbf{\nolinkurl{FR56}}\hypertarget{lh:FR56}{}\enspace{\ttfamily\nolinkurl{isRelevant_duplicateAction_iff}} {\tiny\ttfamily Paper4dFrontier/\allowbreak ClosureLaws.lean}
\item \textbf{\nolinkurl{FR57}}\hypertarget{lh:FR57}{}\enspace{\ttfamily\nolinkurl{decisionEquiv_duplicateState_iff}} {\tiny\ttfamily Paper4dFrontier/\allowbreak ClosureLaws.lean}
\item \textbf{\nolinkurl{FR58}}\hypertarget{lh:FR58}{}\enspace{\ttfamily\nolinkurl{isSufficient_duplicateState_iff}} {\tiny\ttfamily Paper4dFrontier/\allowbreak ClosureLaws.lean}
\item \textbf{\nolinkurl{FR59}}\hypertarget{lh:FR59}{}\enspace{\ttfamily\nolinkurl{isRelevant_duplicateState_iff}} {\tiny\ttfamily Paper4dFrontier/\allowbreak ClosureLaws.lean}
\item \textbf{\nolinkurl{FR60}}\hypertarget{lh:FR60}{}\enspace{\ttfamily\nolinkurl{duplicateStateQuotientEquivOriginal_apply_quotientMap}} {\tiny\ttfamily Paper4dFrontier/\allowbreak ClosureLaws.lean}
\item \textbf{\nolinkurl{FR68}}\hypertarget{lh:FR68}{}\enspace{\ttfamily\nolinkurl{decisionEquiv_addDuplicateState_iff}} {\tiny\ttfamily Paper4dFrontier/\allowbreak ClosureLaws.lean}
\item \textbf{\nolinkurl{FR69}}\hypertarget{lh:FR69}{}\enspace{\ttfamily\nolinkurl{isSufficient_addDuplicateState_iff}} {\tiny\ttfamily Paper4dFrontier/\allowbreak ClosureLaws.lean}
\item \textbf{\nolinkurl{FR70}}\hypertarget{lh:FR70}{}\enspace{\ttfamily\nolinkurl{isRelevant_addDuplicateState_iff}} {\tiny\ttfamily Paper4dFrontier/\allowbreak ClosureLaws.lean}
\item \textbf{\nolinkurl{FR71}}\hypertarget{lh:FR71}{}\enspace{\ttfamily\nolinkurl{addDuplicateStateQuotientEquivOriginal_apply_quotientMap}} {\tiny\ttfamily Paper4dFrontier/\allowbreak ClosureLaws.lean}
\item \textbf{\nolinkurl{FR72}}\hypertarget{lh:FR72}{}\enspace{\ttfamily\nolinkurl{some_mem_opt_addDuplicateAction_iff}} {\tiny\ttfamily Paper4dFrontier/\allowbreak ClosureLaws.lean}
\item \textbf{\nolinkurl{FR73}}\hypertarget{lh:FR73}{}\enspace{\ttfamily\nolinkurl{none_mem_opt_addDuplicateAction_iff}} {\tiny\ttfamily Paper4dFrontier/\allowbreak ClosureLaws.lean}
\item \textbf{\nolinkurl{FR83}}\hypertarget{lh:FR83}{}\enspace{\ttfamily\nolinkurl{isSufficient_noiseExtended_iff}} {\tiny\ttfamily Paper4dFrontier/\allowbreak DimensionalNoiseExtension.lean}
\item \textbf{\nolinkurl{FR84}}\hypertarget{lh:FR84}{}\enspace{\ttfamily\nolinkurl{isRelevant_noiseExtended_iff}} {\tiny\ttfamily Paper4dFrontier/\allowbreak DimensionalNoiseExtension.lean}
\item \textbf{\nolinkurl{FR85}}\hypertarget{lh:FR85}{}\enspace{\ttfamily\nolinkurl{lastCoord_irrelevant_noiseExtended}} {\tiny\ttfamily Paper4dFrontier/\allowbreak DimensionalNoiseExtension.lean}
\item \textbf{\nolinkurl{FR118}}\hypertarget{lh:FR118}{}\enspace{\ttfamily\nolinkurl{pairCrossDifference_eq_binaryCrossDifference_of_lt}} {\tiny\ttfamily Paper4dFrontier/\allowbreak BinaryPairwiseDichotomy.lean}
\item \textbf{\nolinkurl{FR119}}\hypertarget{lh:FR119}{}\enspace{\ttfamily\nolinkurl{pairwise_zero_crossDifference_unaryDecomposition}} {\tiny\ttfamily Paper4dFrontier/\allowbreak BinaryPairwiseDichotomy.lean}
\item \textbf{\nolinkurl{FR120}}\hypertarget{lh:FR120}{}\enspace{\ttfamily\nolinkurl{binary_pairwise_symmetry_dichotomy}} {\tiny\ttfamily Paper4dFrontier/\allowbreak BinaryPairwiseDichotomy.lean}
\item \textbf{\nolinkurl{FR121}}\hypertarget{lh:FR121}{}\enspace{\ttfamily\nolinkurl{actionGapCrossDifference_eq_binaryCrossDifference_of_lt}} {\tiny\ttfamily Paper4dFrontier/\allowbreak DecisionRelevantPairwiseDichotomy.lean}
\item \textbf{\nolinkurl{FR122}}\hypertarget{lh:FR122}{}\enspace{\ttfamily\nolinkurl{decisionRelevant_zero_actionGap_implies_unaryReduction}} {\tiny\ttfamily Paper4dFrontier/\allowbreak DecisionRelevantPairwiseDichotomy.lean}
\item \textbf{\nolinkurl{FR123}}\hypertarget{lh:FR123}{}\enspace{\ttfamily\nolinkurl{binary_pairwise_symmetry_decision_relevant_dichotomy}} {\tiny\ttfamily Paper4dFrontier/\allowbreak DecisionRelevantPairwiseDichotomy.lean}
\item \textbf{\nolinkurl{FR124}}\hypertarget{lh:FR124}{}\enspace{\ttfamily\nolinkurl{block6_genuine_interaction_dichotomy_obstruction}} {\tiny\ttfamily Paper4dFrontier/\allowbreak Block6Obstruction.lean}
\item \textbf{\nolinkurl{FR125}}\hypertarget{lh:FR125}{}\enspace{\ttfamily\nolinkurl{decisionRelevantInteractionGraph_addActionOffset}} {\tiny\ttfamily Paper4dFrontier/\allowbreak DecisionRelevantPairwiseDichotomy.lean}
\item \textbf{\nolinkurl{FR126}}\hypertarget{lh:FR126}{}\enspace{\ttfamily\nolinkurl{action_offset_can_force_constant_optimizer}} {\tiny\ttfamily Paper4dFrontier/\allowbreak Block6Obstruction.lean}
\item \textbf{\nolinkurl{FR127}}\hypertarget{lh:FR127}{}\enspace{\ttfamily\nolinkurl{block6_action_offset_obstruction}} {\tiny\ttfamily Paper4dFrontier/\allowbreak Block6Obstruction.lean}
\item \textbf{\nolinkurl{FR128}}\hypertarget{lh:FR128}{}\enspace{\ttfamily\nolinkurl{offsetNormalizedDecisionRelevantInteractionGraph_wellDefined}} {\tiny\ttfamily Paper4dFrontier/\allowbreak DecisionRelevantPairwiseDichotomy.lean}
\item \textbf{\nolinkurl{FR129}}\hypertarget{lh:FR129}{}\enspace{\ttfamily\nolinkurl{block6_offset_normalized_obstruction}} {\tiny\ttfamily Paper4dFrontier/\allowbreak Block6Obstruction.lean}
\item \textbf{\nolinkurl{FR130}}\hypertarget{lh:FR130}{}\enspace{\ttfamily\nolinkurl{neverOptimalGhost_decisionRelevantGraph_eq_top}} {\tiny\ttfamily Paper4dFrontier/\allowbreak Block6Obstruction.lean}
\item \textbf{\nolinkurl{FR131}}\hypertarget{lh:FR131}{}\enspace{\ttfamily\nolinkurl{neverOptimalGhost_supportedGraph_eq_bot}} {\tiny\ttfamily Paper4dFrontier/\allowbreak Block6Obstruction.lean}
\item \textbf{\nolinkurl{FR132}}\hypertarget{lh:FR132}{}\enspace{\ttfamily\nolinkurl{support_filtering_removes_known_block6_obstructions}} {\tiny\ttfamily Paper4dFrontier/\allowbreak Block6Obstruction.lean}
\item \textbf{\nolinkurl{FR133}}\hypertarget{lh:FR133}{}\enspace{\ttfamily\nolinkurl{block6_ghost_action_obstruction}} {\tiny\ttfamily Paper4dFrontier/\allowbreak Block6Obstruction.lean}
\item \textbf{\nolinkurl{FR134}}\hypertarget{lh:FR134}{}\enspace{\ttfamily\nolinkurl{marginMasking_supportedGraph_eq_top}} {\tiny\ttfamily Paper4dFrontier/\allowbreak Block6Obstruction.lean}
\item \textbf{\nolinkurl{FR135}}\hypertarget{lh:FR135}{}\enspace{\ttfamily\nolinkurl{marginMasking_zero_sufficient}} {\tiny\ttfamily Paper4dFrontier/\allowbreak Block6Obstruction.lean}
\item \textbf{\nolinkurl{FR136}}\hypertarget{lh:FR136}{}\enspace{\ttfamily\nolinkurl{block6_optimizer_supported_obstruction}} {\tiny\ttfamily Paper4dFrontier/\allowbreak Block6Obstruction.lean}
\item \textbf{\nolinkurl{FR137}}\hypertarget{lh:FR137}{}\enspace{\ttfamily\nolinkurl{MarginBounded}} {\tiny\ttfamily Paper4dFrontier/\allowbreak DecisionRelevantPairwiseDichotomy.lean}
\item \textbf{\nolinkurl{FR138}}\hypertarget{lh:FR138}{}\enspace{\ttfamily\nolinkurl{dominantPair_marginBounded}} {\tiny\ttfamily Paper4dFrontier/\allowbreak Block6Obstruction.lean}
\item \textbf{\nolinkurl{FR139}}\hypertarget{lh:FR139}{}\enspace{\ttfamily\nolinkurl{dominantPair_supportedGraph_eq_top}} {\tiny\ttfamily Paper4dFrontier/\allowbreak Block6Obstruction.lean}
\item \textbf{\nolinkurl{FR140}}\hypertarget{lh:FR140}{}\enspace{\ttfamily\nolinkurl{block6_margin_bounded_obstruction}} {\tiny\ttfamily Paper4dFrontier/\allowbreak Block6Obstruction.lean}
\item \textbf{\nolinkurl{FR146}}\hypertarget{lh:FR146}{}\enspace{\ttfamily\nolinkurl{AdmissibleNormalizationPredicate}} {\tiny\ttfamily Paper4dFrontier/\allowbreak AdmissibleCharacterization.lean}
\item \textbf{\nolinkurl{FR176}}\hypertarget{lh:FR176}{}\enspace{\ttfamily\nolinkurl{DecisionQuotient.DecisionProblem.quotient_is_coarsest}} {\tiny\ttfamily paper4/\allowbreak DecisionQuotient/\allowbreak Quotient.lean}
\item \textbf{\nolinkurl{FR177}}\hypertarget{lh:FR177}{}\enspace{\ttfamily\nolinkurl{DecisionQuotient.DecisionProblem.quotient_has_unique_factorization}} {\tiny\ttfamily paper4/\allowbreak DecisionQuotient/\allowbreak Quotient.lean}
\item \textbf{\nolinkurl{FR183}}\hypertarget{lh:FR183}{}\enspace{\ttfamily\nolinkurl{DecisionQuotient.Physics.single_future_zero_cost}} {\tiny\ttfamily paper4/\allowbreak DecisionQuotient/\allowbreak Physics/\allowbreak WassersteinIntegrity.lean}
\item \textbf{\nolinkurl{FR184}}\hypertarget{lh:FR184}{}\enspace{\ttfamily\nolinkurl{DecisionQuotient.Physics.transportCost_pos_of_offDiag}} {\tiny\ttfamily paper4/\allowbreak DecisionQuotient/\allowbreak Physics/\allowbreak WassersteinIntegrity.lean}
\item \textbf{\nolinkurl{FR185}}\hypertarget{lh:FR185}{}\enspace{\ttfamily\nolinkurl{no_closureInvariant_predicate_of_orbit_gap}} {\tiny\ttfamily Paper4dFrontier/\allowbreak ObstructionPredicateCandidates.lean}
\item \textbf{\nolinkurl{FR186}}\hypertarget{lh:FR186}{}\enspace{\ttfamily\nolinkurl{no_admissibleNormalizationPredicate_decides_dominantPair}} {\tiny\ttfamily Paper4dFrontier/\allowbreak ObstructionPredicateCandidates.lean}
\item \textbf{\nolinkurl{FR187}}\hypertarget{lh:FR187}{}\enspace{\ttfamily\nolinkurl{no_admissibleNormalizationPredicate_decides_marginBounded}} {\tiny\ttfamily Paper4dFrontier/\allowbreak ObstructionPredicateCandidates.lean}
\item \textbf{\nolinkurl{FR188}}\hypertarget{lh:FR188}{}\enspace{\ttfamily\nolinkurl{no_admissibleNormalizationPredicate_decides_ghostActionTwoPairCrossOne}} {\tiny\ttfamily Paper4dFrontier/\allowbreak ObstructionPredicateCandidates.lean}
\item \textbf{\nolinkurl{FR189}}\hypertarget{lh:FR189}{}\enspace{\ttfamily\nolinkurl{no_admissibleNormalizationPredicate_decides_offsetActionZeroPairCrossOne}} {\tiny\ttfamily Paper4dFrontier/\allowbreak ObstructionPredicateCandidates.lean}
\item \textbf{\nolinkurl{FR190}}\hypertarget{lh:FR190}{}\enspace{\ttfamily\nolinkurl{admissibleCollapseLandscapeInfinity_full_paper}} {\tiny\ttfamily Paper4dFrontier/\allowbreak ObstructionPredicateCandidates.lean}
\item \textbf{\nolinkurl{FR191}}\hypertarget{lh:FR191}{}\enspace{\ttfamily\nolinkurl{ClosureSoundPackage}} {\tiny\ttfamily Paper4dFrontier/\allowbreak ObstructionPredicateCandidates.lean}
\item \textbf{\nolinkurl{FR192}}\hypertarget{lh:FR192}{}\enspace{\ttfamily\nolinkurl{ReasonableGuardrailPackage}} {\tiny\ttfamily Paper4dFrontier/\allowbreak ObstructionPredicateCandidates.lean}
\item \textbf{\nolinkurl{FR193}}\hypertarget{lh:FR193}{}\enspace{\ttfamily\nolinkurl{no_closureSoundPackage_predicate_decides_each_obstruction_family}} {\tiny\ttfamily Paper4dFrontier/\allowbreak ObstructionPredicateCandidates.lean}
\item \textbf{\nolinkurl{FR197}}\hypertarget{lh:FR197}{}\enspace{\ttfamily\nolinkurl{classifier_agrees_on_closureEquivalent_of_correctOnDomain}} {\tiny\ttfamily Paper4dFrontier/\allowbreak AdmissibleCharacterization.lean}
\item \textbf{\nolinkurl{FR198}}\hypertarget{lh:FR198}{}\enspace{\ttfamily\nolinkurl{no_correctOnDomain_classifier_of_orbit_gap}} {\tiny\ttfamily Paper4dFrontier/\allowbreak AdmissibleCharacterization.lean}
\item \textbf{\nolinkurl{FR199}}\hypertarget{lh:FR199}{}\enspace{\ttfamily\nolinkurl{correct_classifier_inherits_closureLawInvariant}} {\tiny\ttfamily Paper4dFrontier/\allowbreak AdmissibleCharacterization.lean}
\item \textbf{\nolinkurl{FR200}}\hypertarget{lh:FR200}{}\enspace{\ttfamily\nolinkurl{admissibleNormalizationPredicate_has_explicit_inhabitants}} {\tiny\ttfamily Paper4dFrontier/\allowbreak AdmissibleCharacterization.lean}
\item \textbf{\nolinkurl{FR201}}\hypertarget{lh:FR201}{}\enspace{\ttfamily\nolinkurl{closureLawInvariant_of_iff_of_closureEquivalent}} {\tiny\ttfamily Paper4dFrontier/\allowbreak AdmissibleCharacterization.lean}
\item \textbf{\nolinkurl{FR202}}\hypertarget{lh:FR202}{}\enspace{\ttfamily\nolinkurl{exists_orbit_gap_of_not_closureLawInvariant}} {\tiny\ttfamily Paper4dFrontier/\allowbreak AdmissibleCharacterization.lean}
\item \textbf{\nolinkurl{FR203}}\hypertarget{lh:FR203}{}\enspace{\ttfamily\nolinkurl{closureLawInvariant_iff_no_orbit_gap}} {\tiny\ttfamily Paper4dFrontier/\allowbreak AdmissibleCharacterization.lean}
\item \textbf{\nolinkurl{FR204}}\hypertarget{lh:FR204}{}\enspace{\ttfamily\nolinkurl{no_exact_closureLawInvariant_classifier_iff_exists_orbit_gap}} {\tiny\ttfamily Paper4dFrontier/\allowbreak AdmissibleCharacterization.lean}
\item \textbf{\nolinkurl{FR206}}\hypertarget{lh:FR206}{}\enspace{\ttfamily\nolinkurl{actionCount_profileCompressedInstance}} {\tiny\ttfamily Paper4dFrontier/\allowbreak DistinctActionProfiles.lean}
\item \textbf{\nolinkurl{FR207}}\hypertarget{lh:FR207}{}\enspace{\ttfamily\nolinkurl{decisionEquiv_profileCompressedInstance_iff}} {\tiny\ttfamily Paper4dFrontier/\allowbreak DistinctActionProfiles.lean}
\item \textbf{\nolinkurl{FR208}}\hypertarget{lh:FR208}{}\enspace{\ttfamily\nolinkurl{profileCompressedInstance_preserves_exactCertification}} {\tiny\ttfamily Paper4dFrontier/\allowbreak DistinctActionProfiles.lean}
\item \textbf{\nolinkurl{FR209}}\hypertarget{lh:FR209}{}\enspace{\ttfamily\nolinkurl{profileCompressedInstance_bounded_actions}} {\tiny\ttfamily Paper4dFrontier/\allowbreak DistinctActionProfiles.lean}
\item \textbf{\nolinkurl{FR210}}\hypertarget{lh:FR210}{}\enspace{\ttfamily\nolinkurl{OrbitGapOn}} {\tiny\ttfamily Paper4dFrontier/\allowbreak AdmissibleCharacterization.lean}
\item \textbf{\nolinkurl{FR211}}\hypertarget{lh:FR211}{}\enspace{\ttfamily\nolinkurl{no_orbitGapOn_of_exact_classifiable_by_closureLawInvariant_onDomain}} {\tiny\ttfamily Paper4dFrontier/\allowbreak AdmissibleCharacterization.lean}
\item \textbf{\nolinkurl{FR212}}\hypertarget{lh:FR212}{}\enspace{\ttfamily\nolinkurl{exact_classifiable_by_closureLawInvariant_onDomain_iff_no_orbitGapOn}} {\tiny\ttfamily Paper4dFrontier/\allowbreak AdmissibleCharacterization.lean}
\item \textbf{\nolinkurl{FR213}}\hypertarget{lh:FR213}{}\enspace{\ttfamily\nolinkurl{no_exact_closureLawInvariant_classifier_onDomain_iff_orbitGapOn}} {\tiny\ttfamily Paper4dFrontier/\allowbreak AdmissibleCharacterization.lean}
\item \textbf{\nolinkurl{FR217}}\hypertarget{lh:FR217}{}\enspace{\ttfamily\nolinkurl{payloadSufficient_iff_realizingProblem_isSufficient}} {\tiny\ttfamily Paper4dFrontier/\allowbreak Realizability.lean}
\item \textbf{\nolinkurl{FR218}}\hypertarget{lh:FR218}{}\enspace{\ttfamily\nolinkurl{payloadRelevant_iff_realizingProblem_isRelevant}} {\tiny\ttfamily Paper4dFrontier/\allowbreak Realizability.lean}
\item \textbf{\nolinkurl{FR219}}\hypertarget{lh:FR219}{}\enspace{\ttfamily\nolinkurl{payloadIrrelevant_iff_realizingProblem_isIrrelevant}} {\tiny\ttfamily Paper4dFrontier/\allowbreak Realizability.lean}
\item \textbf{\nolinkurl{FR220}}\hypertarget{lh:FR220}{}\enspace{\ttfamily\nolinkurl{payloadMinimalSufficient_iff_realizingProblem_isMinimalSufficient}} {\tiny\ttfamily Paper4dFrontier/\allowbreak Realizability.lean}
\item \textbf{\nolinkurl{FR221}}\hypertarget{lh:FR221}{}\enspace{\ttfamily\nolinkurl{payloadSufficientSets_eq_realizingProblem_sufficientSets}} {\tiny\ttfamily Paper4dFrontier/\allowbreak Realizability.lean}
\item \textbf{\nolinkurl{FR222}}\hypertarget{lh:FR222}{}\enspace{\ttfamily\nolinkurl{payloadRelevantSet_eq_realizingProblem_relevantSet}} {\tiny\ttfamily Paper4dFrontier/\allowbreak Realizability.lean}
\item \textbf{\nolinkurl{FR223}}\hypertarget{lh:FR223}{}\enspace{\ttfamily\nolinkurl{feasiblePayloadSufficient_iff_lawDecisionProblem_isSufficient}} {\tiny\ttfamily Paper4dFrontier/\allowbreak Realizability.lean}
\item \textbf{\nolinkurl{FR224}}\hypertarget{lh:FR224}{}\enspace{\ttfamily\nolinkurl{feasiblePayloadRelevant_iff_lawDecisionProblem_isRelevant}} {\tiny\ttfamily Paper4dFrontier/\allowbreak Realizability.lean}
\item \textbf{\nolinkurl{FR225}}\hypertarget{lh:FR225}{}\enspace{\ttfamily\nolinkurl{feasiblePayloadIrrelevant_iff_lawDecisionProblem_isIrrelevant}} {\tiny\ttfamily Paper4dFrontier/\allowbreak Realizability.lean}
\item \textbf{\nolinkurl{FR226}}\hypertarget{lh:FR226}{}\enspace{\ttfamily\nolinkurl{setValuedPayloadSufficient_iff_totalizedLawDecisionProblem_isSufficient}} {\tiny\ttfamily Paper4dFrontier/\allowbreak Realizability.lean}
\item \textbf{\nolinkurl{FR227}}\hypertarget{lh:FR227}{}\enspace{\ttfamily\nolinkurl{setValuedPayloadRelevant_iff_totalizedLawDecisionProblem_isRelevant}} {\tiny\ttfamily Paper4dFrontier/\allowbreak Realizability.lean}
\item \textbf{\nolinkurl{FR228}}\hypertarget{lh:FR228}{}\enspace{\ttfamily\nolinkurl{setValuedPayloadIrrelevant_iff_totalizedLawDecisionProblem_isIrrelevant}} {\tiny\ttfamily Paper4dFrontier/\allowbreak Realizability.lean}
\item \textbf{\nolinkurl{FR229}}\hypertarget{lh:FR229}{}\enspace{\ttfamily\nolinkurl{outputSemanticsSufficient_iff_totalizedLawDecisionProblem_isSufficient}} {\tiny\ttfamily Paper4dFrontier/\allowbreak Realizability.lean}
\item \textbf{\nolinkurl{FR230}}\hypertarget{lh:FR230}{}\enspace{\ttfamily\nolinkurl{outputSemanticsRelevant_iff_totalizedLawDecisionProblem_isRelevant}} {\tiny\ttfamily Paper4dFrontier/\allowbreak Realizability.lean}
\item \textbf{\nolinkurl{FR231}}\hypertarget{lh:FR231}{}\enspace{\ttfamily\nolinkurl{outputSemanticsIrrelevant_iff_totalizedLawDecisionProblem_isIrrelevant}} {\tiny\ttfamily Paper4dFrontier/\allowbreak Realizability.lean}
\item \textbf{\nolinkurl{FR232}}\hypertarget{lh:FR232}{}\enspace{\ttfamily\nolinkurl{approximationSemanticsSufficient_iff_totalizedLawDecisionProblem_isSufficient}} {\tiny\ttfamily Paper4dFrontier/\allowbreak Realizability.lean}
\item \textbf{\nolinkurl{FR233}}\hypertarget{lh:FR233}{}\enspace{\ttfamily\nolinkurl{approximationSemanticsRelevant_iff_totalizedLawDecisionProblem_isRelevant}} {\tiny\ttfamily Paper4dFrontier/\allowbreak Realizability.lean}
\item \textbf{\nolinkurl{FR234}}\hypertarget{lh:FR234}{}\enspace{\ttfamily\nolinkurl{approximationSemanticsIrrelevant_iff_totalizedLawDecisionProblem_isIrrelevant}} {\tiny\ttfamily Paper4dFrontier/\allowbreak Realizability.lean}
\item \textbf{\nolinkurl{FR235}}\hypertarget{lh:FR235}{}\enspace{\ttfamily\nolinkurl{outputSemanticsSufficientSets_eq_of_outputSemanticsEquivalent}} {\tiny\ttfamily Paper4dFrontier/\allowbreak Realizability.lean}
\item \textbf{\nolinkurl{FR236}}\hypertarget{lh:FR236}{}\enspace{\ttfamily\nolinkurl{outputSemanticsRelevantSet_eq_of_outputSemanticsEquivalent}} {\tiny\ttfamily Paper4dFrontier/\allowbreak Realizability.lean}
\item \textbf{\nolinkurl{FR237}}\hypertarget{lh:FR237}{}\enspace{\ttfamily\nolinkurl{quotientMap_realizes_setoid_asOutputSemantics}} {\tiny\ttfamily Paper4dFrontier/\allowbreak Realizability.lean}
\item \textbf{\nolinkurl{FR238}}\hypertarget{lh:FR238}{}\enspace{\ttfamily\nolinkurl{quotientOutputSemanticsSufficient_iff_relationSufficient}} {\tiny\ttfamily Paper4dFrontier/\allowbreak Realizability.lean}
\item \textbf{\nolinkurl{FR239}}\hypertarget{lh:FR239}{}\enspace{\ttfamily\nolinkurl{quotientOutputSemanticsRelevant_iff_relationRelevant}} {\tiny\ttfamily Paper4dFrontier/\allowbreak Realizability.lean}
\item \textbf{\nolinkurl{FR240}}\hypertarget{lh:FR240}{}\enspace{\ttfamily\nolinkurl{exists_outputSemantics_realizing_setoid}} {\tiny\ttfamily Paper4dFrontier/\allowbreak Realizability.lean}
\item \textbf{\nolinkurl{FR241}}\hypertarget{lh:FR241}{}\enspace{\ttfamily\nolinkurl{outputSemanticsSetoid}} {\tiny\ttfamily Paper4dFrontier/\allowbreak Realizability.lean}
\item \textbf{\nolinkurl{FR242}}\hypertarget{lh:FR242}{}\enspace{\ttfamily\nolinkurl{SemanticallyExtensionalMap}} {\tiny\ttfamily Paper4dFrontier/\allowbreak Realizability.lean}
\item \textbf{\nolinkurl{FR243}}\hypertarget{lh:FR243}{}\enspace{\ttfamily\nolinkurl{semanticallyExtensionalMap_factors_through_outputSemanticsQuotient}} {\tiny\ttfamily Paper4dFrontier/\allowbreak Realizability.lean}
\item \textbf{\nolinkurl{FR244}}\hypertarget{lh:FR244}{}\enspace{\ttfamily\nolinkurl{semanticallyExtensionalClaim_factors_through_outputSemanticsQuotient}} {\tiny\ttfamily Paper4dFrontier/\allowbreak Realizability.lean}
\item \textbf{\nolinkurl{FR248}}\hypertarget{lh:FR248}{}\enspace{\ttfamily\nolinkurl{optimizerComputation_polytime_classifier_agrees_on_closureEquivalent_of_correctOnDomain}} {\tiny\ttfamily Paper4dFrontier/\allowbreak ComputeCostApplications.lean}
\item \textbf{\nolinkurl{FR250}}\hypertarget{lh:FR250}{}\enspace{\ttfamily\nolinkurl{no_correct_optimizerComputation_polytime_classifier_decides_dominantPair}} {\tiny\ttfamily Paper4dFrontier/\allowbreak ComputeCostApplications.lean}
\item \textbf{\nolinkurl{FR251}}\hypertarget{lh:FR251}{}\enspace{\ttfamily\nolinkurl{no_admissibleNormalizationPredicate_optimizerComputation_polytime_and_dominantPair}} {\tiny\ttfamily Paper4dFrontier/\allowbreak ComputeCostApplications.lean}
\item \textbf{\nolinkurl{FR252}}\hypertarget{lh:FR252}{}\enspace{\ttfamily\nolinkurl{CorrectnessForcesOrbitAgreementOnDomain}} {\tiny\ttfamily Paper4dFrontier/\allowbreak AdmissibleCharacterization.lean}
\item \textbf{\nolinkurl{FR253}}\hypertarget{lh:FR253}{}\enspace{\ttfamily\nolinkurl{no_orbitGapOn_of_correct_classifier_onDomain_of_forcedOrbitAgreement}} {\tiny\ttfamily Paper4dFrontier/\allowbreak AdmissibleCharacterization.lean}
\item \textbf{\nolinkurl{FR254}}\hypertarget{lh:FR254}{}\enspace{\ttfamily\nolinkurl{correct_classifier_onDomain_iff_no_orbitGapOn_of_forcedOrbitAgreement}} {\tiny\ttfamily Paper4dFrontier/\allowbreak AdmissibleCharacterization.lean}
\item \textbf{\nolinkurl{FR255}}\hypertarget{lh:FR255}{}\enspace{\ttfamily\nolinkurl{no_correct_classifier_onDomain_iff_orbitGapOn_of_forcedOrbitAgreement}} {\tiny\ttfamily Paper4dFrontier/\allowbreak AdmissibleCharacterization.lean}
\item \textbf{\nolinkurl{FR256}}\hypertarget{lh:FR256}{}\enspace{\ttfamily\nolinkurl{optimizerComputation_polytime_correct_classifier_onDomain_iff_no_orbitGapOn}} {\tiny\ttfamily Paper4dFrontier/\allowbreak ComputeCostApplications.lean}
\item \textbf{\nolinkurl{FR259}}\hypertarget{lh:FR259}{}\enspace{\ttfamily\nolinkurl{optimizerSetPayload_polytime_classifier_agrees_on_closureEquivalent_of_correctOnDomain}} {\tiny\ttfamily Paper4dFrontier/\allowbreak ComputeCostApplications.lean}
\item \textbf{\nolinkurl{FR260}}\hypertarget{lh:FR260}{}\enspace{\ttfamily\nolinkurl{optimizerSetPayload_polytime_classifier_agrees_on_closureEquivalent}} {\tiny\ttfamily Paper4dFrontier/\allowbreak ComputeCostApplications.lean}
\item \textbf{\nolinkurl{FR261}}\hypertarget{lh:FR261}{}\enspace{\ttfamily\nolinkurl{optimizerSetSearch_polytime_classifier_agrees_on_closureEquivalent_of_correctOnDomain}} {\tiny\ttfamily Paper4dFrontier/\allowbreak ComputeCostApplications.lean}
\item \textbf{\nolinkurl{FR262}}\hypertarget{lh:FR262}{}\enspace{\ttfamily\nolinkurl{optimizerSetSearch_polytime_classifier_agrees_on_closureEquivalent}} {\tiny\ttfamily Paper4dFrontier/\allowbreak ComputeCostApplications.lean}
\item \textbf{\nolinkurl{FR263}}\hypertarget{lh:FR263}{}\enspace{\ttfamily\nolinkurl{relevant_of_uniformApprox_of_strict_gap_witness}} {\tiny\ttfamily Paper4dFrontier/\allowbreak ApproximateAdmissibility.lean}
\item \textbf{\nolinkurl{FR264}}\hypertarget{lh:FR264}{}\enspace{\ttfamily\nolinkurl{relevance_can_flip_under_arbitrarily_small_uniform_perturbation}} {\tiny\ttfamily Paper4dFrontier/\allowbreak ApproximateAdmissibility.lean}
\item \textbf{\nolinkurl{FR265}}\hypertarget{lh:FR265}{}\enspace{\ttfamily\nolinkurl{not_decisionEquiv_of_uniformApprox_of_strict_gap_witness}} {\tiny\ttfamily Paper4dFrontier/\allowbreak ApproximateAdmissibility.lean}
\item \textbf{\nolinkurl{FR266}}\hypertarget{lh:FR266}{}\enspace{\ttfamily\nolinkurl{not_sufficient_of_uniformApprox_of_strict_gap_witness}} {\tiny\ttfamily Paper4dFrontier/\allowbreak ApproximateAdmissibility.lean}
\item \textbf{\nolinkurl{FR267}}\hypertarget{lh:FR267}{}\enspace{\ttfamily\nolinkurl{sufficiency_can_flip_under_arbitrarily_small_uniform_perturbation}} {\tiny\ttfamily Paper4dFrontier/\allowbreak ApproximateAdmissibility.lean}
\item \textbf{\nolinkurl{FR268}}\hypertarget{lh:FR268}{}\enspace{\ttfamily\nolinkurl{pacGuaranteeSemanticsSufficient_iff_totalizedLawDecisionProblem_isSufficient}} {\tiny\ttfamily Paper4dFrontier/\allowbreak Realizability.lean}
\item \textbf{\nolinkurl{FR269}}\hypertarget{lh:FR269}{}\enspace{\ttfamily\nolinkurl{pacGuaranteeSemanticsRelevant_iff_totalizedLawDecisionProblem_isRelevant}} {\tiny\ttfamily Paper4dFrontier/\allowbreak Realizability.lean}
\item \textbf{\nolinkurl{FR270}}\hypertarget{lh:FR270}{}\enspace{\ttfamily\nolinkurl{pacGuaranteeSemanticsIrrelevant_iff_totalizedLawDecisionProblem_isIrrelevant}} {\tiny\ttfamily Paper4dFrontier/\allowbreak Realizability.lean}
\item \textbf{\nolinkurl{FR271}}\hypertarget{lh:FR271}{}\enspace{\ttfamily\nolinkurl{regretGuaranteeSemanticsSufficient_iff_totalizedLawDecisionProblem_isSufficient}} {\tiny\ttfamily Paper4dFrontier/\allowbreak Realizability.lean}
\item \textbf{\nolinkurl{FR272}}\hypertarget{lh:FR272}{}\enspace{\ttfamily\nolinkurl{regretGuaranteeSemanticsRelevant_iff_totalizedLawDecisionProblem_isRelevant}} {\tiny\ttfamily Paper4dFrontier/\allowbreak Realizability.lean}
\item \textbf{\nolinkurl{FR273}}\hypertarget{lh:FR273}{}\enspace{\ttfamily\nolinkurl{regretGuaranteeSemanticsIrrelevant_iff_totalizedLawDecisionProblem_isIrrelevant}} {\tiny\ttfamily Paper4dFrontier/\allowbreak Realizability.lean}
\item \textbf{\nolinkurl{FR274}}\hypertarget{lh:FR274}{}\enspace{\ttfamily\nolinkurl{statisticalRiskSemanticsSufficient_iff_totalizedLawDecisionProblem_isSufficient}} {\tiny\ttfamily Paper4dFrontier/\allowbreak Realizability.lean}
\item \textbf{\nolinkurl{FR275}}\hypertarget{lh:FR275}{}\enspace{\ttfamily\nolinkurl{statisticalRiskSemanticsRelevant_iff_totalizedLawDecisionProblem_isRelevant}} {\tiny\ttfamily Paper4dFrontier/\allowbreak Realizability.lean}
\item \textbf{\nolinkurl{FR276}}\hypertarget{lh:FR276}{}\enspace{\ttfamily\nolinkurl{statisticalRiskSemanticsIrrelevant_iff_totalizedLawDecisionProblem_isIrrelevant}} {\tiny\ttfamily Paper4dFrontier/\allowbreak Realizability.lean}
\item \textbf{\nolinkurl{FR277}}\hypertarget{lh:FR277}{}\enspace{\ttfamily\nolinkurl{anytimeGuaranteeSemanticsSufficient_iff_totalizedLawDecisionProblem_isSufficient}} {\tiny\ttfamily Paper4dFrontier/\allowbreak Realizability.lean}
\item \textbf{\nolinkurl{FR278}}\hypertarget{lh:FR278}{}\enspace{\ttfamily\nolinkurl{anytimeGuaranteeSemanticsRelevant_iff_totalizedLawDecisionProblem_isRelevant}} {\tiny\ttfamily Paper4dFrontier/\allowbreak Realizability.lean}
\item \textbf{\nolinkurl{FR279}}\hypertarget{lh:FR279}{}\enspace{\ttfamily\nolinkurl{anytimeGuaranteeSemanticsIrrelevant_iff_totalizedLawDecisionProblem_isIrrelevant}} {\tiny\ttfamily Paper4dFrontier/\allowbreak Realizability.lean}
\item \textbf{\nolinkurl{FR280}}\hypertarget{lh:FR280}{}\enspace{\ttfamily\nolinkurl{finiteHorizonGuaranteeSemanticsSufficient_iff_totalizedLawDecisionProblem_isSufficient}} {\tiny\ttfamily Paper4dFrontier/\allowbreak Realizability.lean}
\item \textbf{\nolinkurl{FR281}}\hypertarget{lh:FR281}{}\enspace{\ttfamily\nolinkurl{finiteHorizonGuaranteeSemanticsRelevant_iff_totalizedLawDecisionProblem_isRelevant}} {\tiny\ttfamily Paper4dFrontier/\allowbreak Realizability.lean}
\item \textbf{\nolinkurl{FR282}}\hypertarget{lh:FR282}{}\enspace{\ttfamily\nolinkurl{finiteHorizonGuaranteeSemanticsIrrelevant_iff_totalizedLawDecisionProblem_isIrrelevant}} {\tiny\ttfamily Paper4dFrontier/\allowbreak Realizability.lean}
\item \textbf{\nolinkurl{FR283}}\hypertarget{lh:FR283}{}\enspace{\ttfamily\nolinkurl{statisticalGuaranteeSemanticsSufficientSets_eq_of_outputSemanticsEquivalent}} {\tiny\ttfamily Paper4dFrontier/\allowbreak Realizability.lean}
\item \textbf{\nolinkurl{FR284}}\hypertarget{lh:FR284}{}\enspace{\ttfamily\nolinkurl{statisticalGuaranteeSemanticsRelevantSet_eq_of_outputSemanticsEquivalent}} {\tiny\ttfamily Paper4dFrontier/\allowbreak Realizability.lean}
\item \textbf{\nolinkurl{FR285}}\hypertarget{lh:FR285}{}\enspace{\ttfamily\nolinkurl{statisticalGuarantee_classifier_agrees_on_closureEquivalent_of_correctOnDomain_of_transfer}} {\tiny\ttfamily Paper4dFrontier/\allowbreak StatisticalSemanticsApplications.lean}
\item \textbf{\nolinkurl{FR286}}\hypertarget{lh:FR286}{}\enspace{\ttfamily\nolinkurl{no_correctOnDomain_statisticalGuarantee_classifier_of_orbit_gap_of_transfer}} {\tiny\ttfamily Paper4dFrontier/\allowbreak StatisticalSemanticsApplications.lean}
\item \textbf{\nolinkurl{FR287}}\hypertarget{lh:FR287}{}\enspace{\ttfamily\nolinkurl{statisticalGuarantee_correct_classifier_onDomain_iff_no_orbitGapOn_of_transfer}} {\tiny\ttfamily Paper4dFrontier/\allowbreak StatisticalSemanticsApplications.lean}
\item \textbf{\nolinkurl{FR288}}\hypertarget{lh:FR288}{}\enspace{\ttfamily\nolinkurl{opt_eq_of_uniformApprox_of_uniformStrictGapCover}} {\tiny\ttfamily Paper4dFrontier/\allowbreak ApproximateAdmissibility.lean}
\item \textbf{\nolinkurl{FR289}}\hypertarget{lh:FR289}{}\enspace{\ttfamily\nolinkurl{sufficientSets_eq_of_uniformApprox_of_uniformStrictGapCover}} {\tiny\ttfamily Paper4dFrontier/\allowbreak ApproximateAdmissibility.lean}
\item \textbf{\nolinkurl{FR290}}\hypertarget{lh:FR290}{}\enspace{\ttfamily\nolinkurl{relevantSet_eq_of_uniformApprox_of_uniformStrictGapCover}} {\tiny\ttfamily Paper4dFrontier/\allowbreak ApproximateAdmissibility.lean}
\item \textbf{\nolinkurl{FR291}}\hypertarget{lh:FR291}{}\enspace{\ttfamily\nolinkurl{randomizedGuaranteeSemanticsSufficient_iff_totalizedLawDecisionProblem_isSufficient}} {\tiny\ttfamily Paper4dFrontier/\allowbreak Realizability.lean}
\item \textbf{\nolinkurl{FR292}}\hypertarget{lh:FR292}{}\enspace{\ttfamily\nolinkurl{randomizedGuaranteeSemanticsRelevant_iff_totalizedLawDecisionProblem_isRelevant}} {\tiny\ttfamily Paper4dFrontier/\allowbreak Realizability.lean}
\item \textbf{\nolinkurl{FR293}}\hypertarget{lh:FR293}{}\enspace{\ttfamily\nolinkurl{randomizedGuaranteeSemanticsIrrelevant_iff_totalizedLawDecisionProblem_isIrrelevant}} {\tiny\ttfamily Paper4dFrontier/\allowbreak Realizability.lean}
\item \textbf{\nolinkurl{FR294}}\hypertarget{lh:FR294}{}\enspace{\ttfamily\nolinkurl{randomizedGuarantee_classifier_agrees_on_closureEquivalent_of_correctOnDomain_of_transfer}} {\tiny\ttfamily Paper4dFrontier/\allowbreak StatisticalSemanticsApplications.lean}
\item \textbf{\nolinkurl{FR295}}\hypertarget{lh:FR295}{}\enspace{\ttfamily\nolinkurl{no_correctOnDomain_randomizedGuarantee_classifier_of_orbit_gap_of_transfer}} {\tiny\ttfamily Paper4dFrontier/\allowbreak StatisticalSemanticsApplications.lean}
\item \textbf{\nolinkurl{FR296}}\hypertarget{lh:FR296}{}\enspace{\ttfamily\nolinkurl{randomizedGuarantee_correct_classifier_onDomain_iff_no_orbitGapOn_of_transfer}} {\tiny\ttfamily Paper4dFrontier/\allowbreak StatisticalSemanticsApplications.lean}
\item \textbf{\nolinkurl{FR297}}\hypertarget{lh:FR297}{}\enspace{\ttfamily\nolinkurl{transferredSemantics_classifier_agrees_on_closureEquivalent_of_correctOnDomain}} {\tiny\ttfamily Paper4dFrontier/\allowbreak StatisticalSemanticsApplications.lean}
\item \textbf{\nolinkurl{FR298}}\hypertarget{lh:FR298}{}\enspace{\ttfamily\nolinkurl{no_correctOnDomain_transferredSemantics_classifier_of_orbit_gap}} {\tiny\ttfamily Paper4dFrontier/\allowbreak StatisticalSemanticsApplications.lean}
\item \textbf{\nolinkurl{FR299}}\hypertarget{lh:FR299}{}\enspace{\ttfamily\nolinkurl{transferredSemantics_correct_classifier_onDomain_iff_no_orbitGapOn}} {\tiny\ttfamily Paper4dFrontier/\allowbreak StatisticalSemanticsApplications.lean}
\item \textbf{\nolinkurl{FR300}}\hypertarget{lh:FR300}{}\enspace{\ttfamily\nolinkurl{decisionEquiv_iff_of_uniformApprox_of_uniformStrictGapCover}} {\tiny\ttfamily Paper4dFrontier/\allowbreak ApproximateAdmissibility.lean}
\item \textbf{\nolinkurl{FR301}}\hypertarget{lh:FR301}{}\enspace{\ttfamily\nolinkurl{isMinimalSufficient_iff_of_uniformApprox_of_uniformStrictGapCover}} {\tiny\ttfamily Paper4dFrontier/\allowbreak ApproximateAdmissibility.lean}
\item \textbf{\nolinkurl{FR302}}\hypertarget{lh:FR302}{}\enspace{\ttfamily\nolinkurl{outputSemanticsExactRelevanceProfile_eq_of_outputSemanticsEquivalent}} {\tiny\ttfamily Paper4dFrontier/\allowbreak Realizability.lean}
\item \textbf{\nolinkurl{FR303}}\hypertarget{lh:FR303}{}\enspace{\ttfamily\nolinkurl{outputSemanticsExactRelevanceProfile_eq_totalizedLawDecisionProblem}} {\tiny\ttfamily Paper4dFrontier/\allowbreak Realizability.lean}
\item \textbf{\nolinkurl{FR304}}\hypertarget{lh:FR304}{}\enspace{\ttfamily\nolinkurl{agreeOn_univ_iff_eq_of_finiteIndicatorCoordinateSpace}} {\tiny\ttfamily Paper4dFrontier/\allowbreak Realizability.lean}
\item \textbf{\nolinkurl{FR305}}\hypertarget{lh:FR305}{}\enspace{\ttfamily\nolinkurl{finite_outputSemantics_allCoordinatesSufficient}} {\tiny\ttfamily Paper4dFrontier/\allowbreak Realizability.lean}
\item \textbf{\nolinkurl{FR306}}\hypertarget{lh:FR306}{}\enspace{\ttfamily\nolinkurl{finite_outputSemantics_realized_by_exactCertification}} {\tiny\ttfamily Paper4dFrontier/\allowbreak Realizability.lean}
\item \textbf{\nolinkurl{FR307}}\hypertarget{lh:FR307}{}\enspace{\ttfamily\nolinkurl{agreeOn_univ_iff_eq_of_singletonIdentityCoordinateSpace}} {\tiny\ttfamily Paper4dFrontier/\allowbreak Realizability.lean}
\item \textbf{\nolinkurl{FR308}}\hypertarget{lh:FR308}{}\enspace{\ttfamily\nolinkurl{outputSemantics_admits_coordinatePresentation}} {\tiny\ttfamily Paper4dFrontier/\allowbreak Realizability.lean}
\item \textbf{\nolinkurl{FR309}}\hypertarget{lh:FR309}{}\enspace{\ttfamily\nolinkurl{outputSemantics_realized_by_singletonMeasurableCoordinatePresentation}} {\tiny\ttfamily Paper4dFrontier/\allowbreak Realizability.lean}
\item \textbf{\nolinkurl{FR310}}\hypertarget{lh:FR310}{}\enspace{\ttfamily\nolinkurl{outputSemantics_realized_by_singletonContinuousCoordinatePresentation}} {\tiny\ttfamily Paper4dFrontier/\allowbreak Realizability.lean}
\item \textbf{\nolinkurl{FR311}}\hypertarget{lh:FR311}{}\enspace{\ttfamily\nolinkurl{encodable_stateSpace_admits_countableBooleanPresentation}} {\tiny\ttfamily Paper4dFrontier/\allowbreak Realizability.lean}
\item \textbf{\nolinkurl{FR312}}\hypertarget{lh:FR312}{}\enspace{\ttfamily\nolinkurl{encodable_discreteMeasurable_stateSpace_admits_measurable_countableBooleanPresentation}} {\tiny\ttfamily Paper4dFrontier/\allowbreak Realizability.lean}
\item \textbf{\nolinkurl{FR313}}\hypertarget{lh:FR313}{}\enspace{\ttfamily\nolinkurl{encodable_discreteTopological_stateSpace_admits_continuous_countableBooleanPresentation}} {\tiny\ttfamily Paper4dFrontier/\allowbreak Realizability.lean}
\item \textbf{\nolinkurl{FR314}}\hypertarget{lh:FR314}{}\enspace{\ttfamily\nolinkurl{agreeOn_univ_iff_eq_of_finiteBinaryCoordinateSpace}} {\tiny\ttfamily Paper4dFrontier/\allowbreak Realizability.lean}
\item \textbf{\nolinkurl{FR315}}\hypertarget{lh:FR315}{}\enspace{\ttfamily\nolinkurl{finite_exactSpecification_admits_lowDimBooleanPresentation}} {\tiny\ttfamily Paper4dFrontier/\allowbreak Realizability.lean}
\item \textbf{\nolinkurl{FR316}}\hypertarget{lh:FR316}{}\enspace{\ttfamily\nolinkurl{IdentityOutputClosureSpec.classifier_agrees_on_closureEquivalent_of_correctOnDomain}} {\tiny\ttfamily Paper4dFrontier/\allowbreak ComputeCostExternalOutputs.lean}
\item \textbf{\nolinkurl{FR317}}\hypertarget{lh:FR317}{}\enspace{\ttfamily\nolinkurl{IdentityOutputClosureSpec.no_correctOnDomain_classifier_of_orbit_gap}} {\tiny\ttfamily Paper4dFrontier/\allowbreak ComputeCostExternalOutputs.lean}
\item \textbf{\nolinkurl{FR318}}\hypertarget{lh:FR318}{}\enspace{\ttfamily\nolinkurl{hypothesisOutput_classifier_agrees_on_closureEquivalent_of_correctOnDomain}} {\tiny\ttfamily Paper4dFrontier/\allowbreak ComputeCostExternalOutputs.lean}
\item \textbf{\nolinkurl{FR319}}\hypertarget{lh:FR319}{}\enspace{\ttfamily\nolinkurl{estimatorOutput_classifier_agrees_on_closureEquivalent_of_correctOnDomain}} {\tiny\ttfamily Paper4dFrontier/\allowbreak ComputeCostExternalOutputs.lean}
\item \textbf{\nolinkurl{FR320}}\hypertarget{lh:FR320}{}\enspace{\ttfamily\nolinkurl{policyOutput_classifier_agrees_on_closureEquivalent_of_correctOnDomain}} {\tiny\ttfamily Paper4dFrontier/\allowbreak ComputeCostExternalOutputs.lean}
\item \textbf{\nolinkurl{FR321}}\hypertarget{lh:FR321}{}\enspace{\ttfamily\nolinkurl{randomizedProcedure_classifier_agrees_on_closureEquivalent_of_correctOnDomain}} {\tiny\ttfamily Paper4dFrontier/\allowbreak ComputeCostExternalOutputs.lean}
\item \textbf{\nolinkurl{FR322}}\hypertarget{lh:FR322}{}\enspace{\ttfamily\nolinkurl{no_correctOnDomain_hypothesisOutput_classifier_of_orbit_gap}} {\tiny\ttfamily Paper4dFrontier/\allowbreak ComputeCostExternalOutputs.lean}
\item \textbf{\nolinkurl{FR323}}\hypertarget{lh:FR323}{}\enspace{\ttfamily\nolinkurl{no_correctOnDomain_estimatorOutput_classifier_of_orbit_gap}} {\tiny\ttfamily Paper4dFrontier/\allowbreak ComputeCostExternalOutputs.lean}
\item \textbf{\nolinkurl{FR324}}\hypertarget{lh:FR324}{}\enspace{\ttfamily\nolinkurl{no_correctOnDomain_policyOutput_classifier_of_orbit_gap}} {\tiny\ttfamily Paper4dFrontier/\allowbreak ComputeCostExternalOutputs.lean}
\item \textbf{\nolinkurl{FR325}}\hypertarget{lh:FR325}{}\enspace{\ttfamily\nolinkurl{no_correctOnDomain_randomizedProcedure_classifier_of_orbit_gap}} {\tiny\ttfamily Paper4dFrontier/\allowbreak ComputeCostExternalOutputs.lean}
\item \textbf{\nolinkurl{FR326}}\hypertarget{lh:FR326}{}\enspace{\ttfamily\nolinkurl{TransportedOutputClosureSpec.classifier_agrees_on_closureEquivalent_of_correctOnDomain}} {\tiny\ttfamily Paper4dFrontier/\allowbreak ComputeCostExternalOutputs.lean}
\item \textbf{\nolinkurl{FR327}}\hypertarget{lh:FR327}{}\enspace{\ttfamily\nolinkurl{TransportedOutputClosureSpec.no_correctOnDomain_classifier_of_orbit_gap}} {\tiny\ttfamily Paper4dFrontier/\allowbreak ComputeCostExternalOutputs.lean}
\item \textbf{\nolinkurl{FR328}}\hypertarget{lh:FR328}{}\enspace{\ttfamily\nolinkurl{IdentityOutputClosureSpec.correctOnDomain_iff_correctOnDomain_transport}} {\tiny\ttfamily Paper4dFrontier/\allowbreak ComputeCostExternalOutputs.lean}
\item \textbf{\nolinkurl{FR329}}\hypertarget{lh:FR329}{}\enspace{\ttfamily\nolinkurl{representationRelativeHypothesis_classifier_agrees_on_closureEquivalent_of_correctOnDomain}} {\tiny\ttfamily Paper4dFrontier/\allowbreak ComputeCostExternalOutputs.lean}
\item \textbf{\nolinkurl{FR330}}\hypertarget{lh:FR330}{}\enspace{\ttfamily\nolinkurl{representationRelativeEstimator_classifier_agrees_on_closureEquivalent_of_correctOnDomain}} {\tiny\ttfamily Paper4dFrontier/\allowbreak ComputeCostExternalOutputs.lean}
\item \textbf{\nolinkurl{FR331}}\hypertarget{lh:FR331}{}\enspace{\ttfamily\nolinkurl{representationRelativePolicy_classifier_agrees_on_closureEquivalent_of_correctOnDomain}} {\tiny\ttfamily Paper4dFrontier/\allowbreak ComputeCostExternalOutputs.lean}
\item \textbf{\nolinkurl{FR332}}\hypertarget{lh:FR332}{}\enspace{\ttfamily\nolinkurl{representationRelativeRandomizedProcedure_classifier_agrees_on_closureEquivalent_of_correctOnDomain}} {\tiny\ttfamily Paper4dFrontier/\allowbreak ComputeCostExternalOutputs.lean}
\item \textbf{\nolinkurl{FR333}}\hypertarget{lh:FR333}{}\enspace{\ttfamily\nolinkurl{no_correctOnDomain_representationRelativeHypothesis_classifier_of_orbit_gap}} {\tiny\ttfamily Paper4dFrontier/\allowbreak ComputeCostExternalOutputs.lean}
\item \textbf{\nolinkurl{FR334}}\hypertarget{lh:FR334}{}\enspace{\ttfamily\nolinkurl{no_correctOnDomain_representationRelativeEstimator_classifier_of_orbit_gap}} {\tiny\ttfamily Paper4dFrontier/\allowbreak ComputeCostExternalOutputs.lean}
\item \textbf{\nolinkurl{FR335}}\hypertarget{lh:FR335}{}\enspace{\ttfamily\nolinkurl{no_correctOnDomain_representationRelativePolicy_classifier_of_orbit_gap}} {\tiny\ttfamily Paper4dFrontier/\allowbreak ComputeCostExternalOutputs.lean}
\item \textbf{\nolinkurl{FR336}}\hypertarget{lh:FR336}{}\enspace{\ttfamily\nolinkurl{no_correctOnDomain_representationRelativeRandomizedProcedure_classifier_of_orbit_gap}} {\tiny\ttfamily Paper4dFrontier/\allowbreak ComputeCostExternalOutputs.lean}
\item \textbf{\nolinkurl{FR340}}\hypertarget{lh:FR340}{}\enspace{\ttfamily\nolinkurl{booleanPayloadTransfer}} {\tiny\ttfamily Paper4dFrontier/\allowbreak Realizability.lean}
\item \textbf{\nolinkurl{FR341}}\hypertarget{lh:FR341}{}\enspace{\ttfamily\nolinkurl{predicateTransfer}} {\tiny\ttfamily Paper4dFrontier/\allowbreak Realizability.lean}
\item \textbf{\nolinkurl{FR342}}\hypertarget{lh:FR342}{}\enspace{\ttfamily\nolinkurl{sufficiency_is_relation_refinement}} {\tiny\ttfamily Paper4dFrontier/\allowbreak Realizability.lean}
\item \textbf{\nolinkurl{FR343}}\hypertarget{lh:FR343}{}\enspace{\ttfamily\nolinkurl{relevance_is_erased_failure_of_refinement}} {\tiny\ttfamily Paper4dFrontier/\allowbreak Realizability.lean}
\item \textbf{\nolinkurl{FR345}}\hypertarget{lh:FR345}{}\enspace{\ttfamily\nolinkurl{exactSemanticsQuotient_universal_characterization}} {\tiny\ttfamily Paper4dFrontier/\allowbreak Realizability.lean}
\item \textbf{\nolinkurl{FR346}}\hypertarget{lh:FR346}{}\enspace{\ttfamily\nolinkurl{exactnessMeansExactAgreementWithValidity}} {\tiny\ttfamily Paper4dFrontier/\allowbreak Realizability.lean}
\item \textbf{\nolinkurl{FR348}}\hypertarget{lh:FR348}{}\enspace{\ttfamily\nolinkurl{fullBinaryPairwiseDomain_closure_closed}} {\tiny\ttfamily Paper4dFrontier/\allowbreak FullBinaryPairwiseDomain.lean}
\item \textbf{\nolinkurl{FR349}}\hypertarget{lh:FR349}{}\enspace{\ttfamily\nolinkurl{fullBinaryPairwiseDomain_contains_four_obstruction_families}} {\tiny\ttfamily Paper4dFrontier/\allowbreak FullBinaryPairwiseDomain.lean}
\item \textbf{\nolinkurl{FR350}}\hypertarget{lh:FR350}{}\enspace{\ttfamily\nolinkurl{no_correct_tractability_classifier_on_fullBinaryPairwiseDomain_each_obstruction_family}} {\tiny\ttfamily Paper4dFrontier/\allowbreak FullBinaryPairwiseDomain.lean}
\item \textbf{\nolinkurl{FR351}}\hypertarget{lh:FR351}{}\enspace{\ttfamily\nolinkurl{ExactCorrectnessSpecification.universal_scope_over_rigorously_specified_problems}} {\tiny\ttfamily Paper4dFrontier/\allowbreak Realizability.lean}
\item \textbf{\nolinkurl{FR352}}\hypertarget{lh:FR352}{}\enspace{\ttfamily\nolinkurl{DecisionQuotient.DecisionProblem.minimalSufficient_eq_relevant'}} {\tiny\ttfamily paper4/\allowbreak DecisionQuotient/\allowbreak Sufficiency.lean}
\item \textbf{\nolinkurl{FR353}}\hypertarget{lh:FR353}{}\enspace{\ttfamily\nolinkurl{DecisionQuotient.DecisionProblem.sufficientSets_principal'}} {\tiny\ttfamily paper4/\allowbreak DecisionQuotient/\allowbreak Sufficiency.lean}
\item \textbf{\nolinkurl{FR354}}\hypertarget{lh:FR354}{}\enspace{\ttfamily\nolinkurl{DecisionQuotient.srank_le_sufficient_card}} {\tiny\ttfamily paper4/\allowbreak DecisionQuotient/\allowbreak Tractability/\allowbreak StructuralRank.lean}
\item \textbf{\nolinkurl{FR355}}\hypertarget{lh:FR355}{}\enspace{\ttfamily\nolinkurl{closureLawInvariant_iff_closureHull_eq_self}} {\tiny\ttfamily Paper4dFrontier/\allowbreak AdmissibleCharacterization.lean}
\item \textbf{\nolinkurl{FR356}}\hypertarget{lh:FR356}{}\enspace{\ttfamily\nolinkurl{no_orbitGapOn_iff_closureHull_disjoint}} {\tiny\ttfamily Paper4dFrontier/\allowbreak AdmissibleCharacterization.lean}
\item \textbf{\nolinkurl{FR357}}\hypertarget{lh:FR357}{}\enspace{\ttfamily\nolinkurl{exact_classifiable_by_closureLawInvariant_onDomain_iff_closureHull_disjoint}} {\tiny\ttfamily Paper4dFrontier/\allowbreak AdmissibleCharacterization.lean}
\item \textbf{\nolinkurl{FR358}}\hypertarget{lh:FR358}{}\enspace{\ttfamily\nolinkurl{closureHull_least_exact_classifier_onDomain_of_no_orbitGapOn}} {\tiny\ttfamily Paper4dFrontier/\allowbreak AdmissibleCharacterization.lean}
\item \textbf{\nolinkurl{FR359}}\hypertarget{lh:FR359}{}\enspace{\ttfamily\nolinkurl{binary_pairwise_symmetry_decisionRelevantGraph_dichotomy}} {\tiny\ttfamily Paper4dFrontier/\allowbreak DecisionRelevantPairwiseDichotomy.lean}
\item \textbf{\nolinkurl{FR361}}\hypertarget{lh:FR361}{}\enspace{\ttfamily\nolinkurl{no_optimizerSetSearch_exactSpecification_characterization_decides_any_obstruction_family_on_fullBinaryPairwiseDomain}} {\tiny\ttfamily Paper4dFrontier/\allowbreak ExactSpecificationNoGo.lean}
\item \textbf{\nolinkurl{FR362}}\hypertarget{lh:FR362}{}\enspace{\ttfamily\nolinkurl{closureLawInvariant_iff_of_closureEquivalent}} {\tiny\ttfamily Paper4dFrontier/\allowbreak AdmissibleCharacterization.lean}
\item \textbf{\nolinkurl{FR363}}\hypertarget{lh:FR363}{}\enspace{\ttfamily\nolinkurl{widenedBoundedPatternDefinable_iff_hanfLocalFOOverExtractedSyntax}} {\tiny\ttfamily Paper4dFrontier/\allowbreak AdmissibleCharacterization.lean}
\item \textbf{\nolinkurl{FR364}}\hypertarget{lh:FR364}{}\enspace{\ttfamily\nolinkurl{LocalPattern.occurrenceFormula_holds_iff_occursInSyntax}} {\tiny\ttfamily Paper4dFrontier/\allowbreak HanfGaifman.lean}
\item \textbf{\nolinkurl{FR365}}\hypertarget{lh:FR365}{}\enspace{\ttfamily\nolinkurl{extractedBasicLocalDefinable_implies_widenedBoundedPatternDefinable}} {\tiny\ttfamily Paper4dFrontier/\allowbreak HanfGaifman.lean}
\item \textbf{\nolinkurl{FR366}}\hypertarget{lh:FR366}{}\enspace{\ttfamily\nolinkurl{extractedCountedBasicLocalDefinable_iff_widenedBoundedPatternDefinable}} {\tiny\ttfamily Paper4dFrontier/\allowbreak HanfGaifman.lean}
\item \textbf{\nolinkurl{FR367}}\hypertarget{lh:FR367}{}\enspace{\ttfamily\nolinkurl{DescendsAlong}} {\tiny\ttfamily Paper4dFrontier/\allowbreak QuotientDescent.lean}
\item \textbf{\nolinkurl{FR368}}\hypertarget{lh:FR368}{}\enspace{\ttfamily\nolinkurl{FiberInvariant}} {\tiny\ttfamily Paper4dFrontier/\allowbreak QuotientDescent.lean}
\item \textbf{\nolinkurl{FR369}}\hypertarget{lh:FR369}{}\enspace{\ttfamily\nolinkurl{FiberGap}} {\tiny\ttfamily Paper4dFrontier/\allowbreak QuotientDescent.lean}
\item \textbf{\nolinkurl{FR370}}\hypertarget{lh:FR370}{}\enspace{\ttfamily\nolinkurl{descendsAlong_iff_fiberInvariant}} {\tiny\ttfamily Paper4dFrontier/\allowbreak QuotientDescent.lean}
\item \textbf{\nolinkurl{FR371}}\hypertarget{lh:FR371}{}\enspace{\ttfamily\nolinkurl{fiberGap_obstructs_descendsAlong}} {\tiny\ttfamily Paper4dFrontier/\allowbreak QuotientDescent.lean}
\item \textbf{\nolinkurl{FR372}}\hypertarget{lh:FR372}{}\enspace{\ttfamily\nolinkurl{not_descendsAlong_iff_fiberGap}} {\tiny\ttfamily Paper4dFrontier/\allowbreak QuotientDescent.lean}
\item \textbf{\nolinkurl{FR373}}\hypertarget{lh:FR373}{}\enspace{\ttfamily\nolinkurl{descendsToQuotient_iff_setoidInvariant}} {\tiny\ttfamily Paper4dFrontier/\allowbreak QuotientDescent.lean}
\item \textbf{\nolinkurl{FR374}}\hypertarget{lh:FR374}{}\enspace{\ttfamily\nolinkurl{setoidGap_obstructs_descendsToQuotient}} {\tiny\ttfamily Paper4dFrontier/\allowbreak QuotientDescent.lean}
\item \textbf{\nolinkurl{FR375}}\hypertarget{lh:FR375}{}\enspace{\ttfamily\nolinkurl{not_descendsToQuotient_iff_setoidGap}} {\tiny\ttfamily Paper4dFrontier/\allowbreak QuotientDescent.lean}
\item \textbf{\nolinkurl{FR376}}\hypertarget{lh:FR376}{}\enspace{\ttfamily\nolinkurl{FiniteOrbitCatalogue.sameOrbitDecidable}} {\tiny\ttfamily Paper4dFrontier/\allowbreak FiniteOrbitCatalogue.lean}
\item \textbf{\nolinkurl{FR377}}\hypertarget{lh:FR377}{}\enspace{\ttfamily\nolinkurl{FiniteOrbitCatalogue.classifier_orbitInvariant}} {\tiny\ttfamily Paper4dFrontier/\allowbreak FiniteOrbitCatalogue.lean}
\item \textbf{\nolinkurl{FR378}}\hypertarget{lh:FR378}{}\enspace{\ttfamily\nolinkurl{FiniteOrbitCatalogue.exact_of_factorsThrough}} {\tiny\ttfamily Paper4dFrontier/\allowbreak FiniteOrbitCatalogue.lean}
\item \textbf{\nolinkurl{FR379}}\hypertarget{lh:FR379}{}\enspace{\ttfamily\nolinkurl{FiniteOrbitCatalogue.factorsThrough_of_orbitInvariant}} {\tiny\ttfamily Paper4dFrontier/\allowbreak FiniteOrbitCatalogue.lean}
\item \textbf{\nolinkurl{FR380}}\hypertarget{lh:FR380}{}\enspace{\ttfamily\nolinkurl{FiniteOrbitCatalogue.orbitInvariant_iff_factorsThrough}} {\tiny\ttfamily Paper4dFrontier/\allowbreak FiniteOrbitCatalogue.lean}
\item \textbf{\nolinkurl{FR381}}\hypertarget{lh:FR381}{}\enspace{\ttfamily\nolinkurl{FiniteOrbitCatalogue.boolClassifier_orbitInvariant}} {\tiny\ttfamily Paper4dFrontier/\allowbreak FiniteOrbitCatalogue.lean}
\item \textbf{\nolinkurl{FR382}}\hypertarget{lh:FR382}{}\enspace{\ttfamily\nolinkurl{FiniteOrbitCatalogue.boolClassifier_correct}} {\tiny\ttfamily Paper4dFrontier/\allowbreak FiniteOrbitCatalogue.lean}
\item \textbf{\nolinkurl{FR383}}\hypertarget{lh:FR383}{}\enspace{\ttfamily\nolinkurl{FiniteOrbitCatalogue.quotient_decider_exact}} {\tiny\ttfamily Paper4dFrontier/\allowbreak FiniteOrbitCatalogue.lean}
\item \textbf{\nolinkurl{FR384}}\hypertarget{lh:FR384}{}\enspace{\ttfamily\nolinkurl{FiniteOrbitCatalogue.factorsThrough_not}} {\tiny\ttfamily Paper4dFrontier/\allowbreak FiniteOrbitCatalogue.lean}
\item \textbf{\nolinkurl{FR385}}\hypertarget{lh:FR385}{}\enspace{\ttfamily\nolinkurl{FiniteOrbitCatalogue.factorsThrough_and}} {\tiny\ttfamily Paper4dFrontier/\allowbreak FiniteOrbitCatalogue.lean}
\item \textbf{\nolinkurl{FR386}}\hypertarget{lh:FR386}{}\enspace{\ttfamily\nolinkurl{FiniteOrbitCatalogue.factorsThrough_or}} {\tiny\ttfamily Paper4dFrontier/\allowbreak FiniteOrbitCatalogue.lean}
\item \textbf{\nolinkurl{FR387}}\hypertarget{lh:FR387}{}\enspace{\ttfamily\nolinkurl{FiniteOrbitCatalogue.orbitInvariant_not}} {\tiny\ttfamily Paper4dFrontier/\allowbreak FiniteOrbitCatalogue.lean}
\item \textbf{\nolinkurl{FR388}}\hypertarget{lh:FR388}{}\enspace{\ttfamily\nolinkurl{FiniteOrbitCatalogue.orbitInvariant_and}} {\tiny\ttfamily Paper4dFrontier/\allowbreak FiniteOrbitCatalogue.lean}
\item \textbf{\nolinkurl{FR389}}\hypertarget{lh:FR389}{}\enspace{\ttfamily\nolinkurl{FiniteOrbitCatalogue.orbitInvariant_or}} {\tiny\ttfamily Paper4dFrontier/\allowbreak FiniteOrbitCatalogue.lean}
\item \textbf{\nolinkurl{FR390}}\hypertarget{lh:FR390}{}\enspace{\ttfamily\nolinkurl{NormalizedUnaryGapCatalogue.factorsThrough_iff_orbitInvariant}} {\tiny\ttfamily Paper4dFrontier/\allowbreak FiniteOrbitCatalogue.lean}
\item \textbf{\nolinkurl{FR391}}\hypertarget{lh:FR391}{}\enspace{\ttfamily\nolinkurl{NormalizedUnaryGapCatalogue.no_orbitGap_of_factorsThrough}} {\tiny\ttfamily Paper4dFrontier/\allowbreak FiniteOrbitCatalogue.lean}
\item \textbf{\nolinkurl{FR392}}\hypertarget{lh:FR392}{}\enspace{\ttfamily\nolinkurl{NormalizedUnaryGapCatalogue.classifier_exact_of_factorsThrough}} {\tiny\ttfamily Paper4dFrontier/\allowbreak FiniteOrbitCatalogue.lean}
\item \textbf{\nolinkurl{FR393}}\hypertarget{lh:FR393}{}\enspace{\ttfamily\nolinkurl{NormalizedUnaryGapCatalogue.boolClassifier_exact}} {\tiny\ttfamily Paper4dFrontier/\allowbreak FiniteOrbitCatalogue.lean}
\item \textbf{\nolinkurl{FR394}}\hypertarget{lh:FR394}{}\enspace{\ttfamily\nolinkurl{HasActionGapPairInteraction}} {\tiny\ttfamily Paper4dFrontier/\allowbreak ActionGapTreewidth.lean}
\item \textbf{\nolinkurl{FR395}}\hypertarget{lh:FR395}{}\enspace{\ttfamily\nolinkurl{actionGapInteractionGraph_graphRealizingActionGapUtility_eq}} {\tiny\ttfamily Paper4dFrontier/\allowbreak ActionGapTreewidth.lean}
\item \textbf{\nolinkurl{FR396}}\hypertarget{lh:FR396}{}\enspace{\ttfamily\nolinkurl{actionGapTreewidthRecognition_realizes_graph}} {\tiny\ttfamily Paper4dFrontier/\allowbreak ActionGapTreewidth.lean}
\item \textbf{\nolinkurl{FR397}}\hypertarget{lh:FR397}{}\enspace{\ttfamily\nolinkurl{actionGapPairInteraction_requires_two_actions}} {\tiny\ttfamily Paper4dFrontier/\allowbreak ActionGapTreewidth.lean}
\item \textbf{\nolinkurl{FR398}}\hypertarget{lh:FR398}{}\enspace{\ttfamily\nolinkurl{actionGapInteractionGraph_subsingleton_actions_eq_bot}} {\tiny\ttfamily Paper4dFrontier/\allowbreak ActionGapTreewidth.lean}
\item \textbf{\nolinkurl{FR399}}\hypertarget{lh:FR399}{}\enspace{\ttfamily\nolinkurl{HasTwoPairOrbitTemplate}} {\tiny\ttfamily Paper4dFrontier/\allowbreak ActionGapTreewidth.lean}
\item \textbf{\nolinkurl{FR400}}\hypertarget{lh:FR400}{}\enspace{\ttfamily\nolinkurl{twoPairOrbitTemplate_requires_three_coordinates}} {\tiny\ttfamily Paper4dFrontier/\allowbreak ActionGapTreewidth.lean}
\item \textbf{\nolinkurl{FR401}}\hypertarget{lh:FR401}{}\enspace{\ttfamily\nolinkurl{ExactCertificationReduct}} {\tiny\ttfamily Paper4dFrontier/\allowbreak RoughSetBridge.lean}
\item \textbf{\nolinkurl{FR402}}\hypertarget{lh:FR402}{}\enspace{\ttfamily\nolinkurl{exactCertificationReduct_iff_core}} {\tiny\ttfamily Paper4dFrontier/\allowbreak RoughSetBridge.lean}
\item \textbf{\nolinkurl{FR403}}\hypertarget{lh:FR403}{}\enspace{\ttfamily\nolinkurl{exactCertificationSufficientSets_principal_at_reduct}} {\tiny\ttfamily Paper4dFrontier/\allowbreak RoughSetBridge.lean}
\item \textbf{\nolinkurl{FR404}}\hypertarget{lh:FR404}{}\enspace{\ttfamily\nolinkurl{HasAnyUniqueDominantPair}} {\tiny\ttfamily Paper4dFrontier/\allowbreak ObstructionPredicateCandidates.lean}
\item \textbf{\nolinkurl{FR405}}\hypertarget{lh:FR405}{}\enspace{\ttfamily\nolinkurl{dominantPair_orbit_witness_preserves_anyUniqueDominantPair}} {\tiny\ttfamily Paper4dFrontier/\allowbreak ObstructionPredicateCandidates.lean}
\item \textbf{\nolinkurl{FR406}}\hypertarget{lh:FR406}{}\enspace{\ttfamily\nolinkurl{addDuplicateActionSlice}} {\tiny\ttfamily Paper4dFrontier/\allowbreak ObstructionPredicateCandidates.lean}
\item \textbf{\nolinkurl{FR407}}\hypertarget{lh:FR407}{}\enspace{\ttfamily\nolinkurl{addDuplicateActionSlice_duplicateActionWitness}} {\tiny\ttfamily Paper4dFrontier/\allowbreak ObstructionPredicateCandidates.lean}
\item \textbf{\nolinkurl{FR408}}\hypertarget{lh:FR408}{}\enspace{\ttfamily\nolinkurl{RawActionCountAtLeast}} {\tiny\ttfamily Paper4dFrontier/\allowbreak ObstructionPredicateCandidates.lean}
\item \textbf{\nolinkurl{FR409}}\hypertarget{lh:FR409}{}\enspace{\ttfamily\nolinkurl{rawActionCountThresholdDuplicationOrbitGap_exists}} {\tiny\ttfamily Paper4dFrontier/\allowbreak ObstructionPredicateCandidates.lean}
\item \textbf{\nolinkurl{FR410}}\hypertarget{lh:FR410}{}\enspace{\ttfamily\nolinkurl{no_closureInvariant_predicate_decides_rawActionCountAtLeastThreshold}} {\tiny\ttfamily Paper4dFrontier/\allowbreak ObstructionPredicateCandidates.lean}
\item \textbf{\nolinkurl{FR411}}\hypertarget{lh:FR411}{}\enspace{\ttfamily\nolinkurl{quotientFlipAdj}} {\tiny\ttfamily Paper4dFrontier/\allowbreak QuotientMSORealization.lean}
\item \textbf{\nolinkurl{FR412}}\hypertarget{lh:FR412}{}\enspace{\ttfamily\nolinkurl{quotientFlipGraph}} {\tiny\ttfamily Paper4dFrontier/\allowbreak QuotientMSORealization.lean}
\item \textbf{\nolinkurl{FR413}}\hypertarget{lh:FR413}{}\enspace{\ttfamily\nolinkurl{quotientFlipAdj_implies_graph_adj}} {\tiny\ttfamily Paper4dFrontier/\allowbreak QuotientMSORealization.lean}
\item \textbf{\nolinkurl{FR414}}\hypertarget{lh:FR414}{}\enspace{\ttfamily\nolinkurl{graph_adj_implies_quotientFlipAdj}} {\tiny\ttfamily Paper4dFrontier/\allowbreak QuotientMSORealization.lean}
\item \textbf{\nolinkurl{FR415}}\hypertarget{lh:FR415}{}\enspace{\ttfamily\nolinkurl{quotientFlipGraph_realizes_graph}} {\tiny\ttfamily Paper4dFrontier/\allowbreak QuotientMSORealization.lean}
\item \textbf{\nolinkurl{FR416}}\hypertarget{lh:FR416}{}\enspace{\ttfamily\nolinkurl{quotientFlipCorrectnessSet_eq_iff_label_eq}} {\tiny\ttfamily Paper4dFrontier/\allowbreak QuotientMSORealization.lean}
\item \textbf{\nolinkurl{FR417}}\hypertarget{lh:FR417}{}\enspace{\ttfamily\nolinkurl{correctnessQuotientFlipPresentation_realizes_graph}} {\tiny\ttfamily Paper4dFrontier/\allowbreak QuotientMSORealization.lean}
\item \textbf{\nolinkurl{FR418}}\hypertarget{lh:FR418}{}\enspace{\ttfamily\nolinkurl{quotientFlip_same_label_states_sameQuotient}} {\tiny\ttfamily Paper4dFrontier/\allowbreak QuotientMSORealization.lean}
\item \textbf{\nolinkurl{FR419}}\hypertarget{lh:FR419}{}\enspace{\ttfamily\nolinkurl{quotientFlip_state_sameQuotient_vertex_of_label}} {\tiny\ttfamily Paper4dFrontier/\allowbreak QuotientMSORealization.lean}
\item \textbf{\nolinkurl{FR420}}\hypertarget{lh:FR420}{}\enspace{\ttfamily\nolinkurl{quotientFlip_edge_false_sameQuotient_source_vertex}} {\tiny\ttfamily Paper4dFrontier/\allowbreak QuotientMSORealization.lean}
\item \textbf{\nolinkurl{FR421}}\hypertarget{lh:FR421}{}\enspace{\ttfamily\nolinkurl{quotientFlip_edge_true_sameQuotient_target_vertex}} {\tiny\ttfamily Paper4dFrontier/\allowbreak QuotientMSORealization.lean}
\item \textbf{\nolinkurl{FR422}}\hypertarget{lh:FR422}{}\enspace{\ttfamily\nolinkurl{quotientFlipState_vertex_or_edge}} {\tiny\ttfamily Paper4dFrontier/\allowbreak QuotientMSORealization.lean}
\item \textbf{\nolinkurl{FR423}}\hypertarget{lh:FR423}{}\enspace{\ttfamily\nolinkurl{QuotientMSOPostDescentEvaluator}} {\tiny\ttfamily Paper4dFrontier/\allowbreak QuotientMSORealization.lean}
\item \textbf{\nolinkurl{FR424}}\hypertarget{lh:FR424}{}\enspace{\ttfamily\nolinkurl{QuotientMSOPostDescentEvaluator.predicate_eq_run}} {\tiny\ttfamily Paper4dFrontier/\allowbreak QuotientMSORealization.lean}
\item \textbf{\nolinkurl{FR425}}\hypertarget{lh:FR425}{}\enspace{\ttfamily\nolinkurl{QuotientMSOPostDescentEvaluator.run_eq_eval_quotient}} {\tiny\ttfamily Paper4dFrontier/\allowbreak QuotientMSORealization.lean}
\item \textbf{\nolinkurl{FR426}}\hypertarget{lh:FR426}{}\enspace{\ttfamily\nolinkurl{QuotientMSOPostDescentEvaluator.predicate_eq_of_quotient_eq}} {\tiny\ttfamily Paper4dFrontier/\allowbreak QuotientMSORealization.lean}
\item \textbf{\nolinkurl{FR427}}\hypertarget{lh:FR427}{}\enspace{\ttfamily\nolinkurl{QuotientMSOPostDescentEvaluator.totalCost_eq_construction_add_modelCheck}} {\tiny\ttfamily Paper4dFrontier/\allowbreak QuotientMSORealization.lean}
\item \textbf{\nolinkurl{FR428}}\hypertarget{lh:FR428}{}\enspace{\ttfamily\nolinkurl{MixedStateMSOCoordinateFOEvaluator}} {\tiny\ttfamily Paper4dFrontier/\allowbreak QuotientMSORealization.lean}
\item \textbf{\nolinkurl{FR429}}\hypertarget{lh:FR429}{}\enspace{\ttfamily\nolinkurl{MixedStateMSOCoordinateFOEvaluator.predicate_eq_run}} {\tiny\ttfamily Paper4dFrontier/\allowbreak QuotientMSORealization.lean}
\item \textbf{\nolinkurl{FR430}}\hypertarget{lh:FR430}{}\enspace{\ttfamily\nolinkurl{MixedStateMSOCoordinateFOEvaluator.mixedModelCheckCost_eq}} {\tiny\ttfamily Paper4dFrontier/\allowbreak QuotientMSORealization.lean}
\item \textbf{\nolinkurl{FR431}}\hypertarget{lh:FR431}{}\enspace{\ttfamily\nolinkurl{MixedStateMSOCoordinateFOEvaluator.mixedTotalCost_eq}} {\tiny\ttfamily Paper4dFrontier/\allowbreak QuotientMSORealization.lean}
\item \textbf{\nolinkurl{FR432}}\hypertarget{lh:FR432}{}\enspace{\ttfamily\nolinkurl{MixedStateMSOCoordinateFOEvaluator.mixedModelCheckCost_eq_of_quotient_eq_of_coordinateCount_eq}} {\tiny\ttfamily Paper4dFrontier/\allowbreak QuotientMSORealization.lean}
\item \textbf{\nolinkurl{FR433}}\hypertarget{lh:FR433}{}\enspace{\ttfamily\nolinkurl{MixedStateMSOCoordinateFOEvaluator.setAssignmentFactor_eq_one_of_no_setQuantifiers}} {\tiny\ttfamily Paper4dFrontier/\allowbreak QuotientMSORealization.lean}
\item \textbf{\nolinkurl{FR434}}\hypertarget{lh:FR434}{}\enspace{\ttfamily\nolinkurl{MixedStateMSOCoordinateFOEvaluator.mixedModelCheckCost_eq_fo_only}} {\tiny\ttfamily Paper4dFrontier/\allowbreak QuotientMSORealization.lean}
\item \textbf{\nolinkurl{FR435}}\hypertarget{lh:FR435}{}\enspace{\ttfamily\nolinkurl{MixedStateMSOCoordinateFOEvaluator.setAssignmentFactor_le_of_bounded_quotient_sorts}} {\tiny\ttfamily Paper4dFrontier/\allowbreak QuotientMSORealization.lean}
\item \textbf{\nolinkurl{FR436}}\hypertarget{lh:FR436}{}\enspace{\ttfamily\nolinkurl{ProxyLanguage}} {\tiny\ttfamily Paper4dFrontier/\allowbreak ProxyDescentCertification.lean}
\item \textbf{\nolinkurl{FR437}}\hypertarget{lh:FR437}{}\enspace{\ttfamily\nolinkurl{ProxyLanguage.Descends}} {\tiny\ttfamily Paper4dFrontier/\allowbreak ProxyDescentCertification.lean}
\item \textbf{\nolinkurl{FR438}}\hypertarget{lh:FR438}{}\enspace{\ttfamily\nolinkurl{ProxyLanguage.DescentCertificate}} {\tiny\ttfamily Paper4dFrontier/\allowbreak ProxyDescentCertification.lean}
\item \textbf{\nolinkurl{FR439}}\hypertarget{lh:FR439}{}\enspace{\ttfamily\nolinkurl{ProxyLanguage.DescentCounterexample}} {\tiny\ttfamily Paper4dFrontier/\allowbreak ProxyDescentCertification.lean}
\item \textbf{\nolinkurl{FR440}}\hypertarget{lh:FR440}{}\enspace{\ttfamily\nolinkurl{ProxyLanguage.descends_iff_no_descentCounterexample}} {\tiny\ttfamily Paper4dFrontier/\allowbreak ProxyDescentCertification.lean}
\item \textbf{\nolinkurl{FR441}}\hypertarget{lh:FR441}{}\enspace{\ttfamily\nolinkurl{ProxyLanguage.not_descends_of_descentCounterexample}} {\tiny\ttfamily Paper4dFrontier/\allowbreak ProxyDescentCertification.lean}
\item \textbf{\nolinkurl{FR442}}\hypertarget{lh:FR442}{}\enspace{\ttfamily\nolinkurl{ProxyLanguage.no_exact_closureInvariant_classifier_of_descentCounterexample}} {\tiny\ttfamily Paper4dFrontier/\allowbreak ProxyDescentCertification.lean}
\item \textbf{\nolinkurl{FR443}}\hypertarget{lh:FR443}{}\enspace{\ttfamily\nolinkurl{ProxyLanguage.exact_closureInvariant_classifier_iff_descends}} {\tiny\ttfamily Paper4dFrontier/\allowbreak ProxyDescentCertification.lean}
\item \textbf{\nolinkurl{FR444}}\hypertarget{lh:FR444}{}\enspace{\ttfamily\nolinkurl{commonGraphSlice}} {\tiny\ttfamily Paper4dFrontier/\allowbreak RawSupportGraphNoGo.lean}
\item \textbf{\nolinkurl{FR445}}\hypertarget{lh:FR445}{}\enspace{\ttfamily\nolinkurl{commonGraphSlice_rawInteractionGraph_eq}} {\tiny\ttfamily Paper4dFrontier/\allowbreak RawSupportGraphNoGo.lean}
\item \textbf{\nolinkurl{FR446}}\hypertarget{lh:FR446}{}\enspace{\ttfamily\nolinkurl{commonGraphPositiveAffineWitness}} {\tiny\ttfamily Paper4dFrontier/\allowbreak RawSupportGraphNoGo.lean}
\item \textbf{\nolinkurl{FR447}}\hypertarget{lh:FR447}{}\enspace{\ttfamily\nolinkurl{commonGraphSlice_closureEquivalent}} {\tiny\ttfamily Paper4dFrontier/\allowbreak RawSupportGraphNoGo.lean}
\item \textbf{\nolinkurl{FR448}}\hypertarget{lh:FR448}{}\enspace{\ttfamily\nolinkurl{commonGraphSlice_actionGapInteractionGraph_eq_bot}} {\tiny\ttfamily Paper4dFrontier/\allowbreak RawSupportGraphNoGo.lean}
\item \textbf{\nolinkurl{FR449}}\hypertarget{lh:FR449}{}\enspace{\ttfamily\nolinkurl{RawSupportGraphPredicate}} {\tiny\ttfamily Paper4dFrontier/\allowbreak RawSupportGraphNoGo.lean}
\item \textbf{\nolinkurl{FR450}}\hypertarget{lh:FR450}{}\enspace{\ttfamily\nolinkurl{rawSupportGraphPredicate_orbitGap_of_graphPredicate_separates}} {\tiny\ttfamily Paper4dFrontier/\allowbreak RawSupportGraphNoGo.lean}
\item \textbf{\nolinkurl{FR451}}\hypertarget{lh:FR451}{}\enspace{\ttfamily\nolinkurl{no_closureInvariant_classifier_decides_rawSupportGraphPredicate_of_separates}} {\tiny\ttfamily Paper4dFrontier/\allowbreak RawSupportGraphNoGo.lean}
\item \textbf{\nolinkurl{FR452}}\hypertarget{lh:FR452}{}\enspace{\ttfamily\nolinkurl{SliceCoordinateRelevant}} {\tiny\ttfamily Paper4dFrontier/\allowbreak DescentRelevanceSeparation.lean}
\item \textbf{\nolinkurl{FR453}}\hypertarget{lh:FR453}{}\enspace{\ttfamily\nolinkurl{NoSliceRelevantCoordinates}} {\tiny\ttfamily Paper4dFrontier/\allowbreak DescentRelevanceSeparation.lean}
\item \textbf{\nolinkurl{FR454}}\hypertarget{lh:FR454}{}\enspace{\ttfamily\nolinkurl{rawActionCountThresholdGap_with_noRelevantCoordinates}} {\tiny\ttfamily Paper4dFrontier/\allowbreak DescentRelevanceSeparation.lean}
\item \textbf{\nolinkurl{FR455}}\hypertarget{lh:FR455}{}\enspace{\ttfamily\nolinkurl{FixedWidthGraphSolver}} {\tiny\ttfamily Paper4dFrontier/\allowbreak ActionGapTreewidthTractability.lean}
\item \textbf{\nolinkurl{FR456}}\hypertarget{lh:FR456}{}\enspace{\ttfamily\nolinkurl{FixedWidthGraphSolver.solveActionGap_steps_bound}} {\tiny\ttfamily Paper4dFrontier/\allowbreak ActionGapTreewidthTractability.lean}
\item \textbf{\nolinkurl{FR457}}\hypertarget{lh:FR457}{}\enspace{\ttfamily\nolinkurl{FixedWidthGraphSolver.solveBinaryPairwiseActionGap_steps_bound}} {\tiny\ttfamily Paper4dFrontier/\allowbreak ActionGapTreewidthTractability.lean}
\item \textbf{\nolinkurl{FR458}}\hypertarget{lh:FR458}{}\enspace{\ttfamily\nolinkurl{MixedStateMSOCoordinateFOEvaluator.mixedModelCheckCost_le_of_bounded_quotient_sorts}} {\tiny\ttfamily Paper4dFrontier/\allowbreak QuotientMSORealization.lean}
\item \textbf{\nolinkurl{FR459}}\hypertarget{lh:FR459}{}\enspace{\ttfamily\nolinkurl{MixedStateMSOCoordinateFOEvaluator.mixedTotalCost_le_of_bounded_quotient_sorts}} {\tiny\ttfamily Paper4dFrontier/\allowbreak QuotientMSORealization.lean}
\item \textbf{\nolinkurl{FR460}}\hypertarget{lh:FR460}{}\enspace{\ttfamily\nolinkurl{BoundedQuotientTreewidthMSOEvaluator}} {\tiny\ttfamily Paper4dFrontier/\allowbreak BoundedQuotientTreewidthMSO.lean}
\item \textbf{\nolinkurl{FR461}}\hypertarget{lh:FR461}{}\enspace{\ttfamily\nolinkurl{BoundedQuotientTreewidthMSOEvaluator.predicate_eq_of_quotient_eq}} {\tiny\ttfamily Paper4dFrontier/\allowbreak BoundedQuotientTreewidthMSO.lean}
\item \textbf{\nolinkurl{FR462}}\hypertarget{lh:FR462}{}\enspace{\ttfamily\nolinkurl{BoundedQuotientTreewidthMSOEvaluator.modelCheckCost_le_linear_quotientSize}} {\tiny\ttfamily Paper4dFrontier/\allowbreak BoundedQuotientTreewidthMSO.lean}
\item \textbf{\nolinkurl{FR463}}\hypertarget{lh:FR463}{}\enspace{\ttfamily\nolinkurl{BoundedQuotientTreewidthMSOEvaluator.totalCost_le_construction_add_linear_quotientSize}} {\tiny\ttfamily Paper4dFrontier/\allowbreak BoundedQuotientTreewidthMSO.lean}
\item \textbf{\nolinkurl{FR464}}\hypertarget{lh:FR464}{}\enspace{\ttfamily\nolinkurl{BoundedQuotientTreewidthMSOEvaluator.totalCost_le_polynomialConstruction_add_linear_quotientSize}} {\tiny\ttfamily Paper4dFrontier/\allowbreak BoundedQuotientTreewidthMSO.lean}
\item \textbf{\nolinkurl{FR465}}\hypertarget{lh:FR465}{}\enspace{\ttfamily\nolinkurl{BoundedQuotientTreewidthMSOEvaluator.boolClassifier_eq_of_quotient_eq}} {\tiny\ttfamily Paper4dFrontier/\allowbreak BoundedQuotientTreewidthMSO.lean}
\item \textbf{\nolinkurl{FR466}}\hypertarget{lh:FR466}{}\enspace{\ttfamily\nolinkurl{ClosureCompatibleBoundedQuotientTreewidthMSO}} {\tiny\ttfamily Paper4dFrontier/\allowbreak BoundedQuotientTreewidthMSO.lean}
\item \textbf{\nolinkurl{FR467}}\hypertarget{lh:FR467}{}\enspace{\ttfamily\nolinkurl{ClosureCompatibleBoundedQuotientTreewidthMSO.predicate_eq_of_closureEquivalent}} {\tiny\ttfamily Paper4dFrontier/\allowbreak BoundedQuotientTreewidthMSO.lean}
\item \textbf{\nolinkurl{FR468}}\hypertarget{lh:FR468}{}\enspace{\ttfamily\nolinkurl{ClosureCompatibleBoundedQuotientTreewidthMSO.boolClassifier_eq_of_closureEquivalent}} {\tiny\ttfamily Paper4dFrontier/\allowbreak BoundedQuotientTreewidthMSO.lean}
\item \textbf{\nolinkurl{FR469}}\hypertarget{lh:FR469}{}\enspace{\ttfamily\nolinkurl{ClosureCompatibleBoundedQuotientTreewidthMSO.totalCost_le_polynomialConstruction_add_linear_quotientSize}} {\tiny\ttfamily Paper4dFrontier/\allowbreak BoundedQuotientTreewidthMSO.lean}
\item \textbf{\nolinkurl{FR470}}\hypertarget{lh:FR470}{}\enspace{\ttfamily\nolinkurl{BoundedQuotientSizeMSOEvaluator}} {\tiny\ttfamily Paper4dFrontier/\allowbreak SparseUnaryGapQuotientPipeline.lean}
\item \textbf{\nolinkurl{FR471}}\hypertarget{lh:FR471}{}\enspace{\ttfamily\nolinkurl{BoundedQuotientSizeMSOEvaluator.predicate_eq_of_quotient_eq}} {\tiny\ttfamily Paper4dFrontier/\allowbreak SparseUnaryGapQuotientPipeline.lean}
\item \textbf{\nolinkurl{FR472}}\hypertarget{lh:FR472}{}\enspace{\ttfamily\nolinkurl{BoundedQuotientSizeMSOEvaluator.modelCheckCost_le_finiteModelFactor}} {\tiny\ttfamily Paper4dFrontier/\allowbreak SparseUnaryGapQuotientPipeline.lean}
\item \textbf{\nolinkurl{FR473}}\hypertarget{lh:FR473}{}\enspace{\ttfamily\nolinkurl{BoundedQuotientSizeMSOEvaluator.totalCost_le_polynomialConstruction_add_finiteModelFactor}} {\tiny\ttfamily Paper4dFrontier/\allowbreak SparseUnaryGapQuotientPipeline.lean}
\item \textbf{\nolinkurl{FR474}}\hypertarget{lh:FR474}{}\enspace{\ttfamily\nolinkurl{ClosureCompatibleBoundedQuotientSizeMSO}} {\tiny\ttfamily Paper4dFrontier/\allowbreak SparseUnaryGapQuotientPipeline.lean}
\item \textbf{\nolinkurl{FR475}}\hypertarget{lh:FR475}{}\enspace{\ttfamily\nolinkurl{ClosureCompatibleBoundedQuotientSizeMSO.predicate_eq_of_closureEquivalent}} {\tiny\ttfamily Paper4dFrontier/\allowbreak SparseUnaryGapQuotientPipeline.lean}
\item \textbf{\nolinkurl{FR476}}\hypertarget{lh:FR476}{}\enspace{\ttfamily\nolinkurl{StrictMarginQuotientMSOStability}} {\tiny\ttfamily Paper4dFrontier/\allowbreak SparseUnaryGapQuotientPipeline.lean}
\item \textbf{\nolinkurl{FR477}}\hypertarget{lh:FR477}{}\enspace{\ttfamily\nolinkurl{StrictMarginQuotientMSOStability.predicate_eq_of_uniformApprox_of_uniformStrictGapCover}} {\tiny\ttfamily Paper4dFrontier/\allowbreak SparseUnaryGapQuotientPipeline.lean}
\item \textbf{\nolinkurl{FR478}}\hypertarget{lh:FR478}{}\enspace{\ttfamily\nolinkurl{StrictMarginQuotientMSOStability.boolClassifier_eq_of_uniformApprox_of_uniformStrictGapCover}} {\tiny\ttfamily Paper4dFrontier/\allowbreak SparseUnaryGapQuotientPipeline.lean}
\item \textbf{\nolinkurl{FR479}}\hypertarget{lh:FR479}{}\enspace{\ttfamily\nolinkurl{SparseUnaryGapQuotientPipeline}} {\tiny\ttfamily Paper4dFrontier/\allowbreak SparseUnaryGapQuotientPipeline.lean}
\item \textbf{\nolinkurl{FR480}}\hypertarget{lh:FR480}{}\enspace{\ttfamily\nolinkurl{SparseUnaryGapQuotientPipeline.quotientVertexCount_le_powerSupportBound}} {\tiny\ttfamily Paper4dFrontier/\allowbreak SparseUnaryGapQuotientPipeline.lean}
\item \textbf{\nolinkurl{FR481}}\hypertarget{lh:FR481}{}\enspace{\ttfamily\nolinkurl{SparseUnaryGapQuotientPipeline.constructionCost_le_powerSupportBound}} {\tiny\ttfamily Paper4dFrontier/\allowbreak SparseUnaryGapQuotientPipeline.lean}
\item \textbf{\nolinkurl{FR482}}\hypertarget{lh:FR482}{}\enspace{\ttfamily\nolinkurl{SparseUnaryGapQuotientPipeline.quotientTreewidth_le_powerSupportBound_minus_one}} {\tiny\ttfamily Paper4dFrontier/\allowbreak SparseUnaryGapQuotientPipeline.lean}
\item \textbf{\nolinkurl{FR483}}\hypertarget{lh:FR483}{}\enspace{\ttfamily\nolinkurl{SparseUnaryGapQuotientPipeline.totalCost_le_powerSupportBound_add_linear_powerSupportBound}} {\tiny\ttfamily Paper4dFrontier/\allowbreak SparseUnaryGapQuotientPipeline.lean}
\item \textbf{\nolinkurl{FR484}}\hypertarget{lh:FR484}{}\enspace{\ttfamily\nolinkurl{SparseUnaryGapQuotientPipeline.predicate_eq_of_quotient_eq}} {\tiny\ttfamily Paper4dFrontier/\allowbreak SparseUnaryGapQuotientPipeline.lean}
\item \textbf{\nolinkurl{FR485}}\hypertarget{lh:FR485}{}\enspace{\ttfamily\nolinkurl{RawCoordinateCountAtLeast}} {\tiny\ttfamily Paper4dFrontier/\allowbreak SparseUnaryGapQuotientPipeline.lean}
\item \textbf{\nolinkurl{FR486}}\hypertarget{lh:FR486}{}\enspace{\ttfamily\nolinkurl{constantCoordinateCountSlice}} {\tiny\ttfamily Paper4dFrontier/\allowbreak SparseUnaryGapQuotientPipeline.lean}
\item \textbf{\nolinkurl{FR487}}\hypertarget{lh:FR487}{}\enspace{\ttfamily\nolinkurl{constantCoordinateCountSlice_irrelevantCoordinateWitness}} {\tiny\ttfamily Paper4dFrontier/\allowbreak SparseUnaryGapQuotientPipeline.lean}
\item \textbf{\nolinkurl{FR488}}\hypertarget{lh:FR488}{}\enspace{\ttfamily\nolinkurl{constantCoordinateCountSlice_closureEquivalent_succ}} {\tiny\ttfamily Paper4dFrontier/\allowbreak SparseUnaryGapQuotientPipeline.lean}
\item \textbf{\nolinkurl{FR489}}\hypertarget{lh:FR489}{}\enspace{\ttfamily\nolinkurl{rawCoordinateCountThresholdGap_irrelevantCoordinateExtension}} {\tiny\ttfamily Paper4dFrontier/\allowbreak SparseUnaryGapQuotientPipeline.lean}
\item \textbf{\nolinkurl{FR490}}\hypertarget{lh:FR490}{}\enspace{\ttfamily\nolinkurl{no_closureInvariant_classifier_decides_rawCoordinateCountAtLeast}} {\tiny\ttfamily Paper4dFrontier/\allowbreak SparseUnaryGapQuotientPipeline.lean}
\item \textbf{\nolinkurl{FR491}}\hypertarget{lh:FR491}{}\enspace{\ttfamily\nolinkurl{PredicateReduction}} {\tiny\ttfamily Paper4dFrontier/\allowbreak HardnessTransfer.lean}
\item \textbf{\nolinkurl{FR492}}\hypertarget{lh:FR492}{}\enspace{\ttfamily\nolinkurl{PredicateReduction.trans}} {\tiny\ttfamily Paper4dFrontier/\allowbreak HardnessTransfer.lean}
\item \textbf{\nolinkurl{FR493}}\hypertarget{lh:FR493}{}\enspace{\ttfamily\nolinkurl{SplitPresentation}} {\tiny\ttfamily Paper4dFrontier/\allowbreak HardnessTransfer.lean}
\item \textbf{\nolinkurl{FR494}}\hypertarget{lh:FR494}{}\enspace{\ttfamily\nolinkurl{SplitProxyDescends}} {\tiny\ttfamily Paper4dFrontier/\allowbreak HardnessTransfer.lean}
\item \textbf{\nolinkurl{FR495}}\hypertarget{lh:FR495}{}\enspace{\ttfamily\nolinkurl{SplitProxyOrbitGap}} {\tiny\ttfamily Paper4dFrontier/\allowbreak HardnessTransfer.lean}
\item \textbf{\nolinkurl{FR496}}\hypertarget{lh:FR496}{}\enspace{\ttfamily\nolinkurl{splitProxyDescends_iff_no_orbitGap}} {\tiny\ttfamily Paper4dFrontier/\allowbreak HardnessTransfer.lean}
\item \textbf{\nolinkurl{FR497}}\hypertarget{lh:FR497}{}\enspace{\ttfamily\nolinkurl{SAT3SplitProxyEval}} {\tiny\ttfamily Paper4dFrontier/\allowbreak HardnessTransfer.lean}
\item \textbf{\nolinkurl{FR498}}\hypertarget{lh:FR498}{}\enspace{\ttfamily\nolinkurl{SAT3SplitProxyNonDescends}} {\tiny\ttfamily Paper4dFrontier/\allowbreak HardnessTransfer.lean}
\item \textbf{\nolinkurl{FR499}}\hypertarget{lh:FR499}{}\enspace{\ttfamily\nolinkurl{sat3SplitProxy_nonDescends_iff_satisfiable}} {\tiny\ttfamily Paper4dFrontier/\allowbreak HardnessTransfer.lean}
\item \textbf{\nolinkurl{FR500}}\hypertarget{lh:FR500}{}\enspace{\ttfamily\nolinkurl{sat3SplitProxy_descends_iff_not_satisfiable}} {\tiny\ttfamily Paper4dFrontier/\allowbreak HardnessTransfer.lean}
\item \textbf{\nolinkurl{FR501}}\hypertarget{lh:FR501}{}\enspace{\ttfamily\nolinkurl{sat3_to_splitProxyNonDescent_reduction}} {\tiny\ttfamily Paper4dFrontier/\allowbreak HardnessTransfer.lean}
\item \textbf{\nolinkurl{FR502}}\hypertarget{lh:FR502}{}\enspace{\ttfamily\nolinkurl{unsat3_to_splitProxyDescent_reduction}} {\tiny\ttfamily Paper4dFrontier/\allowbreak HardnessTransfer.lean}
\item \textbf{\nolinkurl{FR503}}\hypertarget{lh:FR503}{}\enspace{\ttfamily\nolinkurl{FinGraphInstance}} {\tiny\ttfamily Paper4dFrontier/\allowbreak HardnessTransfer.lean}
\item \textbf{\nolinkurl{FR504}}\hypertarget{lh:FR504}{}\enspace{\ttfamily\nolinkurl{TwoActionBinaryPairwiseInstance}} {\tiny\ttfamily Paper4dFrontier/\allowbreak HardnessTransfer.lean}
\item \textbf{\nolinkurl{FR505}}\hypertarget{lh:FR505}{}\enspace{\ttfamily\nolinkurl{graphToActionGapInstance_actionGapGraph_eq}} {\tiny\ttfamily Paper4dFrontier/\allowbreak HardnessTransfer.lean}
\item \textbf{\nolinkurl{FR506}}\hypertarget{lh:FR506}{}\enspace{\ttfamily\nolinkurl{actionGapGraphPredicate_graphToActionGapInstance_iff}} {\tiny\ttfamily Paper4dFrontier/\allowbreak HardnessTransfer.lean}
\item \textbf{\nolinkurl{FR507}}\hypertarget{lh:FR507}{}\enspace{\ttfamily\nolinkurl{graphPredicate_to_actionGapPredicate_reduction}} {\tiny\ttfamily Paper4dFrontier/\allowbreak HardnessTransfer.lean}
\item \textbf{\nolinkurl{FR508}}\hypertarget{lh:FR508}{}\enspace{\ttfamily\nolinkurl{graphPredicate_hardness_transfers}} {\tiny\ttfamily Paper4dFrontier/\allowbreak HardnessTransfer.lean}
\end{list}
\else
% [inline block 0: 1 envs, 104635 chars -> data_tex | \begin{longtable}{@{}>{\raggedright\arraybackslash}m{0.10\linewidth}>{\raggedright\arraybackslash}m{0.86\linewidth}@{}} ...]

\fi
\makeatother
\endgroup

}{%
  \textbf{Error:} \texttt{content/lean\_handle\_ids\_auto.tex} not found.
}